\title{Finding $b$-colorings Using Feedback Edges} 
\author{Jakub Balab\' an}{Faculty of Informatics, Masaryk University, Brno, Czechia}{jakbal@mail.muni.cz}{https://orcid.org/0000-0002-2475-8938}{Brno Ph.D. Talent Scholarship Holder – Funded by the Brno City Municipality}
\authorrunning{Jakub Balab\' an} 
\keywords{$b$-coloring, fixed-parameter algorithms, feedback edge number, distance to co-cluster} 
\newcommand{\sep}{\;|\;}
\newcommand{\seq}{\subseteq}
\newcommand{\bb}{\mathbb}
\newcommand{\ca}{\mathcal}
\newcommand{\symdif}{\mathbin{\triangle}}
\DeclareMathOperator{\range}{range}
\DeclareMathOperator{\dom}{dom}
\DeclareMathOperator{\dist}{dist}
\newcommand{\kb}{K^+}
\newcommand{\kout}{K}
\newcommand{\kstar}{K^*}
\newcommand{\FPT}{\textsf{FPT}\xspace}
\newcommand{\NP}{\textsf{NP}\xspace}
\newcommand{\XP}{\textsf{XP}\xspace}
\newcommand{\XNLP}{\textsf{XNLP}\xspace}
\newcommand{\bcoloring}{$b$-\textsc{coloring}\xspace}
\begin{document}

\maketitle

\begin{abstract}
A $b$-coloring of a graph is a proper vertex coloring such that each color class contains a vertex that sees all other colors in its neighborhood.
The \bcoloring problem, in which the task is to decide whether a graph admits a $b$-coloring with $k$ colors, is \textsf{NP}-complete in general but polytime solvable on trees.
Moreover, it is known that \bcoloring is in \textsf{XP} but \textsf{W[$t$]}-hard for all $t \in \mathbb{N}$ when parameterized by tree-width.
In fact, only very few parameters, such as the vertex cover number, were known to admit an \FPT algorithm for \bcoloring.

In this paper, we consider a more restrictive parameter measuring similarity to trees than tree-width, namely the feedback edge number, and show that \bcoloring is fixed-parameter tractable under this parameterization.
Our algorithm combines standard techniques used in parameterized algorithmics with the problem-specific ideas used in the polytime algorithm for trees.
In addition, we present an \FPT algorithm for \bcoloring parameterized by distance to co-cluster, which is a parameter measuring similarity to complete multipartite graphs.
Finally, we make several observations based on known results, including that \bcoloring is \textsf{W[$1$]}-hard when parameterized by tree-depth.
\end{abstract}


\section{Introduction}\label{sec:intro}

A \emph{proper coloring} of a graph is an assignment of colors to its vertices such that no two adjacent vertices receive the same color.
The \textsc{Chromatic Number} problem, in which the task is to decide whether a graph admits a proper coloring with $k$ colors, is one of the famous 21 \NP-complete problems listed by Karp~\cite{Karp72}.
This problem is \NP-complete even when $k$ is any fixed constant higher than 2.
Consequently, various heuristics must be employed in practice.
One simple heuristic works as follows.
We start with an arbitrary proper coloring and, in each iteration, we try to decrease the number of colors by one.
We search for a color $c$ such that for each vertex $u$ colored with $c$, there is a color $c_u \ne c$ that is not present in the neighborhood of $u$.
If we can find such a color $c$, then we assign the color $c_u$ to each vertex $u$ currently colored with $c$, and the obtained coloring remains proper.
Otherwise, the heuristic terminates.

Irving and Manlove~\cite{IrvingM99} initiated the theoretical study of this heuristic in 1999 and defined a proper coloring to be a \emph{$b$-coloring} if each color class contains a vertex, a so-called \emph{$b$-vertex}, that sees all other colors in its neighborhood.
It is easy to see that $b$-colorings are exactly those colorings that cannot be improved by the heuristic.
In addition, they defined the \emph{$b$-chromatic number} of a graph $G$, denoted $\chi_b(G)$, as the largest integer $k$ such that $G$ admits a $b$-coloring with $k$ colors; observe that the $b$-chromatic number describes the worst-case behavior of the heuristic.
These concepts received a lot of attention since their inception~\cite{KratochvilTV02, corteel2005approximating, kouider2006bounds, jakovac2010b, HavetSS12, CamposLMSSS15, melo2021matheuristic, JaffkeLL24}, see the following survey~\cite{JakovacP18}.

There are two natural computational problems related to $b$-colorings: the \bcoloring problem, which asks for a $b$-coloring with exactly $k$ colors, and the \textsc{$b$-Chromatic Number} problem, which asks whether $\chi_b(G) \ge k$.
Notice that a polytime algorithm for \bcoloring can be used to solve \textsc{$b$-Chromatic Number} in polynomial time.
However, the converse does not hold in general: if $\chi(G) < k < \chi_b(G)$, where $\chi(G)$ is the chromatic number of $G$, then a $b$-coloring with $k$ colors does not have to exist\footnote{For example, the bipartite graph with vertex set $\{u_i, v_i \sep i \in [4]\}$ and edge set $\{u_iv_j \sep i \ne j\}$ admits a $b$-coloring with two or four colors but not with three colors~\cite{IrvingM99}.}. 
Hence, \textsc{$b$-Chromatic Number} is, in some sense, an ``easier'' problem than \bcoloring.

In their seminal paper, Irving and Manlove~\cite{IrvingM99} proved that \textsc{$b$-Chromatic Number} is \NP-complete on general graphs and solvable in polynomial time on trees.
The \NP-completeness was subsequently extended to bipartite graphs~\cite{KratochvilTV02}, chordal graphs~\cite{HavetSS12}, co-bipartite graphs~\cite{BonomoSSV15}, and line graphs~\cite{CamposLMSSS15}.
On the positive side, polytime algorithms for \textsc{$b$-Chromatic Number} have been designed for several classes with few $P_4$s~\cite{BonomoDMMV09, VelasquezBK11, CamposSSM14} and for various tree-like classes~\cite{FerreiraDaSilva10}, such as tree-cographs~\cite{BonomoSSV15}, cacti~\cite{CamposSMS09}, and graphs of large girth~\cite{CamposLS15}.
However, almost all of these algorithms were unified by an \XP algorithm for \bcoloring parameterized by clique-width, which was recently developed by Jaffke, Lima, and Lokshtanov~\cite{JaffkeLL24}.
Indeed, \XP means that the algorithm runs in polynomial time on any class of bounded clique-width, and all of the above mentioned classes (except for the graphs of large girth) have bounded clique-width~\cite{JaffkeLL24}.

Once an \XP algorithm is known, the natural question is whether an \FPT algorithm is possible~\cite{CyganFKLMPPS15}.
Unfortunately, Jaffke, Lima, Sharma~\cite{JaffkeLS23} proved that \bcoloring is \XNLP-complete when parameterized by path-width, which implies that it is \textsf{W[$t$]}-hard for each $t \in \mathbb{N}$ under this parameterization.
Hence, it is very unlikely that an \FPT algorithm parameterized by path-width (or tree-width) exists.
Let us emphasize that this hardness result relies on the fact that the number of colors is \emph{not} part of the parameterization, i.e., it may be as large as the number of vertices.
In fact, if we parameterize by tree-width plus the number of colors, then an \FPT algorithm is possible~\cite{JaffkeLL24, Davi_Andrade}, see Proposition~\ref{prop:tree-width}.

A trivial upper-bound on the $b$-chromatic number, called the \emph{$m$-degree}~\cite{IrvingM99}, is defined as the largest integer $k$ such that there are at least $k$ vertices of degree at least $k-1$.
Since the vertex cover number upper bounds both the $m$-degree and tree-width, there is an \FPT algorithm for \bcoloring parameterized by the vertex cover number~\cite{JaffkeLL24}.
To the best of our knowledge, the only other structural parameters that were known to allow for an \FPT algorithm for \bcoloring are the twin cover number and neighborhood diversity~\cite{JaffkeLS23}, see Figure~\ref{fig:hierarchy}.

\begin{figure}[t]
\scalebox{0.8}{
\begin{tikzpicture}[every node/.style={draw, rectangle}]
  \node[fill=red!30] (twin) at (0,-2) {Twin-width};
  \node[fill=red!30] (mim) at (-4,-2) {Mim-width};
  \node[fill=blue!30] (clique) at (-2,-3) {Clique-width};
  \node[fill=blue!30]  (tw) at (-2,-4) {Tree-width};
  \node[fill=blue!30] (pw) at (-2, -5) {Path-width};
  \node[fill=blue!30, thick] (td) at (-2,-6) {Tree-depth};
  \node[fill=green!30, thick] (vi) at (-2,-7) {Vertex integrity};
  \node [align=center, fill=green!30] (vc) at (-2,-8) {Vertex cover number};
  
  \node [align=center, fill=green!30] (nd) at (-9,-6.8) {Neighborhood \\diversity};
  \node [align=center, fill=green!30] (tc) at (-6.5,-6.8) {Twin cover\\ number};
   
  \node [align=center, fill=green!30, text=red, thick] (fen) at (4.5,-7.8) {Feedback edge \\number};
  \node [align=center] (fvn) at (4.5,-5.8) {Feedback vertex \\number};
  
  \node [align=center, fill=green!30, thick] (bw) at (1.5,-8) {Band-width};
  \node [align=center, fill=green!30, thick] (cutw) at (1.5,-7) {Cut-width};
  \node [align=center, fill=green!30, thick] (carving) at (1.5,-6) {Carving-width};
  
  \draw[{Latex[length=2mm]}-] (tw.south east) -- (carving.north);
  \draw[{Latex[length=2mm]}-] (carving.south) -- (cutw.north);
  \draw[{Latex[length=2mm]}-] (cutw.south) -- (bw.north);
  \draw[{Latex[length=2mm]}-] (pw.south east) -- (cutw.north west);
  
  \node (modw) at (-9, -5) {Modular width};
  
  \node[align=center] (dtc) at (-6.5, -4.8) {Distance \\to cluster};
  \node[align=center, fill=green!30, text=red, thick] (dtcc) at (-4.25, -4.8) {Distance \\to co-cluster};

  \draw[{Latex[length=2mm]}-] (dtc.south) -- (tc.north);
  \draw[{Latex[length=2mm]}-] (clique.south west) -- (dtc.north);
  
  \draw[{Latex[length=2mm]}-] ([xshift=10pt]clique.south west) -- (dtcc.north);
  \draw[-{Latex[length=2mm]}] (vc.north west) -- (dtcc.south);

  \draw[{Latex[length=2mm]}-] (twin.south) -- (clique.north east);
  \draw[{Latex[length=2mm]}-] (mim.south) -- (clique.north west);
  
  \draw[{Latex[length=2mm]}-] (clique.south) -- (tw.north);
  \draw[{Latex[length=2mm]}-] (tw.south) -- (pw.north);
  \draw[{Latex[length=2mm]}-] (pw.south) -- (td.north);
  \draw[{Latex[length=2mm]}-] (td.south) -- (vi.north);
  \draw[{Latex[length=2mm]}-] (vi.south) -- (vc.north);
  
  \draw[{Latex[length=2mm]}-] (tw.east) -- (fvn.north);
  \draw[{Latex[length=2mm]}-] (fvn.south) -- (fen.north);
  
  \draw[{Latex[length=2mm]}-] (modw.south) -- (nd.north);
  \draw[{Latex[length=2mm]}-] (modw.south east) -- (tc.north);
 
  \draw[{Latex[length=2mm]}-] (nd.south east) -- (vc.west);
  \draw[{Latex[length=2mm]}-] (tc.south east) -- (vc.north west);
  
  \draw[{Latex[length=2mm]}-] (clique.west) -- (modw.north);
\end{tikzpicture}}

\caption{Parameterized complexity of \bcoloring under various structural parameters. A directed path from a parameter $\alpha$ to a parameter $\beta$ indicates that $\beta \le f(\alpha)$ for some computable function $f$. Green stands for \FPT, blue is \textsf{W[1]}-hard and \textsf{XP}, red is \textsf{paraNP}-complete, and white means that it is unknown whether the problem is \FPT or \textsf{W[1]}-hard under given parameterization. Our new results are marked with thick boundary, and the two parameters for which our \FPT algorithms are non-trivial are written in red.}
\label{fig:hierarchy}
\end{figure}

\subparagraph*{Our contribution.}
First, in Section~\ref{sec:prelims}, we observe that there are other well-studied parameters that functionally upper bound tree-width and the $m$-degree, which yields new \FPT algorithms for \bcoloring, see Propositions~\ref{prop:vertex-integrity} and~\ref{prop:carving-width}.
These parameters are vertex integrity~\cite{hanaka_et_al:LIPIcs.MFCS.2024.58}, band-width~\cite{diaz2002survey}, cut-width~\cite{diaz2002survey}, and carving-width~\cite{seymour1994call}, see Figure~\ref{fig:hierarchy}.
We complement these algorithms with a hardness result implied by known results~\cite{JaffkeLS23, jong2024analysing}: \bcoloring is \textsf{W[1]}-hard when parameterized by tree-depth, see Proposition~\ref{prop:tree-depth-hardness}.

In Section~\ref{sec:co-cluster}, we present an \FPT algorithm for \bcoloring parameterized by distance to co-cluster, which is a parameter that has been used in several \FPT algorithms~\cite{komusiewicz2020matching, REDDY2024181, galby2023metric}.
Informally, our algorithm works as follows.
Let $G$ be the input graph.
First, we find a small set $S$ of vertices such that $G - S$ is a complete multipartite graph (such a set exists by definition of the parameter), and we guess a proper coloring $\chi$ of $S$.
Second, we partition the maximal independent sets of $G - S$ into ``types'' based on their connection to $S$.
Crucially, the number of types will be small.
Third, for each color $c$ used by $\chi$, we guess whether some vertex of $G- S$ should be colored with $c$, and if yes, we guess the type of the independent set that should contain such a vertex.
The key technical lemma states that two sets with the same type are interchangeable, i.e., if a color $c$ should have a vertex in a set of type $t$, then it does not matter which one of such sets we choose.
Finally, we must also handle the colors \emph{not} used by $\chi$: there may be many such colors but their treatment is relatively straightforward.

In Section~\ref{sec:fen}, we show an \FPT algorithm for \bcoloring parameterized by the feedback edge number, which is the smallest number of edges that need to be deleted to obtain an acyclic graph; this parameter has also been used in several \FPT algorithms~\cite{DBLP:journals/siamdm/BalabanGR25, DBLP:journals/tcs/UhlmannW13, DBLP:journals/jgaa/BannisterCE18, DBLP:journals/algorithmica/GanianO21}.
This algorithm is the main contribution of this paper.
We combine standard techniques used in parameterized algorithms with the concept of a \emph{pivot}, which was introduced in the polytime algorithm for \textsc{$b$-Chromatic Number} on trees~\cite{IrvingM99}.
However, our definition of a pivot will be more technical because it will be ``parameterized'' by a partial coloring of the input graph.
In order to smoothly introduce pivots to the reader, we now briefly describe the original algorithm.

\subparagraph*{Algorithm for trees.}
Let $T$ be a tree and let $k$ be the $m$-degree of $T$; recall that $k$ is an upper bound on the $b$-chromatic number of $T$, which we denote by $\chi_b(T)$.
Informally, this upper bound can be achieved ``almost always'': the only exception are so-called \emph{pivoted} trees, which can be recognized in linear time~\cite{IrvingM99}.
More precisely, $\chi_b(T) = k-1$ if $T$ is pivoted, and $\chi_b(T) = k$ otherwise.
Let us call vertices of degree at least $k-1$ \emph{candidates}; notice that candidates are those vertices that may serve as $b$-vertices in a $k$-coloring of $T$.
We say that $T$ is \emph{pivoted} if there are exactly $k$ candidates and there is a vertex $u$, a so-called \emph{pivot}, such that:
\begin{enumerate}
\item $u$ is not a candidate;
\item each candidate is adjacent to $u$ or to another candidate that is adjacent to $u$;
\item each candidate that is adjacent to $u$ and to another candidate has degree $k-1$.
\end{enumerate}
It is not hard to see that a pivoted tree does not admit a $b$-coloring with $k$ colors: each candidate would have to be a $b$-vertex and a vertex of degree $k-1$ can be a $b$-vertex only if it sees no color twice, which means that no color may be assigned to the pivot $u$.

We have already mentioned that several polytime algorithms for \textsc{$b$-Chromatic Number} on tree-like graphs have been developed~\cite{BonomoSSV15, CamposSMS09, CamposLS15}.
These algorithms build on the concept of a pivoted tree.
For example, the concept of a \emph{pivoted cactus} has been introduced~\cite{CamposSMS09}.

\subparagraph*{Overview of our algorithm}
Now we provide a high-level overview of the algorithm for \bcoloring parameterized by the feedback edge number, see also Figure~\ref{fig:overview}.
If some notation used in the following text is unclear to the reader, we refer to Section~\ref{sec:prelims} for definitions.

Let $G$ be a graph with feedback edge number $p$. Our goal is to decide whether $G$ admits a $b$-coloring with $k$ colors.
If $\chi$ is a partial coloring of $G$, then a \emph{$\chi$-candidate} is a vertex that may become a $b$-vertex in a $k$-coloring extending $\chi$ (for example, if $\chi$ colors no vertices, then $\chi$-candidates are vertices of degree at least $k-1$).
The first step is to compute a set $S \seq V(G)$ of size $\ca O(p)$ such that all cycles in $G$ have an edge in $G[S]$ and each vertex not in $S$ is close to at most one vertex of $S$. For simplicity, assume that $p = |S|$.
The second step is to guess a coloring $\chi$ of $S$ and a set $B \seq S$; now it suffices to decide whether there is a $b$-coloring $\psi$ of $G$ such that $\psi$ extends $\chi$, and each vertex in $B$ is a $b$-vertex in $\psi$.
We may assume that $\chi(S) = [p]$ and $\chi(B) = [b]$ for some integer $b \le p$.

\begin{figure}[t]
\scalebox{0.8}{
\begin{tikzpicture}[every node/.style={draw, rectangle}]
\node[align=center] (1) {Computing $S$, \\ Lemma~\ref{lem:S-exists}};
\node[align=center, below of = 1, yshift=-0.5cm] (2) {Guessing $\chi$ and $B$, \\ Definition~\ref{def:S-profile}};
\node[align=center, right of = 2, xshift=3.3cm] (3) {Guessing a color plan,\\ Definition~\ref{def:color-plan}};
\node[align=center, above of = 3, yshift=0.5cm] (4) {Computing a color\\ realization,  Lemma~\ref{lem:compute-damage-free}};
\node[align=center, right of = 4, xshift=4cm] (5) {Getting rid of the\\ pivot, Section~\ref{sub:getting-rid}};
\node[align=center, below of = 5, yshift=-1cm] (6) {Coloring the\\ neighborhoods of \\ candidates, Section~\ref{sub:coloring-neighbors}};
\node[align=center, right of = 6, xshift=3.5cm] (7) {Coloring the\\ remaining vertices, \\ Lemma~\ref{lem:finish-coloring}};

\draw[-{Latex[length=2mm]}] (1.south) -- (2.north);
\draw[-{Latex[length=2mm]}] (2.east) -- (3.west);
\draw[-{Latex[length=2mm]}] (3.north) -- (4.south);
\draw[-{Latex[length=2mm]}] (4.east) -- (5.west);
\draw[-{Latex[length=2mm]}] (5.south) -- (6.north);
\draw[-{Latex[length=2mm]}] (6.east) -- (7.west);
\end{tikzpicture}}

\caption{Outline of the algorithm for \bcoloring parameterized by the feedback edge number. Each step refers to the relevant part of the proof.}
\label{fig:overview}
\end{figure}
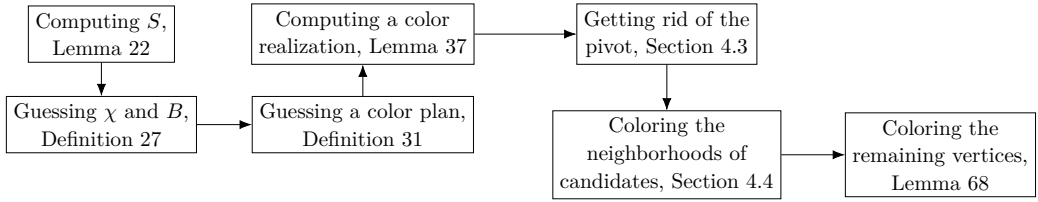

The next step is to choose $b$-vertices for colors in $[b+1, k]$.
What needs to be carefully handled are the colors in $[b+1, p]$ because we need to ensure that $\psi$ will be proper and that vertices in $B$ can become $b$-vertices.
For example, if a vertex $u \in B$ has degree $k-1$ and it has a neighbor of color $p$ in $\chi$, then the $b$-vertex for $p$ cannot be adjacent to $u$.
We begin by guessing a so-called \emph{color plan}, which is a function determining the location of $b$-vertices for the colors in $[b+1, p]$.
Crucially, the number of color plans will be small.
The next step is to compute a so-called \emph{color realization} $\rho$, which is an injective assignment of colors in $[b+1, k]$ to $\chi$-candidates outside of $S$.
We show that using a ``valid'' color plan, we can compute a color realization such that each chosen candidate can indeed become a $b$-vertex.
More precisely, each vertex in $B \cup \range(\rho)$ will be $\chi_\rho$-candidate, where $\chi_\rho = \chi \cup \rho^{-1}$ is the coloring obtained by ``extending'' $\chi$ according to $\rho$.

Now it may occur that the coloring $\chi_\rho$ we have constructed has a ``pivot'', i.e., a vertex that cannot be assigned any color, see the above described algorithm for \textsc{$b$-Chromatic Number} on trees for an idea what a pivot is.
Sometimes, the realization $\rho$ can be slightly modified to get rid of the pivot.
However, this is not straightforward because we must ensure that each vertex in $B \cup \range(\rho')$ is a $\chi_{\rho'}$-candidate, where $\rho'$ is the new realization.
On the other hand, sometimes the pivot is ``unavoidable'', and we must try a different pair $(\chi, B)$.
In fact, there are more properties we need $\rho$ to have but achieving ``pivot-freeness'' will be the most challenging.

Once we have a ``feasible'' color realization $\rho$, we are ready to extend $\chi_\rho$ into a $b$-coloring of $G$.
First, we color the neighborhoods of all chosen candidates.
Roughly speaking, we proceed in a BFS-order from $S$, but there may be an ``exceptional'' candidate that needs to be handled first.
Finally, we color the remainder of the graph.
This last step is also non-trivial because there may be a vertex which already sees all colors.
However, one property we require from $\rho$ will ensure that this cannot happen.

\section{Preliminaries}\label{sec:prelims}
For integers $i$ and $j$, we let $[i,j] := \{a \in \mathbb Z \sep i \le a \le j\}$ and $[i] := [1, i]$.
For a set $S$, we denote the power set of $S$ by $2^S$, and $\bigcup S := \{u \sep \exists a \in S\colon u\in a\}$.
When we say a \emph{function}, we mean a total function.
Let $f\colon A \rightarrow B$ be a function.
The \emph{range} of $f$, denoted $\range(f)$, is the set $\{b \in B\sep \exists a\in A\colon f(a) = b\}$.
The \emph{kernel} of $f$, denoted $\ker(f)$, is the set $\{(a, a')\in A^2\sep f(a) = f(a')\}$.
For $A'\seq A$, we denote the function $\{(a, b) \in f \sep a \in A'\}$ by $f \upharpoonright A'$.
If $g\colon B \rightarrow C$ is a function, then $g \circ f\colon A \rightarrow C$ is the function such that for each $a \in A$, we have $(g \circ f)(a) = g(f(a))$.
For $a \in A$ and $b \in B$, we define $f[a \mapsto b]$ to be the function $f \setminus \{(a, f(a))\} \cup \{(a, b)\}$.
Instead of $f[a_1 \mapsto b_1][a_2 \mapsto b_2]$, we write $f[a_1 \mapsto b_1, a_2 \mapsto b_2]$.
A \emph{partial function} with domain $A$ is possibly undefined on some elements of $A$; we will use the terminology defined for functions also for partial functions.

Let $G$ be a graph (we will consider only finite simple graphs). We assume familiarity with basic concepts in graph theory~\cite{Diestel} and parameterized algorithmics~\cite{CyganFKLMPPS15}.
Given a set of vertices $U\seq V(G)$, we will use $\overline{U}$ to denote the set $V(G) \setminus U$, $G[U]$ to denote the graph induced on $U$, and $G - U$ to denote the graph $G[\overline{U}]$.
Similarly, for an edge set $F \seq E(G)$, $G-F$ denotes the graph obtained from $G$ by removing the edges in $F$.
The \emph{length} of a path or a cycle is the number of edges it contains.
The \emph{distance} between two vertices $u$ and $v$, denoted $\dist_G(u, v)$,
is the length of the shortest path between them (or $\infty$ if no such path exists).
If $v \in V(G)$ and $U \seq V(G)$ is non-empty, then $\dist_G(u, U) = \min\{\dist_G(u, v)\sep u \in U\}$.
For $U \seq V(G)$ and $v \in V(G)$, a $v$-$U$ path is a path $P$ in $G$ with endpoints $v$ and $u$ such that $V(P)\cap U = \{u\}$.

For $u\in V(G)$, we define $N_G(u) := \{v \in V(G)\sep uv \in E(G)\}$ and $N_G[u] := N_G(u) \cup \{u\}$.
Similarly, for $S\seq V(G)$, we define $N_G(S) = \bigcup_{u \in S} N_G(u)$ and $N_G[S] = N_G(S) \cup S$.
Moreover, we define $N_G^S(u) = N_G(u) \cap S$.
We say that $u, v \in V(G)$ are \emph{false twins in $G$} if $N_G(u) = N_G(v)$ and \emph{true twins in $G$} if $N_G[u] = N_G[v]$. Two vertices are \emph{twins} if they are either false twins or true twins.
When $G$ is clear from the context, we will omit the subscript and write, e.g., $\dist(u, v)$ or $N(u)$.

If $\sim$ is an equivalence relation on $V(G)$, then $G/\mathord\sim$ is the graph such that $V(G/\mathord\sim)$ are the equivalence classes of $\sim$, and $CD \in E(G/\mathord\sim)$ if and only if there are vertices $u,v \in V(G)$ such that $u \in C$, $v\in D$, and $uv \in E(G)$.

\subparagraph*{Structural parameters}

Given a graph class $\ca G$ and a graph $G$, we say that $S\seq V(G)$ is a \emph{$\ca G$-modulator of $G$} if $G-S \in \ca G$. An integer $p$ is the \emph{distance to $\ca G$ of $G$} if $p$ is the size of a minimum $\ca G$-modulator of $G$.
A \emph{cluster graph} is a disjoint union of cliques and a \emph{co-cluster graph} is a complement of a cluster graph, i.e., a complete multipartite graph. Let us remark that a minimum cluster-modulator of graph can be computed in \FPT time.

\begin{theorem}[\cite{boral2016fast}]\label{thm:find-S}
Given an $n$-vertex graph $G$ with distance to cluster $p$, a minimum cluster-modulator $S\seq V(G)$ can be computed in time $\ca O(1.92^p \cdot n^2)$.
\end{theorem}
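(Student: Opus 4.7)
The plan is to design a branching algorithm exploiting the classical characterization that a graph is a cluster graph if and only if it is $P_3$-free. Hence a minimum cluster-modulator of $G$ is exactly a minimum vertex set $S$ whose removal destroys every induced $P_3$. In particular, any feasible $S$ must satisfy $\{u,v,w\} \cap S \ne \emptyset$ for every induced path on vertices $u,v,w$ in $G$, which gives an immediate disjunctive branching rule.

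The baseline approach is to repeatedly locate an induced $P_3$, branch on the three choices of which vertex to place into the modulator, and recurse with budget $p-1$ in each branch. This yields a straightforward $\mathcal{O}(3^p \cdot n^{\mathcal{O}(1)})$ algorithm. To push the base down to $1.92$, I would combine this with a careful local analysis whose goal is that, in each branch, we either decrease the budget by more than one or simultaneously destroy several overlapping $P_3$s so that later branching becomes cheaper. Concretely, I would first apply polynomial-time preprocessing rules, such as contracting \emph{critical cliques} (maximal sets of twins forming a clique) and peeling off connected components that are already cliques, and then perform a case analysis on the structure around a carefully chosen witness $P_3$. When the center of that $P_3$ lies in many induced $P_3$s, picking it resolves several conflicts at once; when the endpoints share additional common neighbors, those extra conflicts propagate into the branches in a way that can be charged against the budget. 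A measure-and-conquer analysis with a weighted potential (assigning smaller weight to vertices in ``almost-clique'' neighborhoods) then turns these improvements into a recurrence whose solution is bounded by $1.92^p$ up to polynomial factors.

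The hardest part is the case analysis itself: one must ensure that in every graph reached after the preprocessing rules, at least one branching rule achieves a branching vector strictly better than $(1,1,1)$, and that the worst-case vector across all cases evaluates to a base at most $1.92$. The stubborn configurations are typically induced $P_3$s whose endpoints have few common neighbors outside the path, since the naive three-way branch is essentially optimal in the absence of extra structure; handling these via refined rules or by amortizing savings across consecutive branching steps is where the real technical work lies. The polynomial factor $\mathcal{O}(n^2)$ in the bound follows from implementing the reduction rules and searching for induced $P_3$s in quadratic time at each node of the search tree, which in turn is bounded by $\mathcal{O}(1.92^p)$ leaves.
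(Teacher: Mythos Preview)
The paper does not prove this theorem at all: it is stated as a black-box citation to \cite{boral2016fast} and used without proof. Consequently there is no ``paper's own proof'' to compare against. Your sketch is a reasonable high-level account of the branching strategy behind the cited result (branching on induced $P_3$'s, critical-clique reductions, refined case analysis to beat the naive $(1,1,1)$ branching vector), but in the context of this paper no argument is expected or provided; the theorem is simply imported from the literature.
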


An edge set $F \seq E(G)$ is called a \emph{feedback edge set} if $G-F$ is acyclic, and the \emph{feedback edge number} of $G$ is the size of a minimum feedback edge set in $G$.
A \emph{dangling path} in $G$ is a path of vertices which all have degree $2$ in $G$, and a \emph{dangling tree} in $G$ is an induced subtree in $G$ which becomes a connected component after removing at most one edge from $G$. The following lemma states that a graph of bounded feedback edge number can be decomposed into dangling trees, dangling paths, and a bounded-size subgraph.
This lemma was implicitly used in~\cite{DBLP:journals/siamdm/BalabanGR25}, but we provide a proof for completeness.

\begin{lemma}\label{lem:fen-structure}
If $G$ is a connected graph with feedback edge number $p$ that contains no dangling trees,
then there is a set $S \seq V(G)$ such that $|S| \le 4p$, and if $\ca P$ is the set of connected components of $G - S$, then $|\ca P| \le 4p$ and each $P\in\ca P$ is a dangling path in $G$.
Moreover, $S$ can be found in polynomial time.
\end{lemma}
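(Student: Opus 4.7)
The plan is to extract $S$ from a spanning tree of $G$. Since $G$ is connected, I would fix a spanning tree $T$ of $G$ and let $F := E(G) \setminus E(T)$ be the associated feedback edge set, so $|F| = p$. Write $V_F$ for the set of endpoints of edges in $F$ (hence $|V_F| \le 2p$), and $V_{\ge 3}^T$ for the set of vertices of degree at least $3$ in $T$. I would take $S := V_F \cup V_{\ge 3}^T$ and verify the three claims: $|S| \le 4p$, each component of $G-S$ is a dangling path in $G$, and there are at most $4p$ such components.

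First I would observe that every leaf of $T$ lies in $V_F$: a leaf $u$ with $\deg_G(u) = 1$ would yield a trivial dangling tree $\{u\}$ by removing its single incident edge in $G$, contradicting the hypothesis. Hence, writing $\ell$ for the number of leaves of $T$, we have $\ell \le |V_F| \le 2p$. The standard degree-sum identity $\sum_{i \ge 3}(i-2)\,n_i = \ell - 2$ in a tree (where $n_i$ is the number of vertices of degree $i$) then gives $|V_{\ge 3}^T| \le \ell - 2 \le 2p - 2$, so $|S| \le 4p - 2$.

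Next I would describe $G - S$. Since every edge of $F$ has both endpoints in $V_F \seq S$, the edges of $G - S$ lie in $T$, so the components of $G - S$ coincide with those of $T - S$. Any $v \notin S$ satisfies $\deg_T(v) = 2$ (it cannot be a leaf of $T$ nor have $\deg_T(v) \ge 3$ since both types are in $S$) and is incident to no edge of $F$ (as $v \notin V_F$), so $\deg_G(v) = 2$. Each component of $T - S$ is therefore an induced path whose vertices all have degree $2$ in $G$, i.e., a dangling path in $G$.

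The main obstacle is bounding the number $|\ca P|$ of components of $T - S$. I would pass to the topological minor $T'$ obtained from $T$ by suppressing all degree-$2$ vertices; its vertex set is the union of the leaves of $T$ and $V_{\ge 3}^T$, giving $|V(T')| \le 2\ell - 2$ and $|E(T')| \le 2\ell - 3$. Each edge of $T'$ corresponds to a maximal subpath of $T$ consisting of degree-$2$ vertices of $T$, and these maximal subpaths are precisely the components of $T$ after removing only the leaves and $V_{\ge 3}^T$. Removing in addition the $r$ degree-$2$-in-$T$ vertices that also lie in $V_F$ creates at most one extra component per removed vertex, giving $|\ca P| \le |E(T')| + r \le 2\ell - 3 + r$. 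Since the leaves of $T$ and the degree-$2$ vertices of $T$ in $V_F$ are disjoint subsets of $V_F$, we have $\ell + r \le |V_F| \le 2p$, and therefore $|\ca P| \le \ell + 2p - 3 \le 4p - 3$. Computationally, $T$ is obtained by BFS in linear time and $V_F$, $V_{\ge 3}^T$, $S$ are read off directly, so the construction runs in polynomial time.
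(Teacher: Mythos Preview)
Your proof is correct and follows essentially the same construction as the paper: both take $S$ to be the endpoints of a minimum feedback edge set together with the high-degree vertices (the paper uses $\deg_G > 2$, you use $\deg_T \ge 3$, but since vertices outside $V_F$ have equal degree in $G$ and $T$, the resulting set $S$ is identical), and both verify in the same way that $|S|\le 4p$ and that each component of $G-S$ is a dangling path. The only genuine difference is the bound on $|\ca P|$: the paper roots $T$ at a vertex of $S$ and defines an injection $\ca P \to S$ sending each path to the adjacent $S$-vertex farther from the root, giving $|\ca P|\le |S|\le 4p$; you instead pass to the topological minor $T'$, bound its edges by $2\ell-3$, and account for the extra $r$ degree-$2$ vertices of $V_F$ to get $|\ca P|\le 2\ell-3+r\le 4p-3$. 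Both arguments are short and elementary; yours yields a marginally sharper constant, while the paper's injection avoids the bookkeeping with $\ell$ and $r$.
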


\begin{proof}
Let $F \seq E(G)$ be a minimum feedback edge set in $G$,
let $S_0 = \{u \in V(G) \sep u$ is incident to an edge of $F$ in $G\}$, $S_1 = \{u \in V(G) \sep u$ has degree higher than $2$ in $G\}$, and $S = S_0 \cup S_1$.
Since $F$ can be found in linear time using the DFS algorithm, $S$ can be efficiently computed as well.
It is easy to see that $|S_0| \le 2p$. Let $T = G - F$; clearly, $T$ is a tree. Observe that all leaves of $T$ belong to $S_0$ (a leaf not in $S_0$ would be single-vertex dangling tree), which means that $T$ has at most $2p$ leaves. By simple induction, $T$ contains at most $2p$ vertices of degree higher than $2$. Now observe that a vertex $u \in S_1\setminus S_0$ has degree higher than $2$ not only in $G$ but also in $T$, which implies that $|S| \le 4p$.

Next, observe that each vertex $u \in V(G) \setminus S$ has degree exactly $2$ in $G$ (a degree-1 vertex would belong to a dangling tree). Since $G - S$ is acyclic, each $P\in\ca P$ is indeed a dangling path in $G$.
Finally, we prove that $|\ca P|\le 4p$. Let us choose an arbitrary vertex of $S$ to be the root of $T$. For $P \in\ca P$, let $S_P \seq S$ be the two-element set containing the vertices of $S$ that are adjacent to a vertex of $P$ in $G$.
Let $f\colon \ca P\rightarrow S$ be a function such that $f(P)$ is the vertex of $S_P$ that is farther from the root in $T$.
Since $f(P) = u$ implies that the parent of $u$ in $T$ belongs to $P$, we know that $f$ is injective, which implies $|\ca P| \le |S| \le 4p$.
\end{proof}

\subparagraph*{Colorings and homomorphisms}

A (partial) \emph{$k$-coloring of $G$} is a (partial) function $\chi\colon V(G)\rightarrow [k]$.
We say that $\chi$ is \emph{proper} if $\chi(u)\ne\chi(v)$ for each $uv\in E(G)$.
We say that $u\in V(G)$ is a \emph{$b$-vertex in $\chi$} if $\chi(N[u]) = [k]$, and that $\chi$ is a \emph{$k$-$b$-coloring} if it is proper and for each color $c \in [k]$, there is a $b$-vertex $u$ in $\chi$ such that $\chi(u) = c$.
A coloring is a \emph{$b$-coloring} if it is a $k$-$b$-coloring for some integer $k$. 
The decision problem \textsc{$b$-coloring} receives as input a graph $G$ and an integer $k$ and asks whether there is a $k$-$b$-coloring of $G$.

If $G$ and $H$ are graphs, then a function $h\colon V(G) \rightarrow V(H)$ is a \emph{homomorphism from $G$ to $H$}, denoted $h\colon G\rightarrow H$, if $uv \in E(G)$ implies $h(u)h(v) \in E(H)$ for every $u,v \in V(G)$.
We say that $h$ is a \emph{two-way} homomorphism if also $h(u)h(v) \in E(H)$ implies $uv \in E(G)$. The following lemma says that, under some conditions, surjective two-way homomorphisms preserve and reflect the existence of a $b$-coloring.

\begin{lemma}\label{lem:two-way-hom}
Let $G$ and $H$ be graphs and let $h\colon G\rightarrow H$ be a surjective two-way homomorphism.
If $\chi$ is a $k$-$b$-coloring of $H$, then $\psi = \chi \circ h$ is a $k$-$b$-coloring of $G$.
Conversely, if $\psi$ is a $k$-$b$-coloring of $G$ such that $\ker(h) \seq \ker(\psi)$, then there is a $k$-$b$-coloring $\chi$ of $H$ such that $\psi = \chi \circ h$.
\end{lemma}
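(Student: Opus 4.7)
The plan is to prove the two directions separately, in each case verifying propriety first and then the $b$-vertex condition.

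For the forward direction, let $\psi = \chi \circ h$. Propriety follows immediately from $h$ being a homomorphism: if $uu' \in E(G)$, then $h(u)h(u') \in E(H)$, so $\chi(h(u)) \ne \chi(h(u'))$, i.e., $\psi(u) \ne \psi(u')$. For the $b$-vertex condition, fix a color $c \in [k]$ and let $v \in V(H)$ be a $b$-vertex for $c$ under $\chi$. By surjectivity, pick any $u \in h^{-1}(v)$; I claim $u$ is a $b$-vertex for $c$ in $\psi$. Indeed $\psi(u) = \chi(v) = c$, and for any color $c' \in [k]$ there is $v' \in N_H[v]$ with $\chi(v') = c'$: if $v' = v$ take $u$ itself, otherwise pick $u' \in h^{-1}(v')$ (exists by surjectivity), and since $h(u)h(u') = vv' \in E(H)$, the two-way property gives $uu' \in E(G)$, so $u' \in N_G(u)$ with $\psi(u') = c'$. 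This is the only place the two-way property is used in this direction.

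For the converse direction, the condition $\ker(h) \seq \ker(\psi)$ lets me define $\chi\colon V(H) \to [k]$ by $\chi(v) := \psi(u)$ for an arbitrary $u \in h^{-1}(v)$; surjectivity ensures such $u$ exists, and the kernel inclusion ensures the choice does not matter. By construction $\psi = \chi \circ h$. To see $\chi$ is proper, take any $vv' \in E(H)$ and preimages $u \in h^{-1}(v)$, $u' \in h^{-1}(v')$; since $H$ is a simple graph $v \ne v'$ so $u \ne u'$, and the two-way property yields $uu' \in E(G)$, hence $\chi(v) = \psi(u) \ne \psi(u') = \chi(v')$. For the $b$-vertex condition, fix $c \in [k]$ and let $u$ be a $b$-vertex for $c$ in $\psi$; I claim $v = h(u)$ is a $b$-vertex for $c$ in $\chi$. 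Clearly $\chi(v) = c$, and for any $c' \in [k]$ there is $u^* \in N_G[u]$ with $\psi(u^*) = c'$; then $v^* := h(u^*)$ lies in $N_H[v]$ (using that $h$ is a homomorphism when $u^* \ne u$, and $v^* = v$ when $u^* = u$), and $\chi(v^*) = \psi(u^*) = c'$.

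There is no real obstacle here: the proof is a careful bookkeeping exercise in which each of the four verifications uses exactly one of the two ``directions'' of the two-way homomorphism. The mildest point of care is making sure that, in the converse direction, the vertex $v$ is well-defined and that the case $u^* = u$ in the $b$-vertex argument is handled, but both are immediate. The surjectivity hypothesis is used only to guarantee preimages exist, and the kernel inclusion only to guarantee $\chi$ is well-defined as a function.
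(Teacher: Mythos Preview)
Your proof is correct and follows essentially the same approach as the paper's. The only stylistic difference is that the paper first observes that each fiber $h^{-1}(v)$ is a set of false twins, identifies $H$ with an induced subgraph of $G$, and then defines $\chi = \psi \upharpoonright V(H)$; you instead carry out the direct abstract argument, which amounts to the same verifications.
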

\begin{proof}
Let $G$, $H$, and $h$ be as in the statement.
Observe that $G$ is isomorphic to a graph that can be obtained from $H$ by adding one or more false twins to some vertices of $H$; hence, we may assume that $H$ is an induced subgraph of $G$. Indeed, for $u \in V(H)$, the set $h^{-1}(u) \seq V(G)$ consists of $u$ and all false twins added to $u$ (in particular, $h(u) = u$).
If $\chi$ is a $k$-$b$-coloring of $H$, then $\psi = \chi \circ h$ is clearly a proper coloring of $G$, and each $b$-vertex in $\chi$ is also a $b$-vertex in $\psi$ (for the same color), so $\psi$ is a $k$-$b$-coloring of $G$.

Conversely, assume that $\psi$ is a $k$-$b$-coloring of $G$ such that $\ker(h) \seq \ker(\psi)$.
Let us define $\chi = \psi \upharpoonright V(H)$; clearly, $\chi$ is a proper coloring of $H$ and $\psi = \chi \circ h$. If $u$ is a $b$-vertex in $\psi$, then $h(u)\in V(H)\seq V(G)$ is also a $b$-vertex in $\psi$ because $u$ and $h(u)$ are twins in $G$ that are colored with the same color by $\psi$. Moreover, if $uv \in E(G)$, then $h(u)h(v) \in E(H)$ and $\chi(h(v)) = \psi(v)$. Hence, $h(u)$ is a $b$-vertex also in $\chi$ (for the same color as in $\psi$), and so $\chi$ is a $k$-$b$-coloring of $H$.
\end{proof}

\subparagraph*{Structural parameterizations of \textsc{$b$-coloring}}

Using dynamic programming along a tree-decomposition, it can be shown that \textsc{$b$-coloring} is \FPT when parameterized by tree-width and the number of colors.

\begin{proposition}[\cite{JaffkeLL24, Davi_Andrade}]\label{prop:tree-width}
The \textsc{$b$-coloring} problem can be solved in time $2^{\ca O(w\cdot k)}\cdot n$, where $n$ is the number of vertices in the input graph, $w$ is its tree-width, and $k$ is the number of colors.
\end{proposition}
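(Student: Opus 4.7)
The plan is a standard bottom-up dynamic programming over a nice tree decomposition of the input graph $G$, augmented with edge-introduction nodes. At each node $t$, with $X_t$ denoting its bag and $G_t$ the subgraph processed so far, I would index the DP table by triples $(\chi, M, B)$, where $\chi \colon X_t \to [k]$ is a proper coloring of $G[X_t]$, $M \colon X_t \to 2^{[k]}$ assigns to each bag vertex $v$ the set of colors not yet present in the already-processed neighborhood of $v$, and $B \subseteq [k]$ is the set of colors for which a $b$-vertex has already been committed among the vertices forgotten so far. An entry is \emph{true} iff there exists a proper coloring $\psi$ of $G_t$ extending $\chi$ such that $M_v = [k] \setminus \psi(N_{G_t}(v))$ for every $v \in X_t$, and each $c \in B$ is realized by some $b$-vertex in the restriction of $\psi$ to $V(G_t) \setminus X_t$. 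The number of such triples per bag is bounded by $k^{w+1} \cdot 2^{k(w+1)} \cdot 2^k = 2^{\ca O(wk)}$.

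The transitions are then routine. At an introduce-vertex node inserting $v$, I would try all $k$ colors for $v$ and initialize $M_v := [k]$. At an introduce-edge node for $uv$, I would enforce $\chi(u) \neq \chi(v)$ and update $M_u := M_u \setminus \{\chi(v)\}$ and $M_v := M_v \setminus \{\chi(u)\}$. At a forget node for $v$, I would project out the coordinate of $v$; since no further edges incident to $v$ can appear after this moment, $M_v$ is now final, and if $M_v = \{\chi(v)\}$ one may non-deterministically add $\chi(v)$ to $B$, reflecting the option to declare $v$ a $b$-vertex for its color. At a join node, I would combine two triples sharing the same $\chi$ by intersecting their $M_v$-coordinates (the parts of the two child subgraphs outside $X_t$ are disjoint, so the color sets in their neighborhoods simply merge) and by taking the union of their $B$-sets. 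At the root (empty bag) we accept iff some reachable state has $B = [k]$.

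Correctness follows from a straightforward induction on the decomposition, verifying that the intended semantics of the triples is preserved by each of the four transitions. The one conceptual subtlety, and thus the main point to get right in the state design, is the $M$-coordinate: the $b$-vertex condition for a vertex $v$ depends on the colors present in the \emph{entire} final neighborhood of $v$, so we must carry along the set of colors missing from its partial neighborhood until $v$ is forgotten, after which by the standard property of tree decompositions no new neighbor of $v$ can appear. For the running time, each bag stores $2^{\ca O(wk)}$ entries, each non-join node can be processed in $2^{\ca O(wk)}$ time by iterating over all current states, and each join node in $2^{\ca O(wk)}$ time by iterating over pairs of compatible child states with a common $\chi$; summing over the $\ca O(wn)$ nodes of the nice tree decomposition and absorbing the polynomial factor $w$ into the exponential yields the claimed bound $2^{\ca O(wk)} \cdot n$.
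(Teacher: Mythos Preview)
The paper does not prove this proposition itself but merely cites it, remarking only that it follows from ``dynamic programming along a tree-decomposition.'' Your proposal spells out precisely this standard DP (states recording the bag coloring, the missing colors in each bag vertex's processed neighborhood, and the set of colors already equipped with a $b$-vertex), and it is correct; the condition $M_v=\{\chi(v)\}$ at a forget node is exactly right since properness forces $\chi(v)\in M_v$.
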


Proposition~\ref{prop:tree-width} was used in~\cite{JaffkeLL24} to show that \textsc{$b$-coloring} is \FPT when parameterized by the vertex cover number.
Now we use this proposition to obtain two other \FPT algorithms.
The \emph{vertex integrity} of a graph $G$ is the smallest integer $p$ such that there is a set $S \subsetneq V(G)$ such that for each connected component $H$ of $G - S$, we have $|S \cup V(H)| \le p$; the relationship of vertex integrity to other parameters is depicted in Figure~\ref{fig:hierarchy}.

\begin{proposition}\label{prop:vertex-integrity}
The \textsc{$b$-coloring} problem can be solved in time $2^{\ca O(p^2)}\cdot n$, where $n$ is the number of vertices in the input graph and $p$ is its vertex integrity.
\end{proposition}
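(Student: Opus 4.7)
The plan is to reduce to Proposition~\ref{prop:tree-width} by showing that both the tree-width of $G$ and the number of colors $k$ in any potential $b$-coloring are upper bounded by $p$.

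First, compute in \FPT time a set $S \subsetneq V(G)$ witnessing that $G$ has vertex integrity $p$, that is, $|S \cup V(H)| \le p$ for every connected component $H$ of $G - S$. Using this set, construct a tree decomposition whose nodes are in bijection with the connected components of $G - S$, where the bag associated with component $H$ is $S \cup V(H)$, and the underlying tree is arbitrary. Each bag has size at most $p$, so the tree-width of $G$ is at most $p - 1$.

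Second, argue that if $G$ admits a $k$-$b$-coloring, then $k \le p$. Each color class contains a $b$-vertex, and a $b$-vertex has degree at least $k - 1$. If some $b$-vertex $u$ lies outside $S$, then $u$ belongs to a component $H$ with $|V(H)| + |S| \le p$, so $\deg_G(u) \le (|V(H)| - 1) + |S| \le p - 1$, which forces $k \le p$. Otherwise every $b$-vertex lies in $S$, so $k \le |S| \le p - 1$. In both cases $k \le p$.

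Finally, the algorithm proceeds as follows: compute $S$ and the associated tree decomposition; if the input $k$ exceeds $p$, output \emph{no}; otherwise, run the algorithm of Proposition~\ref{prop:tree-width} with width $w \le p - 1$ and number of colors $k \le p$. The total running time is $2^{\ca O(w \cdot k)} \cdot n = 2^{\ca O(p^2)} \cdot n$. The main (minor) obstacle is ensuring that $S$ can be computed within the claimed running time, which follows from any standard \FPT algorithm for vertex integrity, since $|S| \le p$ and one can, e.g., iteratively branch on large components.
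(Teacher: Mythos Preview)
Your proof is correct and follows essentially the same approach as the paper: bound both the tree-width and $k$ by $p$ using the vertex-integrity witness $S$, then invoke Proposition~\ref{prop:tree-width}. The paper is terser (it cites \cite{DrangeDH16} for computing $S$ in time $\mathcal{O}(p^{p+1}\cdot n)$ and observes directly that a $b$-vertex outside $S$ cannot have degree $\ge p$), but your slightly more explicit case split and tree-decomposition construction are fine.
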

\begin{proof}
Let $(G, k)$ be the input instance, where $G$ is a graph with $n$ vertices and vertex integrity $p$.
First, we use the algorithm from~\cite{DrangeDH16} to compute a set $S \subsetneq V(G)$ certifying the vertex integrity of $G$ in time $\ca O(p^{p+1} \cdot n)$.
Using $S$, it is easy to compute a path decomposition of width at most $p$.
If $k \le p$, then we may use Proposition~\ref{prop:tree-width} to solve the problem.
Otherwise, we answer that $G$ admits no $k$-$b$-coloring.
Indeed, in such a coloring, there would have to be a $b$-vertex $u \in V(G) \setminus S$ of degree at least $k - 1 \ge p$, which is impossible.
\end{proof}

The second parameter we consider is \emph{carving-width}~\cite{seymour1994call}.
This width parameter is certified by a so-called \emph{carving}: a subcubic tree $T$ and a bijection between the leaves of $T$ and the vertices of $G$ (we will not need the full definition).
Note that the following proposition implies that \bcoloring is \FPT when parameterized by one of two other well-known parameters~\cite{diaz2002survey}, namely \emph{cut-width} and \emph{band-width}; see Figure~\ref{fig:hierarchy}. 

\begin{proposition}\label{prop:carving-width}
The \textsc{$b$-coloring} problem can be solved in time $f(w) \cdot n$, where $n$ is the number of vertices in the input graph, $w$ is its carving-width, and $f$ is a computable function.
\end{proposition}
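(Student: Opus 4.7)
The plan is to reduce to Proposition~\ref{prop:tree-width}. First I would observe that the carving-width $w$ upper bounds the maximum degree of $G$: the edge of the carving tree incident to the leaf representing a vertex $v$ induces a cut of size exactly $\deg(v)$, hence $\Delta(G) \le w$. Since every $b$-vertex must have degree at least $k-1$, if $k > w + 1$ then no $k$-$b$-coloring exists and we can return NO immediately.

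In the remaining case $k \le w + 1$, I would appeal to the known fact (indicated in Figure~\ref{fig:hierarchy}) that tree-width is bounded by a computable function $g$ of carving-width. Using Bodlaender's \FPT algorithm for tree-width, one can compute a tree decomposition of $G$ of width $\mathrm{tw}(G) \le g(w)$ in time $f_1(w) \cdot n$, without ever needing explicit knowledge of $w$ (the algorithm's running time adapts to the actual tree-width). Feeding this decomposition into Proposition~\ref{prop:tree-width} yields an algorithm running in time $2^{\ca O(g(w) \cdot (w+1))} \cdot n = f(w) \cdot n$, as desired.

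The main conceptual obstacle is verifying (or citing) the precise relation $\mathrm{tw}(G) \le g(w)$; however, since we only need \FPT behaviour and not a specific exponent, any computable bound is enough, and no carving of $G$ needs to be explicitly computed. As a side benefit, combining this argument with the chain carving-width $\le$ cut-width $\le$ band-width from Figure~\ref{fig:hierarchy} immediately yields the claimed \FPT algorithms under these two stronger parameters as well.
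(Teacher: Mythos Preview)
Your approach is essentially the same as the paper's: bound $k$ via the maximum degree, bound tree-width via carving-width, and invoke Proposition~\ref{prop:tree-width}. There is one small gap worth closing: you branch on ``$k > w+1$'' but never explain how the algorithm learns $w$, and you explicitly say no carving is computed. The easy fix is to branch on $k > \Delta(G)+1$ instead (which your own first paragraph already justifies, since $\Delta(G)\le w$); in the remaining case $k\le\Delta(G)+1\le w+1$, and Bodlaender plus Proposition~\ref{prop:tree-width} give the desired bound. The paper sidesteps this issue by computing an optimal carving in time $f_0(w)\cdot n$ via~\cite{ThilikosSB00}, which both reveals $w$ and directly yields a tree decomposition of width at most $3w$. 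Your route through Bodlaender is a legitimate and arguably cleaner alternative, at the cost of a less explicit bound on the tree-width in terms of $w$.
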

\begin{proof}
Let $(G, k)$ be the input instance, where $G$ is a graph with $n$ vertices and carving-width $w$.
First, we use the algorithm from~\cite{ThilikosSB00} to compute a carving $C$ certifying the carving-width of $G$ in time $f_0(w) \cdot n$ for some computable function $f_0$.
Using $C$, we can construct a tree decomposition of width at most $3w$~\cite{ThilikosSB00}.
If $k \le w+1$, then we may use Proposition~\ref{prop:tree-width} to solve the problem.
Otherwise, we answer that $G$ admits no $k$-$b$-coloring.
Indeed, in such a coloring, there would have to be a $b$-vertex of degree at least $w+1$, and there are no such vertices in graphs of carving-width $w$~\cite{BelmonteHKPT13}.
\end{proof}

In~\cite{JaffkeLS23}, it was proven that \bcoloring is \textsf{XNLP}-complete when parameterized by path-width.
They obtained this result by reducing from \textsc{Circulating Orientation} parameterized by path-width.
Since this problem is \textsf{W[1]}-hard when parameterized by tree-depth~\cite[Theorem~6.1.1.]{jong2024analysing} and it can be easily observed that the reduction to \bcoloring by~\cite{JaffkeLS23} preserves not only bounded path-width but also bounded tree-depth, we obtain the following result.

\begin{proposition}\label{prop:tree-depth-hardness}
The \textsc{$b$-coloring} problem is \textsf{W[1]}-hard when parameterized by the tree-depth of the input graph.
\end{proposition}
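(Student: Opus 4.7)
The plan is to derive the statement as an immediate corollary of two existing results by composing their reductions: the \textsf{W[1]}-hardness of \textsc{Circulating Orientation} parameterized by tree-depth~\cite[Theorem~6.1.1]{jong2024analysing}, and the polynomial-time reduction from \textsc{Circulating Orientation} to \bcoloring constructed by Jaffke, Lima, and Sharma~\cite{JaffkeLS23} in their proof that \bcoloring is \textsf{XNLP}-complete when parameterized by path-width. The only additional content to verify is that this reduction, originally analyzed with respect to path-width, also preserves bounded tree-depth, i.e.\ that it maps an instance of tree-depth $d$ to a \bcoloring instance of tree-depth bounded by a computable function of $d$.

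To verify the preservation, I would re-examine the JLS construction gadget by gadget. It replaces each vertex, each edge, and each auxiliary element of the input digraph by a gadget whose size is bounded by a function of the parameter, and connects these gadgets only along the vertex-edge incidences of the original digraph. Starting from an elimination forest $F$ of the input of depth $d$, I would assemble an elimination forest of the output graph $G$ by (i) expanding each node of $F$ into a linear chain that eliminates the vertices of the associated vertex gadget (adding a constant additive term per node), and (ii) hanging each edge gadget as a descendant of the deeper of its two endpoints in $F$. Since every edge of $G$ then connects vertices that lie either in the same gadget or in two gadgets whose originals are in an ancestor-descendant relation in $F$, the resulting forest is a valid elimination forest of $G$, and its depth is bounded by $\mathcal{O}(d)$ times the maximum depth of a single gadget, yielding tree-depth bounded by a computable function of $d$.

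The main obstacle is the case analysis required to confirm that none of the gadgets introduce edges between vertex-gadgets that are not in an ancestor-descendant relation in $F$; this is a purely syntactic check on the construction in~\cite{JaffkeLS23} and corresponds precisely to the ``easy observation'' alluded to in the statement. Once this preservation has been established, composing the two parameterized reductions --- the W[1]-hardness reduction to \textsc{Circulating Orientation} parameterized by tree-depth and the JLS reduction --- yields the desired W[1]-hardness of \bcoloring under the same parameterization.
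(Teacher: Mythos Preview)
Your proposal is correct and follows essentially the same approach as the paper: both compose the \textsf{W[1]}-hardness of \textsc{Circulating Orientation} parameterized by tree-depth from~\cite{jong2024analysing} with the reduction of~\cite{JaffkeLS23}, after checking that this reduction preserves bounded tree-depth. The paper simply asserts this preservation as ``easily observed'' without further detail, whereas you sketch an elimination-forest argument; one caveat is that your assumption that each gadget has size bounded by the parameter may be stronger than what the JLS construction actually guarantees (what is needed is only that each gadget has bounded tree-depth and attaches along the incidence structure), but this does not affect the validity of the overall argument.
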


\section{Distance to Co-Cluster}\label{sec:co-cluster}

In this section, we prove the following theorem. Recall that its proof was briefly sketched in Section~\ref{sec:intro}. 

\begin{theorem}\label{thm:co-cluster}
The \textsc{$b$-Coloring} problem can be solved in time $2^{2^{\ca O(p)}} \cdot n^{\ca O(1)}$, where $p$ is the distance to co-cluster of the input $n$-vertex graph.
\end{theorem}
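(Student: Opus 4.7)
The plan is to realize the three-phase strategy outlined in Section~\ref{sec:intro}. Let $G$ be the input graph with distance to co-cluster $p$ and target number of colors $k$. Since $S\seq V(G)$ is a co-cluster modulator of $G$ if and only if it is a cluster modulator of the complement $\overline{G}$, I begin by computing a co-cluster modulator $S$ with $|S|\le p$ by applying Theorem~\ref{thm:find-S} to $\overline{G}$ in time $\ca O(1.92^p\cdot n^2)$. Let $P_1,\dots,P_m$ be the parts of the complete multipartite graph $G-S$. For each $u\in V(G)\setminus S$, define its \emph{vertex-type} $\tau(u):=N_G(u)\cap S$; there are at most $2^p$ vertex-types. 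For each $P_i$, define its \emph{part-type} $T_i:=\{\tau(u):u\in P_i\}\seq 2^S$; there are at most $2^{2^p}$ part-types.

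The crucial structural observation is that any two vertices in different parts of $G-S$ are adjacent, so in any proper coloring $\psi$ of $G$, the color classes on $V(G)\setminus S$ lie inside a single part, and hence color sets on distinct parts are disjoint. It follows that any color $c\notin\chi:=\psi\upharpoonright S$ (a \emph{new} color) appears in a unique part $P_i$, and the $b$-vertex for $c$ must lie in $P_i$. A short case analysis using the $b$-vertex condition $\chi(\tau(v))\cup\bigcup_{j\ne i}C_j=[k]\setminus\{c\}$ (where $C_j$ denotes the color set of $P_j$) shows that each $P_i$ hosts at most one new color, that every other color of $P_i$ must lie in $\chi(\tau(v))$ for this $b$-vertex $v$, and therefore a part hosting a new color cannot simultaneously host the $b$-vertex of an old color.

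I would then enumerate a profile consisting of: a proper coloring $\chi\colon S\to[p]$; a set $B\seq S$ of chosen $b$-vertices in $S$; for each $c\in\chi(S)$, a flag saying whether $c$ also appears on $V(G)\setminus S$ and, if so, the part-type of the (unique) part hosting $c$; and for each $c\in\chi(S)\setminus\chi(B)$, the vertex-type of its $b$-vertex in $V(G)\setminus S$. The number of profiles is bounded by $p^p\cdot 2^p\cdot(1+2^{2^p})^p\cdot(2^p)^p=2^{2^{\ca O(p)}}$. The potentially huge family of new colors is \emph{not} enumerated individually: by a key \emph{interchangeability lemma} asserting that two parts of the same part-type play identical roles, the new colors enter only through aggregate counts per part-type, derivable from the profile. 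Verification of a profile then amounts to checking that each $u\in B$ sees $[k]\setminus\{\chi(u)\}$ (through $\chi(N(u)\cap S)$ and through colors reachable via some $v\in V(G)\setminus S$ with $u\in\tau(v)$); that each chosen vertex-type is realized in some part of the chosen part-type; and that the parts of each part-type suffice, after removing those used for old-color $b$-vertices and old-color placements, to host the required number of new colors—each new-color-hosting part $P_i$ with $b$-vertex $v$ satisfying the forced constraint $C_i\setminus\{c\}\seq\chi(\tau(v))$. This reduces to an integer flow problem on a meta-graph of size $2^{2^{\ca O(p)}}$, solvable in polynomial time.

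The main obstacle will be formulating and proving the interchangeability lemma rigorously, and in particular handling the fine-grained coupling between the vertex-type of the $b$-vertex chosen in a new-color-hosting part and which old colors may appear in that part. A subtle case arises when several old colors need to be placed on $V(G)\setminus S$ into parts of the \emph{same} part-type as the part hosting a chosen $b$-vertex: one must have enough parts of that type to avoid a collision. Once the interchangeability lemma is in hand and the flow formulation is laid out, the running time $2^{2^{\ca O(p)}}\cdot n^{\ca O(1)}$ follows by multiplying the profile count by polynomial-time verification.
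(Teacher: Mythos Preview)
Your outline parallels the paper's approach, but the interchangeability lemma you plan to prove is \emph{false} for your part-types, because $T_i := \{\tau(u) : u \in P_i\}$ forgets multiplicities. Take $S = \{s_1, s_2, s_3\}$ a triangle with $\chi(s_i) = i$ and two parts $P_1 = \{a_1, a_2\}$, $P_2 = \{b\}$, every vertex of $G - S$ non-adjacent to $S$. Both parts have your part-type $\{\emptyset\}$, yet the profile ``all three old colors appear on $G - S$, each in a part of type $\{\emptyset\}$'' forces two of them into the same part, and only $P_1$ qualifies; the roles of $P_1$ and $P_2$ cannot be swapped since $|P_2| = 1$. Any verification indexed purely by your coarse part-types therefore cannot be both sound and complete, and your flow reduction is underspecified precisely where it matters. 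The paper resolves this by defining a set-type (Definition~\ref{def:type}) as the function $A \mapsto \min(p+1, |\{u \in U : N^S(u) = A\}|)$, recording multiplicities capped at $p+1$; the cap is harmless because at most $p$ old colors are placed on $G - S$, and the number of such refined types remains $2^{2^{\ca O(p)}}$.

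A second, related gap: your profile records, for an old color $c$ with $b$-vertex outside $S$, only the vertex-type of that $b$-vertex and the part-type hosting $c$---not the full set of vertex-types that carry $c$. But for $u \in B$ to see $c$ through $G - S$ you need some $c$-colored vertex $v$ with $u \in \tau(v)$, and this $v$ need not be the $b$-vertex; your check for $B$ is thus underspecified. The paper's signature (Definition~\ref{def:signature}) carries the extra datum $\xi(c) := \{A \subseteq S : \text{some } c\text{-colored vertex has type } A\}$ for precisely this reason. With capped-multiplicity types and the $\xi$ data in hand, your profile becomes the paper's signature, the interchangeability lemma becomes Lemma~\ref{lem:any-minimal-coloring-can-be-used}, and the approaches coincide---the paper then dispensing with flow in favour of a direct combinatorial feasibility check (Lemma~\ref{lem:find-coloring}).
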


In Section~\ref{sub:dcc-init}, we provide necessary definitions, such as the definition of a signature, and we prove several simple lemmas.
In Section~\ref{sub:dcc-any}, we prove the key technical lemma of this section, which allows us to build a coloring based on a signature.
In Section~\ref{sub:dcc-not-used}, we show how colors not used on the co-cluster-modulator $S$ can be handled.
Finally, Theorem~\ref{thm:co-cluster} is proven in Section~\ref{sub:dcc-finish}.

\subsection{Initial setup}\label{sub:dcc-init}

Let us fix an instance of the \textsc{$b$-coloring} problem $(G, k)$.
Let $S \seq V(G)$ be a minimum co-cluster-modulator of $G$, let $p = \min(|S|, k)$, and let $\ca U$ be the set containing all maximal independent sets of $G - S$. Note that if $U_1, U_2 \in \ca U$, $v_1 \in U_1$, and $v_2 \in U_2$, then $v_1v_2 \in E(G)$ if and only if $U_1 \ne U_2$.

In the following definition, we assign a type to each set $U \in \ca U$ based on the neighborhoods that vertices in $U$ have in $S$.
If many vertices have the same neighborhood, then we do not need to record the precise number of such vertices, which makes the number of types bounded by a function of $p$, see Figure~\ref{fig:co-cluster}.

\begin{figure}[t]
\scalebox{1.1}{
\begin{tikzpicture}[every node/.style={draw, circle, minimum width=4pt, inner sep=0pt}]
\foreach \i in {1,2,3,4} {
    \node (a\i) at (0, \i*0.75) {};
    \node (c\i) at (4, \i*0.75) {};
}
\foreach \i in {4,2,3} {
    \node (b\i) at (2, \i*0.75-0.25) {};
}
\foreach \i in {1,2,3,4} {
\foreach \j in {1,2,3,4} {
    \draw[black!20!white] (a\i)-- (c\j);
}}
\foreach \i in {1,2,3,4} {
\foreach \j in {4,2,3} {
    \draw[black!20!white] (a\i)-- (b\j);
    \draw[black!20!white] (c\i)-- (b\j);
}}
\node[fill=gray, minimum width=6pt] (d) at (1, 3.5) {};
\node[fill=gray, minimum width=6pt] (e) at (3, 3.5) {};

\foreach \i in {2,3,4} {
    \draw (a\i) -- (d);
    \draw (b\i) -- (d);
}
\draw (a1)--(d);
\draw (e)--(c4)--(d);
\draw (e)--(c3)--(d); 
\draw (c2)--(e); 

\node[draw=none] (u) at (0.75, 3.75) {$u$};
\node[draw=none] (v) at (2.75, 3.75) {$v$};
\end{tikzpicture}}
\centering
\caption{An illustration of Definition~\ref{def:type} with $S = \{u, v\}$ and $p = 2$. The edges in $G - S$ are drawn in gray. The left-hand set and the middle set are of the same type, namely $\{(\emptyset, 0), (\{u\}, 3), (\{v\}, 0), (\{u, v\}, 0)\}$.
}
\label{fig:co-cluster}
\end{figure}
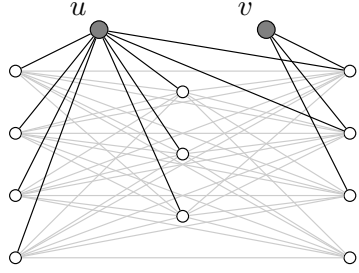

\begin{definition}\label{def:type}
A vertex $u \in \overline{S}$ has \emph{vertex-type} $A \seq S$ if $N^S(u) = A$.
A \emph{set-type} is a function $t\colon 2^S \rightarrow [0, p+1]$. A set $U \in \ca U$ \emph{has set-type} $t$ if for each $A \seq S$, we have $t(A) = \min(p+1, |\{u \in U \sep N^S(u) = A\}|)$. Let $T$ be the set of all set-types. We may simply use the term \emph{type} when it is clear from context whether we mean a vertex-type or a set-type.
\end{definition}

Now we define \emph{signatures}, which will be used to describe each proper $k$-coloring $\psi$ of $G$ in terms of the colors used on $S$.
Let us present an intuition behind the components of a signature.
The first component, $\chi$, will be the restriction of $\psi$ to $S$.
By permuting the colors, we may assume that $\range(\chi) \seq [p]$.
The other parts of a signature describe how colors in $[p]$ behave in $G - S$ according to $\psi$.
Let $U_1, \ldots, U_q \in \ca U$ be the sets such that such that some color in $[p]$ is used on some set $U_i$; the function $\tau$ describes the types of these sets.
Since $\psi$ is proper, each color in $[p]$ is used on at most one such set: the function $\lambda$ describes which set it is (if any).
Finally, the function $\xi$ describes, for each color $c \in [p]$, the connections to $S$ of vertices in $G - S$ colored with $c$.

\begin{definition}\label{def:signature}
A \emph{signature} is a tuple $\sigma = (\chi, q, \tau, \lambda, \xi)$, where:
\begin{itemize}
\item $\chi\colon S \rightarrow [p]$ is a proper coloring of $G[S]$;
\item $q \in [0, p]$ is an integer;
\item $\tau \colon [q] \rightarrow T$, $\lambda\colon [p] \rightarrow [0, q]$, and $\xi\colon [p] \rightarrow 2^{2^S}$ are functions.
\end{itemize}
A signature is required to have the following properties.
\begin{enumerate}
\item It holds that $[q]\seq \range(\lambda)$.\label{sign:surjective}
\item For each $t \in T$, $|\tau^{-1}(t)|$ is at most the number of sets of type $t$ in $\ca U$.\label{sign:set-types}
\item For each $i \in [q]$ and $A \seq S$, we have $|\{c \in [p]\colon \lambda(c) = i \land A \in \xi(c)\}| \le \tau(i)(A)$.\label{sign:vertex-types}
\end{enumerate}
\end{definition}

Let us remark that the last two properties required by Definition~\ref{def:signature} ensure that the signature is ``realizable'' by some coloring: Property~\ref{sign:set-types} states that there are enough sets of each set-type, and Property~\ref{sign:vertex-types} states that there are enough vertices of each vertex-type.

It is not hard to see that the number of signatures is bounded by a function of $p$.

\begin{observation}\label{obs:number-of-signatures}
The size of $T$ and the number of signatures are both in $2^{2^{\ca O(p)}}$.
\end{observation}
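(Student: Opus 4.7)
The plan is to verify both bounds by elementary counting directly against Definitions~\ref{def:type} and~\ref{def:signature}. The three constraints in the latter only restrict which tuples count as signatures, so I would bound the number of \emph{unconstrained} tuples and observe that this upper bound carries over.

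First, for $|T|$: a set-type is a function from $2^S$ to $[0,p+1]$, so $|T| \le (p+2)^{2^{|S|}}$. With $|S| = \ca O(p)$ (which holds in the setup of Section~\ref{sub:dcc-init}, where $p$ bounds $|S|$), taking logarithms twice gives $\log_2 \log_2 |T| \le |S| + \log_2 \log_2 (p+2) = \ca O(p)$, so $|T| \in 2^{2^{\ca O(p)}}$.

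Next, the number of signatures. I would bound each coordinate of $(\chi, q, \tau, \lambda, \xi)$ in turn: the proper colorings $\chi\colon S \to [p]$ are at most $p^{|S|}$; the choices of $q \in [0,p]$ are at most $p+1$; the functions $\tau\colon [q] \to T$ are at most $|T|^p$; the functions $\lambda\colon [p] \to [0,q]$ are at most $(p+1)^p$; and the functions $\xi\colon [p] \to 2^{2^S}$ are at most $\bigl(2^{2^{|S|}}\bigr)^p$. The first, second, and fourth quantities are at most single-exponential in $p$, while $|T|^p \le \bigl(2^{2^{\ca O(p)}}\bigr)^p = 2^{p \cdot 2^{\ca O(p)}}$ and $\bigl(2^{2^{|S|}}\bigr)^p = 2^{p \cdot 2^{|S|}}$ both lie in $2^{2^{\ca O(p)}}$. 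Multiplying the five bounds keeps the product in $2^{2^{\ca O(p)}}$.

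Since this is pure counting over finite sets whose sizes are explicit functions of $p$ and $|S|$, there is no real obstacle; the only point worth double-checking is that iterated exponentials of the form $(p+2)^{2^{\ca O(p)}}$ and $\bigl(2^{2^{\ca O(p)}}\bigr)^p$ stay inside the class $2^{2^{\ca O(p)}}$, which follows from $\log_2(p+2) \in \ca O(p)$ and $p \cdot 2^{\ca O(p)} = 2^{\ca O(p)}$.
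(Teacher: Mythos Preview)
Your proposal is correct and takes the same counting approach as the paper, bounding each coordinate of a signature separately and multiplying. One small caveat: in Section~\ref{sub:dcc-init} we have $p = \min(|S|, k) \le |S|$, so your parenthetical ``$p$ bounds $|S|$'' is backwards; the bound you obtain is really $2^{2^{\ca O(|S|)}}$, which is what matters for Theorem~\ref{thm:co-cluster} since $|S|$ equals the distance to co-cluster --- the paper's own one-line proof glosses over the same point.
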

\begin{proof}
The fact that $|T| \in 2^{2^{\ca O(p)}}$ is obvious.
There are $2^{\ca O(p\cdot \log p)}$ choices for $\chi$, $\ca O(p)$ choices for $q$, $\ca O(|T|^p) = 2^{2^{\ca O(p)}}$ choices for $\tau$, $2^{\ca O(p\cdot \log p)}$ choices for $\lambda$, and $2^{2^{\ca O(p)}}$ choices for $\xi$. Altogether, there $2^{2^{\ca O(p)}}$ signatures.
\end{proof}

Now we formally define the correspondence between signatures and colorings of $G$, which we have foreshadowed.

\begin{definition}\label{def:represent}
Let $\psi\colon V(G) \rightarrow [k]$ be a partial coloring and $\sigma = (\chi, q, \tau, \lambda, \xi)$ be a signature.
We say that $\psi$ \emph{has signature $\sigma$} (or that it is a \emph{$\sigma$-coloring})
if $\chi = \psi \upharpoonright S$ and:
\begin{enumerate}
\item there is a set $\ca U_\psi = \{U_1, \ldots, U_q\} \seq \ca U$ such that $\psi^{-1}([p]) \seq S\cup\bigcup \ca U_\psi$;\label{repr:blocks}
\item for each $i \in [q]$, $\tau(i)$ is the set-type of $U_i$;\label{repr:types}
\item for each $c \in [p]$, if $\lambda(c) = 0$, then $\psi^{-1}(c) \seq S$, and if $\lambda(c) = i \ne 0$, then $c \in \psi(U_i)$;\label{repr:f0}
\item for each $c \in [p]$, if $\lambda(c) = i \ne 0$, then $\xi(c) = \{A\seq S\sep\exists u\in U_i\colon  \psi(u) = c\land N^S(u) = A\}$;\label{repr:f1}
\end{enumerate}
\end{definition}

We say that $\psi$ is a \emph{minimal $\sigma$-coloring} if it is a $\sigma$-coloring, $\psi(V(G)) \seq [p]$, and for each $c \in [p]$ and $A \in \xi(c)$, there is a \emph{unique} vertex $u \in U_{\lambda(c)}$ such that $\psi(u) = c$ and $N^S(u) = A$.
It can be easily observed that minimal $\sigma$-colorings are exactly inclusion-wise minimal colorings that have signature $\sigma$.
Let us prove that a minimal $\sigma$-coloring can be easily computed.

\begin{lemma}\label{lem:minimal-sigma-exists}
For a signature $\sigma$, a minimal $\sigma$-coloring $\psi$ exists and can be computed in polynomial time. Moreover, the number of vertices colored by $\psi$ is $2^{\ca O(p)}$.
\end{lemma}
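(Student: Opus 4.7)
The plan is to build $\psi$ directly from $\sigma$ in a two-stage greedy manner. In the first stage, I would pick $q$ pairwise distinct sets $U_1, \ldots, U_q \in \ca U$ such that each $U_i$ has set-type $\tau(i)$. This is possible thanks to Property~\ref{sign:set-types} of Definition~\ref{def:signature}, which guarantees that for every set-type $t$ the number of indices $i$ with $\tau(i) = t$ is at most the number of elements of $\ca U$ of set-type $t$; hence Hall's condition trivially holds. The set-type of every element of $\ca U$ can be computed in polynomial time, after which the indices $1, \ldots, q$ are distributed among the elements of $\ca U$ by a simple greedy assignment that processes set-types one at a time.

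In the second stage, I would start with $\psi \upharpoonright S := \chi$, and then for each color $c \in [p]$ with $\lambda(c) = i \ne 0$ and each $A \in \xi(c)$, I would pick an arbitrary, so-far unused vertex $u \in U_i$ with $N^S(u) = A$ and set $\psi(u) := c$. The only point that needs checking is that for each fixed pair $(i, A)$, enough such vertices remain available: the number of demands is $|\{c \in [p] \sep \lambda(c) = i \land A \in \xi(c)\}|$, which is at most $\tau(i)(A)$ by Property~\ref{sign:vertex-types}. If $\tau(i)(A) \le p$, then $U_i$ contains exactly $\tau(i)(A)$ vertices of vertex-type $A$, just enough; if $\tau(i)(A) = p+1$, then $U_i$ contains at least $p+1$ such vertices, which still suffices since the number of demands cannot exceed the number $p$ of colors.

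It then only remains to verify, by inspecting the four conditions of Definition~\ref{def:represent} together with the minimality requirement, that the partial coloring produced in this way is indeed a minimal $\sigma$-coloring; this is a routine check that follows directly from how we made our choices. To bound the number of colored vertices, note that $|\dom(\psi)| \le |S| + \sum_{c \in [p]} |\xi(c)| \le p + p \cdot 2^p = 2^{\ca O(p)}$. The overall running time is polynomial, since every substep is either a linear scan through $\ca U$ or a single vertex assignment, and the total number of assignments is bounded by the output size $2^{\ca O(p)}$. I do not foresee any substantive obstacle: Properties~\ref{sign:set-types} and~\ref{sign:vertex-types} in Definition~\ref{def:signature} have been engineered precisely so that the greedy construction always succeeds, and the only mildly subtle point is the case split on whether $\tau(i)(A)$ has saturated the cap $p+1$.
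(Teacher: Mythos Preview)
Your proposal is correct and follows essentially the same approach as the paper's proof: greedily pick distinct $U_1,\ldots,U_q$ using Property~\ref{sign:set-types}, then for each $c$ with $\lambda(c)=i\ne 0$ and each $A\in\xi(c)$ assign one fresh vertex of vertex-type $A$ in $U_i$ using Property~\ref{sign:vertex-types}, and bound the domain by $p+p\cdot 2^p$. The paper's proof is terser (it does not spell out the Hall-condition remark or the cap case split), but the argument is the same.
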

\begin{proof}
Let $\sigma = (\chi, q, \tau, \lambda, \xi)$ be a signature. We define $\psi(u) = \chi(u)$ for each $u \in S$. For each $i \in [q]$, we find a set $U_i \in \ca U$ of set-type $\tau(i)$. We ensure that $U_i\ne U_j$ for $i\ne j$; this can be achieved by Property~\ref{sign:set-types} of Definition~\ref{def:signature}. For each $c \in[p]$ such that $\lambda(c) = i \ne 0$ and each $A\in \xi(c)$, we color a single vertex $u\in U_i$ of vertex-type $A$ with color $c$; there are enough vertices of vertex-type $A$ in $U_i$ by Property~\ref{sign:vertex-types} of Definition~\ref{def:signature}. The obtained coloring $\psi$ is clearly a minimal $\sigma$-coloring, and the described construction can be executed in polynomial time.
Finally, observe that at most $p + p\cdot 2^p \in 2^{\ca O(p)}$ vertices are colored by $\psi$: $p$ in $S$, and for each color $c \in [p]$, at most $2^p$ vertices are colored with $c$. 
\end{proof}

The following lemma states the ``converse'' of Lemma~\ref{lem:minimal-sigma-exists}: each coloring has a signature.

\begin{lemma}\label{lem:b-coloring-has-signature}
If $\psi$ is a $k$-$b$-coloring of $G$, then there is a signature $\sigma = (\chi, q, \tau, \lambda, \xi)$ such that $\psi$ is a $\sigma$-coloring (up to permutation of colors in $\psi$).
\end{lemma}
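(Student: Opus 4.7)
The plan is to extract a signature directly from $\psi$, reading off its restriction to $S$ and its behaviour on the ``used'' maximal independent sets of $G - S$. Because $\psi \upharpoonright S$ is a proper coloring on at most $|S|$ vertices with at most $k$ colors, its image has at most $\min(|S|, k) = p$ colors, so after a suitable permutation of $[k]$ I may assume $\psi(S) \seq [p]$ and set $\chi := \psi \upharpoonright S$. I then collect
\[ \ca U_\psi := \{U \in \ca U \sep \psi(U) \cap [p] \ne \emptyset\} \]
and enumerate its elements arbitrarily as $U_1, \ldots, U_q$. The crucial structural point is that $q \le p$: any two distinct members of $\ca U$ are completely joined in $G$, so every color in $[p]$ appears in at most one $U_i$, while by construction each $U_i$ is witnessed by some color in $[p]$.

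With $U_1, \ldots, U_q$ fixed, the remaining three components of the signature essentially write themselves. I set $\tau(i)$ to be the set-type of $U_i$; define $\lambda(c) := 0$ if $\psi^{-1}(c) \seq S$, and $\lambda(c) := i$ otherwise, where $U_i$ is the unique member of $\ca U_\psi$ meeting $\psi^{-1}(c)$ (uniqueness again by the complete-multipartite structure); and set $\xi(c) := \{N^S(u) \sep u \in U_{\lambda(c)} \land \psi(u) = c\}$ when $\lambda(c) \ne 0$, and $\xi(c) := \emptyset$ otherwise.

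It then remains to verify the three defining properties of a signature (Definition~\ref{def:signature}) and the four properties of Definition~\ref{def:represent}. Property~\ref{sign:surjective} is immediate since each $U_i \in \ca U_\psi$ witnesses some $c \in [p]$ with $\lambda(c) = i$. Property~\ref{sign:set-types} holds because $U_1, \ldots, U_q$ are pairwise distinct elements of $\ca U$, so $\tau^{-1}(t)$ injects into the type-$t$ members of $\ca U$. For Property~\ref{sign:vertex-types}, the colors $c$ with $\lambda(c) = i$ and $A \in \xi(c)$ inject into the vertices of $U_i$ of vertex-type $A$, bounding their count by $\tau(i)(A)$ (trivially so when $\tau(i)(A) = p + 1$). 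The four representation properties are then direct consequences of how $\ca U_\psi, \tau, \lambda, \xi$ were defined. I do not anticipate any real obstacle here; the only care needed is the initial permutation of colors (which the statement explicitly allows) and the observation that the complete-multipartite structure of $G - S$ both forces $q \le p$ and makes $\lambda$ well-defined.
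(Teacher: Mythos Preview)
Your proposal is correct and follows essentially the same approach as the paper's own proof: both permute colors so that $\psi(S)\subseteq[p]$, take $\ca U_\psi$ to be the maximal independent sets of $G-S$ that use a color from $[p]$, read off $\tau,\lambda,\xi$ in the natural way, and verify the three signature properties via the complete-multipartite structure of $G-S$ (for $q\le p$ and well-definedness of $\lambda$) and the pigeonhole/injection argument (for Property~\ref{sign:vertex-types}). The only cosmetic difference is phrasing; your treatment of Property~\ref{sign:vertex-types} via the injection $c\mapsto u$ is exactly the paper's argument.
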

\begin{proof}
We may assume that $\psi(S) \seq [p]$. 
Let us now construct $\sigma$ based on $\psi$.
Let $\chi = \psi \upharpoonright S$. 
Let $\ca U_\psi = \{U_1, \ldots, U_q\}$ be the minimal subset of $\ca U$ such that $\psi^{-1}([p]) \seq S\cup\bigcup \ca U_\psi$, and let $\tau(i)$ be the set-type of $U_i$ for $i \in [q]$.
Since $G[U_i \cup U_j]$ is a complete bipartite graph for each $i,j \in [q]$ and $\psi$ is proper, there is, for each $c \in [p]$, at most one $i \in [q]$ such that $c \in \psi(U_i)$. If such $i$ exists, we define $\lambda(c) = i$; otherwise, $\lambda(c) = 0$. Finally, we define
\[\xi(c) = \begin{cases}
\emptyset \text{ if } \lambda(c) = 0\\
\{A \seq S \sep \exists u \in U_i\colon \psi(u) = c \land N^S(u) = A\}  \text{ if } \lambda(c) = i \ne 0.
\end{cases}\]

Now we need to verify that $\sigma$ is indeed a signature. Suppose for contradiction that $q > p$. By minimality of $\ca U_\psi$, we have $\psi(U_i) \cap [p] \ne \emptyset$ for every $i \in [q]$. Hence, by the pigeonhole principle, there is $c \in [p]$ and $i, j \in [q]$ such that $i\ne j$ and $c \in \psi(U_i) \cap \psi(U_{j})$, which is a contradiction with $\psi$ being proper.
Hence, $q \in [0, p]$, as required.

What remains is to show that the three properties required by Definition~\ref{def:signature} are satisfied.
First, Property~\ref{sign:surjective} holds by minimality of $\ca U_\psi$.
Second, Property~\ref{sign:set-types} is satisfied because for each $i \in [q]$, the corresponding set $U_i$ has set-type $\tau(i)$.
Finally, let $i \in [q]$, $A \seq S$, and let $m \in [0, p]$ be the number of colors from $[p]$ that have a vertex in $U_i$ of vertex-type $A$, i.e., $m$ is the left-hand side of Property~\ref{sign:vertex-types}. By the pigeonhole principle, $m \le m' := |\{u \in U_i \sep N^S(u) = A\}|$. By Definition~\ref{def:type}, we know that $\tau(i)(A) \in \{m', p+1\}$, and so $m \le \tau(i)(A)$, and Property~\ref{sign:vertex-types} is satisfied.
Therefore, $\sigma$ is indeed a signature (and the fact that $\psi$ is a $\sigma$-coloring is obvious from the construction).
\end{proof}

Let us now observe that the behavior of colors \emph{not} in $[p]$ is significantly restricted in any $b$-coloring of $G$. 

\begin{observation}\label{obs:anonym}
If $\psi$ is a $k$-$b$-coloring of $G$ with signature $\sigma$, then for distinct colors $c, d \in [p+1, k]$, there are sets $U_c, U_d \in \ca U$ such that $\psi^{-1}(c) \seq U_c$, $\psi^{-1}(d) \seq U_d$, and $U_c \ne U_d$.
\end{observation}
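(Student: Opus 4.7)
The plan is to exploit two facts: colors in $[p+1,k]$ are never used on the modulator $S$ (because $\chi\colon S \rightarrow [p]$), and color classes of a proper coloring are independent sets, which in a complete multipartite graph must lie inside a single part.

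First, I would observe that since $\psi \upharpoonright S = \chi$ and $\range(\chi) \seq [p]$, no vertex of $S$ receives a color from $[p+1, k]$. Hence $\psi^{-1}(c), \psi^{-1}(d) \seq \overline{S}$. Because $\psi$ is proper, each of these preimages is an independent set of $G - S$. The graph $G - S$ is a co-cluster graph whose maximal independent sets are exactly the elements of $\ca U$, so every independent set of $G - S$ is contained in some member of $\ca U$. This produces sets $U_c, U_d \in \ca U$ with $\psi^{-1}(c) \seq U_c$ and $\psi^{-1}(d) \seq U_d$, handling the existence part of the statement.

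The remaining content is the separation $U_c \ne U_d$, which I would prove by contradiction. Suppose $U_c = U_d = U$. Since $\psi$ is a $k$-$b$-coloring, there is a $b$-vertex $v$ with $\psi(v) = c$; then $v \in \psi^{-1}(c) \seq U$. As $U$ is an independent set of $G$, we have $N(v) \cap U = \emptyset$. Combined with $\psi^{-1}(d) \seq U$, this gives $d \notin \psi(N(v))$, and since $\psi(v) = c \ne d$, also $d \notin \psi(N[v])$. This contradicts the fact that $v$ is a $b$-vertex, which requires $\psi(N[v]) = [k]$.

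There is no real obstacle here; the statement essentially unfolds from the definitions once one spots that the only way a color avoids $S$ is by occupying a single part of the co-cluster, and that a $b$-vertex confined to an independent part cannot witness a second ``hidden'' color inside the same part.
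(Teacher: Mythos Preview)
Your proof is correct and follows essentially the same route as the paper: both arguments use that colors in $[p+1,k]$ avoid $S$, that a proper color class is independent and hence lies inside a single part of the co-cluster, and that $U_c = U_d$ would prevent any $b$-vertex of color $c$ from seeing color $d$. The paper's write-up is slightly terser, but the content is the same.
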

\begin{proof}
Let $c \in [p+1,k]$ be a color. Since $\psi$ is a $b$-coloring, there is a vertex of color $c$ in $\psi$, and since $\psi$ is proper and $\psi(S) \seq [p]$ by Definition~\ref{def:represent}, there is a set $U_c\in\ca U$ such that $\psi^{-1}(c) \seq U_c$.
If $U_c = U_d$ for distinct colors $c, d \in [p+1, k]$, then no vertex of color $c$ would have a neighbor of color $d$, which is a contradiction with $\psi$ being a $b$-coloring.
\end{proof}

\subsection{Any minimal $\sigma$-coloring can be used}\label{sub:dcc-any}

Informally, the following lemma says that two minimal $\sigma$-colorings are interchangeable, i.e., if one such coloring can be extended into a $k$-$b$-coloring of $G$, then any other such coloring can be extended as well.
It is perhaps the most technical lemma of this section.

\begin{lemma}\label{lem:any-minimal-coloring-can-be-used}
If $\sigma = (\chi_S, B, q, \tau, \lambda, \xi)$ is a signature, $\psi\colon V(G) \rightarrow [k]$ is a $b$-coloring of $G$ with signature $\sigma$, and $\chi \colon V(G) \rightarrow [p]$ is a minimal $\sigma$-coloring, then there is a $b$-coloring $\chi' \colon V(G) \rightarrow [k]$ of $G$ such that $\chi \seq \chi'$.
\end{lemma}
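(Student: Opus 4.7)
The approach is to pull back $\psi$ to a coloring $\chi'$ extending $\chi$ via a type-preserving correspondence between the parts of $\ca U$. Since the tuples $(U_1^\chi,\ldots,U_q^\chi)$ and $(U_1^\psi,\ldots,U_q^\psi)$ share the tuple of set-types $(\tau(1),\ldots,\tau(q))$, the multisets of set-types of $\ca U \setminus \ca U_\chi$ and $\ca U \setminus \ca U_\psi$ agree, so I may choose a set-type-preserving bijection $\phi\colon \ca U \to \ca U$ with $\phi(U_i^\chi) = U_i^\psi$ for each $i \in [q]$. For every $U \in \ca U$ and every vertex-type $A \seq S$, the sets of type-$A$ vertices in $U$ and $\phi(U)$ either both have the same size $\tau(U)(A) \le p$ or both have size at least $p+1$. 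I then fix a type-preserving partial bijection $\pi_U$ between $U$ and $\phi(U)$ that is a bijection on every vertex-type with size $\le p$ and a maximum injection from the smaller side into the larger side whenever the size is $\ge p+1$; I arrange $\pi_U$ so that (a) its image on each type $A$ contains at least one representative of every $\psi$-color that appears on type-$A$ vertices of $\phi(U)$, and (b) if $U = U_i^\chi$, then every $v \in V_\chi \cap U_i^\chi$ with $\chi(v)=c$ and $N^S(v) = A$ is sent to a type-$A$ vertex of $U_i^\psi$ of $\psi$-color $c$.

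\textbf{Defining $\chi'$.} Requirement (a) is satisfiable because by Definition~\ref{def:represent} and Observation~\ref{obs:anonym} there are at most $p+1$ distinct $\psi$-colors on type-$A$ vertices of $\phi(U)$ --- at most $p$ from $[p]$ plus at most one from $[p+1,k]$ --- which fits in the $\ge p+1$ slots of the smaller side whenever the count is $p+1$; requirement (b) is satisfiable by Definition~\ref{def:represent} and is compatible with (a) because the colors appearing on $V_\chi \cap U_i^\chi$ of each type are already among those forced by (a). Now set $\chi'(v) = \chi_S(v)$ for $v \in S$, $\chi'(v) = \psi(\pi_U(v))$ whenever $v \in U$ lies in $\dom(\pi_U)$, and for each leftover $v \in U$ copy the color of any false twin of $v$ already in $\dom(\pi_U)$ (such a twin exists since the vertex-type of $v$ has $\ge p+1$ vertices and at most $p$ are transferred away). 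By construction $\chi \seq \chi'$, and properness is inherited from $\psi$: adjacencies between distinct parts of $\ca U$ are preserved by $\phi$, adjacencies with $S$ by the $N^S$-preservation of each $\pi_U$, and leftover vertices take colors from non-adjacent twins.

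\textbf{The $b$-vertex property and main obstacle.} For every color $c \in [k]$ with $\psi$-$b$-vertex $w_c$, I exhibit a $\chi'$-$b$-vertex $w_c'$: if $w_c \in S$ I take $w_c' = w_c$; otherwise $w_c \in \phi(V)$ for some $V \in \ca U$, and requirement (a) gives some $w_c' \in V$ of the same vertex-type as $w_c$ with $\chi'(w_c') = c$. Applying (a) on every part yields $\chi'(W) = \psi(\phi(W))$ for each $W \in \ca U$, so using that $\phi$ is a bijection of $\ca U$ and that $\overline{S} \setminus V$ is a union of whole parts of $\ca U$, one checks $\chi'(N[w_c']) = \psi(N[w_c]) = [k]$. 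The main obstacle is precisely the vertex-types of count exactly $p+1$: there the correspondence $\pi_U$ is only a strict injection, and one must simultaneously enforce (a), enforce (b), and safely color leftover vertices without erasing any color from some $b$-vertex's closed neighborhood. The enabler is the bound of at most $p+1$ distinct $\psi$-colors per vertex-type per part supplied by Observation~\ref{obs:anonym} and Definition~\ref{def:represent}, which fits exactly inside the $p+1$ slots guaranteed by the count threshold.
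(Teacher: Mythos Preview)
Your proof is correct and takes a more elementary route than the paper. The paper factors through a quotient graph $H = G/{\sim}$ (where $\sim$ identifies false twins in $\ca U_\chi \cup \ca U_\psi$ of the same $\psi$-color), applies Lemma~\ref{lem:two-way-hom} to obtain a $b$-coloring $\psi_H$ of $H$, then builds a second surjective two-way homomorphism $h\colon G \to H$ tailored so that $\chi' := \psi_H \circ h$ extends $\chi$, and applies Lemma~\ref{lem:two-way-hom} again. You instead construct $\chi'$ directly via type-preserving partial bijections $\pi_U\colon U \dashrightarrow \phi(U)$ and verify the $b$-vertex property by hand. Both arguments rest on the same two ingredients: a set-type-preserving bijection $\phi\colon \ca U \to \ca U$ sending $U_i^\chi$ to $U_i^\psi$ (the paper's $g^{-1}$), and the count that each vertex-type in each part carries at most $p+1$ distinct $\psi$-colors (Observation~\ref{obs:anonym}). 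The paper's route isolates the transfer of $b$-colorings into a reusable lemma; yours avoids the quotient machinery entirely. Two small presentational points: the phrase ``at most $p$ are transferred away'' is off --- the right reason a leftover $v$ has a twin in $\dom(\pi_U)$ is that $|\dom(\pi_U)\cap\text{type-}A| = \min(|U\text{-type-}A|,|\phi(U)\text{-type-}A|) \ge p+1$; and for $w_c \in S$ you actually need the type-level identity $\chi'(\{w\in W: N^S(w)=A\}) = \psi(\{w\in\phi(W): N^S(w)=A\})$, not just the part-level one you state, since $N(w_c)\cap\overline{S}$ is a union of type-slices rather than whole parts --- but this follows immediately from your requirement~(a).
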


\begin{proof}
Let $\ca U_\chi = \{U_1, \ldots, U_q\}$ and $\ca U_\psi = \{V_1, \ldots, V_q\}$ be the sets as per Definition~\ref{def:represent}.
For $u,v \in V(G)$, we say that $u \sim_0 v$ if $u,v \in W$ for some $W \in \ca U_\chi\cup \ca U_\psi$, $u$ and $v$ are twins, and $\psi(u) = \psi(v)$. Let $\sim$ be the equivalence relation on $V(G)$ generated by $\sim_0$, let $H = G/\mathord\sim$, and let $\eta\colon G \rightarrow H$ be the natural homomorphism, i.e., $\eta(u) = [u]_\sim$.
Since each class $[u]_\sim \in V(H)$ is an independent set of twins in $G$, we know that $\eta$ is a surjective two-way homomorphism.
Moreover, since $u \sim v$ implies $\psi(u) = \psi(v)$, we have that $\ker(\eta) \seq \ker(\psi)$, and so by Lemma~\ref{lem:two-way-hom}, there is a $k$-$b$-coloring $\psi_H$ of $H$ such that $\psi = \psi_H \circ\eta$, see Figure~\ref{fig:diagram}.

We define a bijection $g\colon\ca U\rightarrow\ca U$ as follows. For $U_i\in\ca U_\chi$, we define $g(U_i)=V_i$, and for $U\notin\ca U_\chi\cup\ca U_\psi$, we define $g(U)=U$. Now it only remains to define $g(V)$ for each $V\in\ca U_\psi\setminus\ca U_\chi$. For each set-type $t \in T$, let $\ca U^t = \{U \in \ca U_\psi \cap \ca U_\chi \sep $the type of $U$ is $t\}$, let $\ca U_\chi^t = \{U \in \ca U_\chi\setminus \ca U_\psi \sep $the type of $U$ is $t\}$, and $\ca U_\psi^t = \{U \in \ca U_\psi \setminus\ca U_\chi \sep $the type of $U$ is $t\}$.
Observe that $|\ca U_\chi^t| =  |\tau^{-1}(t)| - |\ca U^t| = |\ca U^t_\psi|$. Hence, there is a bijection $g_t\colon \ca U^t_\psi \rightarrow \ca U^t_\chi$, and we may define $g(V) = g_t(V)$ for each $V\in\ca U_\psi\setminus\ca U_\chi$ of set-type $t$. It can be easily observed that $g$ is indeed a bijection and that the set-type of $U\in\ca U$ equals the set type of $g(U)$.

\begin{figure}[t]
\adjustbox{scale=1.2,center}{
$\begin{tikzcd}
	& {K_k} & {K_p} \\
	G & H & G & {G_\chi} \\
	&& {G[S]}
	\arrow[hook', from=1-3, to=1-2]
	\arrow["\psi", from=2-1, to=1-2]
	\arrow["\eta", from=2-1, to=2-2]
	\arrow["{\psi_H}"', from=2-2, to=1-2]
	\arrow["{\chi'}"', from=2-3, to=1-2]
	\arrow["h"{pos=0.6}, from=2-3, to=2-2]
	\arrow["\chi"', from=2-4, to=1-3]
	\arrow[hook', from=2-4, to=2-3]
	\arrow[hook', from=3-3, to=2-1]
	\arrow[hook, from=3-3, to=2-4]
\end{tikzcd}$
}
\caption{A commutative diagram depicting the proof of Lemma~\ref{lem:any-minimal-coloring-can-be-used}. The arrows are graph homomorphisms: colorings are viewed as homomorphisms into cliques ($K_k$ and $K_p$ are cliques on vertex sets $[k]$ and $[p]$, respectively). $G_\chi$ is the subgraph of $G$ induced by the vertices on which $\chi$ is defined. The unnamed arrows are inclusion maps. Note that all named arrows are surjective and that $\eta$ and $h$ are two-way homomorphisms.}
\label{fig:diagram}
\end{figure}
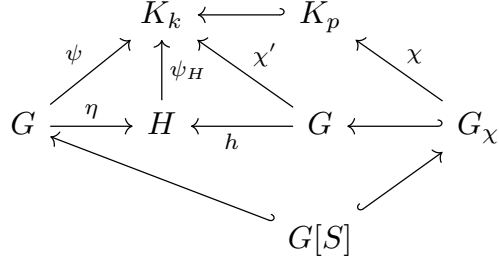

Our goal now is to define \emph{another} surjective two-way homomorphism $h\colon G\rightarrow H$ so that we can define $\chi' = \psi_H \circ h$, and then again use Lemma~\ref{lem:two-way-hom} to show that $\chi'$ is a $k$-$b$-coloring of $G$.
Moreover, we will need to ensure that whenever $\chi(u)$ is defined for some $u \in V(G)$, then $\chi'(u) = \chi(u)$, see Figure~\ref{fig:diagram}.
Another property of $h$ we will ensure is that if $u \in U \in \ca U$ and $h(u) = [v]_\sim$, then $u$ and $v$ have the same vertex-type and $v \in g(U)$; this property makes sense because all vertices in $[v]_\sim$ have the same vertex-type and $[v]_\sim \seq W$ for some $W \in\ca U$.

First, let $u\in U_i\in\ca U_\chi$ be a vertex of type $A\seq S$. If $\chi(u)$ is defined, then since $\chi$ and $\psi$ have the same signature, there is a vertex $v \in V_i$ of type $A$ such that $\psi(v) = \chi(u)$, and we set $h(u) = [v]_\sim$.
Observe that $\chi'(u) = (\psi_H \circ h)(u) = \psi_H([v]_\sim) = (\psi_H \circ\eta)(v) = \psi(v) = \chi(u)$ as required. 
Now suppose $\chi(u)$ is undefined. If there is a color $c \in [p+1, k]$ and a vertex $v \in V_i$ of type $A$ such that $\psi(v) = c$, then we define $h(u) = [v]_\sim$.
Otherwise, there is a vertex $v \in V_i$ of type $A$ (since $U_i$ and $V_i$ have the same set-type): we choose any such vertex $v$ 
and we define $h(u) = [v]_\sim$.

Second, let $v\in V \in\ca U_\psi\setminus \ca U_\chi$ be a vertex of type $A\seq S$. Since the set-type of $V$ equals the set-type of $g(V)$, there is a vertex $u \in g(V)$ of vertex-type $A$, and we define $h(v) = [u]_\sim$.
Observe that $\chi(v)$ is undefined because $v \notin \bigcup \ca U_\chi$.
Finally, for all other $u \in V(G)$, we define $h(u) = [u]_\sim$.
If $\chi(u)$ is defined, then $u \in S$ (because $u \notin \bigcup \ca U_\chi$), and $\chi'(u) = (\psi_H \circ h)(u) = (\psi_H \circ\eta)(u) = \psi(u) = \chi_S(u) = \chi(u)$ as required.
In all cases, we have showed that $\chi \seq\chi'$.

\subparagraph*{$h$ is a two-way homomorphism.}
Let $u,v \in V(G)$.
If $h(u) = [u]_\sim$ and $h(v) = [v]_\sim$, then $h(u)h(v) \in E(G)$ if and only if $uv \in E(G)$ by definition of $\sim$.
Hence, without loss of generality, we may assume that $h(u) = [w]_\sim \ne [u]_\sim$.
By construction of $h$, this implies that $u \in U\in\ca U$, that $u$ and $w$ have the same vertex-type, and that $w \in g(U)$. If $v \in S$, then $uv \in E(G) \leftrightarrow wv \in E(G) \leftrightarrow [w]_\sim[v]_\sim \in E(H) \leftrightarrow h(u)h(v) \in E(H)$. Otherwise, $v \in V \in\ca U$, $h(v) = [x]_\sim$, and $x \in g(V)$. Using the fact that $g$ is a bijection (and that $G-S$ is a co-cluster graph), we deduce that $uv \in E(G) \leftrightarrow U \ne V \leftrightarrow g(U) \ne g(V) \leftrightarrow wx \in E(G) \leftrightarrow [w]_\sim[x]_\sim \in E(H) \leftrightarrow h(u)h(v) \in E(H)$. Hence, $h$ is indeed a two-way homomorphism.

\subparagraph*{$h$ is surjective}
Let $[v]_\sim \in V(H)$ and suppose that $h(v) \ne [v]_\sim$, which means that $v \in W \in \ca U_\chi\cup\ca U_\psi$, by construction of $h$. Let $A$ be the vertex-type of $v$ and let $\psi(v) = c$. First, suppose that $v \in V_i \in \ca U_\psi$. If $c \in [p]$, then by Definition~\ref{def:represent}, there is a vertex $u \in U_i$ of vertex-type $A$ such that $\chi(u) = c$, and $h(u) = [v]_\sim$ by construction of $h$.

Now suppose $c \in [p+1, k]$.
Let $m = |\{w \in V_i \sep N^S(w) = A\}|$ and $m' = |\{w \in U_i \sep N^S(w) = A\}|$.
By Definition~\ref{def:type}, $m' \ge \min\{m, p+1\}$ because $V_i$ and $U_i$ have the same set-type.
Let $C = \{d \in [p]\colon \lambda(d) = i \land A \in \xi(d)\}$ and observe that $m \ge |C|+1$: there are at least $|C|$ vertices of type $A$ colored with colors from $C$ in $V_i$ in $\psi$ plus at least one vertex of color $c$ (namely, $v$). Recall that $\chi$ is a minimal $\sigma$-coloring, and so there are exactly $|C|$ colored vertices of type $A$ in $U_i$ in $\chi$: one for each color in $C$.
Since $p \ge |C|$, we deduce that $m' \ge \min\{m, p+1\} \ge |C|+1$, i.e., there must be a vertex $u$ of type $A$ in $U_i$ uncolored by $\chi$. By construction of $h$, we know that $h(u) = [w]_\sim$ for some $w \in V_i$ of type $A$ such that $\psi(w) \in [p+1, k]$. However, by Observation~\ref{obs:anonym}, $\psi(w) = c$ and $v \sim w$. Hence, $h(u) = [v]_\sim$.

Finally, suppose that $v \in U \in\ca U_\chi\setminus\ca U_\psi$.
Since $g$ is a bijection that preserves set-types, there is a vertex $u \in g^{-1}(U)$ of type $A$.
By construction of $h$, we know that $h(u) = [w]_\sim$ for some vertex $w \in U$ of type $A$.
However, by Observation~\ref{obs:anonym}, $\psi(U) = \{c\}$ because $U \notin \ca U_\psi$.
Hence, $w \sim v$ by definition of $\sim$, which concludes the proof of surjectivity of $h$.

\smallskip

We have proven that $h$ is a surjective two-way homomorphism, which implies, by Lemma~\ref{lem:two-way-hom}, that $\chi'$ is indeed a $k$-$b$-coloring.
\end{proof}

\subsection{Handling colors not used on $S$}\label{sub:dcc-not-used}

Before we continue, we prove a simple lemma concerning two $\sigma$-colorings that are comparable by inclusion.

\begin{lemma}\label{lem:twins}
If $\sigma = (\chi_S, q, \tau, \lambda, \xi)$ is a signature and $\chi, \psi$ are two proper $\sigma$-colorings such that $\chi\seq\psi$, then $\ca U_\psi = \ca U_\chi$ (see Definition~\ref{def:represent}).
Moreover, each vertex $u\in V(G)$ such that $\psi(u) \in [p]$ has a twin $u'$ such that $\chi(u') = \psi(u)$.
\end{lemma}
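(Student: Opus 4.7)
\medskip

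\noindent\textbf{Proof plan.} The plan is to exploit two key structural facts: (a) since $G-S$ is a co-cluster graph (complete multipartite), the family $\ca U$ of maximal independent sets of $G-S$ partitions $\overline{S}$; and (b) properness of a $\sigma$-coloring combined with the complete-multipartite structure forces each color in $[p]$ with $\lambda(c) \ne 0$ to appear in exactly one member of $\ca U_\chi$ (resp.\ $\ca U_\psi$), namely the one indexed by $\lambda(c)$.

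First I would establish $\ca U_\chi = \ca U_\psi$. Fix $i \in [q]$. By Property~\ref{sign:surjective} of Definition~\ref{def:signature}, there exists $c \in [p]$ with $\lambda(c) = i$. By Property~\ref{repr:f0} applied to $\chi$, there is some $u \in U_i$ with $\chi(u) = c$, and then $\psi(u) = c$ since $\chi \seq \psi$. Because $u \in \overline{S}$, Property~\ref{repr:blocks} for $\psi$ forces $u \in V_j$ for some $j \in [q]$, and since $\ca U$ partitions $\overline{S}$ we get $U_i = V_j$. On the other hand, because $G[V_j \cup V_{j'}]$ is a complete bipartite graph whenever $V_j \ne V_{j'}$ (as already observed in the proof of Lemma~\ref{lem:b-coloring-has-signature}) and $\psi$ is proper, the color $c$ appears in at most one element of $\ca U_\psi$; by Property~\ref{repr:f0} for $\psi$ that element must be $V_i$. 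Hence $j = i$ and $U_i = V_i$, giving $\ca U_\chi = \ca U_\psi$.

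For the twin statement, let $u \in V(G)$ with $\psi(u) = c \in [p]$. If $u \in S$, then $\chi(u) = \chi_S(u) = \psi(u) = c$, and $u$ is a twin of itself. Otherwise $u \in V_i$ for some $i \in [q]$, and Property~\ref{repr:f0} applied to $\psi$ yields $\lambda(c) = i$. Applying Property~\ref{repr:f1} to $\psi$ gives $N^S(u) \in \xi(c)$; applying Property~\ref{repr:f1} to $\chi$ (and using $U_i = V_i$ from the previous paragraph) then produces a vertex $u' \in U_i$ with $\chi(u') = c$ and $N^S(u') = N^S(u)$. Since $u, u' \in V_i$ lie in the same part of the complete multipartite graph $G - S$, they are non-adjacent and share the same neighborhood in $\overline{S}$, namely $\overline{S} \setminus V_i$. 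Combined with $N^S(u) = N^S(u')$, this makes $u$ and $u'$ false twins in $G$, as required.

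I expect no serious obstacle: the lemma is essentially a bookkeeping exercise that repeatedly invokes the four clauses of Definition~\ref{def:represent} together with the partition property of $\ca U$. The only subtlety is remembering that $\chi$ and $\psi$ share the \emph{same} signature, so in particular the same $\lambda$ and $\xi$, which is what lets one transfer information about $\psi$ back to $\chi$ via $\chi \seq \psi$.
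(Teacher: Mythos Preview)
Your proof is correct and follows essentially the same route as the paper's: use surjectivity of $\lambda$ together with Condition~\ref{repr:f0} and properness to pin down $U_i = V_i$, then use Condition~\ref{repr:f1} (for both $\psi$ and $\chi$) to produce the twin $u'$ with the same vertex-type in the same $U_i$. Your write-up is in fact a bit more explicit than the paper's (spelling out why $j=i$ via the ``at most one block per color'' argument, and why same part plus same $N^S$ yields false twins), but the underlying argument is identical.
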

\begin{proof}
Let $\ca U_\chi = \{U_1, \ldots, U_q\}$, $\ca U_\psi = \{V_1, \ldots, V_q\}$, and let $i\in [q]$ be any index. By Property~\ref{sign:surjective} of Definition~\ref{def:signature}, there is $c \in [p]$ such that $\lambda(c) = i$, and by Condition~\ref{repr:f0} of Definition~\ref{def:represent}, there is $u \in U_i$ such that $\chi(u) = c$. Since $\psi$ is proper, $\psi(u) = c$, and $c \in \psi(V_i)$, we obtain that $u \in V_i$, which means that $V_i = U_i$.

Now suppose that $u\in V(G)$ is a vertex such that $\psi(u) = c \in [p]$. If $\chi(u)$ is defined, we choose $u' = u$. Suppose $\chi(u)$ is undefined. By Definition~\ref{def:represent}, $\chi \upharpoonright S = \chi_S$, which implies that $u \notin S$. Since $\psi$ is a $\sigma$-coloring, we know that $\lambda(c) = i \ne 0$, $u \in U_i$, and $N^S(u) \in \xi(c)$. Since $\chi$ is a $\sigma$-coloring, there is a vertex $u' \in U_i$ with vertex-type $N^S(u)$ such that $\chi(u') = c$. Since $u$ and $u'$ are both in $U_i$, they are indeed twins in $G$. 
\end{proof}

We will need one more definition.

\begin{definition}\label{def:candidates-and-flexible}
Let $\sigma = (\chi_S, q, \tau, \lambda, \xi)$ be a signature and $\chi\colon V(G) \rightarrow [p]$ be a minimal $\sigma$-coloring.
We say that $B \seq V(G)$ is a \emph{$\chi$-candidate subset} if $|B| = p$, $\chi(B) = [p]$, and for each $u \in B$, it holds that $[p] \seq \chi(N[u])$.
Let us fix a $\chi$-candidate subset $B$.
We say that $U \in \ca U$ is a \emph{$(\chi, B)$-candidate} if each $u \in B$ has a neighbor $v \in U$ that is uncolored by $\chi$, and
there is a vertex $u\in U$ that is uncolored by $\chi$ such that $[p] \seq \chi(N(u))$.
We say that $U$ is \emph{$\chi$-flexible} if each $u\in U$ uncolored by $\chi$ has a twin $v \in U$ that is colored by $\chi$.
\end{definition}

We have defined two properties of sets in $\ca U$: \emph{$(\chi, B)$-candidates} are those that may contain a $b$-vertex for some color in $[p+1, k]$, and \emph{$\chi$-flexible} are those that \emph{do not have to} contain such a $b$-vertex.
If some set will have neither of these properties, then $\chi$ cannot be extended into a $b$-coloring of $G$ such that $B$ is a set containing a $b$-vertex for each color in $[p]$.
Another issue occurs when there are too many or too few $(\chi, B)$-candidates that are not $\chi$-flexible: the reason is that we need to assign each color in $[p+1, k]$ to exactly one $(\chi, B)$-candidate.
The following lemma precisely states the sufficient and necessary conditions for a $b$-coloring extending $\chi$ to exist.  

\begin{lemma}\label{lem:find-coloring}
Let $\sigma = (\chi_S, q, \tau, \lambda, \xi)$ be a signature and $\chi\colon V(G) \rightarrow [p]$ be a minimal $\sigma$-coloring. There is a $b$-coloring $\psi\colon V(G) \rightarrow [k]$ of $G$ with signature $\sigma$ such that $\chi \seq \psi$ if and only if $\chi$ is proper and there is a $\chi$-candidate subset $B$ and a set $\ca C\seq\ca U$ such that $|\ca C| = k-p$, all sets in $\ca C$ are $(\chi, B)$-candidates, and all sets in $\ca U\setminus\ca C$ are $\chi$-flexible. Moreover, given $B$ and $\ca C$ satisfying these properties, the $b$-coloring $\psi$ can be computed in polynomial time.
\end{lemma}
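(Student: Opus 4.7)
My plan is to prove the two directions separately; the forward direction amounts to reading the data off $\psi$, while the backward direction needs an explicit construction and verification of properness, the $b$-vertex condition, and the signature.

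For the forward direction, assuming a $k$-$b$-coloring $\psi$ with signature $\sigma$ extending $\chi$, I would invoke Lemma~\ref{lem:twins} throughout, using that $\chi$ and $\psi$ are proper $\sigma$-colorings with $\chi \seq \psi$. For each $c \in [p]$, take a $b$-vertex $u_c$ of color $c$ in $\psi$ and let $u_c'$ be the twin with $\chi(u_c') = c$ supplied by the lemma; then $B = \{u_c' : c \in [p]\}$ has $|B| = p$ and $\chi(B) = [p]$, and the condition $[p] \seq \chi(N[u_c'])$ follows by lifting each color-$c''$ neighbor of $u_c$ to its $\chi$-colored twin and checking that it remains adjacent to $u_c'$. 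For $\ca C$, Observation~\ref{obs:anonym} gives, for each $d \in [p+1,k]$, a unique $U_d \in \ca U$ with $\psi^{-1}(d) \seq U_d$; setting $\ca C = \{U_d : d \in [p+1,k]\}$ yields $|\ca C| = k-p$. Each $U_d$ is a $(\chi,B)$-candidate because the $\psi$-$b$-vertex for $d$ lives in $U_d$, is $\chi$-uncolored (as $\chi$ uses only $[p]$), and sees $[p]$ under $\chi$ by the same twin argument; and each $u_c'$ inherits a $\chi$-uncolored neighbor in $U_d$ from its twin $u_c$. Each $U \in \ca U \setminus \ca C$ avoids $[p+1,k]$ under $\psi$ by Observation~\ref{obs:anonym}, so Lemma~\ref{lem:twins} produces, for every $\chi$-uncolored vertex of $U$, a $\chi$-colored twin inside $U$, giving $\chi$-flexibility.

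For the backward direction, given $B$ and $\ca C$ I would fix a bijection $\phi\colon [p+1,k] \to \ca C$, pick for each $d$ a vertex $u_d \in \phi(d)$ that is $\chi$-uncolored and satisfies $[p] \seq \chi(N(u_d))$ (available by the $(\chi,B)$-candidate property), and extend $\chi$ to $\psi$ by assigning color $d$ to every vertex of $\phi(d)$ not colored by $\chi$, for each $d \in [p+1,k]$, and by setting $\psi(u) = \chi(v)$ for every $\chi$-uncolored vertex $u$ of a set $U \in \ca U \setminus \ca C$, where $v \in U$ is any $\chi$-colored twin of $u$ (which exists by flexibility). All of this runs in polynomial time, which handles the ``Moreover'' part.

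It then remains to verify that $\psi$ is proper, is a $b$-coloring with $k$ colors, and has signature $\sigma$. Properness inside $G[S]$ comes from $\chi_S$, and on an edge between distinct sets of $\ca U$ a short case analysis over the coloring rules covers every pair, using that $\chi$ is proper and that twins share $S$-neighborhoods. Each $u_c' \in B$ already sees $[p]$ via $\chi$, and the $(\chi,B)$-candidate property of $\phi(d)$ supplies a neighbor colored $d$ for every $d \in [p+1,k]$; each $u_d$ sees $[p]$ via $\chi$ and sees each $d' \ne d$ via $u_{d'} \in \phi(d')$, since vertices in different sets of $\ca U$ are adjacent in $G-S$. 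Matching the signature is where I expect the most care: vertices newly colored in $[p+1,k]$ do not affect any component of $\sigma$, and vertices newly colored in $[p]$ lie in $\chi$-flexible sets, all of which belong to $\ca U_\chi$ because they contain the $\chi$-colored twin used in the assignment, so $\ca U_\psi = \ca U_\chi$ remains valid; moreover, twins share $N^S$, so assigning $\psi(u) = \chi(v)$ only records a vertex-type that is already present in $\xi(\chi(v))$, and thus $\tau$, $\lambda$, and $\xi$ are preserved.
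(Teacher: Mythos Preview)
Your proof is correct and follows essentially the same approach as the paper's: both directions use Lemma~\ref{lem:twins} and Observation~\ref{obs:anonym} in the same way to read off $B$ and $\ca C$ in the forward direction, and both construct $\psi$ in the backward direction by assigning color $d$ uniformly on the uncolored part of each $U \in \ca C$ and copying a $\chi$-colored twin on every $U \in \ca U \setminus \ca C$. If anything, your write-up is slightly more careful than the paper's, since you explicitly verify that the constructed $\psi$ retains signature $\sigma$ (via $\ca U_\psi = \ca U_\chi$ and the twin/vertex-type observation), whereas the paper's proof checks only properness and the $b$-vertex condition and leaves the signature implicit.
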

\begin{proof}
First, we prove the left-to-right implication. Let $\psi\colon V(G) \rightarrow [k]$ be a $b$-coloring of $G$ with signature $\sigma$ such that $\chi \seq \psi$. Since $\psi$ is proper, $\chi$ is proper as well.
Let $u_c \in V(G)$ be an arbitrary $b$-vertex in $\psi$ for some color $c \in [p]$. By Lemma~\ref{lem:twins}, $u_c$ has a twin $u_c'$ such that $\chi(u_c') = c$. Clearly, $u_c'$ is also a $b$-vertex in $\psi$.
Let $B = \{u_c' \sep c\in [p]\}$.
To show that $B$ is a $\chi$-candidate subset, we only need to show that $[p] \seq \chi(N[u])$ for each $u \in B$. Let $c \in [p]$ be such that $\chi(u) \ne c$, and let $v \in N(u)$ be such that $\psi(v) = c$. By Lemma~\ref{lem:twins}, $v$ has a twin $v'$ such that $\chi(v') = c$. In particular, $uv' \in E(G)$. Hence, we indeed have $[p] \seq \chi(N[u])$.

Let $\ca C = \{U_c \sep c\in [p+1,k]\}$, where $U_c \in \ca U$ is the set such that $\psi^{-1}(c) \seq U_c$; it is unique by Observation~\ref{obs:anonym}.
Recall that the observation also says that $U_c\ne U_d$ for distinct colors $c,d\in[p+1,k]$, which easily implies that $|\ca C| = k-p$. Now let $U_c \in \ca C$. We need to prove that $U_c$ is a $(\chi, B)$-candidate. First, let $u \in B$. Since $u$ is a $b$-vertex in $\psi$ by construction of $B$, there is a vertex $v \in N(u)$ such that $\psi(v) = c$. By definition of $U_c$, we know that $v \in U_c$. Moreover, $v$ is uncolored by $\chi$ because the codomain of $\chi$ is $[p]$. Second, let $u \in U_c$ be a $b$-vertex for $c$ in $\psi$, let $d \in [p]$, and let $v \in N(u)$ be a vertex such that $\psi(v) = d$. By Lemma~\ref{lem:twins}, $v$ has a twin $v'$ such that $\chi(v') = d$. In particular, $uv' \in E(G)$. Hence, $[p] \seq \chi(N(u))$, and $U_c$ is indeed a $(\chi, B)$-candidate.

Now let $U \in \ca U\setminus\ca C$.
By definition of $\ca C$, we know that $\psi(U)\seq[p]$.
If $u \in U$ is a vertex uncolored by $\chi$, then by Lemma~\ref{lem:twins}, $u$ has a twin $u'$ that is colored by $\chi$, which shows that $U$ is $\chi$-flexible.

\smallskip
Now we prove the right-to-left implication. Let $B\seq V(G)$ and $\ca C\seq\ca U$ be as in the statement. We may assume that $\ca C = \{U_c \sep c\in [p+1,k]\}$, i.e., we label the sets in $\ca C$ arbitrarily. Now we construct $\psi \supseteq \chi$ as follows. Let $u \in V(G)$ be a vertex uncolored by $\chi$. By Definition~\ref{def:represent}, $u\notin S$. If $u\in U_c \in\ca C$, we let $\psi(u) = c$, and if $u\in U \in\ca U \setminus\ca C$, then by Definition~\ref{def:candidates-and-flexible}, $u$ has a twin $v\in U$ that is colored by $\chi$, and we let $\psi(u) = \chi(v)$.
Observe that this coloring $\psi$ can be computed in polynomial time as required.
Suppose there is an edge $uv \in E(G)$ such that $\psi(u) = \psi(v) = c$.
If $c \in [p]$, then by Lemma~\ref{lem:twins}, there would be an edge $u'v' \in E(G)$ such that $\chi(u') = \chi(v') = c$, which would be a contradiction with $\chi$ being proper.
On the other hand, if $c\in[p+1, k]$, then $u,v \in U_{c}$ by construction of $\psi$, a contradiction with $uv \in E(G)$. Hence, $\psi$ is a proper coloring.

Now we show that $\psi$ is a $b$-coloring. First, let $c \in [p]$ and let $u \in B$ be the unique vertex in $B$ such that $\psi(u) = c$. Let $d \in [k]$ be a color. If $d \in [p]$, then by Definition~\ref{def:candidates-and-flexible}, $d \in \chi(N[u]) \seq \psi(N[u])$. On the other hand, if $d \in [p+1, k]$, then $U_d \in \ca C$ is a $(\chi, B)$-candidate, which means that there is $v \in U_d \cap N(u)$ that is uncolored by $\chi$. However, by construction of $\psi$, we know that $\psi(v) = d$. Hence, $u$ is indeed a $b$-vertex for $c$ in $\psi$.
Second, let $c\in [p+1,k]$. Since $U_c$ is a $(\chi, B)$-candidate, there is a vertex $u \in U_c$ such that $[p] \seq \chi(N(u)) \seq \psi$. If $d \in [p+1,k]$ and $d \ne c$, then $U_d \ne U_c$, and so $u$ has a neighbor of color $d$ in $\psi$. Hence, $u$ is a $b$-vertex for $c$ in $\psi$.
\end{proof}

\subsection{Finishing the proof}\label{sub:dcc-finish}

Using Lemma~\ref{lem:find-coloring}, it is easy to design the desired algorithm.

\begin{proof}[Proof of Theorem~\ref{thm:co-cluster}]
We need to design an algorithm that given an integer $k$ and a graph $G$ with distance to co-cluster $p_0$ decides whether there is a $k$-$b$-coloring of $G$. Let $p = \min(p_0, k)$.
First, we apply Theorem~\ref{thm:find-S} to the complement of $G$ to compute a minimum co-cluster-modulator $S$ of $G$. Second, we try all signatures $\sigma$ (recall that the definition of a signature depends on $S$, see Definition~\ref{def:signature}), and we find a minimal $\sigma$-coloring $\chi$ of $G$ using Lemma~\ref{lem:minimal-sigma-exists}.
If $\chi$ is not proper, we  discard it and continue with the next signature. Otherwise, we try all $\chi$-candidate subsets $B$, and we compute for each set $U\in \ca U$, whether it is $\chi$-flexible or a $(\chi, B)$-candidate.
If there is a set $U\in\ca U$ that is neither $\chi$-flexible nor a $(\chi, B)$-candidate, we reject $B$. Otherwise, we compute the sets $\ca C_0 = \{U\in\ca U\sep U$ is a $(\chi, B)$-candidate but not $\chi$-flexible$\}$ and $\ca C_1 = \{U\in\ca U\sep U$ is a $\chi$-flexible $(\chi, B)$-candidate$\}$. 
If $|\ca C_0| > k-p$ or $|\ca C_0| + |\ca C_1| < k-p$, we reject $B$. Otherwise, we can find a set $\ca C$ such that $|\ca C| = k-p$ and $\ca C_0 \seq\ca C\seq\ca C_0\cup\ca C_1$, which allows us to compute a $b$-coloring of $G$ by Lemma~\ref{lem:find-coloring}. If we do not find a $b$-coloring for any $\sigma$ and $B$, we reject $(G, k)$.

Let us prove that this algorithm is correct.
By Lemma~\ref{lem:find-coloring}, if we output a coloring, then it is a $k$-$b$-coloring of $G$.
In the other direction, suppose that $(G, k)$ is a YES instance. If $\psi$ is a $k$-$b$-coloring of $G$, then by Lemma~\ref{lem:b-coloring-has-signature}, there is a signature $\sigma$ such that $\psi$ is a $\sigma$-coloring. By Lemma~\ref{lem:any-minimal-coloring-can-be-used}, any minimal $\sigma$-coloring $\chi$ can be extended into a $b$-coloring $\chi'$ of $G$. 
By Lemma~\ref{lem:find-coloring}, the existence of $\chi'$ implies the existence of $B$ and $\ca C$ with the required properties.
Moreover, the existence of $\ca C$ implies that all sets $U\in\ca U$ are $\chi$-flexible or $(\chi, B)$-candidates, and $|\ca C| = k-p$ implies $|\ca C_0| \le k-p$ and $|\ca C_0| + |\ca C_1| \ge k-p$.
Therefore, the algorithm finds $B$ and $\ca C$, and based on them, it finds a $b$-coloring of $G$.

Finally, observe that the desired running time is achieved by definition of $p$, Theorem~\ref{thm:find-S}, Observation~\ref{obs:number-of-signatures}, Lemma~\ref{lem:minimal-sigma-exists}, and Lemma~\ref{lem:find-coloring}.
In particular, it is safe to try each $\chi$-candidate subset $B$ because the number of vertices colored by a minimal $\sigma$-coloring is $2^{\ca O(p)}$ by Lemma~\ref{lem:minimal-sigma-exists}.
\end{proof}

\section{Feedback Edge Number}\label{sec:fen}

In this section, we prove the following theorem. For a high-level overview of its proof, see the end of Section~\ref{sec:intro}.

\begin{theorem}\label{thm:fen}
The \textsc{$b$-Coloring} problem can be solved in time $2^{\ca O(p^2)} \cdot n^{\ca O(1)}$, where $p$ is the feedback edge number of the input $n$-vertex graph.
\end{theorem}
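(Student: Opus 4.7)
The plan is to follow the seven-step outline sketched at the end of Section~\ref{sec:intro} and depicted in Figure~\ref{fig:overview}. First, I would preprocess the input graph $G$ to strip away \emph{dangling trees} so that Lemma~\ref{lem:fen-structure} applies: each dangling tree can be handled separately by adapting the Irving--Manlove polynomial-time algorithm for trees, the outcome affecting only the set of constraints visible at its single attachment vertex to $S$. After preprocessing, Lemma~\ref{lem:fen-structure} yields in polynomial time a set $S \seq V(G)$ with $|S|\le 4p$ such that $G-S$ is a disjoint union of at most $4p$ dangling paths hanging off $S$ at one or two endpoints. Every cycle of $G$ then contains an edge of $G[S]$, so away from $S$ the graph is tree-like and the intuition from the tree algorithm can be transported to our setting.

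Next, I would branch over all $2^{\ca O(p\log p)}$ proper colorings $\chi$ of $G[S]$ (permuting colors so that $\chi(S) = [b']$ for some $b'\le |S|$) and over all $2^{\ca O(p)}$ subsets $B\seq S$ of intended $b$-vertices; by another permutation, $\chi(B) = [b]$. The task then reduces to deciding whether some $k$-$b$-coloring $\psi$ of $G$ extends $\chi$ so that every vertex of $B$ is actually a $b$-vertex. The $b$-vertices for colors in $[b+1,k]$ must be sought outside $S$, and since every non-$S$ component is a path, the \emph{$\chi$-candidates} outside $S$ concentrate near the $\ca O(p)$ path endpoints adjacent to $S$. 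Following the outline, I would guess a \emph{color plan} specifying, for each color in $[b+1,|S|]$, a combinatorial location for its intended $b$-vertex relative to $S$, and then compute a \emph{color realization} $\rho$ that injectively assigns colors in $[b+1,k]$ to actual $\chi$-candidates consistent with the plan. Colors in $[|S|+1,k]$ can be packed greedily along the dangling paths (each such color needs only one rainbow-neighborhood witness, and the paths are long and essentially private), and the guessing plus realization will fit within the $2^{\ca O(p^2)}$ budget.

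The step I expect to be the main obstacle is ensuring that the resulting candidate set $B\cup\range(\rho)$ is \emph{pivot-free} in the sense of Irving and Manlove: that in the coloring $\chi_\rho = \chi\cup\rho^{-1}$ there is no uncolored vertex $u$ for which $[k]$ already appears in its neighborhood, so that $u$ admits no color at all. The lift of the tree definition is delicate because modifying $\rho$ to dissolve one pivot may destroy the $\chi_{\rho'}$-candidate status of some vertex of $B\cup\range(\rho')$ or spawn a new pivot elsewhere. My plan is to show that pivots are \emph{local} — they arise only from vertices whose distance to $S$ is bounded, by the property that $S$ meets every cycle — and to design a bounded set of local swap operations on the offending dangling path that either repair $\rho$ or certify that the current $(\chi,B,\text{plan})$ is infeasible. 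The pivot notion itself has to be ``parameterized'' by the partial coloring $\chi_\rho$, generalizing the unconditional version for trees; this is where I expect most of the problem-specific work.

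Finally, with a pivot-free $\rho$ in hand, I would extend $\chi_\rho$ to a $b$-coloring of $G$ in two straightforward phases. First, I would color the neighborhoods of all chosen candidates so that each really becomes a $b$-vertex: after first handling a possibly ``exceptional'' candidate that requires priority, the remaining vertices are coloured in BFS order from $S$ along each path. Second, I would colour the still-uncoloured vertices greedily along each dangling path, using the invariant — built into the conditions imposed on $\rho$ — that no uncoloured vertex ever already sees all $k$ colors. Correctness then matches every $k$-$b$-coloring of $G$ against some branch of the enumeration, and multiplying $2^{\ca O(p\log p)}$ for $(\chi,B)$ by $2^{\ca O(p^2)}$ for the color plan and realization by $n^{\ca O(1)}$ for the post-processing gives the claimed $2^{\ca O(p^2)}\cdot n^{\ca O(1)}$ bound.
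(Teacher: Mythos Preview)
Your outline tracks the paper's high-level structure from Section~\ref{sec:intro}, but two concrete pieces diverge from the actual proof and, as stated, would not go through.

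First, you propose to \emph{strip off dangling trees} and handle each separately via Irving--Manlove, then work only with the core plus dangling paths. The paper does not do this: it keeps all dangling trees and instead enlarges $S$ to an extension $S^+$ (Definition~\ref{def:fen-core}) containing one vertex per tree component, and the safeness/ordering arguments (Definition~\ref{def:color-realization}, Lemma~\ref{lem:finish-coloring}) are all phrased in terms of distance to $S^+$. Your separation is not obviously correct: a dangling tree may contain many $\chi$-candidates (vertices of degree $\ge k-1$), so $b$-vertices for colors in $[b+1,k]$ can live there, and the Irving--Manlove algorithm solves $b$-\textsc{Chromatic Number}, not ``$k$-$b$-color this tree subject to a fixed color at the attachment vertex and a prescribed set of colors that must acquire $b$-vertices here''. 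Making that interface precise is essentially the problem you are trying to solve.

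Second, you do not separate the regimes $k$ small versus $k$ large. The paper's very first step is: if $k<96p_G+18$, invoke Proposition~\ref{prop:tree-width} (the $2^{\ca O(wk)}$ tree-width algorithm), and this case alone is what produces the $2^{\ca O(p^2)}$ in the running time. All of Section~\ref{sec:fen} after that assumes $k\ge 3p+18$, and this assumption is used pervasively (Observation~\ref{obs:short-cycles-are-in-S}, Lemma~\ref{lem:good-color}, the counting in Lemma~\ref{lem:exceptional-vertex}, etc.). Without it, short cycles need not sit inside $S$, the ``at most three bad vertices'' counts in Section~\ref{sub:coloring-neighbors} collapse, and your locality claim for pivots has no footing. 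Relatedly, your running-time accounting attributes $2^{\ca O(p^2)}$ to the color plan; in fact there are only $(p{+}1)^{p}\in 2^{\ca O(p\log p)}$ plans, and the quadratic exponent comes solely from the small-$k$ tree-width branch.

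Finally, your expectation that pivots are local to $S$ and can be removed by ``a bounded set of local swap operations'' substantially understates Section~\ref{sub:getting-rid}. Pivots need not be near $S$ (they sit where many candidates cluster, which can be deep in a dangling tree), and the paper distinguishes whether the pivot $u$ is a $(\chi,D)$-pivot or only a $(\chi_\rho,D)$-pivot (Lemma~\ref{lem:not-pivot}), whether $K\subseteq Q$, whether $\range(\rho)\subseteq Q$, and so on, with Lemmas~\ref{lem:not-different-pivot-swap}--\ref{lem:KnseqQ-shift} carefully ruling out the emergence of a \emph{new} pivot after each modification. That case analysis is the heart of the proof and is not captured by the sketch.
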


In Section~\ref{subsec:setup}, we define key concepts, such as the fen-core $S$ and the $S$-profile $(\chi, B)$, and we prove a few simple lemmas.
In Section~\ref{subsec:plans}, we define color plans and color realizations, and describe their properties.
A very technical part of the proof is making the color realization pivot-free; this is done in Section~\ref{sub:getting-rid}.

The purpose of Sections~\ref{subsec:plans} and~\ref{sub:getting-rid} is to find a feasible color realization $\rho$ (it exists if and only if the desired $b$-coloring exists).
This realization $\rho$ chooses vertices that will become $b$-vertices.
In Section~\ref{sub:coloring-neighbors}, we color the neighbors of the vertices chosen by $\rho$, and obtain a partial $b$-coloring.
Finally, in Section~\ref{sub:fen-finish}, we put everything together, and prove Theorem~\ref{thm:fen}.

\subsection{Initial setup}\label{subsec:setup}

Let us fix an instance $(G, k)$ of the \textsc{$b$-Coloring} problem, and let $p_G$ be the feedback edge number of $G$.
We will assume that $k \ge 96p_G + 18$ since otherwise we could simply use Proposition~\ref{prop:tree-width}.

We begin by defining a small vertex set $S$, which will play a central role in the algorithm, see Figure~\ref{fig:fen-core}.
To handle the connected components of $G$ which are trees, we also define a superset $S^+$ of $S$.

\begin{definition}\label{def:fen-core}
Let $S \seq V(G)$ be a set.
An \emph{$S$-outer path} $P$ is a path in $G$ such that $V(P) \cap S = \emptyset$ and $V(P) \cap N(S) = \{u, v\}$, where $u$ and $v$ are the endpoints of $P$.
We say that $S$ is a \emph{fen-core} if each cycle in $G$ has at least one edge in $G[S]$, there are at most two $(u, S)$-paths for each $u \in \overline{S}$, each $S$-outer path has length at least 7, and $|S| \le 32p_G$.
An \emph{extension} of $S$ is a minimal set $S^+ \supseteq S$ such that each connected component of $G$ contains a vertex of $S^+$.
\end{definition}

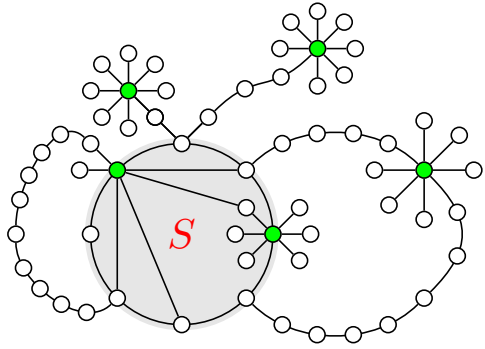
\begin{figure}[t]
\scalebox{1.5}{
\begin{tikzpicture}
\tikzset{
ff/.style = {draw,black, circle, fill=white, minimum width=4pt, inner sep=0pt}}    
\tikzset{st4/.style={postaction={decorate,
decoration={markings,
mark= at position 1/5 with {\node [ff]{};},
mark= at position 2/5 with {\node [ff]{};},
mark= at position 3/5 with {\node [ff]{};}, 
mark= at position 4/5 with {\node [ff]{};}, 
}}}}
\tikzset{st2/.style={postaction={decorate,
decoration={markings,
mark= at position 1/3 with {\node [ff]{};},
mark= at position 2/3 with {\node [ff]{};},
}}}}
\definecolor{Hcolor}{RGB}{182,255,193}
\tikzmath{ \cir = 0.8;}
\fill[gray!20!white] (0,0) circle (0.85 cm);
\draw (0,0) circle (\cir cm);
\def\n{8}
\node[red] at (0,0) {\large $S$};
\begin{scope}[every node/.style={draw, circle, minimum width=4pt, inner sep=0pt}]
\node[fill=white] (a1) at (360/\n*1: \cir cm) {};
\node[fill=white] (a2) at (360/\n*2: \cir cm) {};
\node[fill=green] (a3) at (360/\n*3: \cir cm) {};
\node[fill=white] (a4) at (360/\n*4: \cir cm) {};
\node[fill=white] (a5) at (360/\n*5: \cir cm) {};
\node[fill=white] (a6) at (360/\n*6: \cir cm) {};
\node[fill=white] (a7) at (360/\n*7: \cir cm) {};
\node[fill=green] (a8) at (360/\n*8: \cir cm) {};
\node[above left = 5pt of a2] (a21) {};
\node[above right = 5pt of a2] (a22) {};
\node[fill=green, above left = 5pt of a21] (a211) {};
\node[above right= 5pt of a211] (e2) {};
\draw (a211)--(e2);
\node[below right= 5pt of a211] (e3) {};
\draw (a211)--(e3);
\node[fill=white, left= 5pt of a211] (e4) {};
\draw (a211)--(e4);
\node[fill=white, above left= 5pt of a211] (e5) {};
\draw (a211)--(e5);
\node[fill=white, below left= 5pt of a211] (e6) {};
\draw (a211)--(e6);
\node[above= 5pt of a211] (e7) {};
\draw (a211)--(e7);
\node[below= 5pt of a211] (e8) {};
\draw (a211)--(e8);
\node[right= 5pt of a211] (e9) {};
\draw (a211)--(e9);
\draw (a21)--(a2)--(a22);    
\draw (a211)--(a21);   
\node[fill=green, right = 40pt of a1] (b1) {};
\node[right = 40pt of a7] (c1) {};      
\draw[st4] (a1) to[out=45,in=135, distance=15pt] (b1);
\draw[st2] (b1) to[out=-45,in=45, distance=15pt] (c1);
\draw[st4] (a7) to[out=-45,in=-135, distance=15pt] (c1);
\node[right= 5pt of a8] (d1) {};
\draw (a8)--(d1);
\node[above right= 5pt of a8] (d2) {};
\draw (a8)--(d2);
\node[below right= 5pt of a8] (d3) {};
\draw (a8)--(d3);
\node[fill=white, left= 5pt of a8] (d4) {};
\draw (a8)--(d4);
\node[fill=white, above left= 5pt of a8] (d5) {};
\draw (a8)--(d5);
\node[fill=white, below left= 5pt of a8] (d6) {};
\draw (a8)--(d6);
\node[right = 8pt of b1] (b3) {};
\draw (b1)--(b3);    
\node[above right = 8pt of b1] (b4) {};
\draw (b1)--(b4);     
\node[below = 8pt of b1] (b51) {};
\draw (b1)--(b51);
\node[left = 8pt of b1] (b31) {};
\draw (b1)--(b31);    
\node[below left = 8pt of b1] (b41) {};
\draw (b1)--(b41);     
\node[above = 8pt of b1] (b5) {};
\draw (b1)--(b5);
\node[fill=green, above right=20pt of a22, xshift=10pt] (b2) {};
\draw[st2] (a22) to[out =45, in = -135, distance=15] (b2);
\node[above right= 5pt of b2] (f2) {};
\draw (b2)--(f2);
\node[below right= 5pt of b2] (f3) {};
\draw (b2)--(f3);
\node[fill=white, above left= 5pt of b2] (f5) {};
\draw (b2)--(f5);
\node[fill=white, left= 5pt of b2] (f6) {};
\draw (b2)--(f6);
\node[above= 5pt of b2] (f7) {};
\draw (b2)--(f7);
\node[below= 5pt of b2] (f8) {};
\draw (b2)--(f8);
\node[right= 5pt of b2] (f9) {};
\draw (b2)--(f9);
\draw (a21)--(a2)--(a22);
\draw (a5)--(a3)--(a1);
\draw (a6)--(a3)--(d5);
\node[above left = 5pt of a3] (g) {};
\node[left = 5pt of a3] (g2) {};
\draw (g)--(a3)--(g2);
\node[left = 0.5cm of a4] (h) {};
\draw[st4] (g) to[out=135, in=90, distance = 10pt] (h);
\draw[st4] (a5) to[out=-135, in=-90, distance = 10pt] (h);
\end{scope}
\end{tikzpicture}}
\centering
\caption{A depiction of the fen-core $S$, see Definition~\ref{def:fen-core}. If $k \le 9$, then the green vertices may be used as $b$-vertices. Note that in contrast to this figure, $G[S]$ may be disconnected.}
\label{fig:fen-core}
\end{figure}

Observe that if $S$ is a fen-core, then no vertex in $\overline{S}$ has two neighbors in $S$ because it would induce an $S$-outer path of length 0.

Now we show that $S$ can be efficiently computed using Lemma~\ref{lem:fen-structure}.
Informally, if an $S$-outer path is too short, then we add it to $S$.

\begin{lemma}\label{lem:S-exists}
A fen-core exists and it can be found in polynomial time.
\end{lemma}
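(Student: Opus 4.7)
My plan is to combine Lemma~\ref{lem:fen-structure} with a small preprocessing step that trims $G$ to its $2$-core, and then enlarge the resulting separator to absorb short outer paths. First, I would compute the $2$-core $G^\star$ of $G$ by iteratively deleting every vertex of degree at most~$1$; this is polynomial-time, preserves every cycle of $G$, produces a graph with no dangling trees (any such tree would contain a leaf of $G^\star$), and has feedback edge number at most $p_G$. Then I would apply Lemma~\ref{lem:fen-structure} to each connected component of $G^\star$ and let $S'$ be the union of the resulting separators; this gives $|S'|\le 4p_G$ and a decomposition of $G^\star-S'$ into at most $4p_G$ dangling paths. Finally, I would set $S = S'\cup\bigcup\{V(P)\sep P\text{ is a dangling path of }G^\star-S'\text{ of length less than }7\}$; each absorbed path contributes at most $7$ vertices, so $|S|\le 4p_G + 7\cdot 4p_G = 32p_G$.

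To verify that $S$ is a fen-core, the cycle-covering condition is immediate: every cycle of $G$ lies in $G^\star$ and crosses $S'\seq S$ by Lemma~\ref{lem:fen-structure}. For the bound on the number of $(u,S)$-paths, the crucial observation is that any dangling tree $T$ of $G$ attaches to $V(G^\star)$ via a single edge $u'v'$ and contains no vertex of $S$. Hence for $u\in V(G^\star)\setminus S$ the $(u,S)$-paths are exactly the two directions along the long dangling path of $G^\star-S'$ containing $u$ (attached dangling trees terminate before reaching $S$), and for $u$ inside a dangling tree every $(u,S)$-path is forced through the single boundary edge $u'v'$ and then continues by a $(v',S)$-path in $G^\star$, of which there are at most two.

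The most delicate condition is that every $S$-outer path $P$ has length at least $7$. I plan to show first that $V(P)\seq V(G^\star)\setminus S$: a dangling tree $T$ has only one boundary edge, so $P$ cannot enter and leave $T$; if $P$ lies entirely inside some $T$ then $V(P)\cap N(S)\seq T\cap N(S)\seq\{u'\}$, too few vertices for the two distinct endpoints of $P$; and if $P$ crosses a boundary edge $u'v'$ then either $v'\in S$, contradicting $V(P)\cap S = \emptyset$, or $v'\notin S$ and the sub-path of $P$ inside $T$ must end at a second vertex of $T\cap N(S) = \emptyset$, also a contradiction. Once $P$ is confined to $V(G^\star)\setminus S$, which is exactly the disjoint union of the long dangling paths of $G^\star - S'$, connectedness of $P$ forces it to lie in one such path $P^\star$; only the two endpoints of $P^\star$ belong to $N(S)$ (interior vertices of $P^\star$ have both their $G^\star$-neighbours inside $P^\star$, and their extra $G$-neighbours lie in dangling trees, hence outside $S$), so $P=P^\star$ and $P$ has length at least $7$. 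The main obstacle I expect is precisely this case analysis: ensuring that no $S$-outer path can exploit a dangling tree as a shortcut or switch between two dangling paths through a common $S$-neighbour.
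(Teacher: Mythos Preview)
Your proposal is correct and follows essentially the same approach as the paper: pass to the dangling-tree-free part of $G$ (your $2$-core $G^\star$ coincides with the paper's $G-D$), apply Lemma~\ref{lem:fen-structure} to each component, and absorb the short dangling paths into $S$. Your verification of the fen-core conditions is considerably more detailed than the paper's, which simply asserts that ``it can be easily observed that $S$ is a fen-core''; your case analysis for why an $S$-outer path cannot touch a dangling tree is exactly the kind of argument the paper leaves implicit. One minor imprecision: for the cycle condition you write that every cycle ``crosses $S'$'', but the fen-core requires an \emph{edge} in $G[S]$, not just a vertex; this does hold, but it relies on the specific construction inside the proof of Lemma~\ref{lem:fen-structure} (namely that $S'$ contains both endpoints of every feedback edge), not on the lemma's statement alone---the paper is equally tacit on this point.
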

\begin{proof}
Let $D\seq V(G)$ be the set of vertices contained in dangling trees in $G$, and
let $\ca C$ be the set of connected components of $G - D$. For each $H \in \ca C$, let $p_H$ be the feedback edge number of $H$, and
let $S_H \seq V(H)$ and $\ca P_H$ be the sets obtained by applying Lemma~\ref{lem:fen-structure} on $H$.
Since $\sum_{H \in \ca C} p_H = p_G$, we know that the sets $S_G = \bigcup_{H \in \ca C} S_H$ and $\ca P_G= \bigcup_{H \in \ca C} \ca P_H$ satisfy $|\ca P_G|, |S_G| \le 4p_G$. By Lemma~\ref{lem:fen-structure}, $\ca P_G$ is the set of connected components of $G - (D \cup S_G)$, and each $P \in\ca P_G$ is a dangling path in $G - D$.

Let $\ca P_{short} = \{P \in \ca P_G\colon |V(P)|\le 7\}$, let $\ca P=\ca P_G\setminus\ca P_{short}$, and let $S = S_G\cup V(\bigcup \ca P_{short})$, where $\bigcup \ca P_{short}$ is the graph obtained by taking the union of all paths in $\ca P_{short}$.
Observe that $|S| \le (4 + 4\cdot 7)p_G = 32p_G$.
Since each path $P \in \ca P$ is an $S$-outer path, there are exactly two $(u, S)$-paths for each $u \in V(\bigcup \ca P)$.
Moreover, if $u \notin S \cup V(\bigcup \ca P)$, then $u \in D$, and there is at most one $(u, S)$-path.
Now it can be easily observed that $S$ is a fen-core and that the described construction can be performed in polynomial time.
\end{proof}

Let us fix a fen-core $S$, and let $p = |S|$.
From now on, we will view $p$ as the parameter of our algorithm because $p \in \ca O(p_G)$.
Note that $k \ge 3p + 18$.

Since $S$ is fixed, we will call $S$-outer paths simply \emph{outer paths}.
The following two observations follow directly from Definition~\ref{def:fen-core}.

\begin{observation}\label{obs:short-cycles-are-in-S}
If $C$ is a cycle in $G$ with at most eight vertices, then $V(C)\seq S$.
\end{observation}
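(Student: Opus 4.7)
The plan is to derive a contradiction by supposing that some cycle $C$ with $|V(C)| \le 8$ contains a vertex outside $S$. Since $S$ is a fen-core, $C$ must contain an edge of $G[S]$, so $V(C) \cap S$ is nonempty; combined with $V(C) \cap \overline{S} \ne \emptyset$, I would traverse $C$ cyclically and pick a maximal nonempty subpath $P = v_iv_{i+1}\cdots v_j$ whose vertices all lie in $\overline{S}$, so that its cyclic neighbors $v_{i-1}$ and $v_{j+1}$ on $C$ lie in $S$.

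The key step is to show that $P$ is in fact an $S$-outer path. For this I would use the ``at most two $(u,S)$-paths'' clause of Definition~\ref{def:fen-core} at each vertex $v_\ell \in V(P)$. The two arcs of $C$ starting at $v_\ell$ and walking along $P$ toward $v_{i-1}$ and $v_{j+1}$ respectively are both $(v_\ell,S)$-paths (their only $S$-vertex is the endpoint), and they are distinct; hence these two paths already exhaust the fen-core bound. Consequently, any internal vertex $v_\ell$ with $i < \ell < j$ cannot be in $N(S)$, because a direct neighbor $s \in S$ would yield a \emph{third} $(v_\ell,S)$-path (namely $v_\ell s$ of length $1$), which even when $s \in \{v_{i-1},v_{j+1}\}$ is a new path by virtue of having a different vertex set than the two along $C$. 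Applying the same argument at the endpoints $v_i$ and $v_j$ shows that their only $S$-neighbors are $v_{i-1}$ and $v_{j+1}$ respectively, so $v_i, v_j \in N(S)$. Also, $P$ cannot be a single vertex, since the remark right after Definition~\ref{def:fen-core} forbids vertices of $\overline{S}$ with two distinct $S$-neighbors. Altogether $V(P)\cap S=\emptyset$ and $V(P)\cap N(S)=\{v_i,v_j\}$ with $v_i\ne v_j$, so $P$ is a genuine $S$-outer path.

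Finally, the fen-core definition forces the length of $P$ to be at least $7$, so $P$ contributes at least $8$ vertices to $C$; together with $v_{i-1}$ and $v_{j+1}$ this gives $|V(C)|\ge 10$, contradicting $|V(C)|\le 8$.

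I expect the main subtlety to lie in verifying that internal vertices of $P$ are not in $N(S)$: one must check carefully that any hypothetical extra $S$-neighbor, or any alternative path from $v_\ell$ to $S$ through $\overline{S}$, produces a path whose vertex set is different from the two arcs of $C$, so that the fen-core bound is genuinely violated. The definition of a $v$-$U$ path from Section~\ref{sec:prelims}, which forbids $S$-vertices in the interior, is exactly what makes this bookkeeping clean.
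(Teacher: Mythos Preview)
Your argument is correct. The one small point you leave implicit is that $v_{i-1}\ne v_{j+1}$; but this is immediate, since otherwise $V(C)\cap S=\{v_{i-1}\}$ would contain no edge of $G[S]$, contradicting the fen-core definition, and in any case $|V(C)|\ge |V(P)|+1\ge 9$ already suffices.

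Your route, however, differs from the paper's. The paper's proof is a two-line case split: if $V(C)\cap S=\emptyset$ the cycle has no edge in $G[S]$; if $\emptyset\ne V(C)\cap S\ne V(C)$, then \emph{some} subpath of $C$ is an outer path of length at most $6$. The existence of such a subpath is simply asserted; the natural justification is a minimality argument (among all arcs of $C$ contained in $\overline{S}$ with both endpoints in $N(S)$, a shortest one has no internal vertex in $N(S)$), and this uses only the length clause of the fen-core definition. You instead take the \emph{maximal} arc and invoke the ``at most two $(u,S)$-paths'' clause to rule out internal $N(S)$-vertices. Your approach is more explicit and exercises a different part of Definition~\ref{def:fen-core}, at the cost of some extra bookkeeping; the paper's is shorter but leaves more to the reader.
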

\begin{proof}
Let $C$ be a cycle in $G$ with at most eight vertices.
If $V(C)\cap S = \emptyset$, then $C$ is a cycle with no edge in $G[S]$, which contradicts Definition~\ref{def:fen-core}.
If $\emptyset \ne V(C)\cap S \ne V(C)$, then some subgraph of $C$ is an outer path of length at most 6, which again contradicts Definition~\ref{def:fen-core}. Hence, we have $V(C) \seq S$.
\end{proof}

\begin{observation}\label{obs:S+}
For each vertex $u \in \overline{S}$, there are at most two distinct $u$-$S^+$ paths.
\end{observation}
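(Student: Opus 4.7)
The plan is to case-split on whether the connected component $C$ of $G$ containing $u$ meets $S$. First I would unpack the minimality condition in Definition~\ref{def:fen-core}: if $V(C) \cap S \ne \emptyset$, then no vertex of $C$ outside $S$ can lie in $S^+$ (otherwise removing it would still leave a vertex of $S^+$ in every component, contradicting minimality), so $S^+ \cap V(C) = S \cap V(C)$; if $V(C) \cap S = \emptyset$, then by minimality $S^+$ contains exactly one vertex of $C$.

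In the first case, every $u$-$S^+$ path lies entirely in $C$ because paths are connected and $u \in V(C)$, so the $u$-$S^+$ paths coincide exactly with the $u$-$S$ paths. The fen-core property then immediately gives at most two of them.

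In the second case, I would invoke the defining property of a fen-core that every cycle in $G$ has an edge in $G[S]$. Since $V(C) \cap S = \emptyset$, the component $C$ contains no cycle and is therefore a tree. Letting $v$ be the unique vertex of $S^+$ in $C$, any $u$-$S^+$ path must end at $v$ and lie in $C$, so it is the unique $u$-$v$ path in the tree $C$. Hence there is exactly one such path, comfortably within the stated bound of two.

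There is no real obstacle here; the only subtle step is correctly using the minimality of the extension to rule out additional vertices of $S^+$ appearing inside components that already meet $S$, which is what allows the transfer of the fen-core bound from $u$-$S$ paths to $u$-$S^+$ paths in the first case.
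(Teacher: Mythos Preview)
Your proposal is correct and follows essentially the same approach as the paper's proof: the same case split on whether the component containing $u$ meets $S$, reducing to the fen-core bound on $u$-$S$ paths in the first case and to the tree structure in the second. You are simply more explicit than the paper about why minimality of $S^+$ forces $S^+\cap V(C)=S\cap V(C)$ in the first case and $|S^+\cap V(C)|=1$ in the second, which the paper leaves implicit.
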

\begin{proof}
Let $C$ be the connected component of $G$ containing $u$.
If $C \cap S \ne \emptyset$, then each $u$-$S^+$ path is a $u$-$S$ path, and the statement follows by Definition~\ref{def:fen-core}.
Otherwise, $C$ is a tree, and there is exactly one $u$-$S^+$ path by definition of $S^+$.
\end{proof}

\subsubsection{General definition of pivots}\label{subsub:general-pivot}

As discussed in Section~\ref{sec:intro}, a key role in our algorithm will be played by vertices called \emph{pivots}.
Later, the definition of a pivot will depend on some partial coloring of $G$.
However, it will be useful to first define pivots generally, without referring to any coloring.

\begin{definition}\label{def:pivot}
Let $D \seq V(G)$ be a set. We say that a vertex $u \in \overline{S \cup D}$ is a \emph{$D$-pivot} if for each $v \in D$, either $uv \in E(G)$ or there is a vertex $w \in N(u) \cap N(v) \cap D$.
The vertices in $N(u) \cap N(D) \cap D$ will be called \emph{$(u, D)$-links}.
A vertex $v \in D$ is called \emph{$S$-influenced} if $N[v] \cap N[S] \nsubseteq \{u\}$, i.e., there is a $v$-$S$ path of length at most 2 not containing $u$.
\end{definition}

We will sometimes encounter a situation in which a vertex of $D$ has a neighbor in $S \setminus D$ or in $N(S) \setminus (D \cup \{u\})$; this motivates our definition of being $S$-influenced.
Observe that the definition depends on $u$; when we say that a vertex is $S$-influenced, the pivot $u$ will always be clear from context.

The following lemma describes how a ``pivoted'' set $D$ can interact with $S$, and its proof follows easily from Definitions~\ref{def:fen-core} and~\ref{def:pivot}.

\begin{lemma}\label{lem:pivot-properties}
Let $D \seq V(G)$ be a set, let $u \in \overline{S \cup D}$ be a $D$-pivot, and let $D^+ = D \cup \{u\}$.
\begin{enumerate}[(a)]

\item If $C$ is a connected component of $G[D]$, then there is a vertex $v_C \in C$ such that $N(u) \cap C = \{v_C\}$ and $C \seq N[v_C]$.
\label{aux:component-of-GD}

\item If $v \in D \setminus N(u)$ and $w \in D \cap N(v)$, then either $w \in N(u)$ or $v, w \in S$.\label{aux:unique-neighbor} 

\item There is a connected component $C$ of $G[D]$ such that all $S$-influenced vertices are in $C$.\label{aux:contains-all-S-influenced}

\item If $v_1, v_2 \in D \cap S$ are distinct vertices, then there is a $(u, D)$-link $w \in S \cap N[v_1] \cap N[v_2]$.\label{aux:link-in-S}

\item Let $v_1, v_2 \in D$ be distinct vertices and let $w \in N(v_1) \cap N(v_2)$ be such that $\{v_1, v_2, w\} \nsubseteq S$. If $w \in \overline{D}$ or $v_1 \in N(u)$, then $w = u$.
\label{aux:equals-pivot}

\item If $v_1 \in D \setminus S$ is such that $N(v_1) \cap \overline{D^+} \cap N[S] \ne \emptyset$ and $v_2 \in D \setminus \{v_1\}$, then $N[S] \cap N[v_2] \seq \{v_1\}$. In particular, $S \cap D = \emptyset$, all $S$-influenced vertices are in $N[v_1]$, and if $v_1 \notin N(S)$, then $v_1$ is the only $S$-influenced vertex.\label{aux:outer-S}
\end{enumerate}

\end{lemma}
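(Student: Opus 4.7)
The plan is to derive all six items from two elementary consequences of combining Observation~\ref{obs:short-cycles-are-in-S} with Definition~\ref{def:pivot}. The first basic fact is that adjacent $v_1,v_2\in N(u)\cap D$ would close a triangle $u,v_1,v_2$ containing $u\notin S$, contradicting Observation~\ref{obs:short-cycles-are-in-S}; hence $N(u)\cap D$ is independent. The second is that a vertex $v\in D\setminus N(u)$ with two distinct $(u,D)$-links $w_1,w_2$ would close the 4-cycle $u,w_1,v,w_2,u$, again forcing $u\in S$; hence every $v\in D\setminus N(u)$ has a unique $(u,D)$-link.

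For (a), existence of a vertex $v_C\in N(u)\cap C$ follows directly from Definition~\ref{def:pivot}, since any $v\in C$ either lies in $N(u)$ or has a link in $N(u)\cap D$ which is then adjacent to $v$ and so belongs to $C$. For uniqueness, I would suppose $v_1\ne v_2$ both lie in $N(u)\cap C$, take a shortest $v_1$--$v_2$ path $P=x_0,\dots,x_m$ in $G[C]$, and observe $m\ge 7$: otherwise the cycle $u,P,u$ has length $m+2\le 8$, contradicting Observation~\ref{obs:short-cycles-are-in-S}. The two basic facts then force $x_1,x_2,x_3\notin N(u)$ (else a 3-, 4-, or 5-cycle through $u$ appears), so $x_3$ has a unique link $w\in N(u)\cap D$; if $w\in\{v_1,v_2\}$ then there is a strictly shorter $v_1$--$v_2$ path in $G[C]$, while if $w\notin\{v_1,v_2\}$ then the 6-cycle $u,w,x_3,x_2,x_1,v_1,u$ contradicts Observation~\ref{obs:short-cycles-are-in-S}. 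Hence $|N(u)\cap C|=1$, and $C\seq N[v_C]$ follows because every other vertex of $C$ has $v_C$ as its unique link.

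Items (b), (d), and (e) are short applications of the same two basic facts. In (b), with neither $v$ nor $w$ in $N(u)$, their unique links either coincide (giving a triangle $v,w,w_0$ that must lie in $S$, so $v,w\in S$) or differ (closing a forbidden 5-cycle through $u$). For (d), Definition~\ref{def:pivot} supplies for each of $v_1,v_2\in D\cap S$ a witness in $N(u)\cap N[v_i]\cap D$; using the uniqueness from basic fact~2 and the short-cycle obstruction through $u$, I would argue that the two witnesses must coincide and, since $v_1,v_2\in S$, the common witness must itself lie in $S\cap N[v_1]\cap N[v_2]$. For (e), a common neighbor $w\ne u$ of $v_1,v_2\in D$ with $\{v_1,v_2,w\}\nsubseteq S$, together with either $w\in\overline D$ or $v_1\in N(u)$, leads to a triangle or 4-cycle through $u$ that cannot be entirely contained in $S$, again a contradiction.

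Parts (c) and (f) are where the main bookkeeping lies. For (c), I would show that two $S$-influenced vertices $v,v'$ in distinct components $C,C'$ produce paths of length at most~$2$ from $v$ to $S$ and from $v'$ to $S$ that avoid $u$; concatenating these with the length-$\le 4$ path $v$--$v_C$--$u$--$v_{C'}$--$v'$ and doing a case analysis on whether the two $S$-endpoints coincide (and on whether each $S$-witness already lies in $D$ or in $\overline D$) produces a cycle of length at most~$8$ through $u$ outside $S$, contradicting Observation~\ref{obs:short-cycles-are-in-S}. The hardest step is (f): the external neighbor $z\in N[S]\setminus D^+$ of $v_1$ plays the role of a second anchor besides $u$. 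For any candidate $v_2\in D\setminus\{v_1\}$ with a neighbor in $N[S]$, I plan to splice $z$, $u$, the unique links of $v_1,v_2$, the $v_2$--$N[S]$ edge, and a possibly empty $S$-segment into a closed walk through $u$ of length at most~$8$, extract a cycle through $u$, and contradict Observation~\ref{obs:short-cycles-are-in-S}. The remaining conclusions --- $S\cap D=\emptyset$ (any $S$-vertex of $D$ would automatically serve as such a $v_2$), containment of all $S$-influenced vertices in $N[v_1]$, and uniqueness when $v_1\notin N(S)$ --- then drop out of this exclusion together with Definition~\ref{def:pivot} applied to each alleged second vertex.
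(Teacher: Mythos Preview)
Your strategy for (a), (b), and (e) is essentially correct, though (a) is more laborious than needed: the paper simply picks \emph{adjacent} $v,x\in C$ with distinct witnesses $w_v\ne w_x$ and gets a cycle on at most five vertices, avoiding your shortest-path detour. In (e) the qualifier ``through $u$'' is inaccurate (when $v_1,v_2$ lie in the same component of $G[D]$, the short cycle you build need not contain $u$), but this is harmless since the cycle still contains a vertex of $\{v_1,v_2,w\}\setminus S$, which is all Observation~\ref{obs:short-cycles-are-in-S} needs.

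The genuine gap is in (c), (d), and especially (f): you frame every contradiction as a short cycle through $u$ and invoke Observation~\ref{obs:short-cycles-are-in-S}, but for the $S$-related items this does not close. In (c), when the two $S$-endpoints $s$ and $s'$ differ, the walk $s,\ldots,v,v_C,u,v_{C'},v',\ldots,s'$ is not closed, and nothing bounds the $s$--$s'$ distance inside $G[S]$; no short cycle is guaranteed. In (f), when $v_1$ and $v_2$ lie in the \emph{same} component of $G[D]$, the $v_1$--$v_2$ path in $G[D^+]$ avoids $u$ entirely, so your ``closed walk through $u$'' never forms. The correct tool is the \emph{other} obstruction built into the fen-core (Definition~\ref{def:fen-core}): every $S$-outer path has length at least~$7$. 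The paper takes the $v_1$--$v_2$ path $P$ in $G[D^+]$ (length $\le 4$), appends the $N[S]$-witnesses on either end to get $P'$, and notes that a subpath of $P'$ is an $S$-outer path of length at most~$6$---done, uniformly, with no case split on coinciding endpoints. The same device handles (d): the paper first uses (c) and (a) to force a single common witness $w$, and then, if $w\notin S$, observes that $G[\{w\}]$ is an $S$-outer path of length~$0$ (two neighbors in $S$), a contradiction. Your ``short-cycle obstruction through $u$'' for (d) has the same defect as (c): distinct witnesses $w_1,w_2$ produce a short outer path through $u$, not a cycle.
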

\begin{proof}
For~(\ref{aux:component-of-GD}), let $C$ be a component of $G[D]$.
By Definition~\ref{def:pivot}, for each $v \in C$, there is a vertex $w_v$ such that $w_v \in N(u) \cap N[v] \cap D$.
Observe that if $v \ne w_v$, then $uv \notin E(G)$; otherwise $G[\{u, v, w_v\}]$ would be a cycle containing $u \notin S$, contradicting Observation~\ref{obs:short-cycles-are-in-S}.
Suppose that there are vertices $v, x \in C$ such that $w_v \ne w_x$; we may choose $v$ and $x$ so that $vx \in E(G)$.
Now $G[\{u, w_v, v, x, w_x\}]$ is a cycle, which again contradicts Observation~\ref{obs:short-cycles-are-in-S}.
Hence, we may define $v_C$ to be $w_v$ for any $v \in C$.

For~(\ref{aux:unique-neighbor}), let $v, w \in D$ be two adjacent vertices, and suppose that $v, w \notin N(u)$.
By~(\ref{aux:component-of-GD}), there is a vertex $x \in N(u) \cap N(v) \cap N(w)$.
Since $G[\{v, w, x\}]$ is a cycle, we obtain $v, w \in S$ by Observation~\ref{obs:short-cycles-are-in-S}, which concludes the proof.

For~(\ref{aux:contains-all-S-influenced}), suppose for contradiction that there are two $S$-influenced vertices $v_1, v_2 \in D$ in different components of $G[D]$.
For $i \in [2]$, if $v_i \in N[S]$, then we set $w_i := v_i$, and otherwise we define $w_i$ to be a vertex in $N(S) \cap N(v_i)$ such that $w_i \ne u$; note that $w_i$ exists by definition of being $S$-influenced.
Let $P$ be a $v_1$-$v_2$ path in $G[D^+]$.
Since $v_1$ and $v_2$ are in different components of $G[D]$, we have $u \in V(P)$.
Let $P'$ be the path obtained from $P$ by adding the edge $v_1w_1$ if $w_1 \notin V(P)$ and the edge $v_2w_2$ if $w_2 \notin V(P)$.
Now observe that a subpath of $P'$ containing $u$ is an outer path of length at most 6, which contradicts Definition~\ref{def:fen-core}.

For~(\ref{aux:link-in-S}), let $v_1$ and $v_2$ be as in the statement.
By~(\ref{aux:contains-all-S-influenced}), there is a connected component $C$ of $G[D]$ such that $v_1, v_2 \in C$.
By~(\ref{aux:component-of-GD}), there is a vertex $w \in C \cap N(u)$ such that $v_1, v_2 \in N[w]$.
Without loss of generality, $v_1 \ne w$, which implies that $w$ is a $(u, D)$-link.
If $w \notin S$, then $v_1, v_2 \in S \cap N(w)$ and $G[\{w\}]$ is an outer path, which contradicts Definition~\ref{def:fen-core}.

For~(\ref{aux:equals-pivot}), let $v_1, v_2$, and $w$ be as in the statement.
If $w \in D$ and $v_1 \in N(u)$, then $v_1, v_2$, and $w$ are in the same connected component of $G[D]$.
By~(\ref{aux:component-of-GD}), $v_1v_2 \in E(G)$, which is a contradiction with Observation~\ref{obs:short-cycles-are-in-S} since $G[\{w, v_1, v_2\}]$ is a cycle.
Suppose that $w \notin D^+$ and let $P$ be the $v_1$-$v_2$ path in $G[D^+]$.
Now $G[V(P) \cup w]$ is a cycle of length at most 6, which is again a contradiction.
In both cases, we used $\{v_1, v_2, w\} \nsubseteq S$.

For~(\ref{aux:outer-S}), let $v_1$ and $v_2$ be as in the statement.
Let $w_1 \in N(v_1) \cap \overline{D^+} \cap N[S]$, and suppose for contradiction that there is a vertex $w_2 \in N[v_2] \cap N[S]$ such that $w_2 \ne v_1$.
Let $P$ be a $v_1$-$v_2$ path in $G[D^+]$, and let $P'$ be the path obtained from $P$ by adding the edge $v_1w_1$ and, if $w_2 \notin V(P)$, also the edge $v_2w_2$.
Now a subpath of $P'$ is an outer path of length at most 6, which contradicts Definition~\ref{def:fen-core}.
Hence, $N[S] \cap N[v_2] \seq \{v_1\}$ as desired.
The second sentence of~(\ref{aux:outer-S}) follows easily.
\end{proof}

\subsubsection{$S$-profiles}

\newcommand{\red}{\textsf{red}}

Now we present a key definition.
Informally, a vertex is a $\chi$-candidate if it may become a $b$-vertex in a coloring of $G$ that extends a partial coloring $\chi$.
We also define an \emph{$S$-profile}, which consists of a coloring of $S$ and a set $B \seq S$ of vertices that should become $b$-vertices. 

\begin{definition}\label{def:S-profile}
Let $U \seq V(G)$ and let $\chi$ be a proper coloring of $G[U]$.
The \emph{$\chi$-redundancy} of a vertex $v \in V(G)$ is the integer $\red_\chi(v) = |N[v]\setminus U| + |\chi(N[v])| - k$.
If $\red_\chi(u) \ge 0$, we say that $u$ is a \emph{$\chi$-candidate.}
If $\red_\chi(u) = 0$, we say that $u$ is \emph{$\chi$-tight.}

Let $\chi\colon S \rightarrow [p]$ be a proper coloring of $G[S]$, and let $B \seq S$ be a set such that each vertex in $B$ is a $\chi$-candidate and $\chi(B) = [b]$, where $b = |B|$.
We say that $(\chi, B)$ is an \emph{$S$-profile}.
We call a $k$-$b$-coloring $\psi$ of $G$ \emph{described} by $(\chi, B)$ if 
$\chi = \psi \upharpoonright S$, all vertices in $B$ are $b$-vertices in $\psi$, and no color in $[b+1, p]$ has a $b$-vertex in $S$. 
\end{definition}

Let us fix an $S$-profile $(\chi, B)$. Recall that the next step of the algorithm after computing $S$ is trying all $S$-profiles, see Figure~\ref{fig:overview}.

Let $\kout = \{u \in \overline{S} \sep \red_\chi(u) \ge 0\}$ and $\kb = \kout \cup B$.
The reader should think of $\kb$ as the set of our candidates for $b$-vertices.
However, $\kb$ is \emph{not} the set of all $\chi$-candidates because we do not care about vertices in $S \setminus B$.
Indeed, when such a vertex is a  $b$-vertex for some color $c \in [p]$, then $c \le b$ and $c$ has another $b$-vertex in $B$.
The following observation formalizes this idea.

\begin{observation}\label{obs:b-vertices-are-candidates}
If $\psi$ is a $k$-$b$-coloring of $G$ described by $(\chi, B)$, then each color $c \in [k]$ has a $b$-vertex $u\in \kb$ in $\psi$.
Moreover, if $|\kb| = k$, then all vertices in $\kb$ are $b$-vertices in~$\psi$.
\end{observation}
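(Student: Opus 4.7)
The plan is to prove the two assertions separately and reduce each to a simple counting argument about how many colors a vertex can possibly see in its closed neighborhood.

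First, I would establish the main claim by splitting the color $c \in [k]$ into three ranges.  If $c \in [b]$, then by definition of an $S$-profile, the unique vertex $v \in B$ with $\chi(v) = c$ lies in $B \subseteq \kb$ and is a $b$-vertex of $\psi$ since $\psi$ is described by $(\chi, B)$. If $c \in [b+1, p]$, then $\psi$ does have some $b$-vertex $u$ for color $c$ (as $\psi$ is a $b$-coloring), and the description of $(\chi, B)$ forces $u \in \overline{S}$. Finally, if $c \in [p+1,k]$, then any $b$-vertex $u$ for $c$ in $\psi$ must lie in $\overline{S}$ because $\chi(S) \subseteq [p]$ and $\psi \upharpoonright S = \chi$. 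In the last two cases, it remains to argue that $u \in \kout$, i.e.\ that $\red_\chi(u) \ge 0$.

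The key observation is the inclusion
\[
\psi(N[u]) \;\subseteq\; \chi(N[u]) \;\cup\; \psi(N[u] \setminus S),
\]
which holds because every color appearing on $N[u]$ either shows up on a vertex of $N[u] \cap S$ (where $\psi$ and $\chi$ agree) or on a vertex of $N[u] \setminus S$.  Since $u$ is a $b$-vertex of $\psi$, the left-hand side has exactly $k$ elements, so
\[
k \;\le\; |\chi(N[u])| + |N[u] \setminus S|,
\]
which is precisely $\red_\chi(u) \ge 0$. Hence $u \in \kout \subseteq \kb$, completing the first part.

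For the moreover part, note that different colors yield different $b$-vertices (a vertex has a single color), so the map sending each $c \in [k]$ to a $b$-vertex of color $c$ guaranteed by the first part is an injection from $[k]$ into the set of $b$-vertices contained in $\kb$. When $|\kb| = k$ this injection is a bijection, so every element of $\kb$ is a $b$-vertex of $\psi$. I do not expect any real obstacle here; the only subtlety is being careful that the description clause ``no color in $[b+1,p]$ has a $b$-vertex in $S$'' is used precisely for the middle color range, and that the inequality bounding $\psi(N[u])$ is purely set-theoretic and does not require $u$ to have any neighbors in $S$ at all.
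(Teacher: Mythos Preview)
Your proposal is correct and takes essentially the same approach as the paper: both arguments reduce to the inequality $|\psi(N[u])| \le |\chi(N[u])| + |N[u]\setminus S|$ to show that any $b$-vertex outside $S\setminus B$ has non-negative $\chi$-redundancy, and both invoke pigeonhole for the moreover clause. The only cosmetic difference is that you split into three color ranges $[b]$, $[b+1,p]$, $[p+1,k]$ up front, whereas the paper starts with an arbitrary $b$-vertex for $c$ and, if it happens to lie in $S\setminus B$, observes that then $c\in[b]$ and replaces it by the corresponding vertex of $B$.
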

\begin{proof}
Let $c \in [k]$ be a color and let $u \in V(G)$ be a $b$-vertex for $c$ in $\psi$. If $u \in S \setminus B$, then by Definition~\ref{def:S-profile}, $c \in [b]$ and there is another $b$-vertex for $c$ in $\psi$ that is in $B$. Hence, we may assume $u \notin S \setminus B$. By definition of a $b$-vertex, we know that $|\psi(N[u])| = k$. Since $|N[u]\setminus S| + |\chi(N[u])| \ge |\psi(N[u])|$, we  deduce that the $\chi$-redundancy of $u$ is non-negative, which implies that $u \in \kb$. The second part of the statement follows by the pigeonhole principle.
\end{proof}

Now we define a \emph{link}, which is a vertex that cannot become a $b$-vertex if the two linked vertices have the same color. 
We also specify Definition~\ref{def:pivot} and present a more concrete definition of a pivot.

\begin{definition}\label{def:links}
Let $\psi$ be a partial coloring of $G$ such that $\chi \seq \psi$.
If $u,v \in V(G)$ are distinct vertices, then we say that $u$ and $v$ are \emph{$\psi$-linked} if there is a $\psi$-tight vertex $w \in N(u) \cap N(v) \cap K^+$.
We also say that $u$ and $v$ are \emph{$\psi$-linked via $w$}
and that $w$ is a $(\psi, u, v)$-link.

Let $D \seq V(G)$ be a set.
We say that $u \in \overline{D \cup S}$ is an $(\psi, D)$-\emph{pivot} if $u$ is a $D$-pivot and each $(u, D)$-link is a $(\psi, u, v)$-link for some $v \in D$.
\end{definition}

Let us remark that there can be at most one $(\psi, u, v)$-link unless $u, v \in S$.
However, the definition will not be used for $u, v \in S$, so we may speak of \emph{the} $(\psi, u, v)$-link.

Now we observe that the $(\chi, u, v)$-link cannot be a $b$-vertex in a coloring described by $(\chi, B)$ if $u$ and $v$ have the same color.

\begin{observation}\label{obs:links}
If $\psi$ is a $k$-$b$-coloring of $G$ described by $(\chi, B)$, $u \in V(G)$, $v \in \overline{S}$, and $w \in \kb$ is the $(\chi, u, v)$-link and also a $b$-vertex in $\psi$, then $\psi(u) \ne \psi(v)$.
\end{observation}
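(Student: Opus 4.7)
The plan is to argue by contradiction: suppose $\psi(u) = \psi(v)$ and derive $|\psi(N[w])| < k$, contradicting that $w$ is a $b$-vertex in $\psi$. By Definition~\ref{def:links}, the fact that $w$ is the $(\chi,u,v)$-link means $w$ is $\chi$-tight and $w \in N(u) \cap N(v)$, and the definition of $\psi$-linked also requires $u \ne v$; thus $u$ and $v$ are two distinct vertices of $N(w)$.

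The core step is a short cardinality calculation. Expanding $\chi$-tightness yields $|\chi(N[w])| + |N[w] \setminus S| = k$. Since $\psi$ extends $\chi$ on $S$, the image $\psi(N[w])$ decomposes as $\chi(N[w] \cap S) \cup \psi(N[w] \setminus S)$, so
\[
|\psi(N[w])| \;\le\; |\chi(N[w] \cap S)| + |\psi(N[w] \setminus S)| \;\le\; |\chi(N[w])| + |N[w]\setminus S| \;=\; k.
\]
For $w$ to be a $b$-vertex in $\psi$, the leftmost quantity must equal $k$, which forces every inequality in this chain to be an equality. In particular, this yields two properties: (i) $\psi$ is injective on $N[w] \setminus S$, and (ii) $\chi(N[w] \cap S) \cap \psi(N[w] \setminus S) = \emptyset$.

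A short case analysis on whether $u \in S$ then finishes the argument. If $u \in \overline{S}$, then $u$ and $v$ are two distinct vertices of $N[w] \setminus S$ with $\psi(u) = \psi(v)$, contradicting~(i). If instead $u \in S$, then $\psi(u) = \chi(u) \in \chi(N[w] \cap S)$, whereas $\psi(v) \in \psi(N[w] \setminus S)$ because $v \in \overline{S}$; the equality $\psi(u) = \psi(v)$ then violates~(ii). In either case $|\psi(N[w])| \le k-1$, the desired contradiction.

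I do not expect any real obstacle: the entire argument reduces to expanding the definitions of a $\chi$-tight vertex and a $b$-vertex, and observing that $\chi$-tightness leaves no slack for two distinct neighbors of $w$ to share a color while still letting $w$ see every color in $[k]$.
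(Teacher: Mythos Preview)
Your proof is correct and takes essentially the same approach as the paper: both use $\chi$-tightness of $w$ to show that $|\psi(N[w])| = k$ forces no collision between $\psi(u)$ and $\psi(v)$, with the same case split on whether $u \in S$. The only cosmetic difference is that the paper packages the counting via an explicit bijection $f\colon [k] \to \chi(N[w]) \cup (N[w]\setminus S)$, whereas you phrase it as a cardinality chain yielding injectivity and disjointness.
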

\begin{proof}
Let $u, v, w$, and $\psi$ be as in the statement.
Let $f\colon [k] \rightarrow \chi(N[w]) \cup (N[w]\setminus S)$ be a function such that for $c \in \chi(N[w])$, we have $f(c) = c$, and otherwise $f(c)$ is a vertex of $N[w]$ colored with $c$ by $\psi$.
Since $w$ is a $b$-vertex in $\psi$, $f$ is indeed a well-defined function.
By Definition~\ref{def:links}, $w$ is $\chi$-tight, which means that $|N[w]\setminus S| + |\chi(N[w])| = k$ and that $f$ is a bijection.
Suppose for contradiction that $c := \psi(u) = \psi(v)$.
If $u \in S$, then $f(c) = c$, and otherwise $f(c) = u$ since $f$ is a bijection.
Both cases lead to a contradiction since $f(c) = v$ (and $u \ne v$ by Definition~\ref{def:links}).  
\end{proof}

\subsection{Color plans and realizations}\label{subsec:plans}

As in the previous section, let $S$ be a fen-core, $(\chi, B)$ be an $S$-profile, $\kout = \{u \in \overline{S} \sep \red_\chi(u) \ge 0\}$, and $\kb = \kout \cup B$.
Recall that $\chi(S) \seq [p]$ and $\chi(B) = [b]$.
In addition, we define $\kstar = \kout \setminus N(S)$.
Let us also fix an extension $S^+$ of $S$, see Definition~\ref{def:fen-core}.

Now we define a \emph{color plan}, which is a function deciding, for each color $c \in [b+1, p]$, whether the $b$-vertex for $c$ should be in $N(S)$, and if yes, which vertex of $S$ it should be adjacent to (it may be adjacent to at most one vertex of $S$ by Definition~\ref{def:fen-core}).
The $b$-vertices that should not be in $N(S)$ must be in $K^*$, and the color plan assigns $*$ to their colors.
Only some color plans may be used to produce a $b$-coloring: we call such plans \emph{valid}.

\begin{definition}\label{def:color-plan}
A function $\pi\colon [b+1, p]\rightarrow S \sqcup \{*\}$ is called a \emph{color plan} of $(\chi, B)$ (or simply a color plan when $(\chi, B)$ is clear from context).
If $c \in [b+1, p]$ is a color, then $\pi$ is \emph{$c$-critical} if $\pi(c) = *$ and each vertex in $\kstar$ is $\chi$-linked to a vertex in $\chi^{-1}(c)$.
We say that $\pi$ is \emph{critical} if it is $c$-critical for some $c \in [b+1, p]$.

We say that $\pi$ is \emph{valid} if it satisfies the following properties:
\begin{enumerate}
\item $|\pi^{-1}(*)| \le |\kstar|$;\label{cp:star}
\item for each $u \in S$, we have $|\pi^{-1}(u)| \le |N(u) \cap \kout|$;\label{cp:enough-neighbor}
\item for each $u \in S$, if $\chi(u) \in [b+1, p]$, then $\pi(\chi(u)) \ne u$;\label{cp:proper}
\item for each $u \in B$, we have $|\pi^{-1}(u) \cap \chi(N(u))| \le \red_\chi(u)$;\label{cp:redundancy}
\item if $\pi$ is critical, then $|\kb| > k$ and there is a vertex $u \in S$, called a \emph{$\pi$-anchor}, such that $|\pi^{-1}(u)| < |N(u) \cap \kout|$ and there is a vertex in $\kstar$ $\chi$-linked to $u$.\label{cp:block}
\end{enumerate}
\end{definition}

Let us present an intuition behind the five properties of a valid color plan.
First, Properties~\ref{cp:star} and~\ref{cp:enough-neighbor} ensure that there are enough vertices for the plan to be realized.
Second, a violation of Property~\ref{cp:proper} would mean that the obtained coloring cannot be proper.
Third, Property~\ref{cp:redundancy} ensures that each vertex in $B$ can still become a $b$-vertex.
Finally, if $\pi$ is $c$-critical, then the $b$-vertex for $c$ must be linked to a vertex in $S$ colored with $c$, which means that the link cannot be a $b$-vertex.
Property~\ref{cp:block} ensures that there exists such a link that does not have to be a $b$-vertex, which will enable us to find a $b$-vertex for $c$.

Now we define a \emph{color realization} $\rho$, which is a function selecting concrete vertices as $b$-vertices for colors in $[b+1, k]$, based on some color plan.
We also define two properties of $\rho$: \emph{damage-freeness}, which means that the chosen vertices do not interfere with each other, and a technical property called \emph{safeness}, which will enable us to color the entire graph at the very last step of the algorithm, see Figure~\ref{fig:overview}.

\begin{definition}\label{def:color-realization}
An injective function $\rho \colon [b+1, k]\rightarrow \kout$ is called a \emph{color realization}.
Let $B_\rho := B\cup \range(\rho)$, $S_\rho := S \cup B_\rho$, and $\chi_\rho := \chi \cup \rho^{-1}$; note that $\chi_\rho$ is a $k$-coloring of $G[S_\rho]$.
We say that $\rho$ \emph{realizes} a color plan $\pi$ (or is a color realization of $\pi$) if for each $c \in [b+1, p]$:
\begin{enumerate}[(a)]
\item $\pi(c) = *$ if and only if $\rho(c) \in \kstar$;\label{cr:star}
\item for each $u \in S$, it holds that $\pi(c) = u$ if and only if $\rho(c) \in N(u)$.\label{cr:S}
\end{enumerate}
A vertex $u$ is \emph{damaged} in $\rho$ if there is a color $c \in [b+1, p]$ and a vertex $v \in \chi^{-1}(c)$ such that $\rho(c) \in \kstar$ and $v$ and $\rho(c)$ are $\chi$-linked via $u$.
We say that $\rho$ is \emph{damage-free} if no vertex in $\range(\rho)$ is damaged in $\rho$.

A vertex $u \in V(G)$ is \emph{$\ell$-safe} in $\rho$ if $|\{c \in [p+1, k] \sep \dist(u,S^+) < \dist(\rho(c), S^+)\}| \le \ell$.
We say that $\rho$ is \emph{$\ell$-safe} if all vertices in $\kout \setminus B_\rho$ are $\ell$-safe in $\rho$.
We say that $\rho$ is \emph{almost $\ell$-safe} if all vertices in $\kout \setminus B_\rho$ are $\ell$-safe in $\rho$, possibly except for one vertex $v$ such that either $v \in N(B_\rho)$ or $v$ has at most $p+2$ neighbors in $N(B_\rho)$.
\end{definition}

Now we prove a simple lemma connecting some properties of a valid color plan with the damage-freeness of a color realization.

\begin{lemma}\label{lem:rho-to-pi}
If $\rho$ is a damage-free color realization of a color plan $\pi$, then $\pi$ satisfies Properties~\ref{cp:star}, \ref{cp:enough-neighbor}, and \ref{cp:block} of Definition~\ref{def:color-plan}.
\end{lemma}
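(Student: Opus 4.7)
Properties~\ref{cp:star} and~\ref{cp:enough-neighbor} of Definition~\ref{def:color-plan} should fall out immediately from the correspondence between $\pi$ and $\rho$ together with the injectivity of $\rho$. Concretely, since $\rho$ is injective and $\rho(\pi^{-1}(*)) \seq \kstar$ by condition~(\ref{cr:star}) of Definition~\ref{def:color-realization}, we get $|\pi^{-1}(*)| \le |\kstar|$. Similarly, for any $u \in S$, condition~(\ref{cr:S}) gives $\rho(\pi^{-1}(u)) \seq N(u) \cap \range(\rho) \seq N(u) \cap \kout$, which by injectivity yields $|\pi^{-1}(u)| \le |N(u) \cap \kout|$. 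These two steps are one-liners and need no further structural input.

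The substantive part of the proof is Property~\ref{cp:block}, and I would handle it as follows. Assume $\pi$ is $c$-critical for some $c \in [b+1,p]$. By~(\ref{cr:star}), $\rho(c) \in \kstar$, and by the definition of $c$-criticality, $\rho(c)$ is $\chi$-linked to some $v \in \chi^{-1}(c) \seq S$ via a vertex $w$. The key observation is a location analysis of $w$: since $w \in N(v) \cap N(\rho(c))$ with $v \in S$ and $\rho(c) \in \kstar \seq V(G) \setminus N[S]$, we get $w \in N(S) \setminus S$; combined with $w \in K^+$ (from the link definition), this forces $w \in \kout$. Because $\rho$ is damage-free, $w$ cannot lie in $\range(\rho)$, so $w \in \kout \setminus \range(\rho)$.

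From here the two conclusions of Property~\ref{cp:block} should come easily. For the cardinality claim, $|B_\rho| = |B| + |\range(\rho)| = b + (k-b) = k$ because $\rho$ is injective and $\range(\rho) \seq \kout$ is disjoint from $B \seq S$; since $w \in \kb \setminus B_\rho$, I deduce $|\kb| \ge |B_\rho| + 1 > k$. For the $\pi$-anchor, I take $u := v$. The second condition is immediate: $v \in S$ is $\chi$-linked (via $w$) to $\rho(c) \in \kstar$. The first condition follows by the same injectivity argument as in Property~\ref{cp:enough-neighbor}: $|\pi^{-1}(v)| \le |N(v) \cap \range(\rho)|$, and since $w \in N(v) \cap \kout$ but $w \notin \range(\rho)$, we strictly lose at least one vertex, giving $|\pi^{-1}(v)| \le |N(v) \cap \kout| - 1 < |N(v) \cap \kout|$.

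The main obstacle, to the extent there is one, is the structural observation that the damaged link $w$ must lie in $\kout$ (and in particular outside $S$); everything else is essentially bookkeeping with injectivity and the disjointness $B \cap \kout = \emptyset$. Once that location of $w$ is pinned down, the same vertex $v$ that witnesses $c$-criticality doubles as the $\pi$-anchor, and no deeper use of the fen-core structure from Section~\ref{subsec:setup} is required.
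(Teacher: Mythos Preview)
Your proposal is correct and follows essentially the same approach as the paper's proof: Properties~\ref{cp:star} and~\ref{cp:enough-neighbor} by injectivity of $\rho$ together with Conditions~(\ref{cr:star}) and~(\ref{cr:S}), and Property~\ref{cp:block} by locating the link $w$ in $\kout \setminus \range(\rho)$ (via $\rho(c)\in\kstar$ forcing $w\notin S$, and damage-freeness forcing $w\notin\range(\rho)$), then using $w$ to witness both $|\kb|>k$ and that $v$ is a $\pi$-anchor. The paper's write-up is slightly terser but the logical skeleton is identical.
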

\begin{proof}
First, observe that Property~\ref{cp:star} follows immediately from Condition~(\ref{cr:star}) of Definition~\ref{def:color-realization}, and Property~\ref{cp:enough-neighbor} follows from Condition~(\ref{cr:S}).
For Property~\ref{cp:block}, suppose that $\pi$ is $c$-critical for some $c\in [b+1, p]$, and let $u = \rho(c)$.
Since $\pi(c) = *$, we have $u \in \kstar$.
Hence, there is a vertex $v \in \chi^{-1}(c)$ such that $u$ is $\chi$-linked to $v$, via a vertex $w \in \kb$.
Since $u \in \kstar$, we have $w \notin S$, which implies $w \in \kout$.
Observe that $w$ is damaged in $\rho$, which means that $w \notin \range(\rho)$.
Since $w \in (N(v) \cap \kout) \setminus \range(\rho)$,
we obtain $|\pi^{-1}(v)| = |\rho^{-1}(N(v))| < |N(v) \cap \kout|$.
Moreover, $u$ is $\chi$-linked to $v$, which means that $v$ is a $\pi$-anchor.
Finally, observe that $|\kb| > k$ since $B_\rho \seq \kb$, $|B_\rho| = k$, and $w \in \kb \setminus B_\rho$.
Therefore, Property~\ref{cp:block} is satisfied by $\pi$.
\end{proof}

The rest of this section is organized as follows.
In Section~\ref{sub:are-candidates}, we prove that the vertices chosen by a damage-free color realization $\rho$ of a valid color plan are $\chi_\rho$-candidates.
In Section~\ref{sub:failing}, we define \emph{failing} $S$-profiles, which are exactly those that describe no $b$-coloring.
In Section~\ref{subsub:compute-damage-free}, we show how a damage-free color realization can be computed.
In Section~\ref{sub:change}, we describe what happens when a color plan or realization is slightly modified, which will be useful in Section~\ref{sub:getting-rid}.
Finally, in Section~\ref{sub:feasible}, we define \emph{feasible} color realizations, which are those that can be used to produce a $b$-coloring. Note that the goal of Section~\ref{sub:getting-rid} is to achieve \emph{pivot-freeness}, which is a property needed for feasibility.

\subsubsection{Vertices in $B_\rho$ are candidates}\label{sub:are-candidates}

Now we prove an important lemma. Informally, it says that the vertices chosen by a ``good'' color realization can become $b$-vertices.

\begin{lemma}\label{lem:valid-realization}
If $\rho$ is a damage-free color realization of a valid color plan $\pi$, then:
\begin{enumerate}[(a)]
\item $\chi_\rho$ is a proper partial coloring of $G$; and\label{vraux:proper}
\item each vertex $u \in B_\rho$ is a $\chi_\rho$-candidate.\label{vraux:candidate}
\end{enumerate}
\end{lemma}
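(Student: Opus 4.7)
My plan is to prove (a) first and then exploit it inside the proof of (b).

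For (a) I would verify properness on an arbitrary edge $xy \in E(G[S_\rho])$ by cases on whether each endpoint lies in $S$ or in $\range(\rho)$. The cases $x,y\in S$ and $x,y\in \range(\rho)$ are immediate (from properness of $\chi$ on $G[S]$ and injectivity of $\rho$, respectively). The only substantive case is $x\in S$, $y = \rho(c)\in\range(\rho)\setminus S$: when $c \in [p+1,k]$ we trivially have $\chi(x)\le p<c$, and when $c\in [b+1,p]$, condition~(\ref{cr:star}) of Definition~\ref{def:color-realization} rules out $\pi(c)=*$ (since $y$ is adjacent to $x\in S$, so $y\notin\kstar$), while condition~(\ref{cr:S}), combined with the fen-core property that every vertex of $\overline S$ has at most one neighbor in $S$, forces $\pi(c)=x$. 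Property~\ref{cp:proper} of Definition~\ref{def:color-plan} then excludes $\chi(x)=c$.

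For (b), the driving identity, which I would establish first, is
\[
  \red_{\chi_\rho}(u) \;=\; \red_\chi(u) - |A \cap \chi(N[u])|, \qquad A := \rho^{-1}(N[u]),
\]
obtained from $|N[u] \setminus S_\rho| = |N[u] \setminus S| - |A|$ (injectivity of $\rho$ and $B\seq S$) and $|\chi_\rho(N[u])| = |\chi(N[u])| + |A| - |A \cap \chi(N[u])|$ (inclusion-exclusion). For $u \in B\seq S$, note that $\chi(u)\in[b]$ while $\pi$ has domain $[b+1,p]$, so condition~(\ref{cr:S}) identifies $A \cap [b+1,p]$ with $\pi^{-1}(u)$; since $A \seq [b+1,k]$ and $\chi(N[u])\seq[p]$, we obtain $|A \cap \chi(N[u])| = |\pi^{-1}(u) \cap \chi(N(u))|$, which Property~\ref{cp:redundancy} bounds by $\red_\chi(u)$.

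For $u \in \range(\rho) \setminus B$ (so $u\notin S$), the fen-core property gives $|\chi(N[u])| \le 1$, hence $|A \cap \chi(N[u])| \in \{0,1\}$. The zero case is immediate. Otherwise there exist a unique $v \in N(u)\cap S$ and a color $c \in [b+1,p]$ with $\chi(v) = c$ and $\rho(c) \in N[u]$. Part~(a) rules out $\rho(c) = u$ (it would color $u$ and $v$ identically), so $\rho(c) \in N(u)\setminus\{u\}$. A direct count yields $\red_\chi(u) = \deg(u)+1-k$, so the bound $\red_{\chi_\rho}(u)\ge 0$ is immediate unless $\red_\chi(u) = 0$, in which case $u$ is $\chi$-tight. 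Then $u$ witnesses that $v$ and $\rho(c)$ are $\chi$-linked, and I would close with a dichotomy on the location of $\rho(c)$: if $\rho(c) \in \kstar$, then $u$ is damaged in $\rho$, contradicting damage-freeness; if $\rho(c) \in N(S)$, then the single edge $u\rho(c)$ is an $S$-outer path of length $1$, contradicting the length-$\ge 7$ clause of Definition~\ref{def:fen-core}.

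The main obstacle is this final dichotomy: neither damage-freeness nor the fen-core length bound suffices on its own, and the crux is recognising that precisely these two properties jointly eliminate the remaining configuration. The rest reduces to bookkeeping around the redundancy identity and the case analysis in~(a).
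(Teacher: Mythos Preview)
The proposal is correct and follows essentially the same approach as the paper. The only cosmetic difference is that you track the set of \emph{colors} $A=\rho^{-1}(N[u])$ while the paper tracks the bijectively corresponding set of \emph{vertices} $\gamma_u=\{v\in N[u]\cap\range(\rho):\chi_\rho(v)\in\chi(N(u))\}$; your redundancy identity $\red_{\chi_\rho}(u)=\red_\chi(u)-|A\cap\chi(N[u])|$ is exactly the paper's $|\gamma_u|\le\red_\chi(u)$ rewritten, and your closing dichotomy (damage-freeness versus a length-$1$ outer path) matches the paper's, which simply phrases the outer-path case first to conclude $\rho(c)\in\kstar$ and then invokes damage-freeness.
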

\begin{proof}
For (a), let $u, v \in S_\rho$ be two neighbors in $G$, and let $c := \chi_\rho(u)$ and $d:= \chi_\rho(v)$. First, observe that if $u,v \in S$, then $c \ne d$ since $\chi$ is proper and $\chi \seq \chi_\rho$. Second, suppose that $u,v \notin S$.
Since $u \ne v$, we have $u = \rho(c) \ne \rho(d) = v$, which implies $c \ne d$.
Third, suppose, without loss of generality, that $u \in S$ and $v \notin S$.
Recall that $c \in [p]$ and $d \in [b+1, k]$, which means that $c \ne d$ unless $c,d \in [b+1, p]$.
By Property~\ref{cp:proper} of Definition~\ref{def:color-plan}, we have $\pi(c) \ne u$. Hence, $\rho(c) \notin N(u)$ by Definition~\ref{def:color-realization}. In particular, $v \ne \rho(c)$, which implies $d \ne c$.

For (b), let $u \in B_\rho$ be any vertex.
Since $B_\rho\seq \kb$, we know that $u$ is a $\chi$-candidate, i.e., it satisfies $|N[u] \setminus S| + |\chi(N[u])| = k + \red_\chi(u)$, where $\red_\chi(u) \ge 0$.
We need to show $|N[u] \setminus S_\rho| + |\chi_\rho(N[u])| \ge k$.
Hence, it suffices to show that:
\begin{equation}\label{eq:fen1}
|N[u] \setminus S| - |N[u] \setminus S_\rho| + |\chi(N[u])| - |\chi_\rho(N[u])| \le \red_\chi(u).
\end{equation}
Let $\alpha_u := (N[u] \setminus S) \setminus (N[u] \setminus S_\rho)$ and $\beta_u := \chi_\rho(N[u]) \setminus \chi(N[u])$.
Since $S \seq S_\rho$ and $\chi(N[u]) \seq \chi_\rho(N[u])$, we have
\begin{itemize}
\item $\alpha_u = N[u] \cap (S_\rho \setminus S) = N[u] \cap \range(\rho)$,
\item $|\alpha_u| = |N[u] \setminus S| - |N[u] \setminus S_\rho|$, and
\item $|\beta_u| = |\chi_\rho(N[u])| - |\chi(N[u])|$.
\end{itemize}
In other words, $\alpha_u$ is the set of neighbors of $u$ colored by $\chi_\rho$ but not by $\chi$, and $\beta_u$ is the set of ``new colors'' in the neighborhood of $u$.
Observe that to prove Equation~\ref{eq:fen1}, we need to show that $|\alpha_u| - |\beta_u| \le \red_\chi(u)$.
Let $\gamma_u = \{v \in \alpha_u \sep \chi_\rho(v) \in \chi(N(u))\}$, i.e., $\gamma_u$ is the set of neighbors of $u$ that are ``newly'' colored with a color that has already been present in the neighborhood. Observe that the function $v\mapsto \chi_\rho(v)$ of type 
$\alpha_u \setminus \gamma_u \rightarrow \beta_u$ is a bijection (informally, for each ``new color'' $c$, there is exactly one vertex colored with $c$ in the neighborhood, namely $\rho(c)$). Hence, $|\gamma_u| = |\alpha_u| - |\beta_u|$, and to show that Equation~\ref{eq:fen1} holds, it suffices to show that $|\gamma_u| \le \red_\chi(u)$.

First, suppose that $u \in B$, and recall that by Property~\ref{cp:redundancy} of Definition~\ref{def:color-plan}, we have $|\pi^{-1}(u) \cap \chi(N(u))| \le \red_\chi(u)$.
Let  $c \in [b+1, p]$ be a color, and observe that $c \in \pi^{-1}(u) \cap \chi(N(u))$ if and only if $\rho(c) \in N(u)\land c \in \chi(N(u))$ if and only if $\rho(c) \in \alpha_u \land \chi_\rho(\rho(c)) \in \chi(N(u))$ if and only if $\rho(c) \in \gamma_u$.
Since $v \in \gamma_u$ implies that $v = \rho(c)$ for some $c \in [b+1, p]$,
we obtain $|\gamma_u| = |\pi^{-1}(u) \cap \chi(N(u))| \le \red_\chi(u)$ as required.

Second, suppose that $u \notin B$.
Note that by Definition~\ref{def:color-realization}, we also have $u \notin S$.
If $\gamma_u = \emptyset$, then $|\gamma_u|\le \red_\chi(u)$ is obvious since  $\red_\chi(u) \ge 0$.
Hence, suppose $\gamma_u \ne \emptyset$.
Observe that this means that $\chi(N[u])\ne \emptyset$, which implies $u \in N(S)$.
By Definition~\ref{def:fen-core}, $|\chi(N[u])| = 1$, which implies $|\gamma_u|= 1$ since each vertex in $\gamma_u$ has different color in $\chi_\rho$. 
Hence, let $v \in \kout$ be the unique vertex in $\gamma_u$, let $c = \chi_\rho(v)$, and let $w \in N(u)$ be such that $\chi(w) = c$.
By Definition~\ref{def:fen-core}, $v \notin N(S)$ (otherwise $\{u, v\}$ would induce a short outer path), i.e., $v \in \kstar$.
If $\red_\chi(u) = 0$, then $u$ is the $(\chi, w, v)$-link and $u$ is damaged in $\rho$, which 
contradicts the damage-freeness of $\rho$.
Hence, $\red_\chi(u) \ge 1 = |\gamma_u|$ as required. 
\end{proof}

\subsubsection{Failing $S$-profiles describe no $b$-colorings}\label{sub:failing}

To simplify the notation, let us denote the $S$-profile $(\chi, B)$ by $\zeta$.
There are three possible reasons for the non-existence of a $b$-coloring described by $\zeta$.
The most obvious reason is that there are not enough candidates, i.e., $|\kb| < k$.
The second natural reason is that there is no valid color plan of $\zeta$.
In the following definition, we describe the third possible reason: informally, it says that there is an ``unavoidable'' pivot.

\begin{definition}\label{def:failing}
We say that $\zeta$ is \emph{candidate-failing} if $|\kb| < k$ and \emph{plan-failing} if there is no valid color plan of $\zeta$.
We say that $\zeta$ is \emph{pivot-failing} if there is a set $D \seq \kout \cup S$ and a vertex $u \in \overline{S \cup D}$ such that $u$ is a $(\chi, D)$-pivot, $\kout \seq D$,
$[b] \seq \chi(D)$, and (a) $|K^+| \le k$ or (b) $\{v, w\} \cap S \ne \emptyset$ for every $v, w \in D$ such that $w$ is the $(\chi, u, v)$-link.

We say that $\zeta$ is \emph{failing} if it is candidate-failing, pivot-failing or plan-failing.
\end{definition}

Let us remark that if Condition~(b) in Definition~\ref{def:failing} is satisfied, then there is actually only one $(u, D)$-link $w$ because it is $S$-influenced, see Section~\ref{subsub:general-pivot}.

Definition~\ref{def:failing} provides a sufficient and necessary condition for a $b$-coloring described by $\zeta$ to exist: $\zeta$ must not be failing.
In the following lemma, we show the necessity.

\begin{lemma}\label{lem:failing-imlies-no-b--coloring}
If $\zeta$ is failing, then there is no $k$-$b$-coloring of $G$ described by $\zeta$.
\end{lemma}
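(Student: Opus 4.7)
The plan is to assume for contradiction that $\psi$ is a $k$-$b$-coloring of $G$ described by $\zeta = (\chi, B)$ and derive a contradiction in each of the three failure modes. Case~1 (candidate-failing) is immediate: Observation~\ref{obs:b-vertices-are-candidates} produces $k$ pairwise distinct $b$-vertices of $\psi$ in $\kb$, contradicting $|\kb| < k$. For Case~2 (plan-failing), I would construct a valid color plan $\pi$ from $\psi$ and so contradict the hypothesis. Specifically, for each $c \in [b+1, p]$, pick a $b$-vertex $u_c$ of $\psi$; by Definition~\ref{def:S-profile} and Observation~\ref{obs:b-vertices-are-candidates}, $u_c \in \kout$, and by Definition~\ref{def:fen-core} $u_c$ has at most one $S$-neighbor, which I take as $\pi(c)$ (setting $\pi(c) = *$ if $u_c \in \kstar$). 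Properties~\ref{cp:star}--\ref{cp:proper} of Definition~\ref{def:color-plan} follow from injectivity of $c \mapsto u_c$ and properness of $\psi$, while Property~\ref{cp:redundancy} follows from a double-counting of colors seen by $u \in B$, which is a $b$-vertex in $\psi$. For Property~\ref{cp:block}, I observe that if $\pi$ is $c$-critical then $u_c \in \kstar$ is $\chi$-linked via some $w$ to $v \in \chi^{-1}(c) \seq S$ with $\psi(u_c) = c = \psi(v)$; since $(\chi, u, v)$-links are symmetric in $u$ and $v$, the proof of Observation~\ref{obs:links} applies with the roles of $u$ and $v$ swapped to show $w$ cannot be a $b$-vertex. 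This simultaneously forces $|\kb| > k$ and exhibits $v$ as a $\pi$-anchor, because the non-$b$-vertex $w \in N(v) \cap \kout$ is not of the form $u_{c'}$, so $|\pi^{-1}(v)| < |N(v) \cap \kout|$.

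For Case~3 (pivot-failing), let $D, u^*$ be as in Definition~\ref{def:failing}, set $c := \psi(u^*)$, and choose $v \in D$ with $\psi(v) = c$: namely $v = u_c$ if $c \in [b+1, k]$ (so $u_c \in \kout \seq D$ because $c \notin \chi(B)$), or any $v \in D \cap \chi^{-1}(c)$ if $c \in [b]$ (existing by $[b] \seq \chi(D)$). Properness rules out $v \in N(u^*)$, so the $D$-pivot property supplies a link $w \in N(u^*) \cap N(v) \cap D$, and whenever $w$ is a $b$-vertex in $\psi$ the symmetric Observation~\ref{obs:links} yields the contradiction $\psi(u^*) \ne c$. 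Condition~(a) of Definition~\ref{def:failing} forces $|\kb| = k$ and hence all of $\kb$ to consist of $b$-vertices, so $w$ is always a $b$-vertex; condition~(b) with $v \in \kout$ (i.e., $c \in [b+1, k]$) forces $w \in S$, hence $w \in B$, again a $b$-vertex.

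The main obstacle is Case~3 under condition~(b) with $c \in [b]$ and $w \in \kout$ (possibly not a $b$-vertex), which I plan to close via the fen-core in three steps. First, $u^* \notin N(S)$: otherwise $u^*w$ would form an $S$-outer path of length~$1$ (as $u^*, w \in \overline{S} \cap N(S)$), contradicting Definition~\ref{def:fen-core}. Second, $\kout \seq N(u^*)$: for any $v' \in \kout$ with $u^*v' \notin E(G)$, condition~(b) forces the required link to lie in $S \cap N(u^*) = \emptyset$, which is impossible. Third, the contradiction follows by combining two bounds on $\kout \cap N(S)$: for each $c' \in [b]$, the link $w_{c'}$ for a chosen $v_{c'} \in D \cap \chi^{-1}(c')$ lies in $\kout \cap N(S)$ (as above) and has $v_{c'}$ as its unique $S$-neighbor, so the $w_{c'}$ are pairwise distinct and $|\kout \cap N(S)| \ge b$; on the other hand, any two distinct $w_1, w_2 \in \kout \cap N(S) \seq N(u^*)$ form an $S$-outer path $w_1$-$u^*$-$w_2$ of length~$2$, giving $|\kout \cap N(S)| \le 1$. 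Hence $b \le 1$, and combined with $|\kb| \ge k$ and $\kout \seq N(u^*)$ this yields $|N(u^*)| \ge |\kout| \ge k - 1$, contradicting the bound $|N(u^*)| \le k - 2$ that follows from $u^* \notin \kout$ and $u^* \notin N(S)$.
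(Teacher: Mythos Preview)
Your proof is correct and follows the same overall architecture as the paper's: assume a $k$-$b$-coloring $\psi$ described by $\zeta$ exists and derive a contradiction for each of the three failure modes. The candidate-failing case and the plan-failing case are handled essentially as in the paper (you build $\pi$ directly from chosen $b$-vertices $u_c$, whereas the paper first packages them into a color realization $\rho$ and then invokes Lemma~\ref{lem:rho-to-pi} to obtain Properties~\ref{cp:star}, \ref{cp:enough-neighbor}, \ref{cp:block}; your route is marginally more direct but the content is the same).

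The only genuine divergence is in the hard subcase of pivot-failing (condition~(b), $c\in[b]$, $w\notin S$). The paper appeals to the structural Lemma~\ref{lem:pivot-properties}(\ref{aux:component-of-GD}) and (\ref{aux:link-in-S}) to deduce $D\cap S=\{v\}$, hence $b\le 1$ and $D\setminus N(u)=\{v\}$, and then concludes $|K|\ge k$ and $u\in K$. You instead argue straight from the fen-core axioms: the edge $u^*w$ would be a length-$1$ outer path, giving $u^*\notin N(S)$; condition~(b) then forces every link to lie in $S\cap N(u^*)=\emptyset$, so $K\subseteq N(u^*)$; the paths $w_{c'}$-$u^*$-$w_{c''}$ are length-$2$ outer paths, giving $b\le 1$; and finally a degree count using $u^*\notin K\cup N(S)$ finishes. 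Your argument is more elementary in that it bypasses Lemma~\ref{lem:pivot-properties} entirely, while the paper's is more modular, reusing the structural lemma that is anyway needed later in Section~\ref{sub:getting-rid}.
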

\begin{proof}
Suppose for contradiction that $\zeta$ is failing and there is a $k$-$b$-coloring $\psi$ of $G$ described by $\zeta$.
By Observation~\ref{obs:b-vertices-are-candidates}, $|\kb| \ge k$. Hence, $\zeta$ is plan-failing or pivot-failing.

First, suppose that $\zeta$ is pivot-failing.
Let $D$ and $u$ be as in Definition~\ref{def:failing}, and let $c := \psi(u)$.
If $c \in [b]$, let $v \in D$ be a vertex such that $\chi(v) = c$, and otherwise, let $v$ be the $b$-vertex for $c$ in $\kout$ ($v$ exists by Observation~\ref{obs:b-vertices-are-candidates}).
In both cases, we have $v \in D$.
Since $\psi$ is proper, we have $uv \notin E(G)$.
Hence, $u$ and $v$ are $\chi$-linked via a vertex $w \in \kb$.
Since $\psi(u) = \psi(v)$ and $u \notin S$, we may use Observation~\ref{obs:links} to deduce that $w$ is not a $b$-vertex in $\psi$.
By Observation~\ref{obs:b-vertices-are-candidates}, we obtain $|\kb| > k$, which means that $\{v, w\} \cap S \ne \emptyset$, see Definition~\ref{def:failing}.
If $w \in S$, then $w \in K^+ \cap S = B$, and $w$ would be a $b$-vertex in $\psi$, which it is not.
Hence, $w \notin S$ and $v \in S$.
Since $w$ is the only $(u, D)$-link in $N[v]$ by Lemma~\ref{lem:pivot-properties}(\ref{aux:component-of-GD}), we may use Lemma~\ref{lem:pivot-properties}(\ref{aux:link-in-S}) to deduce $D \cap S = \{v\}$.
Observe that $b \le 1$ because $[b] \seq \chi(D)$.
If $v', w' \in D$ are such that $w'$ is the $(\chi, u, v')$-link, then $\{v', w'\} \cap S \ne \emptyset$ by Definition~\ref{def:failing}, which implies $v' = v$.
Hence, $D \setminus N(u) = \{v\}$.
Since $|K| = |\kb| - b > k - b \ge k-1$, we have $|K| \ge k$.
Since $K \seq D$ and $v \notin K$, we have $K \seq N(u)$.
Hence, $u$ has degree at least $k$, which implies $u \in K$; a contradiction with $u \notin D$.

Now suppose that $\zeta$ is plan-failing. To obtain a contradiction, we will define a valid color plan $\pi$.
Let $\rho$ be a color realization such that $\rho(c)$ is a $b$-vertex for each $c \in [b+1, k]$ in $\psi$.
Suppose that $u \in \range(\rho)$ is damaged in $\rho$, i.e., there is a vertex $v \in S$ such that $u$ is the $(\chi, v, w)$-link, where $w = \rho(c) \in \kstar$ and $c = \chi(v) \in [b+1, p]$.
Since $u$ is a $b$-vertex in $\psi$, $w \notin S$, and $\psi(v) = \psi(w)$, we obtain a contradiction with Observation~\ref{obs:links}.
Hence, $\rho$ is damage-free.
Let $\pi$ be the color plan realized by $\rho$.
By Lemma~\ref{lem:rho-to-pi}, $\pi$ satisfies Properties~\ref{cp:star}, \ref{cp:enough-neighbor}, and~\ref{cp:block} of Definition~\ref{def:color-plan}.
Moreover, Property~\ref{cp:proper} holds because $\psi$ is proper (using Condition~(\ref{cr:S}) of Definition~\ref{def:color-realization}).
For Property~\ref{cp:redundancy}, suppose that there is a vertex $u \in B$ such that $C := \pi^{-1}(u) \cap \chi(N(u))$ satisfies $|C| > \red_\chi(u)$. By construction of $\pi$, $u$ has at least two neighbors of each color $c \in C$ in $\psi$: one in $S$ and one in $\kout$.
Hence, there are at most $\alpha := |\chi(N[u])| + |N[u]\setminus S| - |C|$ distinct colors in $\psi(N[u])$.
By definition of $\red_\chi$, we have $\alpha + |C| - k = \red_\chi(u) < |C|$. Hence, $\alpha < k$, which is a contradiction since $u$ must be a $b$-vertex in $\psi$.
Therefore, $\pi$ is a valid color plan.
\end{proof}

\subsubsection{Computing a damage-free color realization}\label{subsub:compute-damage-free}

Let us assume that $(\chi, B)$ is not failing.
Now we show how a damage-free and almost $0$-safe color realization can be computed, given a valid color plan $\pi$.
Informally, if $\pi$ is critical, then Algorithm~\ref{alg:find-damage-free} chooses a candidate that should be avoided.
Moreover, we must be careful about candidates at distance 2 to $S$ because they may be $\chi$-linked to a vertex of $S$: we use an auxiliary bipartite graph $H$ to handle this issue.
Finally, the almost $0$-safeness is achieved by choosing candidates close to $S^+$ for colors in $[p+1, k]$.

\newcommand{\com}[1]{\textup{\textbf{#1}}}
\newcommand{\Each}{\com{each}\xspace}
\newcommand{\From}{\com{from}\xspace}
\newcommand{\To}{\com{to}\xspace}

\begin{algorithm}[h]
\DontPrintSemicolon
\KwIn{A graph $G$, an integer $k$, a fen-core $S$, an $S$-profile $(\chi, B)$, a valid color plan~$\pi$.}
\SetKwProg{Fn}{Procedure}{:}{}

\If{$\pi$ is $c$-critical for some $c \in [b+1, p]$}
{
let $u_c$ be a $\pi$-anchor and let $v_c \in \kstar$ be $\chi$-linked to $u_c$ via $w_c$\;\label{l:anchor}
\tcp*{$u_c$, $v_c$, and $w_c$ exist by Definition~\ref{def:color-plan} since $\pi$ is valid}
$U := \kout \setminus \{w_c\}$;\label{l:critical}
}
\lElse{$U := K$\label{l:U}}
let $H$ be a bipartite graph with parts $C^* := \pi^{-1}(*)$ and $K^*$ such that $E(H) = \{cu \sep c \in C^*, u \in K^*, u$ is not $\chi$-linked to a vertex in $\chi^{-1}(c)\}$\;\label{l:H}

let $\mu$ be a maximal matching in $H$\;\label{l:matching}

\For{$c$ \From $b+1$ \To $p$}{
\If{$\pi(c) = u$ for some vertex $u \in S$}{
$\rho(c) := v$, where $v \in (U \cap N(u)) \setminus \range(\rho)$\;\label{l:NS}
\tcp*{we argue why $v$ exists in the proof of Lemma~\ref{lem:compute-damage-free}}
}
\lIf{$\pi$ is $c$-critical}{$\rho(c) := v_c$}
\lElseIf{$\pi(c) = *$}{$\rho(c) := u$, where $u \in K^*$ is the vertex matched with $c$ in $\mu$\label{l:use-match}}
\tcp*{we argue why $c$ is matched in $\mu$ in the proof of Lemma~\ref{lem:compute-damage-free}}
}
let $\prec$ be an ordering of $U$ such that if $u \prec v$, then $\dist(u, S^+) \le \dist(v, S^+)$\;
\lFor{$c$ \From $p+1$ \To $k$}{
$\rho(c) := u$, where $u$ is the $\prec$-smallest vertex in $U \setminus \range(\rho)$\label{l:p+1}}
\tcp*{we argue why $u$ exists in the proof of Lemma~\ref{lem:compute-damage-free}}
\Return $\rho$\;

\caption{Constructing a damage-free and $0$-safe color realization}
\label{alg:find-damage-free}
\end{algorithm}

\begin{lemma}\label{lem:compute-damage-free}
If $\pi$ is a valid color plan, then the function $\rho$ computed by Algorithm~\ref{alg:find-damage-free} with $\pi$ as input is a damage-free and almost $0$-safe color realization of $\pi$.
Moreover, Algorithm~\ref{alg:find-damage-free} runs in polynomial time, and if $\pi$ is not critical, then $\rho$ is $0$-safe.
\end{lemma}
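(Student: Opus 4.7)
The plan is to verify four properties of the output $\rho$: well-definedness at each assignment, that $\rho$ realizes $\pi$ (Definition~\ref{def:color-realization}), damage-freeness, and almost $0$-safeness; polynomial running time is immediate from inspection (the only nontrivial subroutines are a matching computation in $H$ and a distance-based sort).

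For well-definedness I must show each selection in lines~\ref{l:NS}, \ref{l:use-match}, and~\ref{l:p+1} is nonempty. On line~\ref{l:NS}, when $\pi(c) = u \in S$, Property~\ref{cp:enough-neighbor} of Definition~\ref{def:color-plan} together with the running tally of $|\pi^{-1}(u)|$ already-colored neighbors guarantees an unused vertex in $N(u) \cap \kout$; in the critical sub-case where $u = u_c$, the strict inequality in Property~\ref{cp:block} provides the extra slack needed to absorb the deletion of $w_c$ from $U$. On line~\ref{l:p+1}, the required $|U \setminus \range(\rho)| \ge k - p$ follows from $|\kb| \ge k$ in the non-critical case (otherwise $\zeta$ would be candidate-failing, contrary to assumption) and from the strict $|\kb| > k$ in Property~\ref{cp:block} in the critical case. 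The main obstacle is line~\ref{l:use-match}: I must show that every non-critical $c \in C^*$ is saturated by $\mu$. The intended argument is a Hall-type one: if some $C' \seq C^*$ consisting of non-critical colors had $|N_H(C')| < |C'|$, then every vertex of $\kstar$ outside $N_H(C')$ would be $\chi$-linked to $\chi^{-1}(c)$ for every $c \in C'$, forcing too many $\chi$-tight links to coexist in a graph of bounded feedback edge number and contradicting either Property~\ref{cp:star} or the sparsity implied by the fen-core structure.

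That $\rho$ realizes $\pi$ is essentially syntactic: for each $c \in [b+1, p]$, inspecting the branch taken in the algorithm directly yields conditions~(\ref{cr:star}) and~(\ref{cr:S}) of Definition~\ref{def:color-realization}. For damage-freeness, I split on $c \in [b+1,p]$ with $\rho(c) \in \kstar$. If $\pi$ is $c$-critical, then $\rho(c) = v_c$ and the corresponding link $w_c$ was deliberately removed from $U$, so $w_c \notin \range(\rho)$. If $\pi(c) = *$ and $\pi$ is not $c$-critical, then $c\rho(c) \in E(H)$, so by construction of $H$ the vertex $\rho(c)$ is not $\chi$-linked to any vertex of $\chi^{-1}(c)$, and no damaging link exists at all. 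The remaining case $\pi(c) \in S$ gives $\rho(c) \in N(S)$, hence $\rho(c) \notin \kstar$, and no damage arises from this color.

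Finally, almost $0$-safeness follows from the ordering $\prec$ used in line~\ref{l:p+1}: every unpicked $u \in U \setminus \range(\rho)$ satisfies $\dist(u, S^+) \ge \dist(\rho(c), S^+)$ for all $c \in [p+1, k]$. Since $\kout \cap S = \emptyset$ and $B \seq S$, we have $\kout \setminus B_\rho = \kout \setminus \range(\rho)$, so in the non-critical case $U = \kout$ gives full $0$-safeness. In the critical case the only vertex of $\kout \setminus B_\rho$ potentially outside $U$ is $w_c$, which satisfies $w_c \in N(v_c) \seq N(B_\rho)$---exactly the exception allowed by the definition of almost $0$-safe. I expect the Hall-type saturation argument on $H$ to be the main source of technical difficulty, since it is the one place where the structural properties of links interact nontrivially with the feedback edge budget encoded in the fen-core.
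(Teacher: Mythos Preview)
Your plan matches the paper's proof almost step for step; the one place you leave genuinely open is the Hall argument for line~\ref{l:use-match}, and your placeholder ``sparsity implied by the fen-core structure'' is not yet a proof. The missing observation is sharp and simple: every vertex $u \in \kstar$ is $\chi$-linked to at most one vertex of $S$. Indeed, if $u$ were $\chi$-linked to distinct $v_1, v_2 \in S$ via links $w_1, w_2 \in N(u) \cap N(S)$, then either $w_1 = w_2$ has two neighbors in $S$ (an outer path of length $0$) or $\{w_1, u, w_2\}$ induces an outer path of length $2$, both contradicting Definition~\ref{def:fen-core}. This immediately gives every $u \in \kstar$ degree at least $|C^*| - 1$ in $H$; hence for any $C' \seq C^*$ with $|C'| \ge 2$ one has $N_H(C') = \kstar$, and Property~\ref{cp:star} provides $|\kstar| \ge |C^*| \ge |C'|$. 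For $|C'| = 1$ with $C' = \{c\}$ non-critical, $N_H(c) \ne \emptyset$ by the definition of $c$-criticality. Hall's theorem then saturates all non-critical colors of $C^*$, and in the critical case it also shows that $\pi$ is $c$-critical for at most one $c$, so line~\ref{l:anchor} is unambiguous.

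The same uniqueness observation quietly closes a gap in your damage-free argument for the critical case: you assert that the damaging link must be $w_c$, but the definition of damage involves a link between $\rho(c) = v_c$ and some vertex of $\chi^{-1}(c)$, whereas $w_c$ is a priori only the link between $v_c$ and the anchor $u_c$. Since $v_c$ is linked to at most one vertex of $S$, the two coincide (and in particular $u_c \in \chi^{-1}(c)$), so $w_c \notin \range(\rho)$ finishes it as you intended. Everything else in your outline (lines~\ref{l:NS} and~\ref{l:p+1}, realization, $0$-safeness, the $w_c \in N(B_\rho)$ exception) is correct and mirrors the paper.
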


\begin{proof}
Let $U$ be the set defined on line~\ref{l:critical} or~\ref{l:U}, let $H$ and $C^*$ be the objects defined on line~\ref{l:H} and let $\mu$ be the matching found on line~\ref{l:matching}. 

First, we show, for each color $c \in C^*$, that unless $\pi$ is $c$-critical, $c$ is matched in $\mu$, which is needed for line~\ref{l:use-match}.
Observe that each vertex $u \in K^*$ is $\chi$-linked to at most one vertex in $S$ (otherwise there would be an outer path of length 2).
This implies that $\pi$ is $c$-critical for at most one color $c \in [b+1, p]$ and that the degree of a vertex $u \in K^*$ in $H$ is $|C^*| -1$ or $|C^*|$.
If $\pi$ is $c$-critical for some $c \in [b+1, p]$, then $c \in C^*$ and $cu \notin E(H)$ for each $u \in K^*$, which means that $H - c$ is a complete bipartite graph and that all colors in $C^* \setminus \{c\}$ are matched in $\mu$.
Suppose that $\pi$ is not critical and let $C \seq C^*$ be a non-empty set.
Since $\pi$ is not critical, we have $N_H(C) \ne \emptyset$.
Moreover, if $|C| \ge 2$, we have $N_H(C) = K^*$ (since each $u \in K^*$ has high degree in $H$, as observed above).
Since $K^* \ge C^*$ by Property~\ref{cp:star} of Definition~\ref{def:color-plan}, all colors in $C^*$ are matched in $\mu$ by Hall's marriage theorem.

Second, we show that $\rho(c)$ can be always defined on line~\ref{l:NS}.
Let $u \in S$ and let $c \in \pi^{-1}(u)$.
By Property~\ref{cp:enough-neighbor} of Definition~\ref{def:color-plan}, we have $|\pi^{-1}(u)| \le |K \cap N(u)|$, which implies that $\rho(c)$ can be defined whenever $K \cap N(u) = U \cap N(u)$.
Suppose that $K \cap N(u) \ne U \cap N(u)$.
This implies that $\pi$ is $d$-critical for some $d \in [b+1, p]$; let $u_d$, $v_d$, and $w_d$ be the vertices defined on line~\ref{l:anchor}.
Clearly, we have $u = u_d$, i.e., $u$ is a $\pi$-anchor.
By Property~\ref{cp:block} of Definition~\ref{def:color-plan}, $|\pi^{-1}(u)| < |K \cap N(u)|$, which implies that $\rho(c)$ can again be defined (informally, we can afford to ignore $w_d$).

Third, we show that $\rho(c)$ can be always defined on line~\ref{l:p+1}.
Let $c \in [p+1, k]$.
Since $(\chi, B)$ is not candidate-failing, we have $|\kb| \ge k$, which means that $\rho(c)$ can be defined when $\pi$ is not critical (because $U = K$).
However, if $\pi$ is $d$-critical for some $d \in [b+1, p]$, then $|\kb| > k$ by Property~\ref{cp:block} of Definition~\ref{def:color-plan}, and $\rho(c)$ can again be defined (we can again afford to ignore $w_d$).

We have shown that $\rho(c)$ is defined for each $c \in [b+1, k]$.
Moreover, it can be easily observed that $\rho$ is injective and that it satisfies both conditions of Definition~\ref{def:color-realization}.
Hence, $\rho$ is a color realization of $\pi$.
To show that $\rho$ is damage-free, suppose that $w \in \range(\rho)$ is damaged in $\rho$, i.e., there is a vertex $u \in S$ and a color $c \in [b+1, p]$ such that $\chi(u) = c$, and $u$ and $v := \rho(c)$ are $\chi$-linked via $w$.
Observe that $vc \notin E(H)$.
Hence, $c$ and $v$ cannot be matched in $\mu$, which implies $v = v_c$, $w = w_c$, and $u = u_c$.
However, now we have a contradiction since $w_c \notin \range(\rho)$.
Hence, $\rho$ is damage-free.

Now let $u \in \kout \setminus B_\rho$ be a vertex.
We will show that $u$ is $0$-safe in $\rho$ unless $\pi$ is $c$-critical for some $c \in [b+1,p]$ and $u = w_c$.
This suffices because $w_c$ does not exist when $\pi$ is not critical, and if $\pi$ is critical, then $w_c$ has a neighbor in $B_\rho$, namely $v_c$, see Definition~\ref{def:color-realization}.
Suppose for contradiction that $u \in U$ is not $0$-safe in $\rho$, i.e., there is a color $c \in [p+1, k]$ such that $\dist(u,S^+) < \dist(\rho(c), S^+)$.
Since $u \notin B_\rho$, we know that $u \in U \setminus \range(\rho)$.
This is a contradiction since $\rho(c)$ should have been defined as $u$, see line~\ref{l:p+1}.

Finally, observe that Algorithm~\ref{alg:find-damage-free} runs in polynomial time.
In particular, the matching $\mu$ can be found by reducing to the maximum flow problem, and the ordering $\prec$ can be found using the BFS algorithm.
\end{proof}

\subsubsection{Changing the plan or the realization}\label{sub:change}

In Section~\ref{subsub:compute-damage-free}, we showed how a damage-free and almost $0$-safe color realization can be computed.
However, in some cases, we will need to modify the realization, which may even lead to the change of the color plan.
The following lemma will be used to show that the new plan is also valid.

\begin{lemma}\label{lem:is-valid-plan}
Let $\rho$ be a color realization of a valid color plan $\pi$.
\begin{enumerate}[(a)]
\item If $\rho'$ is a color realization such that for each $c \in [b+1, p]$ and $u \in S$, $\rho(c) \in N(u)$ if and only if $\rho'(c) \in N(u)$, then $\rho'$ realizes $\pi$.\label{vpaux:trivial}
\item If $c \in [b+1, p]$ and $\pi_1 = \pi[c \mapsto *]$ has a damage-free realization, then $\pi_1$ is valid.\label{vpaux:mapsto-*}
\item If $u \in S$, $c \in [b+1, p] \setminus \chi(N[u])$, and $\pi_1 := \pi[c \mapsto u]$ has a damage-free realization, then $\pi_1$ is valid.\label{vpaux:mapsto-u}
\end{enumerate}
\end{lemma}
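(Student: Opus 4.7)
The plan is to verify each of the three parts by unpacking Definitions~\ref{def:color-plan} and~\ref{def:color-realization} and invoking Lemma~\ref{lem:rho-to-pi} to dispatch as many properties as possible at once.

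For part~(a), I would directly check the two conditions of Definition~\ref{def:color-realization} for $\rho'$. Condition~(\ref{cr:S}) transfers immediately from $\rho$ to $\rho'$ via the hypothesis, using the fact that $\rho$ realizes $\pi$. For condition~(\ref{cr:star}), observe that $\rho'(c)\in\kstar$ iff $\rho'(c)\in\kout\setminus N(S)$; since $\rho'(c)\in\kout$ by definition of a color realization, and since non-adjacency to every $u\in S$ is preserved by the hypothesis, this reduces to the corresponding statement for $\rho$, which holds because $\pi(c)=*$ iff $\rho(c)\in\kstar$.

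For parts~(b) and~(c), the existence of a damage-free realization of $\pi_1$ lets Lemma~\ref{lem:rho-to-pi} hand me Properties~\ref{cp:star}, \ref{cp:enough-neighbor}, and~\ref{cp:block} of Definition~\ref{def:color-plan} for free, leaving only Properties~\ref{cp:proper} and~\ref{cp:redundancy} to verify. In part~(b), since $\pi_1^{-1}(v)\seq\pi^{-1}(v)$ for every $v\in S$ and $*$ is never a vertex of $S$, both properties inherit directly from the corresponding properties for $\pi$. In part~(c), the same inheritance handles every $v\ne u$; for $v=u$, the assumption $c\notin\chi(N[u])$ is what makes everything go through. Specifically, $c\notin\chi(N[u])$ implies $\chi(u)\ne c$, which ensures Property~\ref{cp:proper} at $u$ (namely $\pi_1(\chi(u))=\pi(\chi(u))\ne u$); and $c\notin\chi(N(u))$ implies $\pi_1^{-1}(u)\cap\chi(N(u))=\pi^{-1}(u)\cap\chi(N(u))$, so Property~\ref{cp:redundancy} at $u$ is preserved.

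I do not anticipate any serious obstacle; all three parts are routine unpacking of the definitions once Lemma~\ref{lem:rho-to-pi} is applied. The one subtlety worth flagging is the precise role of $c\notin\chi(N[u])$ in part~(c): it is exactly the hypothesis needed to make both Property~\ref{cp:proper} and Property~\ref{cp:redundancy} survive the reassignment $c\mapsto u$, so one should be careful not to weaken it when applying the lemma later.
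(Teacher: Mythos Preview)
Your proposal is correct and follows essentially the same approach as the paper: part~(\ref{vpaux:trivial}) is a direct check of Definition~\ref{def:color-realization}, and parts~(\ref{vpaux:mapsto-*}) and~(\ref{vpaux:mapsto-u}) invoke Lemma~\ref{lem:rho-to-pi} for Properties~\ref{cp:star}, \ref{cp:enough-neighbor}, \ref{cp:block} and then verify Properties~\ref{cp:proper} and~\ref{cp:redundancy} via the inclusions $\pi_1^{-1}(v)\seq\pi^{-1}(v)$ for $v\ne u$ together with the hypothesis $c\notin\chi(N[u])$ at $v=u$. Your write-up even spells out part~(\ref{vpaux:trivial}) in slightly more detail than the paper does.
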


\begin{proof}
Let $\rho$ and $\pi$ be as in the statement. First, observe that~(\ref{vpaux:trivial}) follows directly from Definition~\ref{def:color-realization}.

For~(\ref{vpaux:mapsto-*}), let $c$ and $\pi_1$ be as in the statement.
By Lemma~\ref{lem:rho-to-pi}, $\pi_1$ satisfies Properties~\ref{cp:star}, \ref{cp:enough-neighbor}, and \ref{cp:block} of Definition~\ref{def:color-plan}.
If $\pi_1(\chi(v)) = v$ for some $v \in S$, then also $\pi(\chi(v)) = v$, which contradicts the validity of $\pi$.
Hence, $\pi_1$ satisfies Property~\ref{cp:proper}.
Since $\pi_1^{-1}(v) \seq \pi^{-1}(v)$ for each $v \in B$, Property~\ref{cp:redundancy} is satisfied as well, and $\pi_1$ is valid as desired.

For~(\ref{vpaux:mapsto-u}), let $u, c$, and $\pi_1$ be as in the statement.
By Lemma~\ref{lem:rho-to-pi}, $\pi_1$ satisfies Properties~\ref{cp:star}, \ref{cp:enough-neighbor}, and \ref{cp:block} of Definition~\ref{def:color-plan}.
If $\pi_1(\chi(v)) = v$ for some $v \in S$, then by definition of $\pi_1$ and validity of $\pi$, we have $\chi(v) = c$ and $v = u$, which is a contradiction with $c \notin \chi(N[u])$.
Hence, $\pi_1$ satisfies Property~\ref{cp:proper}.  
Finally, suppose that $|\pi_1^{-1}(v) \cap \chi(N(v))| > \red_\chi(v)$ for some $v \in B$.
By definition of $\pi_1$ and validity of $\pi$, we have $v = u$.
However, $c \notin \chi(N(u))$, which implies $\pi_1^{-1}(u) \cap \chi(N(u)) = \pi^{-1}(u) \cap \chi(N(u))$, which contradicts the validity of $\pi$.
Hence, $\pi_1$ satisfies Property~\ref{cp:redundancy}, and $\pi_1$ is indeed valid.
\end{proof} 

The new color realization $\rho'$ may contain a damaged vertex.
Let us describe how the properties of this damaged vertex depend on $\rho'$.

\begin{lemma}\label{lem:damaged}
Let $\rho$ and $\rho'$ be color realizations such that $\rho$ is damage-free, and let $u \in \range(\rho')$ be a vertex damaged in $\rho'$, i.e., there is a color $c \in [b+1, p]$ such that $v := \rho'(c) \in N(u)$ and $c \in \chi(N(u))$.
\begin{enumerate}[(a)]
\item If $\rho(d) = \rho'(d)$ for each $d \in [b+1, p]$, then $u \notin \range(\rho)$.\label{daux:rhoc=rho'c}
\item If $\range(\rho) = \range(\rho')$, then $\rho(c) \ne v$.\label{daux:same-ranges}
\item If $\rho' = \rho[d \mapsto w]$ for a color $d \in [b+1, k]$ and a vertex $w \in K \setminus \range(\rho)$, then $w \in \{u, v\}$.\label{daux:shift}
\end{enumerate}
\end{lemma}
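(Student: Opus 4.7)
The plan is to unfold the definition of ``damaged in $\rho'$'' to extract a full witness and then, under the hypothesis of each part, attempt to transfer this witness to $\rho$; any successful transfer contradicts the damage-freeness of $\rho$ (via $u \in \range(\rho)$) and yields the desired conclusion. Unfolding Definition~\ref{def:color-realization}, the assumption that $u$ is damaged in $\rho'$ supplies, beyond what is spelled out in the lemma statement, a vertex $v^\star \in \chi^{-1}(c) \cap N(u)$ together with the facts $v = \rho'(c) \in \kstar$ and that $u \in \kb$ is $\chi$-tight with $v^\star, v \in N(u)$; in short, $(c, v^\star, v)$ is a complete damage-witness in $\rho'$.

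For part~(a), since $\rho(c) = \rho'(c) = v$ (because $c \in [b+1,p]$), the same triple $(c, v^\star, v)$ witnesses that $u$ is damaged in $\rho$; damage-freeness of $\rho$ then forces $u \notin \range(\rho)$. For part~(b), I would suppose toward contradiction that $\rho(c) = v$. The preserved value $\rho(c) = v \in \kstar$ again makes $(c, v^\star, v)$ a valid damage-witness for $\rho$, and $u \in \range(\rho') = \range(\rho)$ then contradicts damage-freeness.

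Part~(c) is the most delicate, and I would split on whether $c = d$. If $c = d$, then $v = \rho'(d) = w$ is immediate. If $c \ne d$, then $\rho(c) = \rho'(c) = v$ is still preserved, so the argument of part~(b) applies to show $u \notin \range(\rho)$. Since $\rho' = \rho[d \mapsto w]$ with $w \notin \range(\rho)$ (in particular $w \ne \rho(d)$), a direct set-computation gives $\range(\rho') \setminus \range(\rho) = \{w\}$; combined with $u \in \range(\rho')$, this forces $u = w$. The only real subtlety is bookkeeping: the lemma reuses the symbol $v$ for $\rho'(c)$, whereas the definition of damaged supplies an independent vertex $v^\star \in \chi^{-1}(c)$, and these two must be kept carefully separate throughout. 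Beyond this, no new technical ideas are needed; each part reduces to a one-line application of the damage-free hypothesis on $\rho$.
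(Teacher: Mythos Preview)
Your proposal is correct and follows essentially the same approach as the paper: in each part you transfer the damage witness from $\rho'$ back to $\rho$ and invoke damage-freeness. The paper's proof of~(c) is organized as a single contrapositive (assume $w \notin \{u,v\}$, derive $u \in \range(\rho)$ and $\rho(c)=v$ directly), whereas you split explicitly on $c=d$ versus $c\ne d$; the content is the same. One small wording slip: in the $c\ne d$ branch you say ``the argument of part~(b) applies,'' but part~(b) assumes $\range(\rho)=\range(\rho')$, which need not hold here---what you actually use (and what you spell out correctly) is the transfer argument common to~(a) and~(b), namely that $\rho(c)=v$ makes $u$ damaged in $\rho$, hence $u\notin\range(\rho)$.
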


\begin{proof}
Let $\rho, \rho', u, c$, and $v$ be as in the statement.
For~(\ref{daux:rhoc=rho'c}), suppose that $\rho(d) = \rho'(d)$ for each $d \in [b+1, p]$.
In particular, $v = \rho(c)$.
Since $u$ is not damaged in $\rho$, we have $u \notin \range(\rho)$, as required.

For~(\ref{daux:same-ranges}), suppose that $\range(\rho) = \range(\rho')$.
If $\rho(c) = v$, then $u$ would be damaged in $\rho$, which is a contradiction.
Hence, $\rho(c) \ne v$, as required.

For~(\ref{daux:shift}), let $d$ and $w$ be as in the statement, and suppose that $w \notin \{u, v\}$.
Since $w \ne u$, we have $u \in \range(\rho)$, and since $w \ne v$, we have $\rho(c) = v$.
Hence, $u$ is damaged in $\rho$, which is a contradiction.
\end{proof}

Finally, let us show how a change of the realization affects the safeness of vertices.

\begin{observation}\label{obs:safety}
Let $\rho' = \rho[c \mapsto \rho(d), d \mapsto \rho(c)]$ for some colors $c, d \in [b+1, k]$.
If $\rho$ is $\ell$-safe, then $\rho'$ is $(\ell+2)$-safe, and if $\rho$ is almost $\ell$-safe, the $\rho'$ is almost $(\ell+2)$-safe.
\end{observation}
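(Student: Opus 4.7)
The proof plan is essentially a bookkeeping argument: swapping two values of $\rho$ can only affect two summands in the count that defines safeness, so the safeness parameter degrades by at most $2$.

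First, I would record the trivial but key fact that $\range(\rho') = \range(\rho)$, since the modification only permutes two values. Consequently $B_{\rho'} = B_\rho$, $S_{\rho'} = S_\rho$, $\kout \setminus B_{\rho'} = \kout \setminus B_\rho$, and $N(B_\rho) = N(B_{\rho'})$. So the collection of vertices whose safeness we must check is exactly the same for $\rho$ and $\rho'$, and the single exceptional vertex allowed in the almost-safe definition can be taken to be the same vertex in both cases (its distinguishing property, namely belonging to $N(B_\rho)$ or having at most $p+2$ neighbors in $N(B_\rho)$, depends only on $B_\rho = B_{\rho'}$).

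Second, fix any vertex $u \in V(G)$ and compare
\[
E_\rho(u) := \{e \in [p+1, k] \sep \dist(u, S^+) < \dist(\rho(e), S^+)\}
\]
with the analogously defined $E_{\rho'}(u)$. Since $\rho'(e) = \rho(e)$ for every $e \in [b+1,k] \setminus \{c,d\}$, the symmetric difference $E_\rho(u) \symdif E_{\rho'}(u)$ is contained in $\{c,d\} \cap [p+1,k]$, and in particular has size at most $2$. Hence
\[
|E_{\rho'}(u)| \le |E_\rho(u)| + 2.
\]

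Third, I would apply this inequality to every $u \in \kout \setminus B_{\rho'} = \kout \setminus B_\rho$. If $\rho$ is $\ell$-safe, then $|E_\rho(u)| \le \ell$ for all such $u$, which yields $|E_{\rho'}(u)| \le \ell + 2$, so $\rho'$ is $(\ell+2)$-safe by Definition~\ref{def:color-realization}. For the almost-safe case, let $v$ be the single vertex possibly violating $\ell$-safeness in $\rho$; every other $u \in \kout \setminus B_\rho$ is $\ell$-safe in $\rho$, hence $(\ell+2)$-safe in $\rho'$ by the same inequality, while $v$ retains its exceptional-vertex property because $N(B_\rho) = N(B_{\rho'})$. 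Thus $\rho'$ is almost $(\ell+2)$-safe. There is no subtle step: the only thing to watch is that the difference set is contained in $\{c,d\}$, which is immediate from the definition of $\rho'$.
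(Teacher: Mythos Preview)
Your proof is correct and follows essentially the same approach as the paper: both arguments observe that $B_{\rho'} = B_\rho$ (since $\range(\rho') = \range(\rho)$) and that $\rho'(e) = \rho(e)$ for all $e \notin \{c,d\}$, so the safeness count can increase by at most $2$, while the exceptional-vertex condition in the almost-safe case is preserved verbatim. Your write-up is slightly more explicit in naming the set $E_\rho(u)$ and bounding the symmetric difference, but the content is the same.
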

\begin{proof}
Let $\rho', c$, and $d$ be as in the statement.
Observe that if $\dist(u, S^+) < \dist(\rho'(c'), S^+)$ for some $u \in K \setminus B_{\rho'}$ and $c' \in [p+1, k]$, then either $\rho'(c') = \rho(c')$ or $c' \in \{c, d\}$.
Hence, each vertex in $K \setminus B_{\rho'}$ that is $\ell$-safe in $\rho$ is $(\ell+2)$-safe in $\rho'$.
Moreover, if $u \in N(B_\rho)$ or $|N(u) \cap N(B_\rho)| \le p+2$ for some $u \in K \setminus B_{\rho'}$, then $u \in N(B_{\rho'})$ or $|N(u) \cap N(B_{\rho'})| \le p+2$ because $B_\rho = B_{\rho'}$, which concludes the proof by definition of almost-safeness.
\end{proof}

\subsubsection{Feasible color realizations}\label{sub:feasible}

Now we define when a color realization $\rho$ is \emph{feasible}, which is a property that ensures that $\chi_\rho$ can be extended into a $b$-coloring.

\begin{definition}\label{def:feasible}
Let $\rho$ be a color realization. A vertex $u \in \overline{S_\rho}$ is a \emph{$\rho$-pivot} if there is a set $D \seq S_\rho$ such that $\chi_\rho(D) = [k]$ and $u$ is a $(\chi_\rho, D)$-pivot.
We say that $D$ is \emph{$\rho$-pivoted} by $u$.
We say that $\rho$ is \emph{pivot-free} if there is no $\rho$-pivot.

A vertex $u \in B_\rho$ is \emph{$\rho$-blocked} by a color $c \in [k]$ if $c \notin \chi_\rho(N[u])$ and for each $v \in N(u) \setminus S_\rho$, there is a vertex $w \in \chi_\rho^{-1}(c)$ such that $w \in N(v)$ or $w$ is $\chi_\rho$-linked to $v$ via a vertex in $B_\rho$.
We say that $\rho$ is \emph{block-free} if no vertex $u \in B_\rho$ is $\rho$-blocked by any color $c \in [k]$.

We say that $\rho$ is \emph{feasible} if it is damage-free, almost $13$-safe, pivot-free, and block-free, and it realizes a valid color plan.
\end{definition}

Informally, a $\rho$-pivot $u$ is a vertex that cannot be assigned any color: each color $c$ is either in its neighborhood or $u$ is $\chi_\rho$-linked to a vertex colored with $c$, see also Definition~\ref{def:links}.
Similarly, a vertex $u$ is $\rho$-blocked by $c$ if $c$ cannot be added to the neighborhood of $u$: each uncolored vertex either has a neighbor colored with $c$ or it is linked to a vertex colored with~$c$.

We will show how to make a realization pivot-free in Section~\ref{sub:getting-rid}.
In the remainder of this section, we show how to make a realization block-free (assuming it is pivot-free).
First, we describe the structure of a realization that is \emph{not} block-free.

\begin{lemma}\label{lem:structure-of-block-but-not-pivoted}
Let $\rho$ be a damage-free and pivot-free color realization of a valid color plan. If a vertex $u \in B_\rho$ is $\rho$-blocked by a color $c \in [k]$, then $|\chi_\rho(N[u])| \ge k-2$.
Moreover, there are distinct vertices $v_1, v_2, w_1, w_2$ such that $N(u) \setminus S_\rho = \{v_1, v_2\}$, $w_1 \in \chi^{-1}(c)$, $\rho(c) = w_2$, and for $i \in [2]$, either $w_i \in N(v_i)$ or $v_i$ and $w_i$ are $\chi_\rho$-linked.
\end{lemma}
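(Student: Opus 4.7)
The lemma makes two claims: the cardinality bound $|\chi_\rho(N[u])| \ge k-2$ and a structural description via distinct vertices $v_1, v_2, w_1, w_2$. Since $\rho$ realizes a valid color plan, Lemma~\ref{lem:valid-realization}(b) tells us that $u$ is a $\chi_\rho$-candidate, giving $|N(u) \setminus S_\rho| + |\chi_\rho(N[u])| \ge k$. Because $u$ is $\rho$-blocked by $c$, we have $c \notin \chi_\rho(N[u])$, so $|\chi_\rho(N[u])| \le k-1$ and $|N(u)\setminus S_\rho| \ge 1$. The plan is to prove the sharper equality $|N(u) \setminus S_\rho| = 2$, from which the cardinality claim follows; the structural description will then be extracted by applying the blocking hypothesis to the two identified uncolored neighbors.

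For the lower bound $|N(u) \setminus S_\rho| \ge 2$, I argue by contradiction. If $N(u) \setminus S_\rho = \{v\}$, the candidate inequality forces $u$ to be $\chi_\rho$-tight. Letting $w$ and (in the link case) $y \in B_\rho$ be the blocking witnesses for $v$, I set $D := (N[u] \cap S_\rho) \cup \{w\}$, appending $y$ in the link case. Then $D \subseteq S_\rho$ and $\chi_\rho(D) = [k]$. I verify that $v$ is a $(\chi_\rho, D)$-pivot: for each $x \in (N(u) \cap S_\rho) \setminus \{u\}$, the $\chi_\rho$-tight vertex $u \in K^+ \cap D$ lies in $N(v) \cap N(x)$ and is the required link; for $x = w$, the blocking supplies either the edge $vw$ or the tight link $y \in D$; the case $x = u$ is handled by $vu \in E(G)$. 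Thus $v$ would be a $\rho$-pivot, contradicting pivot-freeness. For the upper bound $|N(u)\setminus S_\rho| \le 2$, I suppose three distinct $v_1, v_2, v_3 \in N(u) \setminus S_\rho$ with blocking witnesses $w_i$ (and, if applicable, $y_i$). A case analysis on the direct/link statuses of two paths $u$--$v_i$--$w_i$ and $u$--$v_j$--$w_j$ with $w_i = w_j$ always yields a cycle of length at most $6$ passing through $v_i \notin S$, contradicting Observation~\ref{obs:short-cycles-are-in-S}; hence the $w_i$ are pairwise distinct. Since $|\chi_\rho^{-1}(c) \cap \overline{S}| \le 1$, at least two of the $w_i$ lie in $S$, and each such index produces a distinct $(u, S^+)$-path of length at most $3$. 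When $u \in \overline{S}$, a routing argument (extending the path through $\rho(c)$ to $S^+$ in the residual case $w_3 = \rho(c)$) yields a third $(u, S^+)$-path, contradicting Observation~\ref{obs:S+}; when $u \in S$, I apply Observation~\ref{obs:S+} instead to one of the $v_i$'s, whose two-path quota is already exhausted by the path to $u$ and the path to $w_i$.

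Once $N(u) \setminus S_\rho = \{v_1, v_2\}$ is established, the blocking hypothesis yields $w_1, w_2 \in \chi_\rho^{-1}(c)$; their distinctness follows from the same short-cycle argument applied to the pair $(v_1, v_2)$. To finalize the structural claim, I rule out $\{w_1, w_2\} \subseteq \overline{S}$ (impossible since $|\chi_\rho^{-1}(c) \cap \overline{S}| \le 1$) and $\{w_1, w_2\} \subseteq S$ via a pivot construction analogous to the lower-bound argument that uses the second $S$-vertex of color $c$ to complete the set $D$, making one of $v_1, v_2$ into a $\rho$-pivot and again contradicting pivot-freeness. Consequently $c \in [b+1, p]$, and after relabeling we obtain $w_1 \in \chi^{-1}(c) \subseteq S$ and $w_2 = \rho(c) \in \overline{S}$, as required. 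The main obstacle will be the upper-bound step, particularly in the case $u \in S$ where the fen-core bound on $(u, S)$-paths does not apply directly; the workaround is to track the direct/link configurations of each of the three paths from $u$ and transfer the counting argument from $u$ to a suitable $v_i$ via Observation~\ref{obs:S+}.
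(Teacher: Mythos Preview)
Your lower-bound argument and the distinctness of the $w_i$ are correct, and the overall plan is sound. The gap lies in how you exclude the configuration $\{w_i,w_j\}\subseteq S$, which you need both for the upper bound and for the final structural claim. For the upper bound with $u\notin S$, the third $(u,S^+)$-path you try to route through $v_3$ and $\rho(c)$ need not exist: nothing prevents the only route from $\rho(c)$ to $S$ from passing back through $u$ (picture $v_3$ sitting in a subtree dangling off $u$). Your treatment of $u\in S$ is not a contradiction either, since exhibiting two $(v_i,S^+)$-paths is exactly what Observation~\ref{obs:S+} allows. For the final structural step, the pivot construction you sketch for $\{w_1,w_2\}\subseteq S$ requires $u$ to serve as a $\chi_\rho$-tight link for $v_1$; but with $|N(u)\setminus S_\rho|=2$ the $\chi_\rho$-redundancy of $u$ may equal $1$ (namely when $|\chi_\rho(N[u])|=k-1$), so $u$ is not tight and the construction collapses.

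The missing observation, which is the paper's route, replaces both detours at once: whenever $w_i,w_j\in S$ with $v_i\ne v_j$, the vertices $x_i,v_i,u,v_j,x_j$ (taking $x_\ell:=v_\ell$ in the direct case) induce a walk that contains an $S$-outer path of length at most $4$, contradicting Definition~\ref{def:fen-core}; the case $u\in S$ is absorbed, since then already $\{v_i,x_i\}$ yields an outer path of length at most $1$. With this in hand the proof is immediate: among any two witnesses at most one lies in $S$, and since $\chi_\rho^{-1}(c)\cap\overline S\subseteq\{\rho(c)\}$ at most one lies outside $S$; hence there are at most two $v_i$, and after relabeling one gets $w_1\in\chi^{-1}(c)$ and $w_2=\rho(c)$.
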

\begin{proof}
Let $u \in B_\rho$ be as in the statement.
By Lemma~\ref{lem:valid-realization}(\ref{vraux:candidate}), $u$ is a $\chi_\rho$-candidate.
Since $c \notin \chi_\rho(N[u])$, there is a vertex $v_1 \in N(u) \setminus S_\rho$.
If $N(u) \setminus S_\rho = \{v_1\}$, then $u$ is $\chi_\rho$-tight and each color $d \in [k]$ such that $d \ne c$ is in $\chi_\rho(N[u])$.
By definition of being $\rho$-blocked, $v_1$ is a $\rho$-pivot, which contradicts the pivot-freeness of $\rho$.
Hence, there is a vertex $v_2 \in N(u) \setminus S_\rho$ such that $v_1 \ne v_2$.
Let $v_3 \in N(u) \setminus S_\rho$ be a vertex such that, if possible, $v_3 \notin \{v_1, v_2\}$.
By definition of a $\rho$-blocked vertex, there is, for each $i \in [3]$, a vertex $w_i \in \chi_\rho^{-1}(c)$ such that $w_i \in N(v_i)$ or $w_i$ is $\chi_\rho$-linked to $v_i$ via a vertex $x_i \in B_\rho$.
If $w_i \in N(v_i)$, we set $x_i := v_i$.

Let $i,j \in [3]$ be such that $v_i \ne v_j$.
By Observation~\ref{obs:short-cycles-are-in-S}, $w_i \ne w_j$, since otherwise $G[\{w_i, x_i, v_i, u, v_j, x_j\}]$ would be a cycle (and $v_i \notin S$).
Moreover, $\{w_i, w_j\} \nsubseteq S$, since otherwise $G[\{x_i, v_i, u, v_j, x_j\}]$ would be a short outer path.
If $w_i, w_j \notin S$, then $w_i = \rho(c) = w_j$, a contradiction.
Hence, without loss of generality, $w_1 \in S$, $w_2 = \rho(c) = w_3$ and $v_2 = v_3$.
By choice of $v_3$, $N(u) \setminus S_\rho = \{v_1, v_2\}$.
Since $u$ is a $\chi_\rho$-candidate, we have $|\chi_\rho(N[u])| \ge k -2$, as desired.
\end{proof}

Now we are ready to make the realization feasible.

\begin{lemma}\label{lem:handle-block}
If $\rho$ is a damage-free, pivot-free, and almost $11$-safe color realization of a valid color plan $\pi$, then a feasible color realization $\rho'$ can be computed in polynomial time.
\end{lemma}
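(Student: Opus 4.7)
If $\rho$ is already block-free, I set $\rho' := \rho$: all conditions of Definition~\ref{def:feasible} are met by assumption, using that almost $11$-safe implies almost $13$-safe. Otherwise, pick a blocked pair $(u, c)$ with $u \in B_\rho$ and apply Lemma~\ref{lem:structure-of-block-but-not-pivoted} to obtain the rigid structure: $c \in [b+1, p]$, $N(u) \setminus S_\rho = \{v_1, v_2\}$, and distinct witness vertices $w_1 \in \chi^{-1}(c) \cap S$ and $w_2 := \rho(c) \in \kout$, with each $w_i$ within distance $2$ of $v_i$ (either $w_i \in N(v_i)$ or $v_i$ and $w_i$ are $\chi_\rho$-linked via a vertex in $B_\rho$).

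The plan is to destroy the block with a single swap, setting $\rho' := \rho[c \mapsto \rho(d),\, d \mapsto \rho(c)]$ for a carefully chosen $d \in [p+1, k]$. I would require $\rho(d)$ to be ``far'' from $v_2$ (neither adjacent to $v_2$ nor $\chi_\rho$-linked to it via $B_\rho$), and additionally chosen so that the swap creates no damage, no new pivot, and no new block; each of these constraints forbids only $\ca O(p)$ choices of $d$. The existence of a feasible $d$ then follows from the almost $11$-safeness of $\rho$: for any fixed $11$-safe vertex of $\kout \setminus B_\rho$, all but at most $11$ of the colors in $[p+1, k]$ place $\rho(d)$ deeper in some dangling path than that vertex, combined with $|[p+1, k]| \ge k - p \ge 2p + 18$ under the standing assumption $k \ge 3p + 18$.

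Given this $d$, I verify feasibility using the toolkit from Section~\ref{subsec:plans}: validity of the new color plan follows from Lemma~\ref{lem:is-valid-plan}(b) or (c) according to whether the new $\rho'(c) = \rho(d)$ lies in $\kstar$ or in $N(s)$ for some $s \in S$; damage-freeness of $\rho'$ follows from Lemma~\ref{lem:damaged}(c) together with the choice of $d$; almost $13$-safeness follows from Observation~\ref{obs:safety}, which consumes precisely the $11 \to 13$ budget in a single swap; and block-freeness holds because the vertices coloured $c$ under $\chi_{\rho'}$ are $w_1$ (close only to $v_1$) and the new far-away $\rho(d)$, while no new block arises from the selection of $d$. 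The main obstacle I expect is twofold: preserving pivot-freeness after the swap (a new $\rho'$-pivot would witness a specific local configuration around the swapped positions, which has to be ruled out using Lemma~\ref{lem:pivot-properties} and the structural rigidity of the fen-core), and showing that a single swap is enough to eliminate \emph{every} block at once --- relying on the fact that blocked colors lie in $[b+1, p]$ together with the tight $\red_\chi$ bookkeeping of the $S$-profile to bound the number of coexisting blocks.
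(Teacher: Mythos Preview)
Your overall scaffolding (single swap $\rho' := \rho[c \mapsto \rho(d),\, d \mapsto \rho(c)]$, then verify the five feasibility conditions) is right, but the key choice of $d$ is the wrong one, and the gaps you flag at the end are precisely the heart of the lemma.

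The paper does \emph{not} send $\rho'(c)$ far from $v_2$; it sends $\rho'(c)$ \emph{into} $N(u)$. Concretely, since $|\chi_\rho(N[u])| \ge k-2$ and only $p$ colors lie in $[p]$, there is some $d \in [p+1,k]$ with $v := \rho(d) \in N(u)$, and one sets $\rho' := \rho[c \mapsto v,\, d \mapsto w_2]$. This single move immediately gives $c \in \chi_{\rho'}(N[u])$, so $u$ is no longer blocked by $c$. Because $w_1 \in S$ and there is a path $w_1\text{--}x_1\text{--}v_1\text{--}u$, everything on the other side ($u, v, v_2, x_2, w_2$) is forced out of $N[S]$; hence $\rho'$ still realizes $\pi$ itself (Lemma~\ref{lem:is-valid-plan}(\ref{vpaux:trivial})), and damage-freeness follows from Lemma~\ref{lem:damaged}(\ref{daux:same-ranges}) plus short-outer-path contradictions. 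Pivot-freeness is then a genuinely local argument: any $\rho'$-pivot $w$ must be adjacent to $u$ (since $u$ has at least two $[p+1,k]$-coloured neighbours $y_1,y_2$ forced into $D$), so $w \in \{v_1,v_2\}$; $w=v_1$ fails because $\rho'(d)=w_2$ is at distance $\ge 3$ from $v_1$, and $w=v_2$ fails because nothing near $v_2$ changed, so $v_2$ would already have been a $\rho$-pivot. Block-freeness for \emph{all} pairs is handled by showing any $\rho'$-blocked $u'$ satisfies $|N(u)\cap N(u')|\ge 2$ (forcing $u'=u$), and then a second blocked colour $c'\ne c$ would give two short $u$--$S$ paths and hence a short outer path.

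Your approach has two concrete problems. First, your existence argument for $d$ misreads almost $11$-safeness: that condition bounds how many $\rho(d)$ are \emph{farther} from $S^+$ than a fixed vertex of $K\setminus B_\rho$; it says nothing about how many $\rho(d)$ lie in $N(v_2)$ or are $\chi_\rho$-linked to $v_2$. If $v_2$ has high degree (which is not excluded), your ``$\ca O(p)$ forbidden choices'' claim fails. Second, you invoke Lemma~\ref{lem:damaged}(\ref{daux:shift}), which is for a shift to a new vertex, whereas a swap has $\range(\rho') = \range(\rho)$ and needs part~(\ref{daux:same-ranges}). More importantly, by moving $\rho'(c)$ to an arbitrary far vertex you may change the colour plan and you lose the locality that makes the pivot-freeness and all-blocks-at-once arguments go through; you correctly identify these as obstacles but do not resolve them. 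The fix is to swap \emph{inward} to $N(u)$ rather than outward.
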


\begin{proof}
Let $\rho$ be as in the statement.
If $\rho$ is block-free, then we may set $\rho' := \rho$.
Suppose that $\rho$ is not block-free, i.e., there is a vertex $u \in B_\rho$ $\rho$-blocked by a color $c \in [k]$.
By Lemma~\ref{lem:structure-of-block-but-not-pivoted}, we have $|\chi_\rho(N[u])| \ge k -2$, and there are distinct vertices $v_1, v_2, w_1, w_2$ such that $N(u) \setminus S_\rho = \{v_1, v_2\}$, $w_1 \in \chi^{-1}(c)$, $\rho(c) = w_2$, and for $i \in [2]$, either $w_i \in N(v_i)$ or $v_i$ and $w_i$ are $\chi_\rho$-linked via a vertex $x_i$ (if $w_i \in N(v_i)$, we set $x_i := v_i$), see Figure~\ref{fig:blocked}.
Since $k \ge p + 3$, there is a color $d \in [p+1, k]$ such that $v := \rho(d) \in N(u)$.
Let us define $\rho' := \rho[c \mapsto v, d \mapsto w_2]$.
Observe that $u, v, v_2, x_2, w_2 \notin N[S]$; otherwise, there would be a short outer path because $w_1 \in S$.

\begin{figure}[h]
\begin{tikzpicture}
\tikzmath{ \dist = 20;}
\begin{scope}[every node/.style={draw, circle, minimum width=17pt, inner sep=2pt, fill=gray!25!white}]
\node (u) {$u$};
\node[right= \dist pt of u, fill=white] (v1) {$v_1$};
\node[right = \dist pt of v1] (x1) {$x_1$};
\node[above = \dist pt of x1, rectangle, minimum height=17pt] (w1) {$w_1$};
\node[below left = \dist pt of u] (y1) {$y_1$};
\node[below right = \dist pt of u] (y2) {$y_2$};
\node[left = \dist pt of u] (v) {$v$};
\node[above = \dist pt of u, fill=white] (v2) {$v_2$};
\node[left = \dist pt of v2] (w2) {$w_2$};
\draw (w2)--(v2)--(u)--(v1)--(x1)--(w1);
\draw (y1)--(u)--(y2);
\draw (u)--(v);
\end{scope}
\node[right= 0pt of v2] {$= x_2$};
\node[left= 2pt of w2, yshift=2pt] (c) {$c$};
\node[left= 2pt of v] (d) {$d$};
\draw[red, <->] (c)--(d);
\end{tikzpicture}
\centering
\caption{An illustration of the proof of Lemma~\ref{lem:handle-block}. All depicted vertices are in $S_\rho$ except for $v_1$ and $v_2$. The only vertex in $S$ is $w_1$. The modification leading to $\rho'$ is symbolized by the double-arrow. Note that in contrast to this figure, $x_1 = v_1$ is possible.
For completeness, let us remark that $v_1 \ne x_1 \land v_2 \ne x_2$ is impossible because then $c$, $\rho^{-1}(x_1)$, and $\rho^{-1}(x_2)$ would be three colors missing in $\chi_\rho(N[u])$; this observation is not needed in the proof though.}
\label{fig:blocked}
\end{figure}
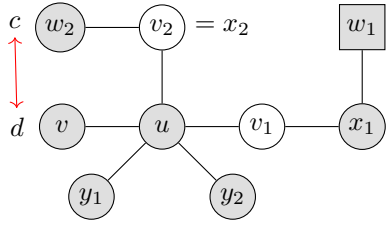

First, observe that $\rho'$ is a color realization of $\pi$ by Lemma~\ref{lem:is-valid-plan}(\ref{vpaux:trivial}) because $\rho(c') = \rho'(c')$ for $c' \in [b+1, p] \setminus \{c\}$ and $\rho(c), \rho'(c) \notin N(S)$.
Second, suppose that there is a vertex $w \in \range(\rho') \cap N(S)$ that is damaged in $\rho'$.
By Lemma~\ref{lem:damaged}(\ref{daux:same-ranges}), there is a vertex $x \in N(w) \cap \range(\rho)$ such that $\chi_\rho(x) \ne \chi_{\rho'}(x)$, i.e., $w \in N(w_2) \cup N(v)$.
Now $G[\{w, w_2, x_2, v_2, u, v_1, x_1\}]$ or $G[\{w, v, u, v_1, x_1\}]$ is an outer path of length at most 6, which is a contradiction.
Hence, $\rho'$ is damage-free.

Third, suppose that $w \in \overline{S_{\rho'}}$ is a $\rho'$-pivot and let $D \seq S_{\rho'}$ be a set $\rho'$-pivoted by $w$.
Since $k \ge p + 5$, there are colors $c_1, c_2 \in [p+1, k]$ such that $y_1, y_2 \in N(u)$, where $y_1 = \rho'(c_1)$ and $y_2 = \rho'(c_2)$.
Observe that $y_1, y_2 \in D$ because $\chi_{\rho'}(D) = [k]$ by Definition~\ref{def:feasible}.
Hence, if $u \notin D$, then we may use Lemma~\ref{lem:pivot-properties}(\ref{aux:equals-pivot}) to deduce $u = w$, which is a contradiction since $u \in B_{\rho'}$.
Now we know that $u \in D$.
Since $y_1, y_2 \in D \cap N(u)$, we deduce that $u$ is a $(w, D)$-link.
In particular, $w \in N(u)$, which implies $w \in \{v_1, v_2\}$.
If $w = v_1$, then $d \notin \chi_{\rho'}(D)$ since $\rho'(d) = w_2$ and $\dist(v_1, w_2) \ge 3$, which is a contradiction since $D$ is $\rho'$-pivoted.
If $w = v_2$, then $v_2$ would be a $\rho$-pivot, which is again a contradiction since $\rho$ is pivot-free.
Hence, $\rho'$ is pivot-free.

Fourth, suppose that some vertex $u' \in B_{\rho'}$ is $\rho'$-blocked by some color $c' \in [k]$.
By Lemma~\ref{lem:structure-of-block-but-not-pivoted}, $|\chi_{\rho'}(N[u'])| \ge k -2$.
Hence, if $u' \ne u$, then $|N(u) \cap N(u')| \ge 2$, which implies the existence of a length-4 cycle containing $u \notin S$, a contradiction with Observation~\ref{obs:short-cycles-are-in-S}.
Now we know that $u' = u$.
Since $c \in \chi_{\rho'}(u)$, we have $c' \ne c$.
By Lemma~\ref{lem:structure-of-block-but-not-pivoted}, there is a $u$-$S$ path $P$ of length at most 3 such that the endpoint of $P$ in $S$ is in $\chi^{-1}(c')$.
Now there is an outer path of length at most 4 in $G[V(P) \cup \{v_1, x_1\}]$, which contradicts Definition~\ref{def:fen-core}.
Hence, $\rho'$ is block-free.
Finally, $\rho'$ is almost $13$-safe by Observation~\ref{obs:safety}.
Hence, $\rho'$ is feasible, as required.
\end{proof}

\subsection{Eliminating the pivot}\label{sub:getting-rid}

The goal of this section is to show how a pivot-free color realization can be computed, see Definition~\ref{def:feasible}.
As before, let $S$ be a fen-core, $S^+$ be an extension of $S$, $(\chi, B)$ be an $S$-profile that is not failing, $\kout = \{u \in \overline{S} \sep \red_\chi(u) \ge 0\}$, and $\kb = \kout \cup B$.
Recall that $\chi(S) \seq [p]$ and $\chi(B) = [b]$.

Since this section is rather extensive, we begin with an intuitive overview.
Let $\rho$ be the color realization computed by Algorithm~\ref{alg:find-damage-free}, and suppose that there is a set $D \seq S_\rho$ that is $\rho$-pivoted by a vertex $u \in \overline{S_\rho}$, see Definition~\ref{def:feasible}.
We say that $\rho'$ is obtained by \emph{swapping} of two colors $c, d \in [b+1, k]$ if $\rho' = \rho[c \mapsto \rho(d), d \mapsto \rho(c)]$, and that $\rho'$ is obtained by \emph{shifting} a color $c \in [b+1, k]$ if $\rho' = \rho[c \mapsto v]$ for some vertex $v \in K \setminus \range(\rho)$.
Our goal is to make a few swaps and shifts so that the result is a pivot-free and damage-free color realization of a valid color plan.
Recall that there are two more properties needed for feasibility: the first one is block-freeness, which is achievable by Lemma~\ref{lem:handle-block}.
The second property is almost $13$-safeness. We have already shown that a swap increases the safeness by at most 2, see Observation~\ref{obs:safety}.
Similarly, Lemma~\ref{lem:get-safety} will be used to handle the ``safeness'' of a shift.

Recall that the $\rho$-pivot $u$ is also a $(\chi_\rho, D)$-pivot by Definition~\ref{def:feasible}.
However, $u$ is not necessarily a $(\chi, D)$-pivot.
Indeed, if a $(u, D)$-link $v$ has two neighbors of the same color in $\chi_\rho$, one in $S$ and one in $\range(\rho)$, then $v$ is not $\chi$-tight.
The case when $u$ is not $(\chi, D)$-pivot will be the easier one: in Section~\ref{subsub:not-pivot}, we show that a single swap suffices to obtain the desired realization $\rho'$.
From now on, suppose that $u$ is a $(\chi, D)$-pivot.

The basic idea how to modify $\rho$ so that $u$ is no longer a $\rho$-pivot is simple: choose a color $c \in [p+1, k]$ and move it far away from $u$ (via a swap or a shift).
After such a movement, we obtain a color realization $\rho'$ such that $u$ is not a $\rho'$-pivot because there is only one vertex colored with $c$, namely $\rho'(c)$.
The second possibility is to move $c$ to $u$ itself; this is, of course, possible only if $u \in K$.
We formalize this idea as follows: let $Q \seq S \cup K$ be the maximal set such that $u$ is a $(\chi, Q)$-pivot.
Observe that $D \seq Q$.
If $K \nsubseteq Q$, then we can simply move $c$ to a vertex in $K \setminus Q$.
The more interesting case occurs when $K \seq Q$.
Now we use the fact that $(\chi, B)$ is not pivot-failing.
By Definition~\ref{def:failing}, we can find two vertices $v, w \in Q \setminus S$ such that $v$ is the $(\chi, u, w)$-link.
Using a few swaps and/or shifts, we can achieve the situation in which $\rho'(c) = w$ and $v \notin \range(\rho')$; recall that $c \in [p+1, k]$.
Now it can again be easily observed that $u$ is not a $\rho'$-pivot.
In other words, $u$ can be safely colored with $c$.

If our goal was to find a color realization $\rho'$ such that $u$ is not a $\rho'$-pivot, then we would be done.
Unfortunately, there may be a $\rho'$-pivot $u'$ such that $u' \ne u$.
We are able to show that if $\rho'$ is obtained by a swap, than such a vertex $u'$ cannot exist, see Lemma~\ref{lem:not-different-pivot-swap}.
However, this is not true for a shift: three cases in which a different pivot emerges are depicted in Figure~\ref{fig:different-pivot}.
Hence, we must be very careful which color $c$ we choose to shift.

\begin{figure}[t]
\begin{minipage}{0.2\textwidth}
\centering
\begin{tikzpicture}
\tikzmath{ \dist = 20;}
\begin{scope}[every node/.style={draw, circle, inner sep=2pt, fill=gray!20!white, minimum width=10pt}]
\node[minimum width=17pt, fill=white] (u) {$u$};
\node[minimum width=17pt, below= \dist pt of u] (u') {$u'$};
\node[above= \dist pt of u] (a) {};
\node[above left= \dist pt of u] (b) {};
\node[above right= \dist pt of u] (c) {};
\node[below= \dist pt of u'] (d) {};
\node[below right= \dist pt of u'] (e) {};
\node[below left= \dist pt of u'] (f) {};
\end{scope}
\node[left=2 pt of u'] (col) {$c$};
\draw (a)--(u)--(u')--(d);
\draw (b)--(u)--(c);
\draw (e)--(u')--(f);
\draw[->, red] (col)-- (u);
\end{tikzpicture}
\end{minipage}
\hfill
\begin{minipage}{0.35\textwidth}
\centering
\begin{tikzpicture}
\tikzmath{ \dist = 20;}
\begin{scope}[every node/.style={draw, circle, inner sep=2pt, minimum width=10pt, fill=gray!20!white}]
\node[minimum width=17pt, fill=white] (u) {$u$};
\node[below left= \dist pt of u, rectangle, minimum width=17pt, minimum height=17pt] (v) {$v$};
\node[below right= \dist pt of u, minimum width=17pt] (w) {$w$};
\node[minimum width=17pt, above right= \dist pt of w] (u') {$u'$};
\node[below left= \dist pt of w] (aa) {};
\node[below right= \dist pt of w] (bb) {};
\node[below= \dist pt of w] (cc) {};
\node[below right=\dist pt of u'] (d) {};
\node[right= \dist pt of u', fill=white] (e) {};
\draw (v)--(u)--(w)--(u');
\draw (aa)--(w)--(bb);
\draw (w)--(cc);
\draw (d)--(u')--(e);
\end{scope}
\node[above=1pt of v] {$d$};
\node[below=1pt of d] {$d$};
\node[above=1pt of u'] (col) {$c$};
\draw[->, red] (col)-- (e);
\end{tikzpicture}
\end{minipage}
\hfill
\begin{minipage}{0.35\textwidth}
\centering
\begin{tikzpicture}
\tikzmath{ \dist = 20;}
\begin{scope}[every node/.style={draw, circle, inner sep=2pt, minimum width=10pt, fill=gray!20!white}]
\node[minimum width=17pt, fill=white] (u) {$u$};
\node[below left= \dist pt of u, rectangle, minimum width=17pt, minimum height=17pt] (v) {$v$};
\node[below right= \dist pt of u, minimum width=17pt] (w) {$w$};
\node[minimum width=17pt, above right= \dist pt of w, fill=white] (u') {$u'$};
\node[below left= \dist pt of w] (aa) {};
\node[below right= \dist pt of w] (bb) {};
\node[below= \dist pt of w] (cc) {};
\node[below right=\dist pt of u'] (d) {};
\node[right= \dist pt of u', fill=white] (e) {};
\node[above right=\dist pt of u, minimum width=17pt] (x) {$x$};
\draw (v)--(u)--(w)--(u');
\draw (aa)--(w)--(bb);
\draw (w)--(cc);
\draw (d)--(u')--(e);
\draw (u)--(x);
\end{scope}
\node[above=1pt of v] {$d$};
\node[below=1pt of d] {$d$};
\node[right=1pt of x] (col) {$c$};
\draw[->, red] (col)-- (e);
\end{tikzpicture}
\end{minipage}
\caption{Three possible cases, in which there is a $\rho$-pivot $u$ and a $\rho'$-pivot $u'$. The gray vertices are in $S_\rho$ and the square vertices are in $S$. The shift leading to $\rho'$ is depicted by the red arrow. \textbf{Left:} Both $u$ and $u'$ are $\chi$-tight, and for each $c \in [b+1, k]$, we have $\rho(c) \in N(u) \cup N(u')$. \textbf{Middle and right:} The vertex $w$ is $\chi$-tight and for each color $c' \in [b+1, k] \setminus \{c, d\}$, we have $\rho(c') \in N[w]$. It does not matter that $v$ is far from $u'$ because $\rho(d) \in N(u')$.}
\label{fig:different-pivot}
\end{figure}
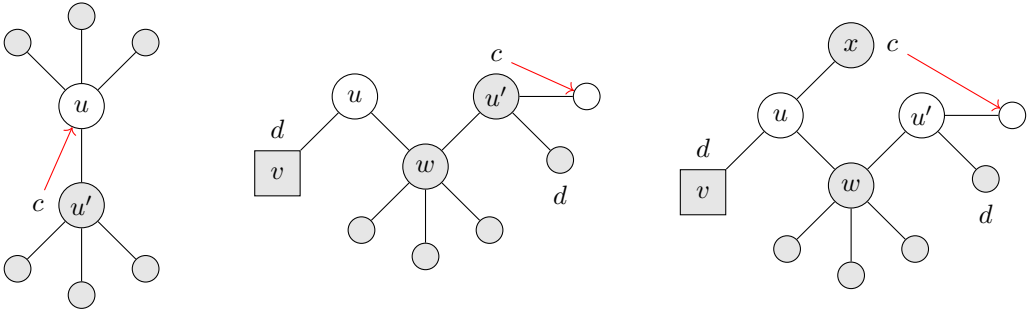

The rest of this section is organized as follows.
In Section~\ref{subsub:init-lemmas}, we prove several lemmas about $\rho$-pivoted sets.
Then, in Section~\ref{subsub:different-pivot}, we investigate how a different $\rho'$-pivot $u'$ can be avoided, see Figure~\ref{fig:different-pivot}.
In Section~\ref{subsub:not-pivot}, we handle the easier case in which $u$ is not a $(\chi, D)$-pivot.
In Section~\ref{subsub:operations}, we discuss when a change of the realization does not create a damaged vertex.
The core part of the proof, which shows how to get rid of the $\rho$-pivot $u$, is presented in Section~\ref{subsub:preventing-u}.
Finally, everything is combined in Section~\ref{subsub:compute-feasible}, and we obtain a feasible color realization.

\subsubsection{Initial lemmas}\label{subsub:init-lemmas}

We begin by observing a few simple properties of $\rho$-pivoted sets.

\begin{lemma}\label{lem:rho-pivot-properties}
Let $\rho$ be a color realization, let $u \in \overline{S_\rho}$ be a $\rho$-pivot, and let $D \seq S_\rho$ be a set $\rho$-pivoted by $u$.
\begin{enumerate}[(a)]
\item If $v$ is a $(u, D)$-link, then $\chi_\rho(N[v]) \ne [k]$.\label{aux:missing-color}
\item $|N(u)\cap D|\ge 2$.\label{aux:pivot-has-two-neighbors}
\item If $v \in D \setminus N(u)$, then $v \in N(B_\rho)$.\label{aux:in-NBrho}
\item If $u$ is a $(\chi, D)$-pivot and $\rho'$ is a damage-free color realization of a valid color plan such that $\rho^{-1}(D) = (\rho')^{-1}(D)$, then $D$ is $\rho'$-pivoted by $u$.\label{aux:rho'-pivoted}
\end{enumerate}
\end{lemma}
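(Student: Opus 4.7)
The plan is to handle parts (a)--(c) quickly from the definitions of pivot, link, and $\chi$-redundancy, and spend the bulk of the effort on (d), whose main obstacle will be preserving tightness of $(u, D)$-links under the change of realization.

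For (a), $v$ is $\chi_\rho$-tight by Definition~\ref{def:links} and $u \in N[v] \setminus S_\rho$ (since $v \in N(u)$ and $u \notin S_\rho$ by Definition~\ref{def:feasible}), so the tightness equation $|N[v] \setminus S_\rho| + |\chi_\rho(N[v])| = k$ gives $|\chi_\rho(N[v])| \le k - 1$. For (b), I would note that $|\chi_\rho(D)| = k$ forces $|D| \ge k \ge 2$ and argue by contradiction: $N(u) \cap D = \emptyset$ would force $D = \emptyset$ via Definition~\ref{def:pivot}, while $N(u) \cap D = \{w\}$ would push every vertex of $D \setminus \{w\}$ to be $(u, D)$-linked through $w$, giving $D \seq N[w]$; then $w$ is itself a link (as $|D| \ge 2$), and (a) applied to $w$ contradicts $\chi_\rho(N[w]) \supseteq \chi_\rho(D) = [k]$. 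For (c), Definition~\ref{def:pivot} supplies a link $w \in N(u) \cap N(v) \cap D$; since $w \in K^+ = K \cup B$ and $w \in D \seq S_\rho$, a two-case check on whether $w \in S$ gives $w \in B_\rho$, hence $v \in N(w) \seq N(B_\rho)$.

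For (d), I plan to interpret $\rho^{-1}(D) = (\rho')^{-1}(D)$ as saying $\rho$ and $\rho'$ agree pointwise on the colors they assign to vertices of $D$, equivalently $\chi_\rho \upharpoonright D = \chi_{\rho'} \upharpoonright D$. This immediately delivers $D \seq S_{\rho'}$ (each $v \in D \setminus S$ lies in $\range(\rho) \cap \range(\rho')$ with the same color), $\chi_{\rho'}(D) = [k]$, and $u \in \overline{S_{\rho'}}$. The graph-theoretic $D$-pivot condition on $u$ does not depend on the coloring and so is preserved; the remaining task is that each $(u, D)$-link $v$ (which is $\chi$-tight because $u$ is a $(\chi, D)$-pivot, not merely a $(\chi_\rho, D)$-pivot) also be $\chi_{\rho'}$-tight. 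The main obstacle lives here: tightness preservation is equivalent to every color $\rho'$ places inside $N[v]$ being absent from $\chi(N[v])$. This is automatic for $c \in [p+1, k]$; for $c \in [b+1, p]$ with $x := \rho'(c) \in N[v]$ and $y \in N[v] \cap S$ carrying $\chi(y) = c$, I plan a case analysis over the fen-core structure of Definition~\ref{def:fen-core}. If $x \in N(y)$, then Condition~(\ref{cr:S}) of Definition~\ref{def:color-realization} forces the plan $\pi'$ realized by $\rho'$ to satisfy $\pi'(c) = y$, contradicting Property~\ref{cp:proper} of validity. Otherwise $v$ bridges $x$ and $y$: when $v \in B$, Condition~(\ref{cr:S}) forces $\pi'(c) = v$, and either Property~\ref{cp:proper} (if $y = v$) or Property~\ref{cp:redundancy} with $\red_\chi(v) = 0$ (if $y \ne v$) is violated; when $v \in \kout$, either $x \in N(S)$ and the edge $xv$ is an $S$-outer path of length $1$, contradicting Definition~\ref{def:fen-core}, or $x \in \kstar$, in which case $v$ is damaged in $\rho'$, contradicting damage-freeness via $v \in D \setminus (S \cup B) \seq \range(\rho')$.
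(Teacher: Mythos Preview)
Your arguments for (a)--(c) are essentially the paper's, with the same use of tightness and the identity $K^+ \cap S_\rho = B_\rho$.

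For (d) your approach is correct but differs from the paper's. One caution: the hypothesis $\rho^{-1}(D) = (\rho')^{-1}(D)$ is a \emph{set} equality of colour sets, not a pointwise agreement; it does not give $\chi_\rho \upharpoonright D = \chi_{\rho'} \upharpoonright D$ (the two realizations could permute colours within $D$). Fortunately the conclusions you actually need---$D \setminus S \seq \range(\rho')$ and $\chi_{\rho'}(D) = [k]$---follow from the set equality alone via a counting argument ($|D \setminus S| = |\rho^{-1}(D)| = |(\rho')^{-1}(D)|$ and $\rho'(\rho^{-1}(D)) \seq D \setminus S$), so the slip is harmless.

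The substantive difference is in how you show each $(u,D)$-link $v$ is $\chi_{\rho'}$-tight. The paper does this in two lines: since $v \in K^+ \cap S_{\rho'} = B_{\rho'}$, Lemma~\ref{lem:valid-realization}(\ref{vraux:candidate}) gives $\red_{\chi_{\rho'}}(v) \ge 0$; and $\chi$-tightness of $v$ gives $\red_{\chi_{\rho'}}(v) \le 0$ (since passing from $\chi$ to $\chi_{\rho'}$ removes exactly as many uncoloured neighbours as it can add new colours). Your case analysis instead re-derives, by hand, the special case of Lemma~\ref{lem:valid-realization}(\ref{vraux:candidate}) needed here: the invocations of Properties~\ref{cp:proper} and~\ref{cp:redundancy} and of damage-freeness mirror precisely the proof of that lemma. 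The paper's route is shorter and more modular; yours is self-contained and makes explicit which validity properties are actually doing the work.
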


\begin{proof}
Let $\rho$, $u$, and $D$ be as in the statement.
For~(\ref{aux:missing-color}), let $v$ be a $(u, D)$-link. Since $u \in N(v) \setminus S_\rho$ and $v$ is $\chi_\rho$-tight, we have  $\chi_\rho(N[v]) \ne [k]$ as desired.

For~(\ref{aux:pivot-has-two-neighbors}), suppose that $N(u)\cap D = \{v\}$.
By Lemma~\ref{lem:pivot-properties}(\ref{aux:component-of-GD}), $D \seq N[v]$.
Since $|D| \ge k \ge 2$, we know that $v$ is a $(u, D)$-link.
By~(\ref{aux:missing-color}), $\chi_\rho(N[v]) \ne [k]$, which contradicts the $\rho$-pivotedness of $D$.

For~(\ref{aux:in-NBrho}), suppose that there are vertices $w \in D \cap N(u)$ and $v \in D \cap N(w)$.
By Definitions~\ref{def:links} and~\ref{def:feasible}, $w$ is a $(\chi_\rho, u, v)$-link, which implies $w \in K^+$.
Since $K^+ \cap S_\rho = B_\rho$, we have $w \in B_\rho$ as desired.

For~(\ref{aux:rho'-pivoted}), suppose that $u$ is a $(\chi, D)$-pivot and let $\rho'$ be as in the statement.
Observe that $\range(\rho) \cap D = \range(\rho') \cap D$ and $\chi_{\rho'}(D) = [k]$.
Hence, it suffices to show that each $(u, D)$-link $v$ is $\chi_{\rho'}$-tight.
By Definition~\ref{def:links}, $v \in K^+$, which implies $v \in B_{\rho'}$. 
Hence, by Lemma~\ref{lem:valid-realization}(\ref{vraux:candidate}), $v$ has $\chi_{\rho'}$-redundancy at least $0$.
Since $u$ is a $(\chi, D)$-pivot, we know that $v$ is $\chi$-tight, which implies that $v$ has $\chi_{\rho'}$-redundancy at most $0$.
Hence, $v$ is $\chi_{\rho'}$-tight, as desired.
\end{proof}

Recall that by Lemma~\ref{lem:compute-damage-free}, we can compute a damage-free color realization that is $0$-safe if the used color plan is not critical, see Definition~\ref{def:color-plan}.
Since $0$-safeness will be useful, we prove the following lemma.

\begin{lemma}\label{lem:not-critical}
If $\rho$ is a color realization of a valid color plan $\pi$ such that there is a $\rho$-pivot $u\in \overline{S_\rho}$, then $\pi$ is not critical.
\end{lemma}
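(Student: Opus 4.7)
The plan is to prove the contrapositive by contradiction: assume $\pi$ is $c$-critical for some $c \in [b+1,p]$ and use the existence of the $\rho$-pivot $u$ to construct a witness showing $(\chi, B)$ is pivot-failing, contradicting the standing assumption (fixed at the start of Section~\ref{sub:getting-rid}) that $(\chi, B)$ is not failing.

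The starting structural observations come directly from the definitions. By $c$-criticality and Condition~(\ref{cr:star}) of Definition~\ref{def:color-realization}, $v := \rho(c) \in \kstar$, and by Definition~\ref{def:color-plan} there exist $w \in \chi^{-1}(c)$ and a $\chi$-tight $z \in K^+$ with $v$ and $w$ $\chi$-linked via $z$; since $v \notin N(S)$ we must have $z \notin S$, so $z \in \kout \cap N(S)$. Property~\ref{cp:block} of Definition~\ref{def:color-plan} also gives $|\kb| > k$, so $\kb \setminus B_\rho \ne \emptyset$. Fix the $\rho$-pivot $u$ and a set $D \seq S_\rho$ that is $\rho$-pivoted by $u$; recall $\chi_\rho(D) = [k]$.

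The candidate pivot-failing witness I would use is the pair $(D^*, u)$ with $D^* := \kout \cup S$. The set-theoretic requirements $\kout \seq D^*$ and $[b] = \chi(B) \seq \chi(D^*)$ are immediate. Showing $u \in \overline{S \cup D^*}$ amounts to $u \notin \kout$: by Lemma~\ref{lem:rho-pivot-properties}(\ref{aux:pivot-has-two-neighbors}) $u$ has two neighbors in $D$, and if $u$ were in $\kout \setminus \range(\rho)$ one can use the criticality to replace $u$ with a better candidate, reaching a contradiction with the maximality/validity properties of $\pi$. The harder task is to show that $u$ is in fact a $(\chi, D^*)$-pivot. For $y \in D$ this follows by lifting the $(\chi_\rho, D)$-pivot property to $\chi$: any $(\chi_\rho, u, y)$-link $y' \in K^+$ must already be $\chi$-tight, because the only way for $\chi$-tightness and $\chi_\rho$-tightness to differ at $y'$ is via the added color at $\rho(c) = v$, and $v \in \kstar$ together with Observation~\ref{obs:short-cycles-are-in-S} and the outer-path length bound (at least 7) in Definition~\ref{def:fen-core} rules this out except in a degenerate configuration that we can handle separately. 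For $y \in D^* \setminus D$, the criticality of $\pi$ supplies a $\chi$-link from $y$ (if $y \in \kstar$) to $\chi^{-1}(c)$, which composes with the structure around $v$ and $z$ to give the required $(u, D^*)$-link.

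The main obstacle is verifying condition~(b) of Definition~\ref{def:failing}: for every $y \in D^*$ whose $(\chi, u, y)$-link $y'$ is in $D^*$, at least one of $y, y'$ lies in $S$. I expect this to follow from a careful case analysis using Lemma~\ref{lem:pivot-properties}(\ref{aux:unique-neighbor})--(\ref{aux:link-in-S}) together with the outer-path length bound: any configuration with $\{y, y'\} \cap S = \emptyset$ would produce a short outer path through $u$ and $y'$ between two vertices of $N(S)$, contradicting Definition~\ref{def:fen-core}. Once condition~(b) is established, the pair $(D^*, u)$ witnesses pivot-failing, contradicting the non-failing assumption on $(\chi, B)$, and hence $\pi$ cannot be $c$-critical for any $c \in [b+1,p]$.
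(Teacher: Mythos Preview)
Your approach is fundamentally different from the paper's and, as written, has several genuine gaps that do not close.

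First, the lifting step is wrong. You claim that a $(\chi_\rho,u,y)$-link $y'$ must already be $\chi$-tight, but this fails in general: one always has $\red_\chi(y') \ge \red_{\chi_\rho}(y')$ (since every vertex of $\range(\rho)$ in $N[y']$ removes one from the ``uncolored neighbor'' count but adds at most one new color), so $\chi_\rho$-tightness only gives $\red_\chi(y') \ge 0$, not $=0$. Indeed, the paper has a separate lemma (Lemma~\ref{lem:not-pivot}) precisely for the case where $u$ is a $\rho$-pivot but \emph{not} a $(\chi,D)$-pivot, confirming this can happen.

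Second, the case $y \in D^* \setminus D$ does not work. Criticality gives, for $y \in \kstar$, a $\chi$-tight vertex $z'$ with $y$ and some $w'\in\chi^{-1}(c)$ in $N(z')$; but a $(u,D^*)$-link must lie in $N(u)\cap N(y)$, and there is no reason whatsoever that $z' \in N(u)$. The phrase ``composes with the structure around $v$ and $z$'' is not an argument. Worse, for $y\in S$ or $y\in K\cap N(S)$ you offer nothing at all, and there is no reason such $y$ should be reachable from $u$ via a $\chi$-tight common neighbor.

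Third, your sketch for $u\notin K$ is hand-waving. A $\rho$-pivot $u$ lies in $\overline{S_\rho}$, which only forbids $u\in\range(\rho)$; nothing stops $u\in K\setminus\range(\rho)$.

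The paper does not attempt to exhibit pivot-failing. Instead it works directly with the maximal $D$ that is $\rho$-pivoted by $u$, picks two vertices $v_1,v_2$ in distinct components of $G[D]$, and uses criticality to find, for each $v_i\in\kstar$, a short path into $N[S]$ via its $\chi$-link. If neither of these links is $u$, then $v_1,v_2$ are both $S$-influenced in different components of $G[D]$, contradicting Lemma~\ref{lem:pivot-properties}(\ref{aux:contains-all-S-influenced}); if one of them is $u$, a short case analysis using maximality of $D$ and the outer-path bound forces a contradiction. The whole argument runs on the fen-core structural lemma and never touches Definition~\ref{def:failing}.
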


\begin{proof}
Let $\rho$, $\pi$, and $u$ be as in the statement, and suppose for contradiction that $\pi$ is critical.
Let $D \seq K \cup S$ be the maximal set $\rho$-pivoted by $u$.
By Lemma~\ref{lem:rho-pivot-properties}(\ref{aux:pivot-has-two-neighbors}), $G[D]$ is disconnected.
Let $v_1, v_2 \in D$ be two vertices in different components of $G[D]$ such that, if possible, $v_1 \in S$.
Moreover, let us choose $v_1$ and $v_2$ so that the distance between them is as large as possible.
Let us fix $i \in [2]$.
If $v_i \in N[S]$, then let $w_i$ be the vertex in $S \cap N[v_i]$.
Otherwise, by Definition~\ref{def:color-plan}, there are vertices $w_i, x_i$ such that $x_i \in S$, $\chi(x_i) \in [b+1, p]$, and $w_i$ is the $(\chi, v_i, x_i)$-link.

If $w_1, w_2 \notin \{u\}$, then both $v_1$ and $v_2$ are $S$-influenced, and we have a contradiction with Lemma~\ref{lem:pivot-properties}(\ref{aux:contains-all-S-influenced}).
Hence, suppose that $w_i = u$ for some $i \in [2]$.
Since $u \notin S$, we have $x_i \ne w_i$, and by maximality of $D$, we have $x_i \in D \cap S$.
By choice of $v_1$, we have $v_1 \in S$ and $v_1 = w_1 = x_1$, which implies $i = 2$.
Since $\chi(x_2) \in [b+1, p]$, we have $x_2 \notin B$, which means that $x_2$ is not a $(u, D)$-link.
By Lemma~\ref{lem:pivot-properties}(\ref{aux:contains-all-S-influenced}), $S \cap D = \{x_2\}$, which implies $v_1 = x_2$.
Since $v_2 \in N(u)$, we have $D \seq N(u)$ by choice of $v_1$ and $v_2$ (otherwise a different choice would increase the distance between them).
Hence, $u = w_2$ has degree at least $k$, which is a contradiction since $w_2$ is $\chi$-tight.
\end{proof}

We will need to ensure that the obtained color realization is almost $13$-safe, see Definitions~\ref{def:color-realization} and~\ref{def:feasible}.
We have already observed that ``swapping'' two colors does not increase the safeness too much, see Observation~\ref{obs:safety}.
Now we discuss the situation in which a color is ``shifted'' to a different candidate.

\begin{lemma}\label{lem:get-safety}
Let $\rho$ be an $\ell$-safe color realization for some $\ell \in \bb N$, let $u \in \overline{S_\rho}$ be a $\rho$-pivot,
let $D \seq S_\rho$ be a set $\rho$-pivoted by $u$, and let $c \in [b+1, k]$ be a color.
If $v^+ \in K \setminus \range(\rho)$, then $\rho' := \rho[c \mapsto v^+]$ is almost $(\ell+1)$-safe.
\end{lemma}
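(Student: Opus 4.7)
The plan is to track how the set $K \setminus B_{\rho'}$ differs from $K \setminus B_\rho$, dispense with the easy vertices using the $\ell$-safeness of $\rho$, and then verify the exception clause of the almost-safe definition for the one remaining candidate. Writing $w := \rho(c)$, we have $B_{\rho'} = (B_\rho \setminus \{w\}) \cup \{v^+\}$, so
\[ K \setminus B_{\rho'} = \bigl((K \setminus B_\rho) \setminus \{v^+\}\bigr) \cup \{w\}, \]
meaning $v^+$ exits the set over which safeness is required and $w$ enters it.

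First, I would observe that every $w' \in (K \setminus B_\rho) \setminus \{v^+\}$ is automatically $(\ell+1)$-safe in $\rho'$. Indeed, $w'$ is $\ell$-safe in $\rho$ by hypothesis, and the two realizations agree on every color except $c$, so the set $\{c' \in [p+1, k] \sep \dist(w', S^+) < \dist(\rho'(c'), S^+)\}$ exceeds its $\rho$-counterpart by at most a single element (contributed by $c$, and only when $c \in [p+1, k]$ and $v^+$ is strictly farther from $S^+$ than $w'$). If additionally $w$ itself happens to be $(\ell+1)$-safe in $\rho'$, then $\rho'$ is fully $(\ell+1)$-safe and in particular almost $(\ell+1)$-safe; otherwise, I would designate $w$ as the unique exceptional vertex and check the disjunction $w \in N(B_{\rho'})$ or $|N(w) \cap N(B_{\rho'})| \le p + 2$.

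The easy sub-case of the exception check is when $w$ has a neighbor in $B_\rho \setminus \{w\} \seq B_{\rho'}$ or $w$ is adjacent to $v^+$; either way $w \in N(B_{\rho'})$. In the remaining sub-case, $w$ has no neighbor in $B_{\rho'}$. Combining that $w \in K$ is a $\chi_\rho$-candidate (Lemma~\ref{lem:valid-realization}(\ref{vraux:candidate})) with the fen-core property that each vertex of $\overline{S}$ has at most two neighbors in $S$, the colored neighborhood $\chi_\rho(N[w])$ contains at most three colors ($c$ itself plus at most two colors from $w$'s neighbors in $S \setminus B$), so $w$ must have at least $k - 3$ neighbors in $\overline{S_\rho}$. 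This forces $w$ to sit deep inside a dangling substructure of $G[\overline{S}]$. Using the lower bound of $7$ on outer-path lengths together with $|B| \le p$, the length-$2$ witnesses from $w$ to $B_{\rho'}$ can only traverse either $w$'s at-most-two neighbors in $S$ (each adjacent to at most one $B$-vertex) or one of at most $p$ branches coming close enough to a $b$-vertex of $\range(\rho) \cup \{v^+\}$, giving $|N(w) \cap N(B_{\rho'})| \le p + 2$.

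The hard part will be precisely this final counting. The fen-core axioms (at most two $(w, S)$-paths, outer paths of length at least $7$, and $|S| \le 32 p_G$) combined with the pivot hypothesis on $u$ and $D$ are what force $w$'s many neighbors in $\overline{S_\rho}$ to be essentially ``private'' to $w$ and thus structurally disconnected from $B_{\rho'}$; the budget $p + 2$ in the definition of almost-safeness is then exactly what the fen-core structure supplies.
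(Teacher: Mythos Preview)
Your overall decomposition matches the paper's: you correctly identify that $K\setminus B_{\rho'}=\bigl((K\setminus B_\rho)\setminus\{v^+\}\bigr)\cup\{w\}$ with $w:=\rho(c)$, and your treatment of the vertices in $(K\setminus B_\rho)\setminus\{v^+\}$ is exactly the paper's first paragraph. The problem is the final counting step for the exceptional vertex $w$.

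Your bound $|N(w)\cap N(B_{\rho'})|\le p+2$ is asserted via ``at most $p$ branches coming close enough to a $b$-vertex of $\range(\rho)\cup\{v^+\}$,'' but $|\range(\rho')|=k-b$, which is much larger than $p$; there is nothing in the fen-core axioms alone that limits the number of vertices in $N(w)$ that are adjacent to \emph{some} element of $\rho'([p+1,k])$. Short-cycle exclusion only gives you $|N(w)\cap N(x)|\le 1$ for each individual $x\in B_{\rho'}$, which yields a bound of order $k$, not $p+2$. The vague appeal to ``the pivot hypothesis on $u$ and $D$'' at the end does not repair this: you never actually use it.

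The paper's argument hinges precisely on the pivot structure you omit. Assuming $w\notin N(B_{\rho'})$ (hence $w\notin N(B_\rho)$), one uses that $w\in D$ together with Lemma~\ref{lem:rho-pivot-properties}(\ref{aux:in-NBrho}) to get $w\in N(u)$. Then for every $x\in B_{\rho'}\setminus\bigl(B\cup\rho'([b+1,p])\cup\{v^+\}\bigr)$ one has $x=\rho(c')$ with $c'\in[p+1,k]$, so $x\in D$; now Lemma~\ref{lem:pivot-properties}(\ref{aux:equals-pivot}) forces any common neighbor of $w$ and $x$ to equal $u$. This collapses the entire contribution of $\rho'([p+1,k])$ to a single vertex, and the remaining set $B\cup\rho'([b+1,p])\cup\{v^+\}$ has size at most $p+1$, giving the $p+2$ bound. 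Without this pivot-based collapse, the inequality simply fails. (Two minor side remarks: a vertex of $\overline{S}$ has at most \emph{one} neighbor in $S$, not two; and your invocation of Lemma~\ref{lem:valid-realization} is unnecessary and its hypotheses are not available here---membership in $K$ already gives $\chi$-candidacy.)
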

\begin{proof}
Let $\rho, u, D, c, v^+$, and $\rho'$ be as in the statement, and let $v^- := \rho(c)$.
First, we show that each vertex $w \in \kout\setminus (\range(\rho') \cup \{v^-\})$ is $(\ell+1)$-safe in $\rho'$.
Let $X = \{x \in \rho'([p+1, k]) \sep \dist(w,S^+) < \dist(x, S^+)\}$; we need to show that $|X| \le \ell+1$.
Since $w \in\kout\setminus \range(\rho)$, we know that $w$ is $\ell$-safe in $\rho$, which means $|X \cap \rho([p+1, k])| \le \ell$.
Since $\rho'([p+1, k]) \setminus \rho([p+1,k]) \seq \{v^+\}$, we have $|X| \le \ell+1$ as desired.

Second, we need to show that either $v^- \in N(B_{\rho'})$ or $|N(v^-) \cap N(B_{\rho'})| \le p +2$, see Definition~\ref{def:color-realization}.
Suppose that $v^- \notin N(B_{\rho'})$ and observe that also $v^- \notin N(B_\rho)$ because $B_\rho \seq B_{\rho'} \setminus \{v^-\}$.
Since $v^- \in D$, we have $v^- \in N(u)$ by Lemma~\ref{lem:rho-pivot-properties}(\ref{aux:in-NBrho}).
Let $W = B \cup \rho'([b+1, p]) \cup \{v^+\}$ and observe that $|N(v^-) \cap N(w)| \le 1$ for each $w \in W$ by Observation~\ref{obs:short-cycles-are-in-S} (otherwise there would be a cycle of length 4 not in $S$).
Hence, $|N(v^-) \cap N(W)| \le p+1$.
Let $x \in B_{\rho'} \setminus W$ and observe that $x = \rho'(c')$ for some $c' \in [p+1, k]$.
Since $x \notin W$, we have $x \ne v^+$, which means that $x = \rho(c')$.
Since $D$ is $\rho$-pivoted, we have $x \in D$.
Let $u' \in N(x) \cap N(v^-)$ be a vertex.
Since $v^- \in N(u) \setminus S$, we have $u' = u$ by Lemma~\ref{lem:pivot-properties}(\ref{aux:equals-pivot}).
Hence, $|N(v^-) \cap N(B_{\rho'} \setminus W)| \le 1$, which implies $|N(v^-) \cap N(B_{\rho'})| \le p+2$ as desired.
\end{proof}

It is not hard to see that if $\rho'$ is obtained from $\rho$ by ``moving'' the candidate for a color $c \in [p+1, k]$ far from $u$, then $u$ will not be a $\rho'$-pivot.
The following lemma formalizes this idea in the case when $u$ is a $(\chi, D)$-pivot.

\begin{lemma}\label{lem:u-is-not-rho'-pivot}
Let $\rho$ be a damage-free color realization of a valid color plan, let $u$ be a $\rho$-pivot, let $D \seq S_\rho$ be the maximal set $\rho$-pivoted by $u$, let $Q \seq S \cup K$ be the maximal set such that $u$ is a $(\chi, Q)$-pivot, let $v \in K \setminus Q$, let $c \in [p+1, k]$, and let $\rho'$ be a color realization such that $\rho' = \rho[c \mapsto v]$, or $\rho^{-1}(v) = d$ and $\rho' = \rho[c \mapsto v, d \mapsto \rho(c)]$.
If $u$ is a $(\chi, D)$-pivot, then $u$ is not a $\rho'$-pivot.
\end{lemma}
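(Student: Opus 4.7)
My plan is to argue by contradiction. Assume $u$ is a $\rho'$-pivot, witnessed by a set $D' \seq S_{\rho'}$ with $\chi_{\rho'}(D') = [k]$ and $u$ a $(\chi_{\rho'}, D')$-pivot. Since $\chi$ uses colors only in $[p]$ on $S$ and $\rho'$ is injective with $c \in [p+1, k]$, the unique vertex of color $c$ under $\chi_{\rho'}$ is $\rho'(c) = v$, so $v \in D'$. Also $u \ne v$ because $u \in \overline{S_{\rho'}}$ while $v \in \range(\rho') \seq S_{\rho'}$. I split into two cases based on whether $uv \in E(G)$.

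If $uv \notin E(G)$, then the $(\chi_{\rho'}, D')$-pivot property yields a $(u, D')$-link $x \in N(u) \cap N(v) \cap D'$ that must be $\chi_{\rho'}$-tight and in $K^+$. The decisive observation is that $x \in K^+$ gives $|N[x] \setminus S| + |\chi(N[x])| \ge k$, while $\chi_{\rho'}$-tightness gives the corresponding equation with $\chi_{\rho'}$. Combined with $\chi \seq \chi_{\rho'}$, these force $\chi(N[x]) = \chi_{\rho'}(N[x])$. Since $v \in N[x]$ is colored $c$ by $\chi_{\rho'}$, we would obtain $c \in \chi(N[x]) \seq \chi(S) \seq [p]$, contradicting $c \in [p+1, k]$.

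If $uv \in E(G)$, I aim to show $u$ is a $(\chi, Q \cup \{v\})$-pivot, contradicting the maximality of $Q$ since $v \notin Q$. The basic $(Q \cup \{v\})$-pivot condition at $v$ is immediate from $uv \in E(G)$, and holds for every $w \in Q$ by $u$ being a $Q$-pivot. It remains to verify that every \emph{new} $(u, Q \cup \{v\})$-link is a $(\chi, u, w)$-link for some $w \in Q \cup \{v\}$. Such new links come in exactly two flavors: (i) a vertex $y \in Q \cap N(u) \cap N(v) \setminus N(Q)$, and (ii) $v$ itself, in the case $v \in N(Q)$. For flavor~(i), the triangle $u, v, y$ would have to lie in $S$ by Observation~\ref{obs:short-cycles-are-in-S}, contradicting $u \notin S$; so no such $y$ exists. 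For flavor~(ii), $v \in K \seq K^+$ and picking $z \in Q \cap N(v)$ gives $v \in N(u) \cap N(z)$, so it suffices to show $v$ is $\chi$-tight. If not, I would distinguish $uz \in E(G)$ (producing the triangle $u, v, z$) from $uz \notin E(G)$ (where the $Q$-pivot property supplies $z^* \in N(u) \cap N(z) \cap Q$, yielding the 4-cycle $u, v, z, z^*$); both cycles violate Observation~\ref{obs:short-cycles-are-in-S}, since $u \notin S$ (and in the 4-cycle case also $v \notin S$).

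The main technical obstacle is the bookkeeping in the case $uv \in E(G)$: carefully enumerating which new $(u, Q \cup \{v\})$-links can arise when extending $Q$ by $v$, and checking that each potential obstruction is ruled out by the fen-core's short-cycle property. The distinction between the shift form $\rho' = \rho[c \mapsto v]$ and the swap form $\rho' = \rho[c \mapsto v, d \mapsto \rho(c)]$ is immaterial, since only $\rho'(c) = v$ matters for forcing $v \in D'$ and the subsequent arguments depend only on relationships among $u$, $v$, and~$Q$.
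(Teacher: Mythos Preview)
Your treatment of the case $uv \in E(G)$ is correct and in fact fleshes out the paper's terse ``by maximality of $Q$, $v \notin N(u)$'', though the phrase ``it suffices to show $v$ is $\chi$-tight'' is a red herring: your cycle arguments rule out flavor~(ii) outright, independently of whether $v$ is $\chi$-tight.

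The case $uv \notin E(G)$, however, has a genuine gap. The claim $\chi(N[x]) = \chi_{\rho'}(N[x])$ is simply false whenever such a link $x$ exists: you have $v \in N(x)$ with $\chi_{\rho'}(v) = c \in [p+1,k]$, whereas $\chi(N[x]) \subseteq [p]$, so $c$ always lies in $\chi_{\rho'}(N[x]) \setminus \chi(N[x])$. More carefully, the two redundancy relations you invoke only yield $\red_\chi(x) = |C|$ where $C = \{c' \in [b+1,p] : c' \in \chi(N[x]) \text{ and } \rho'(c') \in N[x]\}$; the gain of $|\chi_{\rho'}(N[x])|$ over $|\chi(N[x])|$ is exactly offset by the drop from $|N[x]\setminus S|$ to $|N[x]\setminus S_{\rho'}|$, except for the colors in $C$. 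Nothing you wrote excludes $C \neq \emptyset$, i.e., the possibility that $x$ is not $\chi$-tight. The paper's proof proceeds precisely by analysing $C$: if $C = \emptyset$ then $x$ is $\chi$-tight and maximality of $Q$ forces $v \in Q$; otherwise one uses the specific shape of $\rho'$ together with the hypothesis that $u$ is a $(\chi, D)$-pivot (and maximality of $D$) to reach a contradiction. Note that your argument in this case never invokes the hypothesis ``$u$ is a $(\chi, D)$-pivot'' at all, which is a strong sign that something essential is missing.
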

\begin{proof}
Let $\rho, u, D, Q, v, c$, and $\rho'$ be as in the statement, and suppose that $u$ is a $\rho'$-pivot.
Let $D' \seq S_{\rho'}$ be a set $\rho'$-pivoted by $u$; since $\rho'(c) = v$ and $c \in [p+1, k]$, we have $v \in D'$.
By maximality of $Q$, we have $v \notin N(u)$, which means that there is a $(\chi_{\rho'}, u, v)$-link $w \in D'$.
Observe that $w \in D$ by maximality of $D$ because $S_{\rho'} \setminus S_\rho \seq \{v\}$.
Let $C = \{c' \in [b+1, p] \sep c' \in \chi(N[w]), \rho'(c') \in N[w]\}$ and observe that $C \ne \emptyset$; otherwise $w$ would be $\chi$-tight, which would contradict $v \notin Q$.

First, suppose that $\rho(c') \in N[w]$ for each $c' \in C$, which implies that $w$ is $\chi_\rho$-tight.
Let $c' \in C$ and let $x := \rho(c')$; observe that $x \ne w$ because $\chi_{\rho}$ is proper by Lemma~\ref{lem:valid-realization}(\ref{vraux:proper}) and because $c' \in \chi(N[w])$.
By maximality of $D$, we have $x \in D$, which implies that $w$ is a $(u, D)$-link.
Since $u$ is a $(\chi, D)$-pivot, we deduce that $w$ is $\chi$-tight and $v \in Q$, which is a contradiction.

Second, let $c' \in C$ be such that $\rho(c') \notin N[w]$.
In particular $x := \rho'(c') \ne \rho(c')$.
Since $c > p$, we have $c' \ne c$, which implies $x \ne v$.
Hence, by definition of $\rho'$, we have $v = \rho(c')$, which is a contradiction with $v \in N(w)$. 
We have proven that $u$ is not a $\rho'$-pivot.
\end{proof}

The following technical lemma informally states that if we have a $\rho$-pivot $u$ such that all candidates are close to $u$ (this is the condition $K \seq Q \cup \{u\}$) and can transform $\rho$ into $\rho'$ so that there is a $\rho'$-pivot $u'$ (and some special vertices $v^-$ and $v^+$), then some candidate chosen by $\rho'$ is far from $u'$ (this is the condition $\range(\rho') \nsubseteq Q'$).
This lemma will allow us transform $\rho'$ into a pivot-free color realization $\rho''$. 

\begin{lemma}\label{lem:nsubseteqQ2}
Let $\rho$ and $\rho'$ be a color realizations, let $u$ be a $\rho$-pivot, let $u'$ be a $\rho'$-pivot, let $Q, Q' \seq S \cup K$ be the maximal sets such that $u$ is a $(\chi, Q)$-pivot and $u'$ is a $(\chi, Q')$-pivot, let $D$ be a maximal set $\rho$-pivoted by $u$, and let $D'$ be a maximal set $\rho'$-pivoted by $u'$.
Let $v^- \in (N(u) \cap K) \setminus \range(\rho')$ be such that for some color $c \in [p+1, k]$, we have $v^+ := \rho'(c) \in N(v^-)$.
If $D \seq Q$, $K \seq Q \cup \{u\}$, and $u \notin D'$, then $\range(\rho') \nsubseteq Q'$.
\end{lemma}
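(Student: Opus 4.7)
The plan is to assume $\range(\rho') \seq Q'$ for contradiction and derive a forbidden short cycle through $u$ via Observation~\ref{obs:short-cycles-are-in-S}.

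First, I localize the vertices $v^-$ and $v^+$. Since $v^- \in K \cap N(u)$ with $v^- \ne u$, the hypothesis $K \seq Q \cup \{u\}$ gives $v^- \in Q$; analogously $v^+ \in K$ lies in $Q \cup \{u\}$. The case $v^+ = u$ can be ruled out immediately: it would mean $u = \rho'(c) \in \range(\rho')$ for the color $c \in [p+1,k]$, making $u$ the unique vertex of color $c$ in $\chi_{\rho'}$, and hence $u \in D'$ (since $\chi_{\rho'}(D') = [k]$), contradicting the hypothesis $u \notin D'$. Thus $v^+ \in Q$. Next, $v^+ \notin N(u)$, because otherwise $\{u, v^-, v^+\}$ induces a triangle disjoint from $S$, contradicting Observation~\ref{obs:short-cycles-are-in-S}. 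Therefore, the definition of $Q$ supplies a $\chi$-tight link $z \in N(u) \cap N(v^+) \cap K^+$; if $z \ne v^-$, the cycle $u v^- v^+ z$ has four vertices and is not entirely contained in $S$ (as $u \notin S$), again contradicting Observation~\ref{obs:short-cycles-are-in-S}. Consequently $z = v^-$, and in particular $v^-$ is $\chi$-tight.

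For the case $u = u'$, I exploit that $v^+ \in D'$ (as the unique vertex of color $c \in [p+1,k]$ in $\chi_{\rho'}$). Applying the $(\chi_{\rho'}, D')$-pivot property at $v^+$, and using $u v^+ = u' v^+ \notin E(G)$, yields a $\chi_{\rho'}$-tight link $w' \in N(u) \cap N(v^+) \cap D'$. Since $v^- \in K \setminus \range(\rho')$ and $v^- \notin S \cup B$, we have $v^- \notin S_{\rho'}$ and thus $v^- \notin D'$; so $w' \ne v^-$. But then $w'$ and $v^-$ are two distinct common neighbors of $u$ and $v^+$, so $\{u, v^-, v^+, w'\}$ induces a 4-cycle not contained in $S$, contradicting Observation~\ref{obs:short-cycles-are-in-S}.

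The remaining case $u \ne u'$ is the main obstacle. Here Observation~\ref{obs:short-cycles-are-in-S} bounds $|N(u) \cap N(u')| \le 1$, which (by the same 4-cycle template through $v^+$) rules out $u u' \in E(G)$ and also excludes the coincidence $u' \in N(v^-)$. Applying simultaneously the $(\chi, Q')$-pivot property of $u'$ at $v^+$ (supplying either the edge $u' v^+$ or a $\chi$-tight link $w \in N(u') \cap N(v^+) \cap K^+$) and the $(\chi_{\rho'}, D')$-pivot property at $v^+$ (supplying $w' \in N(u') \cap N(v^+) \cap D'$), I trace a short path $u\,v^-\,v^+\,u'$ or $u\,v^-\,v^+\,w\,u'$ lying outside $S$. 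Each resulting configuration either closes a short cycle through $u$ outside $S$ — violating Observation~\ref{obs:short-cycles-are-in-S} — or, by comparing $w'$ with $v^-$ and $w$, forces a vertex of $D'$ to coincide with $u$ and thus contradict the hypothesis $u \notin D'$. Balancing these sub-cases and carefully tracking which links are $\chi$-tight versus $\chi_{\rho'}$-tight will be the principal technical difficulty.
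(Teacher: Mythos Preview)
Your opening localization of $v^-$ and $v^+$ is correct, and your treatment of the case $u=u'$ is essentially the paper's argument. The case $u\ne u'$, however, contains a genuine gap.

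Your claim that the short-cycle template ``rules out $uu'\in E(G)$'' is wrong: the walk $u,v^-,v^+,w^+,u'$ degenerates precisely when $u'=v^-$ (then $w^+=v^+$ and no cycle is produced), and this is exactly the case that occurs. In fact the paper shows that $u'=v^-$ is forced. To see this one must bring in a \emph{second} color $c'\in[p+1,k]\setminus\{c\}$ and set $w:=\rho'(c')\in K\cap D'$; then $w\in Q$ yields $x\in N(u)\cap N[w]\cap Q$, and $w\in D'$ yields $x'\in N(u')\cap N[w]\cap D'$, so $(u,v^-,v^+,w^+,u',x',w,x)$ is a closed walk of length at most~8 containing $v^+\notin S$. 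When $u'\ne v^-$ the five vertices $u,u',v^-,v^+,w$ are pairwise distinct and $v^-\ne w^+$, so a genuine short cycle through $v^+$ exists, contradicting Observation~\ref{obs:short-cycles-are-in-S}. Your outline never introduces this second color, and without it there is no second $u$--$u'$ path to close the cycle.

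Once $u'=v^-$ is established, the argument is not the cycle-chase you sketch. Since $v^-$ is a $(u,Q)$-link it is $\chi$-tight of degree $k-1$, so some color $d$ is missing from $\chi_{\rho'}(N(u'))$; the witness $w\in D'\setminus N(u')$ with $\chi_{\rho'}(w)=d$ is then shown to lie in $S$ (using $K\subseteq Q\cup\{u\}$ and a length-5 cycle obstruction), which forces $d>b$, hence $w':=\rho'(d)$ exists. One checks $w'\in Q\setminus N(u')$, and a final short-cycle argument shows $w'\notin Q'$, directly exhibiting an element of $\range(\rho')\setminus Q'$. Your contradiction hypothesis $\range(\rho')\subseteq Q'$ could in principle be used at this last step, but the preceding structure (identifying $u'=v^-$ and producing $d$) is the real content, and it is absent from your proposal.
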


\begin{proof}
Let $\rho, \rho', u, u', Q, Q', D, D', v^-, c$, and $v^+$ be as in the statement.
Since $v^-, v^+ \in K$ and $c > p$, we have $v^- \in Q \setminus D'$ and $v^+ \in Q \cap D'$; let $w^+$ be the vertex in $D' \cap N(u') \cap N[v^+]$.
Suppose for contradiction that $u = u'$.
By maximality of $Q$, we have $w^+ \in Q$.
Hence, by Lemma~\ref{lem:pivot-properties}(\ref{aux:component-of-GD}) applied to $Q$, we have $v^- = w^+$, which is a contradiction with $v^- \notin D'$. Hence, we have $u \ne u'$.

Now suppose for contradiction that $u' \ne v^-$.
Let $c' \in [p+1, k]$ be a color different from $c$, and let $w := \rho'(c')$.
Observe that $w \in K \cap D' \seq Q \cap D'$; let $x \in N(u) \cap N[w] \cap Q$ and $x' \in N(u') \cap N[w] \cap D'$.
Now $(u, v^-, v^+, w^+, u', x', x, w)$ is a closed walk\footnote{Assuming that each vertex has a loop because it is possible that, for example, $x = x'$.}.
Since $u, v^-, v^+, u'$, and $w$ are pairwise distinct and $v^- \ne w^+$, there is a cycle of length at most 8 in $G[\{u, v^-, v^+, w^+, u', x', x, w\}]$ containing $v^+ \notin S$, which is a contradiction with Observation~\ref{obs:short-cycles-are-in-S}.
Hence, we know that $u' = v^-$. The remainder of the proof is visualized by Figure~\ref{fig:nsubseteqQ2}.

\begin{figure}[h]
\begin{tikzpicture}
\tikzmath{ \dist = 20; \dis=14;}
\begin{scope}[every node/.style={draw, circle, minimum width=17pt, inner sep=2pt, fill=gray!25!white}]
\node[fill=white] (u) {$u$};
\node[below left = \dist pt of u, fill=white] (u') {$u'$};
\node[below left = \dist pt of u'] (vp) {$v^+$};
\node[below right = \dist pt of u', minimum width = 10pt] (a) {};
\node[right = \dist pt of a, rectangle, minimum height = 17pt] (w) {$w$};
\node[right = \dist pt of u] (w') {$w'$};
\end{scope}
\node[right= 0pt of u'] {$= v^-$};
\draw (w')--(u)--(u')--(vp);
\draw (u')--(a)--(w);
\node[below = 0pt of w] {$d$};
\node[right = 0pt of w'] {$d$};
\node[right = 0pt of vp] {$c$};
\end{tikzpicture}
\centering
\caption{An illustration of the proof of Lemma~\ref{lem:nsubseteqQ2}. The gray vertices are in $S_{\rho'}$ (we have $u \notin S_{\rho'}$ because $u \notin D'$ by maximality of $D'$). The only vertex in $S$ is $w$.}
\label{fig:nsubseteqQ2}
\end{figure}
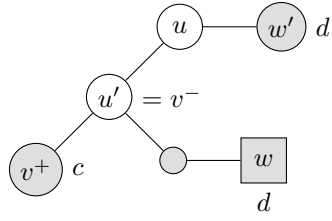

Since $u'$ is a $(u, Q)$-link, we know that $u'$ is $\chi$-tight, which means that it has degree $k-1$ because $u' \notin S$.
Let $d \in [k]$ be a color such that $d \notin \chi_{\rho'}(N(u'))$.
Since $D'$ is $\rho'$-pivoted, there is a vertex $w \in D' \setminus N(u')$ such that $\chi_{\rho'}(w) = d$.
Since $u \notin D'$, we have $w \notin Q$; otherwise there would be a cycle of length at most 5 containing $u \notin S$, which is again a contradiction with Observation~\ref{obs:short-cycles-are-in-S}.
Since $K \seq Q \cup \{u\}$ and $w \ne u$, we have $w \in S$.
Observe that $D \cap S = \emptyset$; otherwise there would be an outer path of length at most 3.
Since $D$ is $\rho$-pivoted, we have $d > b$ (otherwise $d \in \chi_\rho(D)$ would be impossible).
By choice of $d$, we have $w' := \rho'(d) \notin N(u')$.
Since $w' \in K \setminus N(u')$, we have $w' \ne u$ and $w' \in Q$.
Again by Observation~\ref{obs:short-cycles-are-in-S}, we obtain $w' \notin Q'$, which concludes the proof.
\end{proof}

\subsubsection{Avoiding the emergence of a different pivot}\label{subsub:different-pivot}

It may happen that the new color realization $\rho'$ has a different pivot than the original realization $\rho$.
The following lemma describes such a situation (assuming that $\rho'$ is created by only a few modifications).

\begin{lemma}\label{lem:two-pivots}
Let $\rho$ and $\rho'$ be color realizations such that $|\rho \setminus \rho'| \le 2$, let $u$ be a $\rho$-pivot, let $D$ be a maximal set $\rho$-pivoted by $u$, let $u'$ be a $\rho'$-pivot, and let $D'$ be a maximal set $\rho'$-pivoted by $u'$.
Suppose that $u$ is not $(u', D')$-link and $u'$ is not a $(u, D)$-link.
\begin{enumerate}[a)]
\item There is a vertex $w \in D \cap D'$ such that $w$ is a $(u, D)$-link, as well as a $(u', D')$-link. Moreover, $D \cap D' \seq N[w]$.\label{aux:w-exists}
\item If there is a vertex $v \in S \cap (D \symdif D')$, then $S \cap (D \cup D') = \{v\}$.\label{aux:unique-in-S}
\end{enumerate}
\end{lemma}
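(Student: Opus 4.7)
The plan is to exploit the hypothesis $|\rho\setminus\rho'|\le 2$ to show that $D$ and $D'$ share many vertices, and then use the star structure of pivoted sets from Lemma~\ref{lem:pivot-properties} together with the fen-core constraint (Observation~\ref{obs:short-cycles-are-in-S}) to extract a single common link.

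For part~(a), I would first argue that $|D\cap D'|$ is large. For every color $c\in[p+1,k]$ on which $\rho$ and $\rho'$ agree, the vertex $\rho(c)=\rho'(c)$ is the unique vertex of color $c$ in both $\chi_\rho$ and $\chi_{\rho'}$ (since $c>p$ excludes vertices of $S$); because $\chi_\rho(D)=\chi_{\rho'}(D')=[k]$, this vertex lies in $D\cap D'$. Since $|\rho\setminus\rho'|\le 2$, this produces at least $k-p-2$ vertices in $D\cap D'$, all of them in $\range(\rho)\seq\overline S$. Then, applying Lemma~\ref{lem:pivot-properties}(\ref{aux:component-of-GD}) and~(\ref{aux:unique-neighbor}) to $D$ (and to $D'$), I decompose $G[D]$ into disjoint star components, with centers exactly in $N(u)\cap D$; every non-center vertex outside $S$ has its unique $D$-neighbor equal to its star's center. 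Analogous structure holds for $G[D']$ with centers in $N(u')\cap D'$.

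Next, I consider any $v\in (D\cap D')\setminus S$ and identify its ``$D$-center'' $v_C$ and its ``$D'$-center'' $v_{C'}$ (which may equal $v$ itself when $v\in N(u)$ or $v\in N(u')$). The core technical step is to show that $v_C=v_{C'}$. If $v\in N(u)\cap N(u')$, this is immediate. Otherwise, say $v\notin N(u)$: then $v_C\in N(u)\cap D$ is the unique neighbor of $v$ in $D$ (since $v\notin S$), so any $D'$-neighbor of $v$ that lies in $D$ must be $v_C$; if $v_C\notin D'$, then $v_{C'}\ne v_C$ would be a second neighbor of $v$, producing a short cycle or path through $u$ and $u'$ whose existence contradicts Observation~\ref{obs:short-cycles-are-in-S} (using Lemma~\ref{lem:pivot-properties}(\ref{aux:equals-pivot})) or the hypothesis that $u'$ is not a $(u,D)$-link. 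Setting $w$ to be the common center shared by vertices of $(D\cap D')\setminus S$ gives $w\in D\cap D'\cap N(u)\cap N(u')$ with neighbors in both $D$ and $D'$, so $w$ is simultaneously a $(u,D)$-link and a $(u',D')$-link. For the moreover part, any $x\in(D\cap D')\setminus N[w]$ would force a second common center distinct from $w$; the pair of distinct common centers, together with $u$ and $u'$, would form a short closed walk not in $S$, contradicting Observation~\ref{obs:short-cycles-are-in-S}.

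For part~(b), suppose $v\in S\cap (D\symdif D')$; by symmetry assume $v\in D\setminus D'$. Suppose for contradiction there is $v'\in S\cap(D\cup D')$ with $v'\ne v$. If $v'\in D$, Lemma~\ref{lem:pivot-properties}(\ref{aux:link-in-S}) yields a $(u,D)$-link $x\in S\cap N[v]\cap N[v']$; combining the edges $u\text{–}x\text{–}v$ with $u\notin S$ gives a short subpath whose endpoints lie in $N(S)$ and whose interior avoids $S$, an outer path of length at most 2 or 3, contradicting Definition~\ref{def:fen-core}. If $v'\in D'\setminus D$, the analogous argument using Lemma~\ref{lem:pivot-properties}(\ref{aux:link-in-S}) applied to $D'$ and $u'$, together with the common link $w$ from part~(a) that connects the $u$- and $u'$-sides, produces a similarly short outer path, again contradicting Definition~\ref{def:fen-core}. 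Thus $S\cap(D\cup D')=\{v\}$.

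\paragraph*{Main obstacle.} The delicate step is matching the two star-decompositions of $G[D]$ and $G[D']$ so that one single center $w$ serves as a link for both pivots and also dominates $D\cap D'$. A vertex-by-vertex construction would produce different candidates $v_C$ and $v_{C'}$ for different $v$; forcing these local choices to collapse into one global vertex $w$ is where the fen-core short-cycle restriction (Observation~\ref{obs:short-cycles-are-in-S}) and the two hypotheses that $u$ is not a $(u',D')$-link and $u'$ is not a $(u,D)$-link are essential.
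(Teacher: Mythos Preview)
Your plan for part~(\ref{aux:w-exists}) is essentially the paper's approach: find common vertices of $D\cap D'\cap\overline S$ via colors in $[p+1,k]$ on which $\rho$ and $\rho'$ agree, pull back their star-centers in $D$ and in $D'$, and collapse everything to a single $w$ using Observation~\ref{obs:short-cycles-are-in-S}. One remark: your ``single-vertex'' argument that $v_C=v_{C'}$ is not complete as written, because from one $v$ you only get a path $u\text{--}v_C\text{--}v\text{--}v_{C'}\text{--}u'$, not a closed walk. The paper fixes this by taking \emph{two} vertices $v_1,v_2\in D\cap D'\cap\overline S$ at once (you certainly have them, since you produced $k-p-2\ge 2$ such vertices); with the four centers $W=\{w_1,w_2,w_1',w_2'\}$, assuming $|W|>1$ immediately yields a cycle in $G[W\cup\{u,u',v_1,v_2\}]$ of length at most~8 meeting $\overline S$. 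The hypotheses that $u$ is not a $(u',D')$-link and $u'$ is not a $(u,D)$-link are used precisely to ensure $u,u'\notin W$. Your ``moreover'' argument is then exactly right.

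Part~(\ref{aux:unique-in-S}) has a genuine gap in the case $v'\in D$. You obtain a $(u,D)$-link $x\in S\cap N[v]\cap N[v']$ from Lemma~\ref{lem:pivot-properties}(\ref{aux:link-in-S}) and then claim that ``$u\text{--}x\text{--}v$ gives a short subpath whose interior avoids $S$''. But $x\in S$ is the interior of that path, so it is not an $S$-outer path; and you only know one $S$-neighbor of $u$ (namely $x$), so the single vertex $u$ is not an outer path either. The paper's argument is different and uses the extra structure of links: by Definition~\ref{def:links}, the link $x$ lies in $\kb$, and since $x\in S$ this forces $x\in B$, hence $c:=\chi(x)\in[b]$. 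Because $\chi_{\rho'}(D')=[k]$ and $c\le b$, there is $x'\in D'$ with $\chi_{\rho'}(x')=c$, and $c\le b\le p$ forces $x'\in S$. One then handles the case $v'\in D'$ first (choosing $v'\in D'$ whenever possible), so that the existence of $x'\in S\cap D'$ contradicts that choice. In the case $v'\in D'$, the paper uses the observation $v\notin N[w]$ (from maximality of $D'$) to guarantee that any $v\text{--}v'$ path in $G[D\cup D'\cup\{u,u'\}]$ passes through $u$, and then $P-\{v,v'\}$ is an outer path of length at most~4; your ``analogous argument'' sketch does not supply this key observation.
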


\begin{proof}
Let $\rho, \rho', u, u', D$, and $D'$ be as in the statement. For~(\ref{aux:w-exists}), let us choose $c_1, c_2 \in [p+1, k]$ so that $v_i := \rho(c_i) = \rho'(c_i)$ for $i \in [2]$; note that $c_1$ and $c_2$ exist because $k \ge p + 4$ and $|\rho \setminus \rho'| \le 2$.
Observe that $v_1, v_2 \in D \cap D' \cap \overline{S}$.
For $i \in [2]$, let $w_i$ (resp. $w_i'$) be the unique vertex in $N(u) \cap N[v_i] \cap D$ (resp. $N(u') \cap N[v_i] \cap D'$), and let $W = \{w_1, w_2, w_1', w_2'\}$.
Since $w_1, w_2 \in D$, we have $u \notin \{w_1, w_2\}$.
If $u = w_1'$, then we would have $w_1' \ne v_1$ and $u$ would be $(u', D')$-link, which is not the case.
Hence, $u \ne w_1'$, and analogously, we have $u \ne w_2'$.
Hence, $u \notin W$, and analogously, we have $u' \notin W$.
If $|W| > 1$, then there would be a cycle in 
$G[W \cup \{u, u', v_1, v_2\}]$, which would contradict Observation~\ref{obs:short-cycles-are-in-S} since 
$u, u', v_1, v_2 \notin S$.
Hence, let $w$ be the unique vertex in $W$, see Figure~\ref{fig:two-pivots}. Note that it is possible that $w \in \{v_1, v_2\}$.
Without loss of generality, $w \ne v_1$.
Since $v_1 \in D \cap D'$, $w$ is a $(u, D)$-link and a $(u', D')$-link as required.
Finally, $D \cap D' \seq N[w]$ because otherwise there would again be a short cycle not contained in $S$. 

\begin{figure}[h]
\begin{tikzpicture}[every node/.style={draw, circle, minimum width=17pt, inner sep=2pt, fill=gray!25!white}]
\tikzmath{ \dist = 20;}
\node[fill=white] (u) {$u$};
\node[below right= \dist pt of u] (w) {$w$};
\node[fill=white, above right= \dist pt of w] (u') {$u'$};
\node[below right= \dist pt of w] (v2) {$v_2$};
\node[below left = \dist pt of w] (v1) {$v_1$};
\node[rectangle, minimum height=17pt, below left = \dist pt of u] (v) {$v$};
\node[rectangle, minimum height=17pt, below right = \dist pt of u'] (v') {$v'$};
\draw (v)--(u)--(w)--(u')--(v');
\draw (v1)--(w)--(v2);
\end{tikzpicture}
\centering
\caption{An illustration of the proof of Lemma~\ref{lem:two-pivots}. The gray vertices are in $S_\rho$ (or, equivalently, in $S_{\rho'}$). The only vertices in $S$ are $v$ and $v'$. The depicted case is impossible because $\{u, w, u'\}$ induces an outer path.}
\label{fig:two-pivots}
\end{figure}
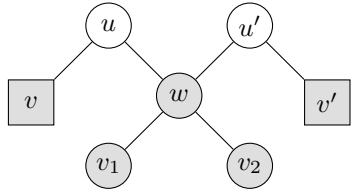

For~(\ref{aux:unique-in-S}), assume that, without loss of generality, $v$ is a vertex in $S \cap (D \setminus D')$.
Observe that $v \notin N[w]$; otherwise we would have $v \in D'$ by maximality of $D'$.
Suppose for contradiction that there is a vertex $v' \in S \cap (D \cup D')$ such that $v' \ne v$;
if possible, choose $v'$ so that $v' \in D'$.
First, suppose that $v' \in D'$, and let $P$ be a $v$-$v'$ path in $G[D \cup D' \cup \{u, u'\}]$; note that $P$ contains $u$ because $v \notin N[w]$.
Now $P' := P - \{v, v'\}$ is an outer path of length at most 4, which contradicts Definition~\ref{def:fen-core}.
Second, suppose that $v' \in D$.
By Lemma~\ref{lem:pivot-properties}(\ref{aux:link-in-S}), there is a $(u, D)$-link $x \in S$.
By Definition~\ref{def:links}, we have $x \in \kb$, which implies $x \in B$.
Let $c = \chi(x)$ and let $x' \in D'$ be a vertex such that $\chi_{\rho'}(x') = c$.
Since $c \le b$, we have $x' \in S$.
This contradicts our choice of~$v'$.
\end{proof}

Using Lemma~\ref{lem:two-pivots}, we can show that a different pivot cannot appear if $\rho'$ is obtained by only ``swapping'' two colors, i.e., $\rho' = \rho[c_1 \mapsto \rho(c_2), c_2 \mapsto \rho(c_1)]$ for some colors $c_1, c_2 \in [b+1, k]$.
Note that in this case, we know that $u$ is not $(u', D')$-link and $u'$ is not a $(u, D)$-link because $u, u' \notin S_\rho = S_{\rho'}$, which enables us to use Lemma~\ref{lem:two-pivots}.

\begin{lemma}\label{lem:not-different-pivot-swap}
If $\rho$ is a color plan, $u \in \overline{S_\rho}$ is a $\rho$-pivot, $\rho'$ is a color plan such that $\range(\rho) = \range(\rho')$ and $|\rho \setminus \rho'| = 2$, and $u' \in \overline{S_\rho}$ is a $\rho'$-pivot, then $u = u'$.
\end{lemma}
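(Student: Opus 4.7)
We argue by contradiction, assuming $u \ne u'$. Let $D$ and $D'$ be maximal sets $\rho$-pivoted by $u$ and $\rho'$-pivoted by $u'$ respectively. Since $\range(\rho) = \range(\rho')$ we have $S_\rho = S_{\rho'}$, so both $u$ and $u'$ lie outside $S_\rho$, and thus outside $D \seq S_\rho$ and $D' \seq S_{\rho'}$. Consequently $u'$ cannot be a $(u, D)$-link and $u$ cannot be a $(u', D')$-link (every such link belongs to the corresponding $D$-set), so the hypotheses of Lemma~\ref{lem:two-pivots} are satisfied. Its part~(\ref{aux:w-exists}) yields a vertex $w \in D \cap D'$ that is simultaneously a $(u, D)$-link and a $(u', D')$-link, with $w \in N(u) \cap N(u')$ and $D \cap D' \seq N[w]$.

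The plan is then to derive a contradiction by a color-counting argument around $w$. Because $w$ is $\chi_\rho$-tight and $u, u' \in N[w] \setminus S_\rho$ are distinct, $\chi_\rho(N[w])$ must omit at least two colors of $[k]$, and each missing color must be realized in $D \setminus N[w] \seq D \setminus D'$. Writing $c_1, c_2 \in [b+1, k]$ for the two colors on which $\rho$ and $\rho'$ differ, I will rule out missing $c \in [p+1, k] \setminus \{c_1, c_2\}$, whose unique $\chi_\rho$-preimage $\rho(c) = \rho'(c)$ would be forced into $D \cap D' \seq N[w]$; I will rule out missing $c \in [b]$ analogously, using Lemma~\ref{lem:two-pivots}(\ref{aux:unique-in-S}) to handle the case where several $S$-vertices carry color $c$. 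A parallel inspection of $\chi_{\rho'}$ shows that at most one of $\{c_1, c_2\}$ can be missing from $\chi_\rho(N[w])$, so some missing color $c^* \in [b+1, p]$ must exist, and tracing its carriers deposits an $S$-vertex in $S \cap (D \symdif D')$.

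Lemma~\ref{lem:two-pivots}(\ref{aux:unique-in-S}) then pins the unique $S$-vertex of $D \cup D'$ to be that carrier of $c^*$, and since its $\chi$-color $c^* > b$ this forces $B \cap D = \emptyset$. For $b \ge 1$ the colors $[b]$ are only carried by $B$-vertices, so $\chi_\rho(D) \ne [k]$, a contradiction. The main obstacle I anticipate is the corner case $b = 0$, in which the $B$-based contradiction is vacuous; there I plan to extract the contradiction directly from Observation~\ref{obs:short-cycles-are-in-S} and the length bound on outer paths in Definition~\ref{def:fen-core}, applied to the path $u$--$w$--$u'$ together with the common neighbors of $u$ and $u'$ forced by Lemma~\ref{lem:pivot-properties}(\ref{aux:component-of-GD}) on the components of $G[D]$ and $G[D']$ containing $w$, to rule out two distinct pivots adjacent to the same link $w$.
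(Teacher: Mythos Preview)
Your opening is correct and matches the paper: invoke Lemma~\ref{lem:two-pivots} to produce the common link $w$, then count colors missing from $\chi_\rho(N[w])$. From there the paper and you diverge.

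The paper does not classify the missing colors by which interval of $[k]$ they lie in. Instead it picks any two missing colors $c_1,c_2$, takes witnesses $x_i\in D$ and $x_i'\in D'$ carrying color $c_i$ (under $\chi_\rho$ and $\chi_{\rho'}$ respectively), and argues that for each $i$ one of $x_i,x_i'$ must lie in $S\cap(D\symdif D')$; a short case split on whether $\chi_\rho(x_i')=c_i$ (i.e.\ whether $x_i'$ is one of the two swapped vertices) then yields two distinct $S$-vertices in $D\symdif D'$, contradicting Lemma~\ref{lem:two-pivots}(\ref{aux:unique-in-S}). No reference to $b$ or to $B$ is needed.

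Your route instead funnels the argument through the claim $B\cap D=\emptyset$, which forces you to treat $b=0$ separately, and that case is where your proposal breaks. The suggested geometric argument---using the path $u$--$w$--$u'$ plus component-neighbours from Lemma~\ref{lem:pivot-properties}(\ref{aux:component-of-GD})---does not produce a short cycle or a short outer path: $u$ and $u'$ share the neighbour $w$, but nothing forces a second common neighbour, and the components of $G[D]$ and $G[D']$ through $w$ only hand you back $w$ itself. So the $b=0$ branch is a genuine gap. Moreover, your step ``at most one of $\{c_1,c_2\}$ can be missing from $\chi_\rho(N[w])$'' is asserted without proof; verifying it (for instance when one swapped color lies in $[b+1,p]$ and the other in $[p+1,k]$) already requires tracing carriers into $S\cap(D\symdif D')$, which is precisely the mechanism the paper uses directly. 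You would be better served by dropping the $B$-detour and arguing, as the paper does, that each of the two missing colors independently deposits an $S$-vertex in $D\symdif D'$.
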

\begin{proof}
Let $\rho, \rho', u, u'$ be as in the statement, and suppose that $u \ne u'$.
Let $D \seq S_\rho$ be a maximal set $\rho$-pivoted by $u$, and let $D' \seq S_\rho$ be a maximal set $\rho'$-pivoted by $u'$.
By Lemma~\ref{lem:two-pivots}(\ref{aux:w-exists}), there is a vertex $w \in D \cap D'$ such that $D \cap D' \seq N[w]$ and $w$ is a $(u, D)$-link.
Since $w$ is $\chi_\rho$-tight and $u, u' \in N(w) \setminus S_\rho$, we have $|\chi_\rho(N[w])| \le k - 2$.
Hence, there are two distinct colors $c_1, c_2 \in [k]$ such that $c_1, c_2 \notin \chi_\rho(N[w])$.
For $i \in [2]$, let $x_i \in D \cap \chi_\rho^{-1}(c_i)$ and $x_i' \in D' \cap \chi_{\rho'}^{-1}(c_i)$; these vertices exist since $D$ is $\rho$-pivoted and $D'$ is $\rho'$-pivoted.
Since $x_1, x_2 \notin N[w]$, we have $x_1 \ne x_1'$ and $x_2 \ne x_2'$.
First, suppose that $\chi_\rho(x_1') = c_1$ and $\chi_\rho(x_2') = c_2$.
Clearly, $S \cap \{x_i, x_i'\} \ne \emptyset$ for each $i \in [2]$.
Since $c_1, c_2 \notin \chi_\rho(N[w])$, we have $x_1, x_2, x_1', x_2' \in D \symdif D'$, which yields a contradiction with Lemma~\ref{lem:two-pivots}(\ref{aux:unique-in-S}).

Second, suppose that, without loss of generality, $c_3 := \chi_\rho(x_1') \ne c_1$.
Since $|\rho \setminus \rho'| = 2$, we have $\rho' = \rho[c_1 \mapsto x_1', c_3 \mapsto x_1]$, which implies $\chi_\rho(x_2') = c_2$.
As in the previous case, we have $S \cap \{x_2, x_2'\} \ne \emptyset$ and $x_2, x_2' \in D \symdif D'$.
If $c_2 = c_3$, then $x_2, x_2' \in S$ because $\rho(c_3) = x_1'$; a contradiction with Lemma~\ref{lem:two-pivots}(\ref{aux:unique-in-S}).
Suppose that $c_2 \ne c_3$.
Since $D$ is $\rho$-pivoted, there is a vertex $x_3 \in D$ such that $\chi_\rho(x_3) = c_3$.
Since $\rho(c_3) = x_1'$, we have $x_3 \in S$; again a contradiction with Lemma~\ref{lem:two-pivots}(\ref{aux:unique-in-S}) because $x_3 \notin \{x_2, x_2'\}$.
\end{proof}

Now we move on to the situation in which $\rho'$ is obtained by ``shifting'' a color to a new candidate, i.e., $\rho' = \rho[c \mapsto v]$ for some color $c \in [b+1, k]$ and $v \in K \setminus \range(\rho)$.
This situation will be more complicated than the ``swap'' of two colors handled by Lemma~\ref{lem:not-different-pivot-swap}.
To ensure that $\rho'$ is pivot-free, we must be very careful in choosing the color $c$.
The following lemma finds a color $c$ that will work.
Observe that the three conditions---(\ref{gc:not-link}), (\ref{gc:connect}), and~(\ref{gc:D*})---ensure that neither of the three cases depicted in Figure~\ref{fig:different-pivot} occurs.
Namely,~(\ref{gc:not-link}) corresponds to the left-hand case,~(\ref{gc:connect}) corresponds to the middle case, and~(\ref{gc:D*}) corresponds to the right-hand case.

\begin{lemma}\label{lem:good-color}
If $\rho$ is a color plan, $u \in \overline{S_\rho}$ is a $\rho$-pivot, and $D\seq S_\rho$ is a maximal set $\rho$-pivoted by $u$, then there is a color $c \in [p+1, k]$ satisfying the following three properties:
\begin{enumerate}[(a)]
\item $v^- := \rho(c)$ is not a $(u, D)$-link;\label{gc:not-link}
\item $N(v^-) \cap \overline{D \cup \{u\}} \cap N[S_\rho \setminus D] = \emptyset$; and\label{gc:connect}
\item if there is a set $D^* \seq D$ such that $|D \setminus D^*| \le 2$ and $\dist(x, y) \le 2$ for every $x,y \in D^*$, then $v^- \in D^*$.\label{gc:D*}
\end{enumerate}
\end{lemma}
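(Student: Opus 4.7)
I will prove the lemma by a counting argument on the set of colors $[p+1, k]$, which has size $k - p \ge 2p + 18$ by the standing assumption $k \ge 3p + 18$. Concretely, I will bound, separately for each of the three conditions, the number of colors $c \in [p+1, k]$ whose image $v^- = \rho(c)$ fails that condition, and show the total is strictly less than $k - p$. Then some color must satisfy all three conditions simultaneously.

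For condition~(a), a bad $v^-$ is a $(u, D)$-link, i.e., lies in $N(u) \cap N(D) \cap D$. By Lemma~\ref{lem:pivot-properties}(\ref{aux:component-of-GD}), the links are in bijection with the non-singleton connected components of $G[D]$, each contributing a single vertex $v_C \in N(u) \cap D$. Moreover, Lemma~\ref{lem:pivot-properties}(\ref{aux:contains-all-S-influenced}) shows all $S$-influenced vertices lie in one component, so outside this distinguished component every link belongs to a branch of $u$ disjoint from $N[S]$. Combined with the bound $|S| \le 32p$ from Definition~\ref{def:fen-core} and the fact that $(u, D)$-links must be $\chi_\rho$-tight vertices of $K^+$ (by Definition~\ref{def:links}), this gives an $O(p)$ bound.

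For condition~(b), a bad $v^-$ has a neighbor $z \in \overline{D \cup \{u\}} \cap N[S_\rho \setminus D]$, so $v^-$ lies within distance $2$ of some $w \in S_\rho \setminus D$ along a path avoiding $D \cup \{u\}$. We have
\[
|S_\rho \setminus D| \le |S_\rho| - |D| \le (p + k - b) - k = p - b \le p.
\]
For each $w \in S_\rho \setminus D$, Observation~\ref{obs:short-cycles-are-in-S} forces any short cycle through $v^-$ to be contained in $S$, so two distinct candidates $v^-_1, v^-_2 \in \range(\rho) \cap \overline{S}$ sharing a common ``bridge'' vertex $z$ near $w$ would yield a short cycle outside $S$; hence each $w$ contributes only $O(1)$ values of $v^-$. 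Summing over $w \in S_\rho \setminus D$ gives an $O(p)$ bound.

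For condition~(c), if no set $D^*$ as described exists, the condition is vacuously satisfied. Otherwise, a bad color has $\rho(c) \notin D^*$. Since $\rho(c) \in \range(\rho) \seq S_\rho$, there are two sources: either $\rho(c) \in D \setminus D^*$ (at most $2$ colors, since $|D \setminus D^*| \le 2$) or $\rho(c) \in S_\rho \setminus D$ (at most $p - b \le p$ colors), giving at most $p + 2$ violations. Summing the three $O(p)$ bounds, the total number of bad colors is strictly less than $k - p$, so some color satisfies all three conditions. The main obstacle is tuning the constants in the bounds for (a) and (b) so that the total stays below $k - p$; this relies on repeated use of Lemma~\ref{lem:pivot-properties}, the characterization of $(u, D)$-links as $\chi_\rho$-tight vertices of $K^+$, and the sparsity of $G$ outside $S$ encoded in the fen-core.
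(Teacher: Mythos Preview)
Your counting strategy is the right overall shape, but the argument for condition~(\ref{gc:not-link}) has a genuine gap. You claim the number of $(u,D)$-links is $O(p)$, but this is not true in general: by Lemma~\ref{lem:pivot-properties}(\ref{aux:component-of-GD}) the links are in bijection with the non-singleton components of $G[D]$, and there can be as many as $\Theta(k)$ of those. Nothing in Lemma~\ref{lem:pivot-properties}(\ref{aux:contains-all-S-influenced}), the bound $|S|\le 32p$, or the fact that links are $\chi_\rho$-tight members of $K^+$ caps their number by a function of $p$---a $\chi_\rho$-tight vertex in $K^+$ is simply a vertex of degree $k-1$ seeing no repeated color, and $u$ can have up to $k-1$ such neighbors. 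So summing ``$O(p)$ bad colors for~(\ref{gc:not-link})'' with the other two bounds does not work.

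The paper sidesteps this by \emph{not} counting links at all. It first builds a small set $W_6$ (size $\le 2p+6$) from the obstructions relevant to~(\ref{gc:connect}) and~(\ref{gc:D*}) together with a few auxiliary vertices, then picks $c\in[p+1,k]$ with $\rho(c)\notin W_6$, \emph{preferring} $\rho(c)\notin N(u)$ when possible. The key step is a swap argument: if the chosen $\rho(c)$ were a $(u,D)$-link, one exhibits a neighbor $x=\rho(d)\in D$ with $d\in[p+1,k]$, shows $x\notin W_6$ (using how $W_5$ and $W_0,W_2$ were defined), and observes $x\notin N(u)$ (else $\{u,\rho(c),x\}$ would be a triangle outside $S$). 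Thus $d$ would have been preferred over $c$, a contradiction. This preferential-choice trick is what makes~(\ref{gc:not-link}) go through without bounding the number of links.

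Your treatment of~(\ref{gc:connect}) is closer to correct but still too sketchy: the ``short cycle'' claim for two distinct $v^-$'s near the same $w\in S_\rho\setminus D$ is not justified as written. The paper handles~(\ref{gc:connect}) by showing any violating $\rho(c)$ must lie in $W_0\cup W_1$ with $|W_0|\le p-b$ and $|W_1|\le 1$ (the latter via Lemma~\ref{lem:pivot-properties}(\ref{aux:outer-S})); your idea of charging to $S_\rho\setminus D$ can be made to match this, but you need to actually invoke the uniqueness coming from Lemma~\ref{lem:pivot-properties}(\ref{aux:outer-S}) and Observation~\ref{obs:short-cycles-are-in-S} rather than assert it. Your bound for~(\ref{gc:D*}) is fine.
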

\begin{proof}
Let $\rho$, $u$, and $D$ be as in the statement.
Let $d \in [b+1, p]$ be a color.
If $\rho(d) \in D$, then let $w_d$ be the unique vertex in $N(u) \cap N[\rho(d)] \cap D$.
Otherwise, if there is a vertex $w \in D$ such that $\dist(\rho(d), w) \le 2$, then we set $w_d := w$, and if there is no such vertex $w$, then we set $w_d := \rho(d)$.
Note that if $1 \le \dist(\rho(d), D) \le 2$, then the vertex $w_d$ is uniquely determined because otherwise, there would be a cycle not contained in $S$ of length at most 8, which would contradict Observation~\ref{obs:short-cycles-are-in-S}.

Let $W_0 = \{w_d \sep d \in [b+1, p]\}$, let $W_1 = \{w \in D \sep N(w) \cap \overline{D\cup\{u\}} \cap N[S] \ne \emptyset\}$ and $W_2 = \{w \in D \cap N(S\cap D) \sep w$ is a $(u, D)$-link$\}$.
By Lemma~\ref{lem:pivot-properties}(\ref{aux:outer-S}), if $w \in W_1$, then $W_1 = \{w\}$ and $S \cap D = \emptyset$, which implies $W_2 = \emptyset$.
Moreover, $|W_2| \le 1$ because all vertices in $W_2$ are $S$-influenced, there is a connected component $C$ of $G[D]$ containing all $S$-influenced vertices by Lemma~\ref{lem:pivot-properties}(\ref{aux:contains-all-S-influenced}), and there is at most one $(u, D)$-link in $C$.
Hence, $|W_1 \cup W_2| \le 1$.
If there is a set $D^* \seq D$ as in~(\ref{gc:D*}), then let $W_3 := D \setminus D^*$, and otherwise, let $W_3 := \emptyset$.
Let $W_4 = W_0\cup W_1\cup W_2\cup W_3$, let $W_5 = \{w \in D \cap N(W_4) \sep w$ is a $(u, D)$-link$\}$, and let $W_6 = W_4 \cup W_5$.
Since $|W_4| \le p +1+2$ and there is at most one $(u, D)$-link in $N(w)$ for each $w \in W_4$, we have $|W_6| \le 2p + 6$.

Since $k \ge 3p +7$, there is a color $c \in [p+1, k]$ such that $w := \rho(c) \notin W_6$; let us choose $c$ so that, if possible, $w \notin N(u)$.
We will show that $c$ is the desired color.
First, suppose that $w$ is a $(u, D)$-link, i.e., there is a vertex $x \in D \cap N(w)$.
If $x \in S$, then we would have $w \in W_2$, and if $x = \rho(d)$ for $d \in [b+1, p]$, then we would have $w = w_d \in W_0$.
Hence, $x = \rho(d)$ for some color $d \in [p+1, k]$.
If $x \in W_4$, then we would have $w \in W_5$, which means that $x \notin W_4$.
Since $x$ is not a $(u, D)$-link, we have $x \notin W_5$, which contradicts the choice of $c$ (we would choose $d$ instead of $c$). Hence, $c$ satisfies~(\ref{gc:not-link}).

Second, suppose that there is a vertex $x \in N(w) \cap \overline{D \cup \{u\}} \cap N[S_\rho \setminus D]$.
If $x \in N[S]$, then we would have $w \in W_1$, which means that there is a color $d \in [b+1, k]$ such that $\rho(d) := y \in N[x] \setminus D$.
Since $D$ is $\rho$-pivoted, we have $d \le p$.
Since $\dist(y, w) \le 2$, we have $w = w_d \in W_0$, which is a contradiction.
Hence, $c$ satisfies~(\ref{gc:connect}).
Finally, $c$ satisfies~(\ref{gc:D*}) because otherwise we would have $w \in W_3$, which is not the case.
\end{proof}

We would like to again use Lemma~\ref{lem:two-pivots}, which requires that $u$ is not a $(u', D')$-link and $u'$ is not a $(u, D)$-link.
In the following lemma, we handle the case when the latter condition is satisfied (using the color found in Lemma~\ref{lem:good-color}).

\begin{lemma}\label{lem:not-different-pivot-shift}
Let $\rho$ be a color plan, $u \in \overline{S_\rho}$ be a $\rho$-pivot, let $D\seq S_\rho$ be a maximal set $\rho$-pivoted by $u$, let $c \in [p+1, k]$ be a color satisfying the properties~(\ref{gc:not-link}),~(\ref{gc:connect}), and~(\ref{gc:D*}) from Lemma~\ref{lem:good-color},
let $v^+ \in K \setminus \range(\rho)$, let $\rho' = \rho[c \mapsto v^+]$,
let $u' \in \overline{S_{\rho'}}$ be a $\rho'$-pivot, and let $D'\seq S_{\rho'}$ be a maximal set $\rho'$-pivoted by $u'$.
If $u$ is not a $(u', D')$-link, then $u' = u$.
\end{lemma}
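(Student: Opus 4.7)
The plan is to argue by contradiction: I would assume $u' \ne u$, verify the hypotheses of Lemma~\ref{lem:two-pivots}, apply it to obtain a common link $w$, and then derive a contradiction via the three properties of the color $c$ supplied by Lemma~\ref{lem:good-color}. First I would check that $u'$ is not a $(u, D)$-link, which is the only missing hypothesis of Lemma~\ref{lem:two-pivots} since $|\rho \setminus \rho'| = 1 \le 2$. If $u'$ were such a link, then $u' \in K^+ \cap D \seq S_\rho$, and combining this with $u' \in \overline{S_{\rho'}}$ together with the fact that $B \cup (\range(\rho) \setminus \{v^-\}) \seq B_{\rho'}$ would force $u' = v^-$, contradicting property~(\ref{gc:not-link}). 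Applying Lemma~\ref{lem:two-pivots}(\ref{aux:w-exists}), I would then obtain $w \in D \cap D'$ with $w \in N(u) \cap N(u')$, $w$ both $\chi_\rho$- and $\chi_{\rho'}$-tight, and $D \cap D' \seq N[w]$.

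Next I would split into the two possible locations for $u'$: since $u' \in \overline{S_{\rho'}}$ precludes $u' \in S \cup B \cup \range(\rho')$, the only possibilities are $u' \in \overline{S_\rho}$ and $u' = v^-$. In the case $u' = v^-$, I would use property~(\ref{gc:not-link}) together with $w \in N(v^-) \cap D$ to conclude $v^- \notin N(u)$. Examining the component $C^*$ of $G[D']$ containing $v^+$ (which lies in $D'$ because $\chi_{\rho'}^{-1}(c) = \{v^+\}$), Lemma~\ref{lem:pivot-properties}(\ref{aux:component-of-GD}) would provide a central vertex $v^* \in N(u') \cap C^* = N(v^-) \cap C^*$ with $C^* \seq N[v^*]$. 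Property~(\ref{gc:connect}) combined with $v^* \in D' \seq S_{\rho'} = (S_\rho \setminus \{v^-\}) \cup \{v^+\}$ would force $v^* \in D \cup \{v^+\}$, and property~(\ref{gc:not-link}) would rule out $v^* = u$. In each remaining sub-case (either $v^* = v^+$ or $v^* \in D \cap N(v^-) \seq N[w]$), a short cycle among $\{w, v^-, v^+, v^*\}$ outside $S$ would contradict Observation~\ref{obs:short-cycles-are-in-S}.

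In the case $u' \in \overline{S_\rho}$, both $u$ and $u'$ would be distinct non-$S_\rho$ neighbors of $w$. Here the hypothesis "$u$ is not a $(u', D')$-link" plays a crucial role: in particular, it rules out the scenario $v^+ = u$ (the left case of Figure~\ref{fig:different-pivot}), since that would place $u$ into $D'$ adjacent to $u'$ with neighbors in $D'$. Considering again the component of $G[D']$ containing $v^+$ with its central vertex $v^* \in N(u')$, I would use property~(\ref{gc:connect}) to localize $v^*$ near $D$, and property~(\ref{gc:D*}) to concentrate $D$ around $w$ and force $v^- \in N[w]$ once I verify $|D \setminus N[w]| \le 2$ (using $D \cap D' \seq N[w]$ and a bound on $|D \setminus D'|$); a final short-cycle argument using Observation~\ref{obs:short-cycles-are-in-S} would close the case. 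I anticipate the main obstacle to be this case $u' \in \overline{S_\rho}$: rigorously establishing the bound $|D \setminus N[w]| \le 2$ needed to invoke property~(\ref{gc:D*}), and then pinpointing the exact forbidden short cycle, will require careful tracking of the positions of $v^+$, $v^-$, $v^*$, $u$, $u'$, and $w$. The three properties of Lemma~\ref{lem:good-color} are designed precisely to rule out the three scenarios of Figure~\ref{fig:different-pivot} one-by-one, and I expect the proof to mirror that structure.
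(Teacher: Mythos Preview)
Your setup matches the paper's exactly: you correctly argue that $u'$ is not a $(u,D)$-link (via property~(\ref{gc:not-link}) when $u'=v^-$) and then invoke Lemma~\ref{lem:two-pivots}(\ref{aux:w-exists}) to obtain the common link $w$. From that point on, however, the paper organizes the argument differently, and the difference is what lets it close.

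The paper does not split on the location of $u'$; it splits on $q := |[k] \setminus \chi_\rho(N[w])|$. For each missing color $c_i$ it takes witnesses $x_i \in D$ and $x_i' \in D'$ of that color and observes that whenever $x_i' \ne v^+$, one of $x_i, x_i'$ must lie in $S$. Here Lemma~\ref{lem:two-pivots}(\ref{aux:unique-in-S})---which you never invoke---does the heavy lifting: once one such witness lands in $S \cap (D \symdif D')$, part~(\ref{aux:unique-in-S}) forces $|(D \cup D') \cap S| = 1$, simultaneously capping $q = 2$ and yielding $D \setminus N[w] = \{x_1, x_2\}$. That is exactly the bound you flagged as your main obstacle; your proposed route via a direct bound on $|D \setminus D'|$ has no evident handle.

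Your $u' = v^-$ branch also has a gap: the promised short cycle among $\{w, v^-, v^+, v^*\}$ need not exist. If $v^* = w$ you only obtain $w \in N(v^-)$ and $v^+ \in N(w)$, a path rather than a cycle; in that situation $u$ and $v^+$ are two distinct uncolored neighbors of $w$, so $q \ge 2$, and the paper disposes of this inside its $q \ge 2$ case via property~(\ref{gc:D*}). In the genuine $q = 1$, $u' = v^-$ situation the paper does not chase the component of $v^+$ at all: it takes the missing-color witness $x_1' \in (D' \cap S_\rho) \setminus D$ and applies property~(\ref{gc:connect}) directly to the length-$\le 2$ path from $v^-$ to $x_1'$ in $G[D' \cup \{u'\}]$.
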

\begin{proof}
Let $\rho, u, D, c, v^+, \rho', u'$, and $D'$ be as in the statement, let $v^- := \rho(c)$, and suppose for contradiction that $u \ne u'$.
If $u' \notin D$, then $u'$ is not a $(u, D)$-link, and if $u' \in D$, then $u' \in S_\rho$ and $u' = v^-$, which means that $u'$ is not a $(u, D)$-link by property~(\ref{gc:not-link}) from Lemma~\ref{lem:good-color}.
Hence, by Lemma~\ref{lem:two-pivots}(\ref{aux:w-exists}), there is a vertex $w \in D \cap D'$ such that $D \cap D' \seq N[w]$ and $w$ is a $(u, D)$-link, as well as a $(u', D')$-link.
Let $\{c_1, \ldots, c_q\} := [k] \setminus \chi_\rho(N[w])$.
Since $u \in N(w) \setminus S_\rho$ and $w$ is $\chi_\rho$-tight, we have $q \ge 1$.
Let us fix $i \in [q]$ and observe that there are vertices $x_i \in D \cap \chi_\rho^{-1}(c_i)$ and $x_i' \in D' \cap \chi_{\rho'}^{-1}(c_i)$ because $D$ is $\rho$-pivoted and $D'$ is $\rho'$-pivoted.
Since $x_i \notin N[w]$, we have $x_i \ne x_i'$.
Observe that if $x_i' \ne v^+$, then $\chi_\rho(x_i') = c_i$, and $x_i \in S$ or $x_i' \in S$.

First, suppose that $q = 1$.
If $u' \ne v^-$, then $u, u' \in N(w) \setminus S_\rho$, which is a contradiction since $w$ is $\chi_\rho$-tight and $|\chi_\rho(N[w])| = k-q = k-1$.
Hence, $u' = v^-$.
Observe that $c \in \chi_\rho(N[w])$, which implies $c \ne c_1$ and $x_1' \ne v^+$.
Hence, $\chi_\rho(x_1') = c_1$, $x_1' \in S_\rho$, and $x_1' \notin N[w]$ by definition of $c_1$.
Since $x_1' \in D'$, we have $x_1' \notin D$ because $D \cap D' \seq N[w]$.
Hence, either $x_1' \in N(u')$ or there is a vertex in $N(u') \cap \overline{D \cup \{u\}} \cap N(x_1')$, which is a contradiction with property~(\ref{gc:connect}) from Lemma~\ref{lem:good-color}.

Second, suppose that $q \ge 2$, which means that there is $i \in [q]$ such that $\chi_\rho(x_i') = c_i$ (because $x_i' \ne v^+$).
If $x_i' \in N[w] \cap S$, then we would have $c_i = \chi_{\rho'}(x_i') = \chi(x_i') = \chi_\rho(x_i') \in \chi_\rho(N[w])$, which is a contradiction.
Hence, there is a vertex $x \in \{x_i, x_i'\}$, such that $x \in S \setminus N[w]$.
Since $D \cap D' \seq N[w]$, we may use Lemma~\ref{lem:two-pivots}(\ref{aux:unique-in-S}) to deduce that $|(D \cup D') \cap S| = 1$.
This means that $q = 2$ because if $x_i, x_i' \notin S$, then $x_i' = v^+$.
Without loss of generality, $x_1 = v^-$, $x_1' = v^+$, and $x_2 \in S$ or $x_2' \in S$.
Let $D^* = N[w] \cap D$ and observe that $D \setminus D^* = \{x_1, x_2\}$ because $D \cap S \seq \{x_2\}$ and $\rho(d) \in N[w]$ for every color $d \in [b+1, k] \setminus \{c_1, c_2\}$.
Hence, by property~(\ref{gc:D*}) from Lemma~\ref{lem:good-color}, we have $v^- \in D^*$, which is a contradiction with $v^- = x_1$.
\end{proof}

Now we handle the case when $u$ \emph{is} a $(u', D')$-link, again using the color found in Lemma~\ref{lem:good-color}.
Note that this can happen only if $u$ is the ``new candidate'' (which is called $v^+$ in Lemma~\ref{lem:not-different-pivot-shift}) because $u \notin S_\rho$ and $u \in D' \seq S_{\rho'}$.
Informally, the conclusion of the following lemma states that there is a candidate $u^*$ outside of $D$.
This candidate will be useful because instead of ``shifting'' a color to $u$, we can ``swap'' the color of $u^*$ with a color in $\rho^{-1}(D)$, see Lemma~\ref{lem:not-different-pivot-swap}.

\begin{lemma}\label{lem:shift-to-u}
Let $\rho$ be a color plan, $u \in \overline{S_\rho}$ be a $\rho$-pivot, let $D\seq S_\rho$ be a set $\rho$-pivoted by $u$, let $c \in [p+1, k]$ be a color satisfying the properties~(\ref{gc:not-link}),~(\ref{gc:connect}), and~(\ref{gc:D*}) from Lemma~\ref{lem:good-color},
let $\rho' = \rho[c \mapsto u]$, let $u'$ be a $\rho'$-pivot, and let $D'$ be a set $\rho'$-pivoted by $u'$.
If $u$ is a $(u', D')$-link, then $\range(\rho) \nsubseteq D$.
\end{lemma}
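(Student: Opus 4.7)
The plan is to argue by contradiction: suppose $\range(\rho) \subseteq D$, so in particular $v^- := \rho(c) \in D$, and note that $v^- \ne u$ because $u \notin S_\rho \supseteq D$. Since $u$ is a $(u',D')$-link, Definition~\ref{def:links} places $u$ in $K^+$; together with $u \notin S \supseteq B$, this yields $u \in K$ and hence $\red_\chi(u) \ge 0$. The same hypothesis makes $u = \rho'(c)$ a $\chi_{\rho'}$-tight vertex, so $\red_{\chi_{\rho'}}(u) = 0$. The key observation to then exploit is that $\chi_\rho$ and $\chi_{\rho'}$ differ only at $u$ (undefined in $\chi_\rho$, equal to $c$ in $\chi_{\rho'}$) and at $v^-$ (equal to $c$ in $\chi_\rho$, undefined in $\chi_{\rho'}$, using $v^- \in K \subseteq \overline S$).

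The next step is to split into two cases according to whether $v^-$ is a neighbor of $u$. In the case $v^- \notin N[u]$, only the value at $u$ changes inside $N[u]$ when passing from $\chi_\rho$ to $\chi_{\rho'}$; because $c \in [p{+}1,k]$ and $\chi(S) \subseteq [p]$, color $c$ is absent from $\chi_\rho(N[u])$, so $|\chi_{\rho'}(N[u])| = |\chi_\rho(N[u])| + 1$, yielding $\red_{\chi_\rho}(u) = -1$. Combined with $\chi(N[u]) \subseteq \chi_\rho(N[u])$, and therefore $\red_\chi(u) \le \red_{\chi_\rho}(u)$, this contradicts $\red_\chi(u) \ge 0$.

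In the case $v^- \in N(u)$, we have $v^- \in D \cap N(u)$, and property~(\ref{gc:not-link}) of Lemma~\ref{lem:good-core} forbids $v^-$ from being a $(u,D)$-link, so $v^-$ must have no neighbor in $D$ (otherwise $v^- \in N(u) \cap N(D) \cap D$). Now color $c$ appears in $\chi_\rho(N[u])$ via $v^-$ and in $\chi_{\rho'}(N[u])$ via $u$, while the colorings agree on $N[u] \setminus \{u,v^-\}$; hence $\chi_\rho(N[u]) = \chi_{\rho'}(N[u])$ and $\red_{\chi_\rho}(u) = 0$. Squeezing $0 \le \red_\chi(u) \le \red_{\chi_\rho}(u) = 0$ forces $\chi(N[u]) = \chi_\rho(N[u])$, placing $c \in \chi(N[u]) \subseteq [p]$, which contradicts $c \in [p{+}1,k]$.

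The main obstacle is identifying the right invariant to track---namely $\red_{\chi_\rho}(u)$---and realizing that the link hypothesis gives the boundary condition $\red_{\chi_{\rho'}}(u) = 0$; property~(\ref{gc:not-link}) then bridges between the ``far'' case (where a redundancy unit is gained for free) and the ``near'' case (where the swap of color $c$ between $v^-$ and $u$ is a wash and forces $\chi(N[u]) = \chi_\rho(N[u])$, which is impossible for $c > p$).
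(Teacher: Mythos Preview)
Your redundancy bookkeeping is off in both cases, and the approach does not go through.

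\textbf{Case $v^- \notin N[u]$.} You correctly note $|\chi_{\rho'}(N[u])| = |\chi_\rho(N[u])| + 1$, but you forget that the \emph{domain} also changes: since $u \in S_{\rho'} \setminus S_\rho$ (and $v^- \notin N[u]$), we have $|N[u]\setminus S_{\rho'}| = |N[u]\setminus S_\rho| - 1$. The two changes cancel, so $\red_{\chi_\rho}(u) = \red_{\chi_{\rho'}}(u) = 0$, not $-1$.

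\textbf{The inequality $\red_\chi(u) \le \red_{\chi_\rho}(u)$.} This points the wrong way. Passing from $\chi$ to $\chi_\rho$ removes $|N[u]\cap\range(\rho)|$ from the first term and adds at most that many to the second, so in fact $\red_{\chi_\rho}(u) \le \red_\chi(u)$; see the proof of Lemma~\ref{lem:valid-realization}. Hence from $\red_{\chi_\rho}(u)=0$ you only recover $\red_\chi(u)\ge 0$, which you already knew.

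\textbf{Case $v^- \in N(u)$.} Even granting $\red_\chi(u) = \red_{\chi_\rho}(u) = 0$ (which you cannot squeeze, by the previous point), that equality says each vertex of $N[u]\cap\range(\rho)$ receives a \emph{new} color absent from $\chi(N[u])$; in particular $c = \chi_\rho(v^-) \notin \chi(N[u])$, the opposite of what you claim.

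More broadly, your argument is purely local at $u$ and never uses the hypothesis $\range(\rho)\subseteq D$ beyond placing $v^-$ in $D$, nor property~(\ref{gc:D*}). The paper's proof is genuinely structural: it first rules out $u'=v^-$ via properties~(\ref{gc:not-link}) and~(\ref{gc:connect}), then uses $\chi_{\rho'}$-tightness of $u$ to find at least two colors $c_1,c_2$ missing from $\chi_\rho(N(u))$, argues that their $D'$-representatives $x_i$ must lie in $S$ (here $\range(\rho)\subseteq D$ is essential), and derives a contradiction either from property~(\ref{gc:D*}) or from Lemma~\ref{lem:pivot-properties}(\ref{aux:link-in-S}) producing a link in $B\setminus D$.
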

\begin{proof}
Let $\rho, u, D, c, \rho', u'$, and $D'$ be as in the statement, and suppose for contradiction that $\range(\rho) \seq D$.
Let $v := \rho(c)$ and suppose for contradiction that $u' = v$.
By Lemma~\ref{lem:rho-pivot-properties}(\ref{aux:pivot-has-two-neighbors}), there is a vertex $w \in D' \cap N(u')$ such that $w \ne u$.
If $w \in D$, then $u'$ is a $(u, D)$-link, which is a contradiction with property~(\ref{gc:not-link}) from Lemma~\ref{lem:good-color}.
Hence, $w \in N(v) \cap \overline{D \cup \{u\}} \cap S_\rho$, which is a contradiction with property~(\ref{gc:connect}) from Lemma~\ref{lem:good-color}.
Hence, we obtain $u' \ne v$, which means that $u' \notin S_\rho$.

Let $\{c_1, \ldots, c_q\} := [k] \setminus \chi_{\rho}(N(u))$.
Since $u$ is $\chi_{\rho'}$-tight, we know that $u$ has degree $k-1$, which means that $q \ge 2$ because $u' \in N(u) \setminus S_\rho$.
For $i \in [q]$, let $x_i$ be a vertex in $D' \cap \chi_{\rho'}^{-1}(c_i)$; $x_i$ exists because $D'$ is $\rho'$-pivoted.
Let $i \in [q]$ be such that $x_i \ne u$ and observe that $\chi_\rho(x_i) = c_i$, which implies $x_i \notin N[u]$.
Hence, $x_i$ is in different connected component of $G[D']$ than $u$.
Since $u' \notin D$, we obtain $x_i \notin D$, which implies $x_i \in S$ because $\range(\rho) \seq D$.
Observe that now we have $D \cap S = \emptyset$; otherwise, there would be an outer path of length at most 3.

Let $X = \{x_i \sep i \in [q]\}$, and, without loss of generality, assume that $x_1 \ne u$.
If $|X \setminus \{u\}| = 1$, then $q = 2$, $u = x_2$, and since $c_2 \notin \chi_\rho(N(u))$, we have $v \notin N(u)$.
Since $D \cap S = \emptyset$, we obtain $D \setminus N(u) = \{\rho(c_1), \rho(c_2)\}$.
By property~(\ref{gc:D*}) from Lemma~\ref{lem:good-color}, we have $v \in N(u)$, which is a contradiction.
Hence, we may assume, without loss of generality, that $x_2 \ne u$.
Using the argument in the previous paragraph for $i \in [2]$, we obtain $x_1, x_2 \in S \setminus D$.
By Lemma~\ref{lem:pivot-properties}(\ref{aux:link-in-S}), there is a vertex $y \in D' \cap S$ that is a $(u', D')$-link.
By Definition~\ref{def:links}, we have $y \in K^+$, which implies $y \in B$.
Since $y \in B \setminus D$, we have $\chi(y) \notin \chi_\rho(D)$, which is a contradiction since $D$ is $\rho$-pivoted.
\end{proof}

\subsubsection{The case when $u$ is not a $(\chi, D)$-pivot}\label{subsub:not-pivot}

In this section, we handle the case when $u$ is not a $(\chi, D)$-pivot.
Informally, it suffices to swap two colors in $\rho$.
Let us remark that a different pivot cannot emerge by Lemma~\ref{lem:not-different-pivot-swap}.

\begin{lemma}\label{lem:not-pivot}
Let $\rho$ be a damage-free color realization of a valid color plan $\pi$, let $u$ be a $\rho$-pivot, and let $D \seq S_\rho$ be a maximal set $\rho$-pivoted by $u$.
If $u$ is not a $(\chi, D)$-pivot, then a damage-free and pivot-free color realization $\rho'$ of a valid color plan can be computed in polynomial time.
Moreover, if $\rho$ is almost $\ell$-safe for some $\ell \in \bb N$, then $\rho'$ is almost $(\ell+2)$-safe.
\end{lemma}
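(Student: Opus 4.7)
The plan is to construct $\rho'$ by a single swap of two colors. This immediately gives the required safeness bound via Observation~\ref{obs:safety}, and Lemma~\ref{lem:not-different-pivot-swap} ensures that any hypothetical $\rho'$-pivot must coincide with $u$, so the task reduces to showing that $u$ itself is no longer a pivot, that $\rho'$ is damage-free, and that $\rho'$ realizes a valid color plan.

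First, extract a structural witness: since $u$ is a $(\chi_\rho, D)$-pivot but not a $(\chi, D)$-pivot, some $(u, D)$-link $w$ is $\chi_\rho$-tight but not $\chi$-tight. Replaying the redundancy computation in the proof of Lemma~\ref{lem:valid-realization}(\ref{vraux:candidate}) at $w$ produces a color $c \in [b+1, p]$ together with $x := \rho(c) \in N(w)$ and $y \in N(w) \cap S$ such that $\chi(y) = c$. The main technical step is to find $c^* \in [p+1, k]$ with $v^- := \rho(c^*) \notin N[w]$. By maximality of $D$, both $x, y \in D$ (using $w$ as a $(\chi_\rho, u, x)$- and $(\chi_\rho, u, y)$-link), and they carry the same color $c$. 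In the case $w \in \kout$, the identity $|N[w]| = k+1$ (which follows from $\red_\chi(w) = 1$) combined with $u \in N(w) \setminus D$ gives $|D| \le k$; if additionally $G[D]$ were connected, then $D \seq N[w] \setminus \{u\}$ would force $|D| = k$ and $\chi_\rho|_D$ bijective, contradicting $\chi_\rho(x) = \chi_\rho(y)$. Hence $G[D]$ is disconnected, and by Lemma~\ref{lem:pivot-properties}(\ref{aux:component-of-GD}) any component $C \neq C_w$ of $G[D]$ satisfies $C \cap N[w] = \emptyset$; any $c^* \in [p+1, k]$ with $\rho(c^*) \in C$ then works. A parallel argument, leveraging the outer-path length constraint of Definition~\ref{def:fen-core} to pin down $N[w]$ around a $B$-vertex, handles the case $w \in B$.

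Define $\rho' := \rho[c \mapsto v^-, c^* \mapsto x]$ and $\pi' := \pi[c \mapsto t]$, where $t = *$ if $v^- \in \kstar$ and $t$ is the unique $S$-neighbor of $v^-$ otherwise. Because $\range(\rho') = \range(\rho)$, Lemma~\ref{lem:damaged}(\ref{daux:same-ranges}) localizes any new damage to the change at $c$, and a damaged $(\chi, y, v^-)$-link in $\range(\rho)$ would yield a short $y$-to-$v^-$ path, contradicting the outer-path bound in Definition~\ref{def:fen-core}. Validity of $\pi'$ then follows from Lemma~\ref{lem:is-valid-plan}(\ref{vpaux:mapsto-*}) or (\ref{vpaux:mapsto-u}); in the latter case the side condition $c \notin \chi(N[u'])$ is verifiable from the fen-core, because placing $v^-$ in a component of $G[D]$ different from $C_w$ forces $y \notin N[u']$, and uniqueness of $\chi^{-1}(c) \cap N[u']$ then follows from outer-path constraints. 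For pivot-freeness, suppose toward contradiction that $u$ is a $\rho'$-pivot with maximal $\rho'$-pivoted set $D'$. Since $v^- \notin N[w]$, the color $c^*$ is a fresh color in $\chi_{\rho'}(N[w])$, introduced via $x$, so $\red_{\chi_{\rho'}}(w) = 1$ and $w$ is not $\chi_{\rho'}$-tight. Nevertheless, $x$ is the unique vertex carrying color $c^*$ under $\chi_{\rho'}$, so $x \in D'$; as $ux \notin E(G)$ by Observation~\ref{obs:short-cycles-are-in-S}, the pivot condition demands a common neighbor of $u$ and $x$ in $D'$, which by the same observation must be $w$. Thus $w \in D'$, making $w$ a $(u, D')$-link that fails to be $\chi_{\rho'}$-tight --- contradiction.

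The principal obstacle is Step~2, establishing that a suitable $c^*$ exists. When $w \in \kout$, the pigeonhole/bijectivity argument above closes the case cleanly; when $w \in B$, the cardinality of $N[w]$ is no longer tightly pinned down, so the disconnection of $G[D]$ must be extracted through a finer analysis exploiting the fen-core bounds on $S$-outer paths and the degree structure around $B$-vertices.
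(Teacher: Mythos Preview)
Your overall strategy---swap two colors and invoke Observation~\ref{obs:safety} and Lemma~\ref{lem:not-different-pivot-swap}---matches the paper's, and your pivot-freeness argument (showing the link $w$ loses tightness) is essentially the paper's argument. The difference is in how the swap partner is found, and here your proof has a genuine gap.

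You insist on a color $c^* \in [p+1,k]$ with $\rho(c^*) \notin N[w]$. Your argument that $G[D]$ is disconnected is fine, but disconnection only yields a vertex $z \in D \setminus N[w]$; it does \emph{not} guarantee that $\chi_\rho(z) > p$. In fact, it is entirely possible that $\rho([p+1,k]) \subseteq N[w]$: for each $c' > p$ the vertex $\rho(c')$ is the unique vertex of color $c'$, so if $c' \in \chi_\rho(N[w])$ then $\rho(c') \in N[w]$, and nothing prevents $[p+1,k] \subseteq \chi_\rho(N[w])$ (this only forces $|N[w] \setminus S_\rho| \le p$, which is compatible with $|N[w]| = k+1$). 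In that situation every vertex of $D \setminus N[w]$ carries a color in $[p]$, and no $c^*$ of the kind you want exists. Your case $w \in B$ is only sketched and faces the same obstruction.

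The paper avoids this by dropping the restriction $c^* > p$. It takes any color $d$ missing from $\chi_\rho(N[w])$, which exists by Lemma~\ref{lem:rho-pivot-properties}(\ref{aux:missing-color}), and then uses Lemma~\ref{lem:pivot-properties}(\ref{aux:contains-all-S-influenced}) (the $S$-influenced vertices lie in a single component) to show the vertex of color $d$ in $D$ is not in $N[S]$, hence equals $\rho(d) \notin N[w]$. The price is that $d$ may lie in $[b+1,p]$, so the validity and damage-freeness arguments must treat that case; the paper does this via Lemma~\ref{lem:is-valid-plan}(\ref{vpaux:mapsto-u}) and a short $S$-influence analysis. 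This is exactly the step your restriction to $c^* > p$ was trying to shortcut, but the shortcut does not go through.
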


\begin{proof}
Let $\rho, \pi, u$, and $D$ be as in the statement.
Since $u$ is a $(\chi_\rho, D)$-pivot but not a $(\chi, D)$-pivot, there must be a $(u, D)$-link $v \in D$ that is $\chi_\rho$-tight but not $\chi$-tight.
This means that there is a color $c \in [b+1, p]$ such that $w := \rho(c) \in N[v]$ and $c \in \chi(N[v])$; let $w' \in N[v] \cap \chi^{-1}(c)$.
By Lemma~\ref{lem:valid-realization}(\ref{vraux:proper}), $\chi_\rho$ is proper, which implies $v \notin \{w, w'\}$.
Let $d \in [k]$ be a color such that $d \notin \chi_\rho(N[v])$; such a color exists by Lemma~\ref{lem:rho-pivot-properties}(\ref{aux:missing-color}).
Let $x \in D$ be such that $\chi_\rho(x) = d$.
If $x \in N[S]$, then $x \in N[v]$ by Lemma~\ref{lem:pivot-properties}(\ref{aux:contains-all-S-influenced}) because $v$ and $x$ are both $S$-influenced, which contradicts the choice of $d$.
Hence, $x = \rho(d) \notin N[v]$.
Let us define $\rho' = \rho[c \mapsto x, d \mapsto w]$, see Figure~\ref{fig:not-pivot}.

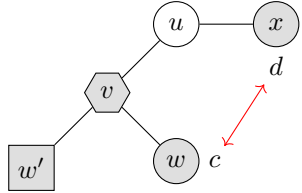
\begin{figure}[h]
\begin{tikzpicture}
\tikzmath{ \dist = 20;}
\begin{scope}[every node/.style={draw, circle, minimum width=17pt, inner sep=2pt, fill=gray!25!white}]
\node[fill= white] (u) {$u$};
\node [below left =  \dist pt of u, regular polygon,
    regular polygon sides=6] (v) {$v$};
\node [below left =  \dist pt of v, rectangle, minimum height=17pt] (w') {$w'$};
\node [below right =  \dist pt of v] (w) {$w$};
\node [right =  \dist pt of u] (x) {$x$};
\draw (x)--(u)--(v)--(w');
\draw (v)--(w);
\end{scope}
\node[right= 0pt of w] (c) {$c$};
\node[below= 0pt of x] (d) {$d$};
\draw[red, <->] (c)--(d);
\end{tikzpicture}
\centering
\caption{An illustration of the proof of Lemma~\ref{lem:not-pivot}. The gray vertices are in $S_\rho$ (or, equivalently, in $S_{\rho'}$). Note that $w' \in S$, that $w, x \notin S$, and that $v \in S$ and $v \notin S$ are both possible. The modification leading to $\rho'$ is depicted by the double-arrow.}
\label{fig:not-pivot}
\end{figure}

First, suppose that $y \in \range(\rho') \cap N(S)$ is damaged in $\rho'$, i.e., there is a color $c' \in [b+1, p]$ such that $c' \in \chi(N(y))$ and $z := \rho'(c') \in N(y)$.
By Lemma~\ref{lem:damaged}(\ref{daux:same-ranges}), we have $z \in \{w, x\}$ and $c' \in \{c, d\}$.
Suppose that $y \in D$.
Since $y$ is $S$-influenced, we have $y \in N[v]$ by Lemma~\ref{lem:pivot-properties}(\ref{aux:contains-all-S-influenced}). 
If $z = x$, then $x$ and $v$ would be in the same component of $G[D]$ and we would have $x \in N[v]$ by Lemma~\ref{lem:pivot-properties}(\ref{aux:component-of-GD}), which is not the case.
Hence, we have $z = w$ and $c' = d$, which implies that $y = v$ by Lemma~\ref{lem:pivot-properties}(\ref{aux:unique-neighbor}) because $w \notin S$.
However, now we have a contradiction with $d \notin \chi(N[v]) \seq \chi_\rho(N[v])$.
On the other hand, if $y \notin D$, then by Lemma~\ref{lem:pivot-properties}(\ref{aux:outer-S}), $D \cap S = \emptyset$, which is a contradiction since $w' \in D \cap S$ (by maximality of $D$).
We have shown that $\rho'$ is damage-free.

Second, let us show that $\rho'$ is a color realization of a valid color plan.
Recall that $x \notin N(S)$.
If $v \notin S$ or $d \in [p+1, k]$, then $\rho'$ realizes $\pi$ by Lemma~\ref{lem:is-valid-plan}(\ref{vpaux:trivial}), and if $v \in S$ and $d \in [b+1, p]$, then $\rho'$ realizes a valid color plan $\pi[d \mapsto v]$ by Lemma~\ref{lem:is-valid-plan}(\ref{vpaux:mapsto-u}). 

Third, suppose that there is a $\rho'$-pivot $u' \in \overline{S_{\rho'}}$ and let $D' \seq S_{\rho'}$ be a set $\rho'$-pivoted by $u'$.
By Lemma~\ref{lem:not-different-pivot-swap}, we have $u' = u$.
Let $y \in D'$ be the vertex such that $\chi_{\rho'}(y) = d$.
If $y \in S$, then by Lemma~\ref{lem:pivot-properties}(\ref{aux:contains-all-S-influenced}), we have $y \in N[v]$ (and $y \ne v$ because $\chi_{\rho'}(v) = \chi_\rho(v) \ne d$).
On the other hand, if $y \notin S$, then $y = \rho'(d) = w \in N(v)$.
In both cases, we have $y \in N(v)$.
However, $v$ is not $\chi_{\rho'}$-tight by construction of $\rho'$, and $v \in N(u) \cap N(y)$, which implies $y \notin D'$, a contradiction.

We have shown that $\rho'$ is damage-free and pivot-free color realization of a valid color plan.
Moreover, if $\rho$ is almost $\ell$-safe for some $\ell \in \bb N$, then $\rho'$ is almost $(\ell+2)$-safe by Observation~\ref{obs:safety}.
\end{proof}

\subsubsection{Operations preserving damage-freeness}\label{subsub:operations}

It will be useful to have a candidate for a color in $[p+1, k]$ that is \emph{not} $S$-influenced at our disposal.
The following lemma says that such a vertex can be found by slightly modifying $\rho$ (without creating a damaged vertex).

\begin{lemma}\label{lem:first-swap}
Let $\rho$ be a damage-free color realization of a valid color plan $\pi$, let $u$ be a $\rho$-pivot, and let $D \seq S_\rho$ be a maximal set $\rho$-pivoted by $u$.
If $u$ is a $(\chi, D)$-pivot, then we can, in polynomial time, compute a damage-free color realization $\rho'$ of a valid color plan such that $D$ is $\rho'$-pivoted by $u$ and there is a color $c \in [p+1, k]$ such that that $\rho'(c)$ is not $S$-influenced.
Moreover, if $\rho$ is $\ell$-safe, then $\rho'$ is $(\ell+2)$-safe, and if $\rho$ is almost $\ell$-safe, the $\rho'$ is almost $(\ell+2)$-safe.
\end{lemma}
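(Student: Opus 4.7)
My plan is a case analysis. If some $c \in [p+1,k]$ already satisfies that $\rho(c)$ is not $S$-influenced, I set $\rho' := \rho$ and the conclusion is immediate: $D$-pivotedness and safety are inherited, and the $(\ell+2)$-bound is trivial. So the interesting case is when every $\rho(c)$ with $c \in [p+1,k]$ is $S$-influenced. By Lemma~\ref{lem:pivot-properties}(\ref{aux:contains-all-S-influenced}) all $k-p$ of these vertices lie in the unique component $C$ of $G[D]$ containing every $S$-influenced vertex, and by Lemma~\ref{lem:pivot-properties}(\ref{aux:component-of-GD}) we have $C \seq N[v_C]$ for the unique $v_C \in N(u) \cap C$.

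In this hard case, I construct $\rho'$ by a single swap. I first locate a vertex $w \in D$ that is not $S$-influenced (discussed as the main obstacle below), set $d := \rho^{-1}(w)$, pick any $c \in [p+1,k]$, and define $\rho' := \rho[c\mapsto w,\, d\mapsto\rho(c)]$. Our case assumption that every $\rho(c')$ with $c' > p$ is $S$-influenced, together with $w$ being not $S$-influenced, forces $d \in [b+1,p]$, and $w \notin N[S]$ forces $w \in K^*$, so $\pi(d) = *$. Both $w$ and $\rho(c)$ lie in $D$, so $\rho^{-1}(D) = (\rho')^{-1}(D)$; Lemma~\ref{lem:rho-pivot-properties}(\ref{aux:rho'-pivoted}) then gives that $D$ is $\rho'$-pivoted by $u$ once $\rho'$ is shown to be a damage-free realisation of a valid color plan. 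Damage-freeness follows the pattern of Lemma~\ref{lem:not-pivot}: by Lemma~\ref{lem:damaged}(\ref{daux:same-ranges}) any new damaged vertex lies in $N(S)$ and is adjacent to $\rho(c)$ via the swapped color $d$, and combining this with the structural properties of $v_C$ from Lemma~\ref{lem:pivot-properties} yields a subgraph that violates the length-$7$ $S$-outer-path bound in Definition~\ref{def:fen-core}. Validity of the plan follows from Lemma~\ref{lem:is-valid-plan}(\ref{vpaux:mapsto-*}) if $\rho(c) \in K^*$ and from Lemma~\ref{lem:is-valid-plan}(\ref{vpaux:mapsto-u}) if $\rho(c) \in N(v)$ for some $v \in S$. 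Finally, since $\rho$ and $\rho'$ differ on exactly two colors, Observation~\ref{obs:safety} upgrades $\ell$-safety to $(\ell+2)$-safety (and preserves the ``almost'' qualifier).

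The main obstacle is producing a non-$S$-influenced $w \in D$. When $G[D]$ is disconnected, any second component $C'$ has a centre $v_{C'} \in N(u) \cap K$ that is not $S$-influenced (every $S$-influenced vertex lies in $C$) and, using $D \setminus S \seq \range(\rho)$, lies in $\range(\rho)$, so I take $w := v_{C'}$. If $G[D]$ is connected, so $D = C \seq N[v_C]$, I rely on a finer case analysis on whether $v_C$ itself is $S$-influenced: if not, $v_C$ plays the role of $w$; otherwise the length-$7$ bound on $S$-outer paths from Definition~\ref{def:fen-core}, combined with the fact that $\{\rho(c) : c \in [p+1,k]\}$ consists of $k-p$ distinct candidate vertices in $N[v_C]$, forces the existence of a vertex of $N(v_C) \setminus N[S]$ whose only neighbour in $N(S)$ is $u$; that vertex is non-$S$-influenced and serves as $w$. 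In all sub-cases the case hypothesis forces $\rho^{-1}(w) \in [b+1,p]$, closing the argument.
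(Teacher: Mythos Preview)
Your overall strategy --- swap a colour in $[p+1,k]$ whose image is $S$-influenced with a colour in $[b+1,p]$ whose image is not --- is exactly the paper's. The gap is in \emph{which} colours you swap. You pick the non-$S$-influenced vertex $w$ first (as $v_{C'}$, the centre of some other component), let $d := \rho^{-1}(w)$, and take $c \in [p+1,k]$ arbitrary. The paper does it the other way round: it first picks the small colour so that it lies outside $\chi_\rho(N[v_C])$ (such a colour exists by Lemma~\ref{lem:rho-pivot-properties}(\ref{aux:missing-color}) because $v_C$ is a $(u,D)$-link), and only then sets the non-$S$-influenced vertex to be $\rho$ of that colour. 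It also insists that the $S$-influenced vertex being moved is \emph{not} $v_C$. Both constraints are used, and your construction guarantees neither.

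Here is why this matters. Suppose $v_C \notin S$ (so $v_C \in \range(\rho)$), $v_C$ has an $S$-neighbour $s$ with $\chi(s) = d$, and every $\rho(c')$ with $c' \in [p+1,k]$ lies in $N(v_C) \cap K^*$ (these vertices are $S$-influenced via $v_C \in N[S]$, so the ``hard case'' hypothesis is met). Since $u$ is a $(\chi,D)$-pivot, the link $v_C$ is $\chi$-tight. After your swap, $\rho'(d) = \rho(c) \in N(v_C) \cap K^*$, and $d \in \chi(N(v_C))$ via $s$; hence $v_C$ is damaged in $\rho'$. Your appeal to ``the pattern of Lemma~\ref{lem:not-pivot}'' does not cover this: in that lemma the $x \in D$ case is killed by $d \notin \chi_\rho(N[v])$, a property you never secured for your $d$. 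The same missing constraint also breaks your validity argument in the sub-case $\rho(c) \in N(S)$: Lemma~\ref{lem:is-valid-plan}(\ref{vpaux:mapsto-u}) requires $d \notin \chi(N[s])$, which you do not check. (Incidentally, your ``$G[D]$ connected'' sub-case is vacuous: since $v_C$ is a $(u,D)$-link, Lemma~\ref{lem:rho-pivot-properties}(\ref{aux:missing-color}) gives a colour missing from $\chi_\rho(N[v_C]) \supseteq \chi_\rho(C)$, so $D \ne C$.) The fix is precisely the paper's: choose the small colour from $[k] \setminus \chi_\rho(N[v_C])$ first, and pick the big colour so that its image avoids $v_C$.
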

\begin{proof}
Let $\rho, \pi$, $u$, and $D$ be as in the statement, and let $C$ be a connected component of $G[D]$ containing all $S$-influenced vertices; $C$ exists by Lemma~\ref{lem:pivot-properties}(\ref{aux:contains-all-S-influenced}).
We may assume that $\rho(c)$ is $S$-influenced for each $c \in [p+1, k]$; otherwise we may set $\rho' := \rho$.
Let $v_C$ be the unique vertex in $C \cap N(u)$.
By Lemma~\ref{lem:rho-pivot-properties}(\ref{aux:missing-color}), there is a color $c \in [k]$ such that $c \notin \chi_\rho(N[v_C]) = \chi_\rho(C)$.
Let $v \in D$ be such that $\chi_\rho(v) = c$; $v$ exists because $D$ is $\rho$-pivoted.
Since $v \notin C$, we know that $c \le p$ and $v$ is not $S$-influenced, which implies $v \notin S$, $c > b$, and $\rho(c) = v$.
Let $d \in [p+1, k]$ be such that $w := \rho(d) \ne v_C$ and let us define $\rho' = \rho[d \mapsto v, c \mapsto w]$.

First, suppose that $x \in \range(\rho')$ is damaged in $\rho'$, i.e., there is a color $c' \in [b+1, p]$ such that $c' \in \chi(N(x))$ and $y := \rho'(c') \in N(x)$.
By Lemma~\ref{lem:damaged}(\ref{daux:same-ranges}), we have $y \in \{v, w\}$, and since $c' \le p$, we have $y = w$ and $c' = c$.
If $x \notin D$, then $x \in N(w) \cap N(S) \cap \overline{D^+} \cap \overline{S}$ and by Lemma~\ref{lem:pivot-properties}(\ref{aux:outer-S}), $w$ is the only $S$-influenced vertex in $D$, which is a contradiction since $k \ge p+2$.
Hence, we have $x \in D$. 
Since $w \notin S$ and $x \in N(w)$, we may use Lemma~\ref{lem:pivot-properties}(\ref{aux:unique-neighbor}) to obtain $x = v_C$.
However, $c' = c \notin \chi_\rho(N[v_C])$, which contradicts $c' \in \chi(N(x))$.
We have shown that $\rho'$ is damage-free.

Second, let us show that $\rho'$ realizes a valid color plan.
By Lemma~\ref{lem:is-valid-plan}(\ref{vpaux:trivial}), if $w \notin N(S)$, then $\rho'$ realizes $\pi$.
Hence, suppose that $x$ is a vertex in $S \cap N(w)$, and let $\pi' = \pi[c \mapsto x]$.
Observe that $\pi'$ is a color plan realized by $\rho'$.
If $x \notin D$, then by Lemma~\ref{lem:pivot-properties}(\ref{aux:outer-S}), all $S$-influenced vertices are in $N[w]$, which is a contradiction with $k \ge p+3$ (here we use $w \ne v_C$).
Hence, $x \in D$, which implies $x = v_C$.
Since $c \notin \chi(N[v_C])$, $\pi'$ is valid by Lemma~\ref{lem:is-valid-plan}(\ref{vpaux:mapsto-u}).

Third, observe that $D$ is $\rho'$-pivoted by $u$ by Lemma~\ref{lem:rho-pivot-properties}(\ref{aux:rho'-pivoted}).
Fourth, observe that $\rho'$ can be efficiently computed.
Finally, the ``moreover'' part of the statement follows by Observation~\ref{obs:safety}.
\end{proof}

Now we describe when a ``swap'' of two colors does not create a damaged vertex.

\begin{lemma}\label{lem:pre-swap}
Let $\rho$ be a damage-free color realization of a valid color plan $\pi$, let $u$ be a $\rho$-pivot, let $D \seq S_\rho$ be a set $\rho$-pivoted by $u$, and let $c_1, c_2 \in [b+1, k]$ be colors such that for $i \in [2]$, $v_i := \rho(c_i)$ and if $N(v_i) \cap N[S] \cap S_\rho \ne \emptyset$, then $c_{3-i} > p$.
If $\rho' = \rho[c_1 \mapsto v_2, c_2 \mapsto v_1]$, then $\rho'$ is a damage-free color realization of a valid color plan.
Moreover, if $u$ is a $(\chi, D)$-pivot and $v_1, v_2 \in D$, then $D$ is $\rho'$-pivoted by $u$.
\end{lemma}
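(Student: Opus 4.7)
The plan is to verify the three pieces of the conclusion in order---damage-freeness of $\rho'$, realization of a valid color plan, and the ``moreover'' clause---by leaning on the following contrapositive of the hypothesis: if $c_{3-i} \le p$, then $N(v_i) \cap N[S] \cap S_\rho = \emptyset$, which in particular forces $v_i \in \kstar$ (since $S \seq N[S] \cap S_\rho$).

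For damage-freeness I would argue by contradiction. Suppose some $w \in \range(\rho') = \range(\rho)$ is damaged in $\rho'$ via a color $c \in [b+1,p]$ and a witness $y \in \chi^{-1}(c) \seq S$, so that $w$ is a $\chi$-tight common neighbor of $y$ and $z := \rho'(c) \in \kstar$. Since $\rho$ is damage-free, Lemma~\ref{lem:damaged}(\ref{daux:same-ranges}) forces $\rho(c) \ne z$, hence $c \in \{c_1, c_2\}$; without loss of generality $c = c_1$, and so $\rho'(c_1) = v_2 \in \kstar$. But then $w$ itself witnesses $N(v_2) \cap N[S] \cap S_\rho \ne \emptyset$, because $w \in \range(\rho) \seq S_\rho$ and $w \in N(y) \seq N[S]$. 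The hypothesis on $v_2$ now forces $c_1 > p$, contradicting $c_1 \le p$.

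For the plan I would branch on how many of $c_1, c_2$ lie in $[b+1, p]$. If both exceed $p$, then $\rho'$ and $\rho$ agree on $[b+1, p]$, so $\rho'$ realizes $\pi$. If both are in $[b+1, p]$, the contrapositive gives $v_1, v_2 \in \kstar$, so $\pi(c_1) = \pi(c_2) = *$ and the swap again preserves the realized plan. In the mixed case, say $c_1 \le p < c_2$, the contrapositive applied to $v_2$ yields $v_2 \in \kstar$; thus $\rho'$ realizes $\pi' := \pi[c_1 \mapsto *]$, and combining the damage-freeness just established with Lemma~\ref{lem:is-valid-plan}(\ref{vpaux:mapsto-*}) shows that $\pi'$ is valid. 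For the ``moreover'' clause, if $u$ is a $(\chi, D)$-pivot and $v_1, v_2 \in D$, then $c_1, c_2 \in \rho^{-1}(D) \cap (\rho')^{-1}(D)$ and $\rho$ and $\rho'$ agree on all other colors, so $\rho^{-1}(D) = (\rho')^{-1}(D)$ and Lemma~\ref{lem:rho-pivot-properties}(\ref{aux:rho'-pivoted}) concludes.

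I expect the damage-freeness step to be the main obstacle, because it is the only place where one must simultaneously track ``$w \in S_\rho$'' and ``$w \in N[S]$'' in order to invoke the somewhat asymmetric hypothesis $N(v_i) \cap N[S] \cap S_\rho \ne \emptyset \Rightarrow c_{3-i} > p$; once that is in hand, the plan-validity branch reduces to a near-mechanical case analysis on the contrapositive, and the ``moreover'' part is essentially a one-line application of a prior lemma.
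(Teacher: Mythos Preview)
Your proof is correct and follows essentially the same approach as the paper's. The only cosmetic difference is that the paper organizes the plan-validity case analysis by whether $v_1, v_2 \in N(S)$, whereas you case on whether $c_1, c_2 \le p$; both routes lead to the same invocations of Lemma~\ref{lem:is-valid-plan}(\ref{vpaux:trivial}) and~(\ref{vpaux:mapsto-*}), and the damage-freeness and ``moreover'' arguments are identical.
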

\begin{proof}
Let $\rho, \pi, u, D, c_1, c_2, v_1, v_2$, and $\rho'$ be as in the statement.
First, suppose that $w \in \range(\rho')$ is damaged in $\rho'$, i.e., there is a color $d \in [b+1, p]$ such that $d \in \chi(N(w))$ and $x := \rho'(d) \in N(w)$.
By Lemma~\ref{lem:damaged}(\ref{daux:same-ranges}), we have $x = v_i$ for some $i \in [2]$.
Since $w \in N(v_i) \cap N(S) \cap \range(\rho') \seq N(v_i) \cap N[S] \cap S_\rho$,
we have $d = c_{3-i} > p$, which is a contradiction.
Hence, $\rho'$ is damage-free.

Second, observe that if $v_1, v_2 \notin N(S)$, then $\rho'$ realizes $\pi$ by Lemma~\ref{lem:is-valid-plan}(\ref{vpaux:trivial}).
If $v_1, v_2 \in N(S)$, then $c_1, c_2 \in [p+1, k]$, and $\rho'$ again realizes $\pi$ by Lemma~\ref{lem:is-valid-plan}(\ref{vpaux:trivial}).
Finally, if, without loss of generality, $v_1 \in N(S)$ and $v_2 \notin N(S)$, then $c_2 > p$, and $\rho'$ realizes a valid color plan $\pi[c_1 \mapsto *]$ by Lemma~\ref{lem:is-valid-plan}(\ref{vpaux:mapsto-*}).

The ``moreover'' part of the statement follows by Lemma~\ref{lem:rho-pivot-properties}(\ref{aux:rho'-pivoted}).
\end{proof}

\subsubsection{Preventing $u$ from being a pivot}\label{subsub:preventing-u}

Now we are ready to show how $\rho$ can be transformed into $\rho'$ so that the $\rho$-pivot $u$ is not a $\rho'$-pivot. 
We start with the easiest case, in which some candidate far from $u$ is actually used by $\rho$.

\begin{lemma}\label{lem:KnseqQ-swap}
Let $\rho$ be a color realization of a valid color plan, let $u$ be a $\rho$-pivot, let $D \seq S_\rho$ be a maximal set $\rho$-pivoted by $u$, and let $Q \seq S \cup K$ be the maximal set such that $u$ is a $(\chi, Q)$-pivot.
If $\rho$ is almost $\ell$-safe, $u$ is a $(\chi, D)$-pivot, and $\range(\rho) \nsubseteq Q$,
then an almost $(\ell+4)$-safe, damage-free, and pivot-free color realization of a valid color plan can be computed in polynomial time.
\end{lemma}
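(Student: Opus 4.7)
The plan is to produce $\rho'$ by two successive swaps on top of $\rho$. The first swap, delivered by Lemma~\ref{lem:first-swap}, only prepares a ``safely isolated'' candidate $v^\star$ for some color $c^\star \in [p+1,k]$; the second swap then exchanges $v^\star$ with a candidate in $K\setminus Q$ carrying a $[b+1,p]$-color. I expect the main obstacle to be pinning down that the bad color we wish to move lies in $[b+1,p]$ rather than $[p+1,k]$, since this is exactly what makes Lemma~\ref{lem:pre-swap} applicable to the final swap.

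First, I would identify a color $d\in[b+1,p]$ with $v:=\rho(d)\notin Q$. Since $\range(\rho)\nsubseteq Q$, some color $c\in[b+1,k]$ satisfies $\rho(c)\notin Q$; to show $c\in[b+1,p]$, suppose for contradiction $c\in[p+1,k]$, so that $\rho(c)\in K\setminus Q$, and apply Lemma~\ref{lem:u-is-not-rho'-pivot} in the degenerate form $\rho[c\mapsto \rho(c)]=\rho$ to obtain that $u$ is not a $\rho$-pivot, contradicting the hypothesis. Next, I would invoke Lemma~\ref{lem:first-swap} to obtain a damage-free, almost $(\ell+2)$-safe realization $\rho_0$ of a valid color plan with $D$ still $\rho_0$-pivoted by $u$ and with a color $c^\star\in[p+1,k]$ such that $v^\star:=\rho_0(c^\star)$ is not $S$-influenced. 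Inspecting that lemma's proof, its swap pairs a $[b+1,p]$-color with $\rho$-image in $D$ against a $[p+1,k]$-color, so the color $d$ (whose $\rho$-image $v$ sits outside $D\seq Q$) is untouched; in particular $\rho_0(d)=v$.

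I would then set $\rho':=\rho_0[d\mapsto v^\star,\,c^\star\mapsto v]$ and verify the four required properties. Lemma~\ref{lem:pre-swap} with $(c_1,c_2)=(d,c^\star)$ certifies damage-freeness and validity of the underlying color plan: the hypothesis for $i=1$ is immediate from $c^\star>p$, and for $i=2$ it follows from $N[v^\star]\cap N[S]\seq\{u\}$ (since $v^\star$ is not $S$-influenced) together with $u\notin S_{\rho_0}$. Pivot-freeness comes in two halves: Lemma~\ref{lem:u-is-not-rho'-pivot} in its swap form (with $\rho_0^{-1}(v)=d$) yields that $u$ is not a $\rho'$-pivot, and Lemma~\ref{lem:not-different-pivot-swap} excludes any other $\rho'$-pivot since $|\rho_0\setminus\rho'|=2$ and $\range(\rho_0)=\range(\rho')$. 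Finally, Observation~\ref{obs:safety} upgrades the almost $(\ell+2)$-safeness of $\rho_0$ to almost $(\ell+4)$-safeness of $\rho'$, and every step is polynomial-time.
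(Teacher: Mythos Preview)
Your proposal is correct and takes essentially the same approach as the paper: apply Lemma~\ref{lem:first-swap}, then swap the resulting non-$S$-influenced candidate against the vertex outside $Q$, and verify the outcome via Lemmas~\ref{lem:pre-swap}, \ref{lem:u-is-not-rho'-pivot}, \ref{lem:not-different-pivot-swap}, and Observation~\ref{obs:safety}. You add detail that the paper leaves implicit (that the outside-$Q$ color must lie in $[b+1,p]$ and is untouched by Lemma~\ref{lem:first-swap}); the former admits a one-line argument without invoking Lemma~\ref{lem:u-is-not-rho'-pivot}: for any $c\in[p+1,k]$ we have $\rho(c)\in D\subseteq Q$, since $\chi_\rho(D)=[k]$ forces the unique vertex of color $c$ into $D$.
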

\begin{proof}
Let $\rho, u, D$ and $Q$ be as in the statement.
Let $v \in \range(\rho) \setminus Q$ and let $c = \rho^{-1}(v)$.
By Lemma~\ref{lem:first-swap}, there is a damage-free and almost $(\ell+2)$-safe color realization $\rho^*$ of a valid color plan such that $D$ is $\rho^*$-pivoted by $u$ and there is a color $d \in [p+1, k]$ such that that $w := \rho^*(d)$ is not $S$-influenced (and $\rho^*$ can be computed in polynomial time).
Now we define $\rho' := \rho^*[c \mapsto w, d \mapsto v]$.
By Lemma~\ref{lem:pre-swap}, $\rho'$ is a damage-free color realization of a valid color plan, and by Observation~\ref{obs:safety}, $\rho'$ is almost $(\ell+4)$-safe.
Suppose that $u'$ is a $\rho'$-pivot. By Lemma~\ref{lem:not-different-pivot-swap}, we know that $u = u'$, and by Lemma~\ref{lem:u-is-not-rho'-pivot}, $u \ne u'$; a contradiction.
Hence, $\rho'$ is the desired color realization, and it can be computed in polynomial time.
\end{proof}

In the following lemma, we have two vertices $v_1, v_2 \in \range(\rho)$ such that $uv_1, v_1v_2 \in E(G)$ and a candidate $w \in K$. We show how the color of $v_1$ can be ``shifted'' to $w$, which will be used in the proof of Lemma~\ref{lem:KseqQ}.

\begin{lemma}\label{lem:most-technical}
Let $\rho$ be a damage-free and $4$-safe color realization of a valid color plan $\pi$, let $u$ be a $\rho$-pivot, let $D$ be a set $\rho$-pivoted by $u$, let $Q \seq S \cup K$ be a set such that $u$ is a $(\chi, Q)$-pivot, and let $c_1, c_2 \in [b+1, k]$ be colors such that $v_1 := \rho(c_1) \in D \cap N(u)$ and $v_2 := \rho(c_2) \in D \cap N(v_1)$.
If $D \seq Q$, $K \seq Q \cup \{u\}$ and $w \in K \setminus \range(\rho)$ is a vertex such that $w \ne u$, then there is a damage-free and almost $7$-safe color realization $\rho'$ of a valid color plan such that $v_1 \notin \range(\rho')$, $\rho'(c_2) = v_2$, and $u \notin \range(\rho')$.
Moreover, $\rho'$ can be computed in polynomial time.
\end{lemma}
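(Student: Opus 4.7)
The plan is to construct $\rho'$ from $\rho$ via at most one auxiliary swap followed by one shift to $w$, choosing everything so that damage-freeness, plan validity, and the safeness bound all hold.

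\emph{Step 1 (reduction to a shift of a color in $[p+1,k]$).} If $c_1 \in [p+1,k]$, no preliminary operation is needed and we set $\rho_1 := \rho$ and $c^{\star} := c_1$. Otherwise $c_1 \in [b+1,p]$, and we pick a color $c_3 \in [p+1,k] \setminus \{c_2\}$ for which the swap $c_1 \leftrightarrow c_3$ is compatible with Lemma~\ref{lem:pre-swap}, i.e.\ $N(\rho(c_3)) \cap N[S] \cap S_\rho = \emptyset$. Such a $c_3$ exists because $k \ge 3p+18$ is much larger than the number of candidates near $S$, which is bounded by a function of $p$ via the fen-core structure (Definition~\ref{def:fen-core}). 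The resulting $\rho_1$ is a damage-free color realization of a valid color plan, and by Lemma~\ref{lem:rho-pivot-properties}(\ref{aux:rho'-pivoted}), $D$ is still $\rho_1$-pivoted by $u$. Set $c^{\star} := c_3$, so that $\rho_1(c^{\star}) = v_1$ with $c^{\star} \in [p+1,k]$.

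\emph{Step 2 (the shift).} Define $\rho' := \rho_1[c^{\star} \mapsto w]$. The three range conditions are immediate: $v_1 \notin \range(\rho')$ by injectivity of $\rho_1$ and $w \ne v_1$; $\rho'(c_2) = \rho_1(c_2) = v_2$, because $c_2 \notin \{c_1, c_3\}$ (we chose $c_3 \ne c_2$, and $c_1 \ne c_2$ follows from $v_1 \ne v_2$ and injectivity); and $u \notin \range(\rho')$ because $u \notin \range(\rho_1)$ (as $u$ is a pivot for $\rho_1$, hence $u \in \overline{S_{\rho_1}}$) and $w \ne u$. Since $c^{\star} \notin [b+1,p]$, the restriction $\rho' \upharpoonright [b+1,p]$ agrees with $\rho_1 \upharpoonright [b+1,p]$, so Lemma~\ref{lem:is-valid-plan}(\ref{vpaux:trivial}) gives that $\rho'$ realizes the same valid color plan as $\rho_1$.

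\emph{Step 3 (damage-freeness and safeness).} Damage in $\rho'$ is analyzed through Lemma~\ref{lem:damaged}(\ref{daux:shift}): any damaged vertex must coincide with $w$ or with $\rho'$ of the damaging color. Since $c^{\star} \notin [b+1,p]$, the only possibility is that $w$ itself is damaged by some color $c'' \in [b+1,p]$, i.e.\ $\rho_1(c'') \in N(w)$ and $c'' \in \chi(N(w))$. We exclude this using the fen-core constraints (Observation~\ref{obs:short-cycles-are-in-S}): $w \in K$ has at most one neighbor in $N[S]$, sharply limiting the $(S,\chi)$-configurations around $w$; combined with the choice of $c_3$ in Step~1, any potentially damaging color $c''$ is either already present in $\rho$ (contradicting damage-freeness of $\rho$) or creates a short cycle/outer path not contained in $S$. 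For safeness, Observation~\ref{obs:safety} contributes $+2$ from the swap and Lemma~\ref{lem:get-safety} contributes $+1$ with the ``almost'' weakening from the shift, so $\rho'$ is almost $(4{+}2{+}1) = 7$-safe. All operations are computable in polynomial time.

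\emph{Main obstacle.} The delicate part is ensuring that $w$ is not damaged in $\rho'$, since $w$ is an essentially arbitrary vertex of $K \setminus \range(\rho)$. The resolution combines two ingredients: the fen-core structural bounds on $w$'s neighborhood in $N[S]$, and the role of $v_2 \in N(v_1) \cap D$, which guarantees an alternative $\rho_1$-pivoted configuration to fall back on if a naive choice of $c_3$ would create damage at $w$. This is also the step that forces the weaker ``almost $7$-safe'' bound rather than a plain integer safeness, since Lemma~\ref{lem:get-safety} inherently introduces the ``almost'' qualifier in exchange for allowing a shift.
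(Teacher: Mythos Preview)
Your proposal has a genuine gap in two places, and the overall strategy diverges from the paper's in a way that breaks.

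\textbf{Step 1 is not justified.} You claim there is $c_3 \in [p+1,k]\setminus\{c_2\}$ with $N(\rho(c_3))\cap N[S]\cap S_\rho=\emptyset$, arguing that ``the number of candidates near $S$'' is bounded in $p$. That is false: it can happen that every $\rho(c')$ with $c'\in[p+1,k]$ lies in $N(v_C)$ for a single vertex $v_C\in D\cap N(u)\cap N(S)$, in which case $v_C\in N(\rho(c'))\cap N[S]\cap S_\rho$ for all such $c'$. This is precisely why Lemma~\ref{lem:first-swap} exists and why the paper's proof does \emph{not} split on whether $c_1>p$.

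\textbf{Step 3's damage argument is wrong.} You write that if $w$ is damaged via some $c''\in[b+1,p]$ with $\rho'(c'')=\rho(c'')\in N(w)$, then this configuration was ``already present in $\rho$, contradicting damage-freeness of $\rho$''. But $w\notin\range(\rho)$, so $w$ being damaged relative to $\rho$ is vacuous; damage-freeness of $\rho$ says nothing here. The actual exclusion requires the hypothesis $K\subseteq Q\cup\{u\}$ (which you never use): one shows $y':=\rho'(c'')\in Q$, hence $y'$ and $w$ lie in the same component of $G[Q]$, and then a short outer-path/cycle argument via Observation~\ref{obs:short-cycles-are-in-S} gives the contradiction. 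Your sketch gestures at ``short cycle/outer path'' but supplies none of this.

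\textbf{How the paper proceeds instead.} The paper splits on whether $w$ is $S$-influenced. If not, the single shift $\rho[c_1\mapsto w]$ already works (even when $c_1\le p$, the plan becomes $\pi[c_1\mapsto *]$ via Lemma~\ref{lem:is-valid-plan}(\ref{vpaux:mapsto-*})). If $w$ is $S$-influenced, the paper first locates $x\in N(w)\cap N[S]$ and, using $4$-safeness and Lemma~\ref{lem:pivot-properties}(\ref{aux:outer-S}), shows $x\in D\cap N(u)$. It then picks the swap color $c$ with the \emph{tailored} constraint $c\notin\chi_\rho(N[x])$; this choice is exactly what kills the damage case $w=y'$ and what makes the resulting plan $\pi_1[c\mapsto x]$ valid via Lemma~\ref{lem:is-valid-plan}(\ref{vpaux:mapsto-u}). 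Your $c_3$ is chosen independently of $w$, so you lose both of these.
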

\begin{proof}
Let $\rho, \pi, u, D, Q, c_1, c_2, v_1, v_2$, and $w$ be as in the statement.
If $w$ is not $S$-influenced, then we set $\rho' := \rho[c_1 \mapsto w]$.
By Lemma~\ref{lem:get-safety}, $\rho'$ is almost $5$-safe.
Suppose that $x \in \range(\rho')$ is damaged in $\rho'$, i.e., $x \in N(S)$ and there is a vertex $y \in N(x) \cap \range(\rho')$.
By Lemma~\ref{lem:damaged}(\ref{daux:shift}), $w \in \{x, y\}$, which is a contradiction because $w$ is not $S$-influenced.
Hence, $\rho'$ is damage-free.
If $c_1 > p$ or $v_1 \notin N(S)$, then $\rho'$ realizes $\pi$ by Lemma~\ref{lem:is-valid-plan}(\ref{vpaux:trivial}), and otherwise, $\rho'$ realizes $\pi[c_1 \mapsto *]$, which is a valid color plan by Lemma~\ref{lem:is-valid-plan}(\ref{vpaux:mapsto-*}).

\begin{figure}[h]
\begin{tikzpicture}
\tikzmath{ \dist = 20;}
\begin{scope}[every node/.style={draw, circle, inner sep=2pt, minimum width=17pt, fill=gray!20!white}]
\node[fill=white] (u) {$u$};
\node[below= \dist pt of u] (z) {$z$};
\node[left= \dist pt of u] (v1) {$v_1$};
\node[below= \dist pt of v1] (v2) {$v_2$};
\node[below right= \dist pt of u, rectangle, minimum height=17pt] (x) {$x$};
\node[below right= \dist pt of x] (y) {$y$};
\node[below= \dist pt of x, fill=white] (w) {$w$};
\end{scope}
\draw (z)--(u)--(v1)--(v2);
\draw (u)--(x)--(y)--(x)--(w);
\node[below=0pt of z, inner sep =2pt] (c) {$c$};
\node[below right=0pt of v1, inner sep =1pt] (c1) {$c_1$};
\draw[->, red] (c1)-- (z);
\draw[->, red] (c)-- (w);
\end{tikzpicture}
\centering
\caption{A depiction of the case when $w$ is $S$-influenced in the proof of Lemma~\ref{lem:most-technical}. The gray vertices are in $S_\rho$ and the only vertex in $S$ is $x$. The transformation leading to $\rho'$ is represented by the red arrows. Note that instead of the depicted situation, it may happen that $x \notin S$ and $x \in N(S)$.}
\label{fig:most-technical}
\end{figure}
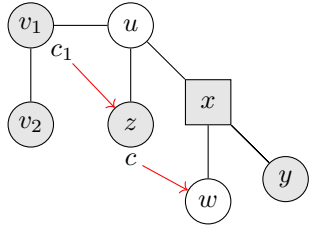

Suppose that $w$ is $S$-influenced and let $x$ be the vertex in $N(w) \cap N[S]$ such that $x \ne u$.
Since $w$ is $4$-safe and $k \ge p+5$, there is a vertex $y \in \rho([p+1, k])$ such that $\dist(y, S) \le \dist(w, S) \le 2$.
In other words, $N(y) \cap N[S] \ne \emptyset$.
If $x \notin D \seq Q$, then by Lemma~\ref{lem:pivot-properties}(\ref{aux:outer-S}) applied to $Q$, $N(y) \cap N[S] = \{w\}$, which is a contradiction with $\dist(y, S) \le \dist(w, S)$.
Hence, we may suppose that $x \in D$.
If $x \notin N(u)$, then $w \in N(x) \cap N(u)$ by Lemma~\ref{lem:pivot-properties}(\ref{aux:unique-neighbor}) applied to $Q$ because $w \notin S$, which is a contradiction since $w \notin D$ and $x \in D$.
Hence, $x \in N(u)$, see Figure~\ref{fig:most-technical}.
Since $u, w \in N(x) \setminus S_\rho$ and $x$ is $\chi_\rho$-tight, there is a color $c \in [b+1, k]$ such that $c \ne c_2$ and $c \notin \chi_\rho(N[x])$; let $z = \rho(c)$.
By Lemma~\ref{lem:pivot-properties}(\ref{aux:contains-all-S-influenced}), $z$ and $v_1$ are not $S$-influenced because they are in different connected components of $G[D]$ than $x$.
Hence, $\rho_1 := \rho[c \mapsto v_1, c_1 \mapsto z]$ is a damage-free color realization of a valid color plan $\pi_1$ by Lemma~\ref{lem:pre-swap}, and by Observation~\ref{obs:safety}, $\rho_1$ is $6$-safe.

Now we are ready to define $\rho' := \rho_1[c \mapsto w]$.
By Lemma~\ref{lem:get-safety}, $\rho'$ is almost $7$-safe.
Suppose that $x' \in \range(\rho')$ is damaged in $\rho'$, i.e., there is a color $d \in [b+1, p]$ such that $x' \in N(\chi^{-1}(d))$ and $y' := \rho'(d) \in N(x')$.
By Lemma~\ref{lem:damaged}(\ref{daux:shift}), $w \in \{x', y'\}$.
If $w = y'$, then $d =c$, $x' = x$, and we obtain a contradiction with the choice of $c$.
Hence, suppose that $w = x'$, which means that $x \in S$ (by choice of $x$).
Observe that $y' \notin Q \cup \{u\}$, which is a contradiction with $\range(\rho') \seq K \seq Q \cup \{u\}$.
Hence, $\rho'$ is damage-free.
If $c > p$ or $w \notin N(S)$, then $\rho'$ realizes $\pi_1$ by Lemma~\ref{lem:is-valid-plan}(\ref{vpaux:trivial}), and otherwise, $\rho'$ realizes $\pi_1[c \mapsto x]$, which is a valid color plan by choice of $c$ and Lemma~\ref{lem:is-valid-plan}(\ref{vpaux:mapsto-u}).

In both cases ($w$ is $S$-influenced or not), we have $v_1 \notin \range(\rho')$, $\rho'(c_2) = v_2$, and $u \notin \range(\rho')$, as desired.
Now we are finished because the proof easily translates into a polynomial time algorithm.
\end{proof}

The following lemma handles the case when all candidates are close to the $\rho$-pivot $u$ (see the assumption $K \seq Q^+$ in the statement of the lemma).
Notice that $v^- \ne v^+$ (because $v^- \in N(v^+)$) but $w \in \{v^-, v^+\}$ is allowed.
However, $w$ is used only when $v^-, v^+ \in \range(\rho)$, in which case $w \notin \{v^-, v^+\}$ since $w \notin \range(\rho)$.
We have made this comment because Lemma~\ref{lem:KnseqQ-shift} will use Lemma~\ref{lem:KseqQ} with $w = v^-$. 

\begin{lemma}\label{lem:KseqQ}
Let $\rho$ be a color realization of a valid color plan, let $u$ be a $\rho$-pivot, let $D \seq S_\rho$ be a maximal set $\rho$-pivoted by $u$, and let $Q \seq S \cup K$ be the maximal set such that $u$ is a $(\chi, Q)$-pivot, and let $Q^+ = Q \cup \{u\}$.
Let $v^-, v^+ \in Q \setminus S$ be two vertices such that $v^- \in N(u) \cap N(v^+)$, and let $w \in K \setminus \range(\rho)$ be a vertex such that $w \ne u$.
If $\rho$ is $0$-safe, $u$ is a $(\chi, D)$-pivot, and $K \seq Q^+$, then then an almost $11$-safe, damage-free, and pivot-free color realization of a valid color plan can be computed in polynomial time.
\end{lemma}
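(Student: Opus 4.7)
The plan is to modify $\rho$ to produce a color realization $\rho'$ in which some color $c \in [p+1, k]$ is assigned uniquely to $v^+$ while $v^-$ is absent from $\range(\rho')$. The key observation is that, by Observation~\ref{obs:short-cycles-are-in-S}, $v^-$ is the unique vertex in $N(u) \cap N(v^+)$ (any other common neighbor would close a $4$-cycle outside $S$, using $u, v^+ \notin S$), and moreover $v^+ \notin N(u)$ (else $u, v^-, v^+$ would form a triangle outside $S$). Since $v^- \notin S$ and $B \seq S$, the condition $v^- \notin \range(\rho')$ will give $v^- \notin S_{\rho'}$. Hence if $u$ were a $\rho'$-pivot with a pivoted set $D'$ containing $v^+$ (forced by $c > p$ and $c \in \chi_{\rho'}(D')$), there would need to be a $(u, D')$-link in $N[v^+] \cap D'$; but $v^+ \notin N(u)$ and the unique remaining candidate $v^-$ is not in $S_{\rho'}$---a contradiction.

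First, I would perform at most two preparatory swaps via Lemma~\ref{lem:first-swap} and Lemma~\ref{lem:pre-swap} to arrange $v^+, v^- \in \range(\rho)$ with $c_2 := \rho^{-1}(v^+) \in [p+1, k]$; the degenerate case in which $v^- \notin \range(\rho)$ is easier, as $v^-$ is already harmless and a single shift of a color in $[p+1, k]$ to $v^+$ via Lemma~\ref{lem:get-safety} suffices. Once the setup holds, maximality of $D$ gives $v^-, v^+ \in D$, and I would invoke Lemma~\ref{lem:most-technical} with $c_1 := \rho^{-1}(v^-)$, $c_2 := \rho^{-1}(v^+)$, $v_1 := v^-$, $v_2 := v^+$, and the given $w$, obtaining an almost $7$-safe, damage-free color realization $\rho_1$ of a valid color plan such that $v^- \notin \range(\rho_1)$, $\rho_1(c_2) = v^+$, and $u \notin \range(\rho_1)$.

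If $\rho_1$ is already pivot-free, I set $\rho' := \rho_1$. Otherwise, any $\rho_1$-pivot $u'$ must differ from $u$ by the observation above, and I would invoke Lemma~\ref{lem:nsubseteqQ2} with the same $v^-, v^+$ and $c := c_2$: the hypotheses $D \seq Q$ (from $u$ being a $(\chi, D)$-pivot and $Q$ maximal), $K \seq Q^+$, and $u \notin D'$ (since $u \notin \range(\rho_1) \cup S \cup B$) yield $\range(\rho_1) \nsubseteq Q'$, where $Q'$ is the maximal set for $u'$. Depending on whether $u'$ is a $(\chi, D')$-pivot or only a $(\chi_{\rho_1}, D')$-pivot, I would then apply Lemma~\ref{lem:KnseqQ-swap} or Lemma~\ref{lem:not-pivot} to $\rho_1$ and $u'$ to obtain the final pivot-free color realization $\rho'$.

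Safeness is tracked as follows: starting from $0$-safe $\rho$, the preparatory swaps add at most $4$ (Observation~\ref{obs:safety}), yielding a $4$-safe input to Lemma~\ref{lem:most-technical}, which outputs an almost $7$-safe $\rho_1$; the subsequent application of Lemma~\ref{lem:KnseqQ-swap} (if needed) adds at most $4$ more, totaling almost $11$-safe as required. Damage-freeness and validity of the color plan are preserved by each invoked lemma. The main obstacle will be the delicate preparatory step (handling the various sub-cases of where $v^-, v^+$ initially lie with respect to $\range(\rho)$ without breaking damage-freeness or the validity of $\pi$) together with the careful verification that, in the second phase, the same $v^-, v^+$ still satisfy the precise hypotheses of Lemma~\ref{lem:nsubseteqQ2} with respect to the new pivot $u'$---in particular, that $v^- \in (N(u) \cap K) \setminus \range(\rho_1)$ and $v^+ = \rho_1(c_2) \in N(v^-)$ remain valid witnesses after the transformation.
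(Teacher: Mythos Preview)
Your proposal is correct and follows essentially the same approach as the paper's proof. One minor imprecision: swaps preserve $\range(\rho)$, so you cannot literally ``arrange $v^+, v^- \in \range(\rho)$'' by swapping---the paper handles $v^+ \notin \range(\rho_1)$ by a direct shift $\rho_2[c \mapsto v^+]$ (after optionally swapping $v^-$ with the non-$S$-influenced vertex from Lemma~\ref{lem:first-swap}), which is exactly the sub-case you already flag as the main obstacle.
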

\begin{proof}
Let $\rho, u, D, Q, v^-$, and $v^+$ be as in the statement.
Since $u$ is a $(\chi, D)$-pivot, we have $D \seq Q$ by maximality of $Q$.
By Lemma~\ref{lem:first-swap}, there is a damage-free and $2$-safe color realization $\rho_1$ of a valid color plan such that $D$ is $\rho_1$-pivoted by $u$ and there is a color $c \in [p+1, k]$ such that that $x := \rho_1(c)$ is not $S$-influenced (and $\rho_1$ can be computed in polynomial time).

First, suppose that $v^+ \notin \range(\rho_1)$.
If $v^- = \rho(d)$ for some $d \in [b+1, k]$, then we set $\rho_2 := \rho_1[c \mapsto v^-, d \mapsto x]$, and otherwise we set $\rho_2 := \rho_1$.
By Lemma~\ref{lem:pre-swap}, $\rho_2$ is a damage-free color realization of a valid color plan $\pi^*$, and by Observation~\ref{obs:safety}, $\rho_2$ is $4$-safe.
Let $\rho_3 := \rho_2[c \mapsto v^+]$; observe that $\rho_3$ realizes $\pi^*$ by Lemma~\ref{lem:is-valid-plan}(\ref{vpaux:trivial}).
Suppose that $x \in \range(\rho_3)$ is damaged in $\rho_3$, which means that there is a vertex $y \in N(x) \cap \range(\rho_3)$.
By Lemma~\ref{lem:damaged}(\ref{daux:rhoc=rho'c}), we have $x = v^+$.
Since $\range(\rho_3) \seq K \seq Q^+$, we have $y \in Q^+$, and since $N(v^+) \cap Q^+ = \{v^-\}$, we obtain $y = v^-$, which is a contradiction with $v^- \notin \range(\rho_3)$.
Hence, $\rho_3$ is damage-free, and by Lemma~\ref{lem:get-safety}, $\rho_3$ is almost $5$-safe.

Second, suppose that $v^+ = \rho(d^+)$ for some $d^+ \in [b+1, k]$.
Let $\rho_2 = \rho_1[c \mapsto v^+, d^+ \mapsto x]$.
By Lemma~\ref{lem:pre-swap}, $\rho_2$ is a damage-free color realization of a valid color plan, and $D$ is $\rho_2$-pivoted by $u$,
and by Observation~\ref{obs:safety}, $\rho_2$ is $4$-safe.
If $v^- \notin \range(\rho_2)$, we set $\rho_3 := \rho_2$.
Suppose that $v^- \in \range(\rho_2)$.
Recall that $w \in K \setminus \range(\rho)$ is a vertex such that $w \ne u$, which means that we may define $\rho_3$ to be the color realization obtained by applying Lemma~\ref{lem:most-technical} to $\rho_2$.

In all cases, $\rho_3$ is a damage-free and almost $7$-safe color realization of a valid color plan such that $v^- \notin \range(\rho_3)$, $\rho_3(c) = v^+$, and $u \notin \range(\rho_3)$.
Suppose that $u'$ is a $\rho_3$-pivot, and let $D'$ be a maximal set $\rho_3$-pivoted by $u'$, and let $Q' \seq S \cup K$ be the maximal set such that $u'$ is a $(\chi, Q')$-pivot.
Observe that $u \notin D'$.
If $u'$ is not a $(\chi, D')$-pivot, then we can compute the desired (almost $9$-safe) color realization $\rho'$ in polynomial time by applying Lemma~\ref{lem:not-pivot} to $\rho_3$.
Otherwise, we have $\range(\rho_3) \nsubseteq Q'$ by Lemma~\ref{lem:nsubseteqQ2}, which means that we can use Lemma~\ref{lem:KnseqQ-swap} to compute the desired (almost $11$-safe) color realization $\rho'$ in polynomial time.
\end{proof}

The last remaining case is when some candidate is far from $u$ ($K \nsubseteq Q$) but all candidates chosen by $\rho$ are close ($\range(\rho) \seq Q$): recall that the other cases are handled by Lemmas~\ref{lem:KnseqQ-swap} and~\ref{lem:KseqQ}.
Note that in one case, the following lemma actually uses Lemma~\ref{lem:KseqQ}.

\begin{lemma}\label{lem:KnseqQ-shift}
Let $\rho, u, D$, and $Q$ be as in Lemma~\ref{lem:KseqQ}.
If $\rho$ is $0$-safe, $u$ is a $(\chi, D)$-pivot, $K \nsubseteq Q$, and $\range(\rho) \seq Q$,
then an almost $11$-safe, damage-free, and pivot-free color realization $\rho'$ of a valid color plan can be computed in polynomial time.
\end{lemma}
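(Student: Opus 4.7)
My plan is to produce the desired $\rho'$ by shifting a color $c \in [p+1, k]$ to some $v^+ \in K \setminus Q$ (which is nonempty by the hypothesis $K \nsubseteq Q$), choosing $c$ via Lemma~\ref{lem:good-color}. Since $\range(\rho) \seq Q$, automatically $v^+ \notin \range(\rho)$, so a shift $\rho' := \rho[c \mapsto v^+]$ is well-defined. I will split into two cases: Case A, in which $v^+$ can be chosen with $v^+ \ne u$, and Case B, in which $u \in K$ and $K \setminus Q = \{u\}$ (so $K \seq Q \cup \{u\}$).

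In Case A, I would set $\rho' := \rho[c \mapsto v^+]$ and verify the required properties directly. Validity of the color plan is immediate from Lemma~\ref{lem:is-valid-plan}(\ref{vpaux:trivial}) since $c > p$ leaves $\rho$ and $\rho'$ in agreement on $[b+1, p]$; safeness follows from Lemma~\ref{lem:get-safety}, yielding almost $1$-safeness. For pivot-freeness, Lemma~\ref{lem:u-is-not-rho'-pivot} rules out $u$ itself; any other $\rho'$-pivot $u'$ would by Lemma~\ref{lem:not-different-pivot-shift} force $u$ to be a $(u', D')$-link, hence $u \in \range(\rho') \seq Q \cup \{v^+\}$ — but $u \notin Q$ (as a pivot lies outside $Q$) and $u \ne v^+$, a contradiction. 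Damage-freeness will be the delicate step: the only vertex whose damage status can change is $v^+$, and if it were damaged via some $c' \in [b+1, p]$ with $\rho(c') \in K^*$ as witness, then $\rho(c') \in \range(\rho) \seq Q$ together with maximality of $Q$ forces $v^+ \in Q$ whenever $\rho(c')$ is $\chi$-tight and adjacent to $u$, a contradiction; the residual configurations would be eliminated by a preliminary Lemma~\ref{lem:first-swap} and a targeted swap of $c'$ (with validity preserved via Lemma~\ref{lem:is-valid-plan}) before performing the shift.

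In Case B, the hypothesis $K \seq Q \cup \{u\}$ puts us directly in the regime of Lemma~\ref{lem:KseqQ}, which I would invoke after locating its required parameters. Using Lemma~\ref{lem:rho-pivot-properties}(\ref{aux:pivot-has-two-neighbors}) combined with Lemma~\ref{lem:pivot-properties}(\ref{aux:component-of-GD}), I will find $v^-, v^+ \in Q \setminus S$ with $v^- \in N(u) \cap N(v^+)$. A vertex $w \in K \setminus \range(\rho)$ with $w \ne u$ exists unless $K = \range(\rho) \cup \{u\}$; in that degenerate sub-case I instead shift $c$ to $u$ itself via $\rho' := \rho[c \mapsto u]$, apply Lemma~\ref{lem:shift-to-u} to any resulting $\rho'$-pivot (yielding $\range(\rho) \nsubseteq D$), then Lemma~\ref{lem:nsubseteqQ2} to conclude $\range(\rho') \nsubseteq Q'$, and finally invoke Lemma~\ref{lem:KnseqQ-swap}. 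The main obstacle throughout is the damage analysis in Case A when $\rho(c') \in K^*$ is non-tight or not adjacent to $u$: here the maximality of $Q$ does not directly apply and the preprocessing sketched above is needed, while Observation~\ref{obs:safety} and Lemma~\ref{lem:get-safety} keep the cumulative safeness comfortably within $11$ and each invoked lemma is polynomial-time, giving the claimed running time.
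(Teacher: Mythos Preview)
Your overall case split (A: some $v^+\in K\setminus Q$ with $v^+\neq u$ exists; B: $K\subseteq Q\cup\{u\}$) is reasonable and your pivot-freeness argument in Case~A is correct. However, there are genuine gaps in three places.

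\textbf{Damage-freeness in Case A.} Your sketch (``maximality of $Q$ forces $v^+\in Q$ whenever $\rho(c')$ is $\chi$-tight and adjacent to $u$; the residual configurations would be eliminated by a preliminary Lemma~\ref{lem:first-swap} and a targeted swap'') is not a proof. There is no reason the witness $\rho(c')$ should be $\chi$-tight or adjacent to $u$, and the preprocessing you allude to would alter $\rho$ in ways that invalidate the rest of the argument (and you never say what the ``targeted swap'' is). The paper's proof uses the hypothesis you never touch: $0$-safeness. If $v^+$ is damaged then $v^+\in N(S)$, and since $v^+\in K\setminus\range(\rho)$ is $0$-safe, \emph{every} $\rho(c'')$ with $c''\in[p+1,k]$ lies in $N(S)\cap Q$ and is therefore $S$-influenced. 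One then gets a contradiction via Lemma~\ref{lem:pivot-properties}(\ref{aux:outer-S}) (when $v^+\neq u$) or~(\ref{aux:contains-all-S-influenced}) (when $v^+=u$). Without $0$-safeness your damage argument does not close.

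\textbf{Finding $v^-,v^+$ in Case B.} Lemma~\ref{lem:rho-pivot-properties}(\ref{aux:pivot-has-two-neighbors}) and Lemma~\ref{lem:pivot-properties}(\ref{aux:component-of-GD}) only give you two vertices of $D$ adjacent to $u$, in different components of $G[D]$; they do \emph{not} produce $v^-,v^+\in Q\setminus S$ with $v^-\in N(u)\cap N(v^+)$. The paper obtains these by taking $v\in\range(\rho)\setminus D$ (available precisely when $\range(\rho)\nsubseteq D$), then showing the unique $x\in Q\cap N(u)\cap N(v)$ satisfies $x\notin S_\rho$ and hence $x\in K\setminus S$; one sets $v^+:=v$, $v^-:=x$, $w:=x$. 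You need this construction or an equivalent one.

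\textbf{The degenerate sub-case of Case B.} Your chain Lemma~\ref{lem:shift-to-u} $\to$ Lemma~\ref{lem:nsubseteqQ2} $\to$ Lemma~\ref{lem:KnseqQ-swap} breaks: having derived that $u$ is a $(u',D')$-link, you have $u\in D'$, but Lemma~\ref{lem:nsubseteqQ2} requires $u\notin D'$. The correct use of Lemma~\ref{lem:shift-to-u} is the contrapositive. In your degenerate sub-case $K=\range(\rho)\cup\{u\}$ one first shows $\range(\rho)\subseteq D$ (indeed, the paper's Case~1 construction applied to any $v\in\range(\rho)\setminus D$ would produce a vertex in $K\setminus(\range(\rho)\cup\{u\})$, contradicting the degenerate hypothesis). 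Then the contrapositive of Lemma~\ref{lem:shift-to-u} says $u$ is \emph{not} a $(u',D')$-link, which together with Lemma~\ref{lem:not-different-pivot-shift} and Lemma~\ref{lem:u-is-not-rho'-pivot} rules out any $\rho'$-pivot.
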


\begin{proof}
Let $\rho, u, D$, and $Q$ be as in the statement.
First, suppose that $K \seq Q \cup \{u\}$ and $\range(\rho) \nsubseteq D$.
Let $v$ be a vertex in $\range(\rho) \setminus D$.
By maximality of $D$, we have $v \notin N(u)$; let $x$ be the vertex in $Q \cap N(u) \cap N(v)$.
Suppose for contradiction that $x \in S_\rho$.
By maximality of $D$, we obtain $x \in D$.
Since $x$ is a $(u, Q)$-link, we obtain $x \in K^+$ by Definition~\ref{def:links}, which implies $x \in B_\rho$.
By Lemma~\ref{lem:valid-realization}(\ref{vraux:candidate}), $x$ is a $\chi_\rho$-candidate, which implies that $x$ is $\chi_\rho$-tight because $x$ is $\chi$-tight.
By maximality of $D$, we obtain $v \in D$, which is a contradiction.
Hence, $x \notin S_\rho$.
Since $x$ is $\chi$-tight, we have $x \in K$.
Now the desired color realization $\rho'$ can be computed by Lemma~\ref{lem:KseqQ} (using $v^+ := v$, $v^- := x$, and $w := x$).

Second, suppose that $\range(\rho) \seq D$ or $K \nsubseteq Q \cup \{u\}$.
Let $v \in K \setminus Q$ be such that, if possible, $v \ne u$.
Let $\rho' = \rho[c \mapsto v]$, where $c \in [p+1, k]$ is a color satisfying the properties stated in Lemma~\ref{lem:good-color}.
Suppose for contradiction that $v' \in \range(\rho')$ is damaged in $\rho'$, i.e., there is a color $d \in [b+1, p]$ such that $w := \rho'(d) = \rho(d) \in N(v')$ and there is a vertex $w' \in N(v') \cap \chi^{-1}(d)$.
By Lemma~\ref{lem:damaged}(\ref{daux:rhoc=rho'c}), we have $v' = v$ because $c >p$.
Since $\range(\rho) \seq Q$, we have $w \in Q$.
Since $v$ is $0$-safe in $\rho$ and $v \in N(S)$, we have $\rho(c') \in N(S) \cap Q$ for every $c' \in [p+1, k]$; in particular, $\rho(c')$ is $S$-influenced.
If $v \ne u$, then $v \in N(S) \cap N(w) \cap \overline{Q \cup \{u\}}$, which means, by Lemma~\ref{lem:pivot-properties}(\ref{aux:outer-S}) applied to $Q$, that
$w$ is the only $S$-influenced vertex in $Q$, which is a contradiction with $k \ge p+2$.
Hence, suppose that $v = u$, and observe that $w' \in S \cap N(u)$.
By maximality of $Q$, we have $w' \in Q$, and by Lemma~\ref{lem:pivot-properties}(\ref{aux:contains-all-S-influenced}), $\rho(c') \in N(w')$ for each $c' \in [p+1, k]$.
In particular, $w'$ is a $(u, Q)$-link, which is a contradiction with Definition~\ref{def:links} because $w' \notin B$ (since $\chi(w') = d \in [b+1, p]$).
We have proven that $\rho'$ is damage-free.

Let $\pi$ be the valid color plan realized by $\rho$, and observe that $\rho'$ realizes $\pi$ by Lemma~\ref{lem:is-valid-plan}(\ref{vpaux:trivial}).
By Lemma~\ref{lem:get-safety}, $\rho'$ is almost $1$-safe.
Finally, suppose for contradiction that $u'$ is a $\rho'$-pivot, and let $D'$ be a maximal set $\rho'$-pivoted by $u'$.
Observe that $u' \ne u$ by Lemma~\ref{lem:u-is-not-rho'-pivot}.
By Lemma~\ref{lem:not-different-pivot-shift}, $u$ is $(u', D')$-link.
In particular, $u \in D'$, which implies $v = u$.
By choice of $v$, we obtain $\range(\rho) \seq D$, and by Lemma~\ref{lem:shift-to-u}, $u$ is not $(u', D')$-link, which is a contradiction.
Hence, $\rho'$ is pivot-free, which concludes the proof. 
\end{proof}

\subsubsection{Computing a feasible color realization}\label{subsub:compute-feasible}

Finally, we have arrived at the culmination of Section~\ref{sub:getting-rid}.

\begin{lemma}\label{lem:get-feasible-realization}
If $(\chi, B)$ is not failing, then given a valid color plan of $(\chi, B)$, a feasible color realization can be computed in polynomial time.
\end{lemma}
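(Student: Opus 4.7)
The plan is to assemble the feasible realization in stages by combining the lemmas from Section~\ref{sub:getting-rid}. First, apply Algorithm~\ref{alg:find-damage-free} via Lemma~\ref{lem:compute-damage-free} to obtain a damage-free and almost $0$-safe color realization $\rho$ of the given valid plan $\pi$; this step also implicitly relies on $(\chi, B)$ not being candidate-failing. If $\rho$ is already pivot-free, feeding it into Lemma~\ref{lem:handle-block} yields a feasible realization immediately.

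Otherwise, fix a $\rho$-pivot $u$ and a maximal set $D \seq S_\rho$ that is $\rho$-pivoted by $u$. Since a pivot exists, Lemma~\ref{lem:not-critical} implies that $\pi$ is not critical, and the moreover-clause of Lemma~\ref{lem:compute-damage-free} upgrades $\rho$ to being genuinely $0$-safe. If $u$ is not a $(\chi, D)$-pivot, Lemma~\ref{lem:not-pivot} produces a damage-free, pivot-free, almost $2$-safe realization of a valid color plan, and Lemma~\ref{lem:handle-block} finishes the job (almost $2$-safe is certainly almost $11$-safe). Otherwise $u$ is a $(\chi, D)$-pivot; let $Q \seq S \cup K$ be the maximal set such that $u$ is a $(\chi, Q)$-pivot, so that $D \seq Q$. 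I split into three sub-cases: (i) if $\range(\rho) \nsubseteq Q$, apply Lemma~\ref{lem:KnseqQ-swap}; (ii) if $\range(\rho) \seq Q$ and $K \nsubseteq Q$, apply Lemma~\ref{lem:KnseqQ-shift}; (iii) if $K \seq Q$ (which automatically gives $\range(\rho) \seq Q$, because $\range(\rho) \seq K$), apply Lemma~\ref{lem:KseqQ}. In each case the output is a damage-free, pivot-free, almost $11$-safe realization of a valid plan, which Lemma~\ref{lem:handle-block} converts into a feasible one.

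The main obstacle is sub-case (iii), where Lemma~\ref{lem:KseqQ} demands specific witnesses $v^-, v^+ \in Q \setminus S$ with $v^- \in N(u) \cap N(v^+)$, together with a vertex $w \in K \setminus \range(\rho)$ distinct from $u$. The plan is to extract all three from the assumption that $(\chi, B)$ is not pivot-failing. Since $K \seq Q$ and $u \notin Q$, we have $u \notin K$, so the pair $(u, D')$ with $D' := Q$ satisfies the structural conditions of Definition~\ref{def:failing}: $u \in \overline{S \cup D'}$, $u$ is a $(\chi, D')$-pivot, $K \seq D'$, and $[b] \seq \chi(D')$. The last inclusion holds because $\chi_\rho(D) = [k]$ while $\rho$ never uses the colors $[b]$, which forces the vertex of $\chi_\rho$-color $c \in [b]$ in $D$ to lie in $B \seq D \seq D'$. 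Consequently, conditions (a) and (b) of Definition~\ref{def:failing} must both fail: (a) failing gives $|\kb| > k$, and (b) failing supplies vertices $x, y \in D' \setminus S$ with $y$ the $(\chi, u, x)$-link; setting $v^- := y$ and $v^+ := x$ yields the first two witnesses. For the third, note that $\kb \setminus B_\rho = K \setminus \range(\rho)$, since $K \cap B = \emptyset$ (as $K \seq \overline{S}$ and $B \seq S$) and $B_\rho = B \sqcup \range(\rho)$; the inequality $|\kb| > k = |B_\rho|$ then gives some $w \in K \setminus \range(\rho)$, and $u \notin \kb$ forces $w \ne u$. Every invoked procedure runs in polynomial time, so the overall algorithm is polynomial.
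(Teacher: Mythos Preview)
Your proof is correct and follows essentially the same route as the paper's: compute $\rho$ via Lemma~\ref{lem:compute-damage-free}, use Lemma~\ref{lem:not-critical} to upgrade to $0$-safeness when a pivot exists, then branch on whether $u$ is a $(\chi,D)$-pivot and on the relation between $K$, $\range(\rho)$, and $Q$, invoking Lemmas~\ref{lem:not-pivot}, \ref{lem:KnseqQ-swap}, \ref{lem:KnseqQ-shift}, or \ref{lem:KseqQ}, and finish with Lemma~\ref{lem:handle-block}. One small imprecision: in your justification of $[b]\seq\chi(D')$ you assert that the vertex of $\chi_\rho$-color $c\in[b]$ in $D$ lies in $B$ and that $B\seq D$, neither of which is warranted in general; the correct (and sufficient) observation is simply that such a vertex lies in $S\cap D$ (since $\rho$ only assigns colors in $[b+1,k]$), hence $c\in\chi(D)\seq\chi(Q)$.
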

\begin{proof}
Suppose that $(\chi, B)$ is not failing and let $\pi$ be a valid color plan of $(\chi, B)$.
By Lemma~\ref{lem:compute-damage-free}, an almost $0$-safe and damage-free color realization $\rho$ of $\pi$ can be computed in polynomial time.
If $\rho$ is pivot-free, then we can compute a feasible color realization by Lemma~\ref{lem:handle-block}.
Hence, suppose that $u$ is a $\rho$-pivot.
By Lemma~\ref{lem:not-critical}, $\pi$ is not critical, so by Lemma~\ref{lem:compute-damage-free}, $\rho$ is $0$-safe.

If $u$ is not a $(\chi, D)$-pivot, then let $\rho'$ be the color realization given by Lemma~\ref{lem:not-pivot}.
Suppose that $u$ is a $(\chi, D)$-pivot, and let $Q \seq S \cup K$ be the maximal set such that $u$ is a $(\chi, Q)$-pivot.
Suppose that $K \nsubseteq Q$.
If $\range(\rho) \nsubseteq Q$, then we define $\rho'$ to be the color realization given by Lemma~\ref{lem:KnseqQ-swap}, and otherwise, we define $\rho'$ to be the color realization given by Lemma~\ref{lem:KnseqQ-shift}.
Now suppose that $K \seq Q$.
Since $D$ is $\rho$-pivoted, we have $[k] \seq \chi_\rho(D)$, which implies $[b] \seq \chi(D) \seq \chi(Q)$.
Since $(\chi, B)$ is not failing, we know that $|K| > k - b$ and there are two vertices $v^-, v^+ \in Q \setminus S$ such that $v^- \in N(u) \cap N(v^+)$, see Definition~\ref{def:failing}.
Since $|\range(\rho)| = k-b$ and $u \notin K$ (because $K \seq Q$), there is a vertex $w \in K \setminus \range(\rho)$ such that $w \ne u$.
Hence, we may define $\rho'$ to be the color realization given by Lemma~\ref{lem:KseqQ}.

Observe that, in all cases, $\rho'$ is almost $11$-safe, damage-free, and pivot-free color realization of a valid color plan, and it can be computed in polynomial time.
Hence, by Lemma~\ref{lem:handle-block} applied to $\rho'$, we can compute a feasible color realization, which concludes the proof.
\end{proof}

\subsection{Computing a partial $b$-coloring}\label{sub:coloring-neighbors}

As before, let $S$ be a fen-core, $S^+$ be an extension of $S$, $(\chi, B)$ be an $S$-profile, and $\kout = \{u \in \overline{S} \sep \red_\chi(u) \ge 0\}$.
Recall that $\chi(S) \seq [p]$ and $\chi(B) = [b]$.
Let $\rho$ be a feasible color realization.
We will construct a partial $b$-coloring $\psi$ of $G$ described by $(\chi, B)$ such that $\psi \supseteq \chi_\rho$.

For each $v \in \overline{S_\rho}$, let $L_v = \{u \in S_\rho \sep u \in N(v)$ or there is a vertex $x \in B_\rho \cap N(u) \cap N(v)\}$.
Informally, $L_v$ is the set of vertices that possibly restrict the color of $v$:
$v$ cannot be colored with the color of $u \in L_v$ if $u \in N(v)$ or if the vertex $x \in N(u) \cap N(v)$ is, in some sense, tight.

\subsubsection{Description of the algorithm}\label{subsub:description}

\newcommand{\Continue}{\com{continue}\xspace}

\begin{algorithm}[h]
\DontPrintSemicolon
\KwIn{$G$, $k$, $S$, $\chi$, $B$, and $\rho$.}
\KwOut{A partial $b$-coloring $\psi$ of $G$ described by $(\chi, B)$ such that $\psi \supseteq \chi_\rho$.}\SetKwProg{Fn}{Procedure}{:}{}
\SetKwFunction{Resolve}{Resolve}
$\psi := \chi_\rho$;\label{line:init} 
$\textsf{unresolved} := B_\rho$\;
\lIf{there is $u \in B_\rho$ such that $|N(u) \cap B_\rho| \ge k-p-8$}{\Resolve{$u$}\label{line:res1}}

\ElseIf{there is $v \in \overline{S_\rho}$ such that $|L_v| \ge k-3$\label{line:3}}
{\Resolve{$u$}, where $u$ is an arbitrary vertex of $N(v) \cap B_\rho$\label{line:res2}}

$\{u_1, \ldots, u_m\} := \textsf{unresolved}$, where $\dist(u_i, S^+) \le \dist(u_j, S^+)$ when $1 \le i \le j \le m$\;\label{line:define-prec}

\lFor{$i$ from $1$ to $m$}{\Resolve{$u_i$}\label{line:resolverest}}

\Return $\psi$

\BlankLine
\BlankLine

\Fn{\Resolve{$u$}}{
$U := \{v \in N(u) \sep \psi(v)$ is undefined$\}$\;
\For{\Each $v \in U$}{
$W_v := N(v) \cup \{w \in V(G) \sep w$ is $\psi$-linked to $v$ via a vertex in $B_\rho\}$\;\label{line:wi}
$C_v := \psi(W_v)$\label{line:ci}
}
let $H$ be a bipartite graph such that $V(H) := [k] \cup U$ and\;
$E(H) := \{cv \sep v \in U, c \in [k] \setminus C_v\}$\;
$C := [k] \setminus \psi(N[u])$\;
$H' := H[U \cup C]$\;
let $\mu$ be a maximal matching in the graph $H'$\;
\For{\Each $v \in U$}{
\lIf{$v$ is matched with $c$ in $\mu$}{$\psi(v) := c$\label{line:color-matched}}
\lElse{$\psi(v) := c$, where $c$ is any color such that $vc \in E(H)$\label{line:color-unmatched}}
}
$\textsf{unresolved} := \textsf{unresolved} \setminus \{u\}$\;
}
\caption{Constructing a partial $b$-coloring}\label{alg:partial-b-coloring}
\end{algorithm}

\newcommand{\pred}{\textnormal{pred}}

We will prove that Algorithm~\ref{alg:partial-b-coloring} produces a partial $b$-coloring of $G$.
Let us provide some intuition behind the algorithm.
The algorithm builds up a coloring $\psi$, which is initialized as $\chi_\rho$, see line~\ref{line:init}.
Recall that the vertices in $B_\rho$ should become $b$-vertices.
For each vertex $u \in B_\rho$, the algorithm uses a subroutine $\Resolve(u)$, which colors the neighbors of $u \in B_\rho$ (formally, each call of the subroutine modifies the ``global'' variable $\psi$).
Usually, it suffices to call $\Resolve$ in a BFS order starting with $S^+$, see lines~\ref{line:define-prec} and~\ref{line:resolverest}.
However, it may happen that some vertex, called the \emph{exceptional} vertex, needs to be resolved preferentially, see lines~\ref{line:res1} and~\ref{line:res2}.
Intuitively, the exceptional vertex is close to almost all vertices of $B_\rho$, which might create a $\psi$-pivot (even though $\rho$ is pivot-free), see Figure~\ref{fig:exceptional-vertex}.

\begin{figure}[t]
\begin{minipage}{0.45\textwidth}
\centering
\begin{tikzpicture}
\tikzmath{ \dist = 20; \dis=14;}
\begin{scope}[every node/.style={draw, circle, minimum width=10pt, inner sep=2pt, fill=gray!25!white}]
\node[minimum width=17pt] (u) {$u$};
\node[minimum width=17pt, fill=white, left= \dist pt of u ] (v1) {$v_1$};
\node[minimum width=17pt, fill=white, right= \dist pt of u ] (v2) {$v$};
\node[minimum width=17pt, below= \dis pt of v1] (w1) {$w_1$};
\node[minimum width=17pt, fill=white, below= \dis pt of v2] (x) {$v_2$};
\node[minimum width=17pt, below= \dis pt of x] (w2) {$w_2$};
\node[above= \dist pt of u ] (a) {};
\node[above left= \dist pt of u ] (b) {};
\node[above right= \dist pt of u ] (c) {};
\draw (a)--(u)--(b);
\draw (c)--(u)--(v1)--(w1);
\draw (u)--(v2)--(x)--(w2);
\end{scope}
\node[left = 0pt of w1] {$c_1$};
\node[right = 0pt of w2] {$c_2$};
\node[red, left = 0pt of v1] {$c_2$};
\node[red, right = 0pt of x] {$c_1$};
\end{tikzpicture}
\end{minipage}
\hfill
\begin{minipage}{0.45\textwidth}
\centering
\begin{tikzpicture}
\tikzmath{ \dist = 20; \dis=14;}
\begin{scope}[every node/.style={draw, circle, minimum width=10pt, inner sep=2pt, fill=gray!25!white}]
\node[minimum width=17pt, fill=white] (u) {$v$};
\node[left= \dist pt of u ] (u1) {$u_1$};
\node[right= \dist pt of u ] (u2) {$u_2$};
\node[below= \dis pt of u1, fill=white ] (v1) {$v_1$};
\node[below= \dis pt of u2, fill=white ] (v2) {$v_2$};
\node[below= \dis pt of v1] (w1) {$w_1$};
\node[below= \dis pt of v2] (w2) {$w_2$};
\node[above= \dist pt of u ] (a) {};
\node[above left= \dist pt of u ] (b) {};
\node[above right= \dist pt of u ] (c) {};
\draw (a)--(u)--(b);
\draw (c)--(u)--(u1)--(v1)--(w1);
\draw (u)--(u2)--(v2)--(w2);
\end{scope}
\node[left = 0pt of w1] {$c_1$};
\node[right = 0pt of w2] {$c_2$};
\node[red, left = 0pt of v1] {$c_2$};
\node[red, right = 0pt of v2] {$c_1$};
\end{tikzpicture}
\end{minipage}
\caption{Two cases in which there is an exceptional vertex. The gray vertices are in $B_\rho$ and the colors assigned by $\chi_\rho$ are drawn in black (only for some vertices).
The colors which might be assigned if lines~\ref{line:res1},~\ref{line:3}, and~\ref{line:res2} were removed from the algorithm are drawn in red.
In both cases, $\Resolve(w_1)$ and $\Resolve(w_2)$ are executed first because $\{w_1, v_1, u, v, v_2, w_2\}$ or $\{w_1, v_1, u_1, v, u_2, v_2, w_2\}$ induces a subpath of a long outer path, and $v$ becomes a $\psi$-pivot (and hence cannot be colored).
\textbf{Left:} All colors except for $c_1$ and $c_2$ are present in $N[u]$ and $u$ is $\chi_\rho$-tight.
\textbf{Right:} All colors except for $c_1$ and $c_2$ are present in $N(v)$ and $u_1, u_2$ are $\chi_\rho$-tight.}
\label{fig:exceptional-vertex}
\end{figure}
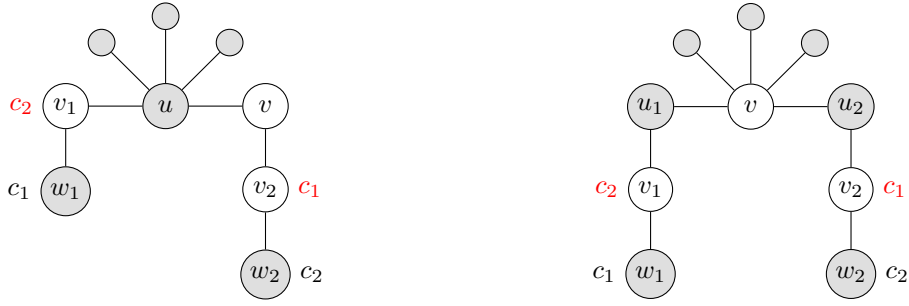

The subroutine $\Resolve(u)$ works as follows.
For each vertex $v \in N(u)$ that is not already colored by $\psi$, we compute which colors are possible for $v$, see line~\ref{line:ci}.
We represent this information as an auxiliary bipartite graph $H$ between $U := N(u) \setminus \dom(\psi)$ and $[k]$; recall that we used a similar graph $H$ in Algorithm~\ref{alg:find-damage-free}.
Since our goal is to make $u$ a $b$-vertex, we need to match the colors not present in the neighborhood of $u$ with $U$, see line~\ref{line:color-matched}.
Since $u$ is allowed to have multiple neighbors of the same color, the unmatched vertices can be colored with any possible color, see line~\ref{line:color-unmatched}.

\subsubsection{Correctness of the algorithm}\label{subsub:correct}

For $u, v \in B_\rho$, let $u \prec v$ denote the fact that $\Resolve(u)$ is evaluated before $\Resolve(v)$.
If $\Resolve(u)$ is called on line~\ref{line:res1} or~\ref{line:res2}, then we say that $u \in B_\rho$ is the \emph{exceptional} vertex (it is unique because line~\ref{line:3} contains an \textbf{else if}).
Note that if $u$ is the exceptional vertex, then $u \prec v$ for each $v \in B_\rho$ distinct from $u$.

The following lemma states that either there is exactly one vertex $u$ satisfying the condition on line~\ref{line:res1} and all vertices $v$ satisfying the condition on line~\ref{line:3} are in $N(u)$, or there is no such vertex $u$ and there is exactly one vertex $v$ satisfying the condition on line~\ref{line:3}.

\begin{lemma}\label{lem:exceptional-vertex}
~
\begin{enumerate}[(a)]
\item There is at most one vertex $u \in B_\rho$ such that $|N(u) \cap B_\rho| \ge k-p-8$.
\item If there is a vertex $u \in B_\rho$ as in (a), then each vertex $v \in \overline{S_\rho}$ such that $|L_v| \ge k-3$ is in $N(u)$.
\item If there is no vertex $u \in B_\rho$ as in (a), then there is at most one vertex $v \in \overline{S_\rho}$ such that $|L_v| \ge k-3$
\end{enumerate}
\end{lemma}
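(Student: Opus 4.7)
My approach combines inclusion-exclusion counting inside $B_\rho$ (which has cardinality $k$) and inside $S_\rho$, with two structural consequences of $S$ being a fen-core: Observation~\ref{obs:short-cycles-are-in-S} (every cycle of length at most $8$ lies inside $G[S]$) and the part of Definition~\ref{def:fen-core} bounding the number of $v$-$S$ paths by $2$ for each $v \in \overline S$. The first step will be to establish a preliminary bound: for any $v \in \overline{S_\rho}$, $|L_v \cap S| \le 2$. Indeed, each $w \in L_v \cap S$ provides a distinct $v$-$S$ path of length at most $2$ (either the edge $vw$ when $w \in N(v)$, or the length-$2$ walk $v\, x\, w$ through the linking vertex $x \in B_\rho$), and Definition~\ref{def:fen-core} caps the number of such paths at $2$. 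Consequently, whenever $|L_v| \ge k-3$ we also have $|L_v \cap \range(\rho)| \ge k-5$.

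For part (a), the plan is to suppose for contradiction that distinct $u_1, u_2 \in B_\rho$ both satisfy $|N(u_i) \cap B_\rho| \ge k-p-8$. Inclusion-exclusion inside $B_\rho$ will then yield $|N(u_1) \cap N(u_2) \cap B_\rho| \ge k - 2p - 16$, which I will contradict by bounding $|N(u_1) \cap N(u_2)|$ from above by $p$ via case analysis: if $u_1, u_2 \notin S$, two common neighbors would form a $4$-cycle $u_1 w_1 u_2 w_2$ not contained in $S$, violating Observation~\ref{obs:short-cycles-are-in-S}, so there is at most one; if exactly one $u_i \in S$, each common neighbor induces a distinct $u_{3-i}$-$S$ path, so there are at most two; if both are in $S$, common neighbors outside $S$ would form a forbidden length-$0$ outer path, and those inside $S$ are at most $|S|-2 = p-2$. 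The combined inequality contradicts the assumption $k \ge 96p+18$. For part (b), I will assume additionally a $v \in \overline{S_\rho}$ with $|L_v| \ge k-3$ but $v \notin N(u)$, set $A := N(u) \cap B_\rho$, and observe that inclusion-exclusion gives $|A \cap L_v| \ge k - 2p - 11$. Each $w$ in this intersection is either a common neighbor of $u$ and $v$ (bounded by the same case analysis) or admits a linking vertex $x(w) \in B_\rho \cap N(v)$ making $u\, w\, x(w)\, v$ a length-$3$ path. Whenever $x(w_1) = x(w_2)$ for distinct $w_1, w_2$, the $4$-cycle $u\, w_1\, x\, w_2$ must lie inside $S$ by Observation~\ref{obs:short-cycles-are-in-S}, so distinct $w$'s yield distinct $x$'s when $u \notin S$ and yield only controlled repetition when $u \in S$. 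This will push $|N(v) \cap B_\rho|$ to be of order $k - \ca O(p^2)$, allowing the (a)-style argument to be re-run on the pair $(u, v)$---with $v \notin S$ guaranteeing only few common neighbors---to conclude.

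For part (c), I will assume no $u \in B_\rho$ is dense in the sense of (a), yet distinct $v_1, v_2 \in \overline{S_\rho}$ satisfy $|L_{v_i}| \ge k-3$. The preliminary observation gives $|L_{v_i} \cap \range(\rho)| \ge k-5$, and inclusion-exclusion inside $\range(\rho)$ produces $|L_{v_1} \cap L_{v_2} \cap \range(\rho)|$ of order $k - \ca O(p)$. Each $w$ in this intersection yields a walk from $v_1$ through $w$ to $v_2$ of length at most $4$; I will partition these walks by how $w$ is linked to each $v_i$ (direct neighbor versus via an intermediate $x_i(w) \in B_\rho$) and exploit $v_1, v_2 \notin S$ to show that many walks necessarily create short cycles not contained in $S$ (forbidden) or violate the bound on $v_i$-$S$ paths. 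The only remaining possibility is that the link intermediaries $x_i(w)$ concentrate on few $B_\rho$-vertices, but then one such vertex acquires $\ge k - \ca O(p)$ neighbors in $B_\rho$, contradicting the hypothesis that no vertex of $B_\rho$ is dense. The hardest part of the proof will be controlling the sub-cases where intermediate linking vertices lie in $S$: Observation~\ref{obs:short-cycles-are-in-S} does not preclude cycles fully inside $S$, so those contributions must be bounded separately using $|S| \le p$, and the constants in the counting must be tracked so that the contradictions still go through under the hypothesis $k \ge 96p+18$.
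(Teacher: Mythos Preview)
Your overall strategy---inclusion--exclusion inside $B_\rho$ (or $S_\rho$) followed by short-cycle contradictions via Observation~\ref{obs:short-cycles-are-in-S}---is exactly the paper's, but there is one genuine error and one place where your execution is needlessly indirect.

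\textbf{The preliminary bound $|L_v\cap S|\le 2$ is false.} Your justification assumes the linking vertex $x\in B_\rho\cap N(v)\cap N(w)$ lies outside $S$, so that $v\,x\,w$ is a $(v,S)$-path. But $x$ may lie in $B\subseteq S$: take $v\in\overline{S_\rho}$ with a single neighbour $x\in B$, and suppose $x$ has many neighbours $w_1,\dots,w_m\in S$. Then every $w_i$ belongs to $L_v\cap S$ (witnessed by the same $x$), yet the only $(v,S)$-path produced is $vx$. So $|L_v\cap S|$ can be as large as $p-1$. The fix is painless: simply use $|L_v\cap S|\le |S|=p$, which is what the paper does, obtaining $|L_v\cap\overline S|\ge k-3-p$ rather than $k-5$. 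All your downstream counts then shift by $O(p)$ instead of $O(1)$, which is harmless.

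\textbf{Comparison with the paper's execution.} The paper streamlines every part by intersecting with $\overline S$ \emph{first} (paying at most $p$) so that any cycle it later builds automatically contains a vertex outside $S$, making the $S$-subcase bookkeeping you anticipate unnecessary. Concretely: for (a) it notes $|N(u)\cap N(u')\cap B_\rho|\ge k-2p-16\ge p+2$, hence two common neighbours lie in $\range(\rho)\subseteq\overline S$, giving a $4$-cycle not in $S$---no case split on whether $u_i\in S$. For (b) it does \emph{not} push $|N(v)\cap B_\rho|$ large and re-run (a); instead it directly takes two vertices $w_1,w_2\in N(u)\cap L_v\cap\overline S$ (there are $\ge k-2p-11$ of them) and observes that $\{u,w_1,w_1^v,v,w_2^v,w_2\}$ contains a cycle of length at most $6$ with $w_1\notin S$. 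Your indirect route does work once the preliminary bound is corrected, but the resulting constants are tighter than the paper's $k\ge 3p+18$ and would force you to revisit the threshold $96p_G+18$. For (c) the paper's clean trichotomy on $|N(v)\cap N(w)\cap B_\rho|\in\{0,1,\ge 2\}$ is precisely the ``concentration versus short-cycle'' dichotomy you sketch, with the $=1$ case being the concentration case that contradicts the hypothesis of (c).
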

\begin{proof}
For $v \in \overline{S_\rho}$ and $u \in L_v$, we define $u^v := u$ if $u \in N(v)$, and otherwise we define $u^v$ to be the vertex in $N(u) \cap N(v) \cap B_\rho$.

For (a), suppose for contradiction that there are two such vertices $u$ and $u'$. Since $|B_\rho| = k$, we obtain $|B_\rho \cap N(u) \cap N(u')| \ge k-2p-16$. Since $k \ge 3p+18$ and $|S| = p$, there are two distinct vertices $v, v' \in \range(\rho) \cap N(u) \cap N(u')$. Now we obtain a contradiction with Observation~\ref{obs:short-cycles-are-in-S} using the cycle $G[\{u, v, u', v'\}]$.

For (b), suppose for contradiction that there are vertices $u \in B_\rho$ and $v \in \overline{S_\rho}$ such that $|N(u) \cap B_\rho| \ge k-p-8$, $|L_v| \ge k-3$, and $v \notin N(u)$.
Since $|S| \le p$, $S_\rho = S \cup B_\rho$, and $L_v \seq S_\rho$, we have $|L_v \cap \overline{S}| \ge k-3-p$ and $|N(u) \cap L_v \cap \overline{S}| \ge k-11-2p$.
Since $k \ge 2p+13$, there are two distinct vertices $w_1, w_2 \in N(u) \cap L_v \cap \overline{S}$.
Now we obtain a contradiction with Observation~\ref{obs:short-cycles-are-in-S} because there is a cycle $C$ of length at most 6 in $G[\{u, w_1, w_1^v, v, w_2^v, w_2\}]$ such that $V(C) \nsubseteq S$ since $w_1 \in V(C) \setminus S$.
Note that if $w_1^v = w_2^v$, then $v \notin V(C)$.

For (c), suppose for contradiction that there are two such vertices $v$ and $w$, i.e., $|L_v \cap L_w| \ge k-6$.
Since $|S| \le p$, we have $|L_v \cap L_w \cap \overline{S}| \ge k-p-6$.
If there are two distinct vertices $u, u' \in N(v) \cap N(w) \cap B_\rho$, then $G[\{v, u, w, u'\}]$ is a cycle, which contradicts Observation~\ref{obs:short-cycles-are-in-S} since $v \notin S$.
Now suppose that $N(v) \cap N(w) \cap B_\rho = \emptyset$.
Since $k \ge p+8$, there are two distinct vertices $u_1, u_2 \in L_v \cap L_w \cap \overline{S} \seq B_\rho$.
Observe that $\{u_1^v, u_2^v\} \cap \{u_1^w, u_2^w\} = \emptyset$ since $N(v) \cap N(w) \cap B_\rho = \emptyset$.
Hence, there is a cycle $C$ of length at most 8 in $G[\{v, u^v_1, u_1, u^w_1, w, u^w_2, u_2, u^v_2\}]$ containing $u_1 \notin S$, which contradicts  Observation~\ref{obs:short-cycles-are-in-S}.
Note that for $x \in \{v, w\}$, if $u^x_1 = u^x_2$, then $x \notin V(C)$.
Finally, suppose that $N(v) \cap N(w) \cap B_\rho = \{u\}$. If there is a vertex $x \in (L_v \cap L_w) \setminus N[u]$, then $G[\{v, u, w, x^w, x, x^v\}]$ is a cycle, contradicting Observation~\ref{obs:short-cycles-are-in-S}.
Otherwise, $L_v \cap L_w \seq N[u]$, which means $|N[u] \cap B_\rho| \ge |L_v \cap L_w \cap B_\rho| \ge |L_v \cap L_w \cap \overline{S}| \ge k-p-6$.
Hence, $|N(u) \cap B_\rho| \ge k-p-7 \ge k-p-8$, which contradicts the assumption that there is no vertex as in (a).
\end{proof}

For each vertex $u \in B_\rho$, let $\psi_u$ (resp. $\psi^u$) be the value of the variable $\psi$ at the beginning of the execution of $\Resolve(u)$ (resp. after $\Resolve(u)$ is executed).

The following lemma states that there are at most 3 newly colored vertices close to $u$ at the beginning of the evaluation of $\Resolve(u)$.
Note that we assume that the algorithm has successfully colored the neighborhoods of all vertices $u' \prec u$.

\begin{lemma}\label{lem:at-most-three}
Let $u \in B_\rho$ be a vertex and let $U$ and $W_v$ for $v \in U$ be the objects computed during the execution of $\Resolve(u)$ in Algorithm~\ref{alg:partial-b-coloring}.
If for each $u' \in B_\rho$ such that $u' \prec u$ and each $v' \in N(u')$, $\psi_u(v')$ is defined,
then $|\ca W| \le 3$, where \[\ca W = \left\lbrace w \in N(u) \cup \bigcup_{v \in U} W_v \sep w \notin S_\rho, \psi_u(w) \text{ is defined}   \right\rbrace.\]
\end{lemma}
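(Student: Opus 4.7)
The plan is to prove the bound $|\ca W| \le 3$ by contradiction, assuming $|\ca W| \ge 4$ and deriving either a short cycle avoiding $S$ (contradicting Observation~\ref{obs:short-cycles-are-in-S}) or a short outer path (contradicting Definition~\ref{def:fen-core}).

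First, I would exploit the lemma hypothesis to attach to each $w \in \ca W$ a ``resolver'' $u'(w) \in B_\rho$ with $u'(w) \prec u$ and $u'(w)w \in E(G)$: indeed, since $w \notin S_\rho = \dom(\chi_\rho)$ yet $\psi_u(w)$ is defined, $w$ must have been colored during some earlier call $\Resolve(u')$, and by inspection of the procedure this forces $u' \in B_\rho$ and $w \in N(u')$. In particular, both $u$ and $u'(w)$ lie in $B_\rho \seq K^+$, with $u \ne u'(w)$.

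Next, I would partition $\ca W$ by type according to its position relative to $u$:
\begin{itemize}
\item Type~A: $w \in N(u)$, so $u$-$w$ is an edge;
\item Type~B: $w \in N(v)\setminus N(u)$ for some $v \in U$, so $u$-$v$-$w$ is a length-$2$ path whose middle vertex $v$ is uncolored (hence $v \notin S_\rho$);
\item Type~C: $w \in W_v \setminus N(v)$ for some $v \in U$, so $w$ is $\psi_u$-linked to $v$ via some $\psi_u$-tight $x \in B_\rho \cap N(v) \cap N(w)$, giving a path $u$-$v$-$x$-$w$ of length $3$.
\end{itemize}
In every case, $w \notin S$ (since $w \notin S_\rho$) and $w$ together with its resolver $u'(w)$ produces a closed walk of length at most $6$ through $u$ (or length at most $8$ in Type~C) whose interior contains $w \notin S$.

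I would then bound the contribution from each type. For Type~A, two distinct $w_1, w_2 \in N(u)\setminus S$ sharing a resolver $u'(w_1) = u'(w_2)$ would yield a $4$-cycle $u w_1 u'(w_1) w_2$ contradicting Observation~\ref{obs:short-cycles-are-in-S}; and with distinct resolvers, Definition~\ref{def:fen-core} (at most two $(u, S)$-paths when $u \in \overline S$, and no vertex of $\overline S$ having two neighbors in $S$) combined with the BFS ordering of Algorithm~\ref{alg:partial-b-coloring} rules out having more than one ``new'' resolver closer to $S^+$ than $u$. Analogous but longer closed walks through $u, v, w, u'(w)$ in Type~B, and through $u, v, x, w, u'(w)$ in Type~C, combine with Observation~\ref{obs:short-cycles-are-in-S} to force $w \in S$, contradicting $w \in \ca W$. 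Summing: at most one vertex of each type can appear, giving $|\ca W| \le 3$ and completing the contradiction.

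The hard part will be Type~C: the link $x \in B_\rho$ must be $\psi_u$-tight, which couples the structure of $x$ to everything that every earlier $\Resolve$ call has filled in, and multiple Type~C configurations around $u$ create walks on the edge of the length budget allowed by Observation~\ref{obs:short-cycles-are-in-S}. A second subtlety is the presence of the exceptional vertex $u^*$ of lines~\ref{line:res1}--\ref{line:res2}: $u^*$ is resolved before the BFS order begins, so it can act as a resolver even when it is geometrically far from $u$; here I expect to invoke Lemma~\ref{lem:exceptional-vertex} (which forces any $v \in \overline{S_\rho}$ with $|L_v| \ge k-3$ to lie in $N(u^*)$) to argue that $u^*$ contributes at most one vertex to $\ca W$ that is not already accounted for by the BFS-predecessor analysis.
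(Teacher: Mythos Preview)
Your overall shape---attach a resolver $u'(w)\in B_\rho$ with $u'(w)\prec u$ and $u'(w)w\in E(G)$ to each $w\in\ca W$, then derive either a short cycle (Observation~\ref{obs:short-cycles-are-in-S}) or too many $u$--$S^+$ paths (Observation~\ref{obs:S+})---is exactly right and matches the paper. But the per-type bounding (``at most one vertex of each type'') is where the plan goes off the rails.

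\textbf{The per-type bound is false.} Two Type~A vertices $w_1,w_2\in N(u)$ with \emph{distinct} non-exceptional resolvers give only two $u$--$S^+$ paths (extend the shortest $u'(w_i)$--$S^+$ paths by the edges $u\,w_i$ and $w_i\,u'(w_i)$), and two is permitted by Definition~\ref{def:fen-core}. So nothing forces Type~A down to one. Likewise, a \emph{single} Type~B or Type~C vertex does not produce a closed walk at all: the path $u$--$v$--$w$--$u'(w)$ (or $u$--$v$--$x$--$w$--$u'(w)$) is open, and you have no uniform way to close it back to $u$ in bounded length. Your sentence ``force $w\in S$'' is therefore unjustified. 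And the reference to Lemma~\ref{lem:exceptional-vertex} is a red herring here: that lemma concerns vertices with $|L_v|\ge k-3$ and is used elsewhere; it does not control how many elements of $\ca W$ the exceptional vertex can resolve.

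\textbf{What actually works} (and what the paper does) is \emph{not} to separate by type. Take any four $w_1,\dots,w_4\in\ca W$. Uniformly define $v_i\in N(u)$ and $x_i\in N(v_i)\cap N[w_i]$ so that $u\,v_i\,x_i\,w_i$ is a path of length at most $3$ (collapsing when $w_i\in N(u)$ or $w_i\in N(v_i)$). First show the four resolvers $u_i:=u'(w_i)$ are pairwise distinct: if $u_i=u_j$ then the closed walk on $\{u,v_i,x_i,w_i,u_i,w_j,x_j,v_j\}$ yields a cycle of length at most $8$ containing $w_i\notin S$. Now at most one $u_i$ is the exceptional vertex; for the other three, $u_i\prec u$ forces $\dist(u_i,S^+)\le\dist(u,S^+)$. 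Extend a shortest $u_i$--$S^+$ path by the edges back to $u$ through $w_i,x_i,v_i$; this yields three distinct $u$--$S^+$ paths and contradicts Observation~\ref{obs:S+}. One technical point you should not skip: checking $u_i\neq x_j$ for all $i,j$ uses the lemma hypothesis (if $u_i=x_j$ then $v_j\in N(u_i)$ would already be colored, contradicting $v_j\in U$).
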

\begin{proof}
Suppose for contradiction that there are distinct vertices $w_1, w_2, w_3, w_4 \in \ca W$.
Let us fix $i \in [4]$.
If $w_i \in N(u)$, then let $v_i := w_i$ and $x_i := w_i$.
Suppose that $w_i \in W_v$ for some $v \in U$; let $v_i := v$.
If $w_i \in N(v_i)$, let $x_i := w_i$, and otherwise let $x_i$ be the $(\psi_u, v_i, w_i)$-link in $B_\rho$.
Since $w_i \notin S_\rho$, we know that $\chi_\rho(w_i)$ is undefined, which means that there is a vertex $u_i \in B_\rho \cap N(w_i)$ such that $u_i \prec u$ and $\psi_u(w_i)$ was computed during the execution of $\Resolve(u_i)$.
Let us fix $j \in [4]$ and observe that $u_i \notin \{w_j, v_j\}$ because $w_j \notin S_\rho$ and either $v_j = w_j$ or $v_j \in U \seq \overline{S_\rho}$.
Suppose that $u_i = x_j$.
Observe that we have $w_j \ne x_j$ and $v_j \in N(x_j)$.
Since $x_j = u_i \prec u$, we deduce that $\psi_u(v_j)$ is defined using our assumption, which is a contradiction with $v_j \in U$.
Hence, $u_i \ne x_j$.

If $u_i = u_j$ for $1 \le i < j \le 4$, then there would be a cycle $C$ in $G[\{u, v_i, x_i, w_i, u_i, w_j, x_j, v_j\}]$ of length at most 8 containing $w_i \notin S$, which contradicts Observation~\ref{obs:short-cycles-are-in-S}.
Hence, we know that $|\{u_1, u_2, u_3, u_4\}| = 4$.
Without loss of generality, we may assume that if $u_i$ is the exceptional vertex for some $i \in [4]$, then $i = 4$.
Let us fix $i \in [3]$, and observe that $\dist(u_i, S^+) \le \dist(u, S^+)$ because $u_i \prec u$, see line~\ref{line:define-prec}.
If $u \in S$, then $u_i \in S$ and $G[\{v_i, x_i, w_i\}]$ is an outer path, which contradicts Definition~\ref{def:fen-core}.
Hence, $u \notin S$.
Let $P_i$ be a shortest $u_i$-$S^+$ path.
Clearly, $u \notin V(P_i)$.
Let $P_i'$ be the $u$-$S^+$ path obtained from $P_i$ by adding the edges $u_iw_i, w_ix_i, x_iv_i, v_iu$ if $x_i \ne w_i$, or the edges $u_iw_i, w_iv_i, v_iu$ if $x_i = w_i \ne v_i$, or the edges $u_iv_i, v_iu$ otherwise.
Now $P_1', P_2', P_3'$ are three distinct $u$-$S^+$ paths, which contradicts Observation~\ref{obs:S+}.
\end{proof}

Now we are ready to show that the algorithm colors all neighbors of $u$.

\begin{lemma}\label{lem:positive-degree-in-H}
If $u \in B_\rho$, then $\psi^u(v)$ is defined for each $v \in N(u)$.
\end{lemma}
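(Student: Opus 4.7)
The plan is to prove Lemma~\ref{lem:positive-degree-in-H} by induction on the order $\prec$ on $B_\rho$, showing that the call $\Resolve(u)$ successfully colors every vertex in $U := N(u) \setminus \dom(\psi_u)$. Vertices of $U$ matched in $\mu$ are colored by construction, so the crux reduces to proving $C_v \subsetneq [k]$ for every $v \in U$: this is exactly the guarantee that the \textbf{else} branch can pick a color in $[k] \setminus C_v$.

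The first step is to establish the bound $|C_v| \le |L_v| + 3$. Every $w \in W_v \cap S_\rho$ is either in $N(v)$ or is $\psi_u$-linked to $v$ via some $x \in B_\rho$ with $x \in N(w) \cap N(v)$; by the definition of $L_v$ from Section~\ref{subsub:description}, this forces $W_v \cap S_\rho \seq L_v$, so $|\chi_\rho(W_v \cap S_\rho)| \le |L_v|$. Any other colored $w \in W_v \setminus S_\rho$ must have been colored by an earlier $\Resolve$ call, putting $w$ in the set $\ca W$ from Lemma~\ref{lem:at-most-three}, whose cardinality is at most $3$ by the inductive hypothesis.

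Now I split into cases. When $u$ is resolved on line~\ref{line:resolverest} (the non-exceptional case), I claim $|L_v| \le k-4$ for every $v \in U$. Otherwise $|L_v| \ge k-3$, and Lemma~\ref{lem:exceptional-vertex} forces $v$ into the set $U^*$ of the earlier exceptional call: either $v \in N(u^*)$ by part~(b) when a line~\ref{line:res1} candidate $u^*$ exists, or $v$ is the unique exceptional vertex by part~(c) and so is the one picked on line~\ref{line:3}. In either scenario $\Resolve(u^*)$ would already have colored $v$, contradicting $v \in U$. Combined with the bound, $|C_v| \le k-1 < k$, as desired.

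The exceptional case, where $u$ itself is the exceptional vertex processed on line~\ref{line:res1} or~\ref{line:res2}, is the main obstacle: here $\psi_u = \chi_\rho$ and $\ca W = \emptyset$, so the slack vanishes and $|C_v| = |\chi_\rho(W_v \cap S_\rho)|$ could a priori equal $k$. Assuming for contradiction $C_v = [k]$, for each $c \in [k]$ I will pick $w_c \in W_v \cap S_\rho$ with $\chi_\rho(w_c) = c$, preferring $w_c \in N(v)$ whenever possible; when $w_c \notin N(v)$, the $\chi_\rho$-tight linking vertex $x_c \in B_\rho$ satisfies $x_c \in N(v) \cap N(w_c)$. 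Setting $D := \{w_c \sep c \in [k]\} \cup \{x_c \sep w_c \notin N(v)\} \seq S_\rho$ gives $\chi_\rho(D) = [k]$, and the graph-theoretic $D$-pivot property of $v$ is immediate. The hardest part is verifying $\chi_\rho$-tightness of every $(v, D)$-link: the $x_c$'s are tight by construction, so only some $w_c \in N(v)$ with a neighbor in $D$ can be problematic. I will handle this by combining Observation~\ref{obs:short-cycles-are-in-S} with the fact from Definition~\ref{def:fen-core} that no vertex outside $S$ has two neighbors in $S$ to rule out short cycles, and by refining the choice of witnesses until $v$ becomes a genuine $\rho$-pivot, contradicting Definition~\ref{def:feasible}; residual configurations resisting this refinement will be mapped, via Lemma~\ref{lem:structure-of-block-but-not-pivoted}, to a $\rho$-blocked vertex of $B_\rho$, again contradicting the feasibility of $\rho$.
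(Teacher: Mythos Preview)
Your proposal is correct and follows essentially the same approach as the paper: bound $|C_v|$ via Lemmas~\ref{lem:exceptional-vertex} and~\ref{lem:at-most-three} in the non-exceptional case, and exhibit $v$ as a $\rho$-pivot in the exceptional case. Your fallback to block-freeness via Lemma~\ref{lem:structure-of-block-but-not-pivoted} is unnecessary: the short-cycle argument alone forces any $(v,D)$-link of the form $w_c\in N(v)$ to coincide with some $x_d$ (otherwise a cycle of length at most $4$ through $v\notin S$ arises, contradicting Observation~\ref{obs:short-cycles-are-in-S}), and each $x_d$ is $\chi_\rho$-tight since $\psi_u=\chi_\rho$ in the exceptional call.
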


\begin{proof}
Suppose for contradiction that there is a vertex $u \in B_\rho$ such that some vertex $v \in N(u)$ is not colored by $\psi^u$; let us choose the $\prec$-minimal such vertex $u$.
Let $U, H, W_v, C_v$ be the objects computed during the execution of $\Resolve(u)$.
Observe that $v \in U$ and that $v$ has degree 0 in $H$, see line~\ref{line:color-unmatched}.
By definition of $E(H)$, $C_v = [k]$, i.e., for each color $c \in [k]$, there is a vertex $w_c \in W_v$ such that $\psi_u(w_c) = c$ and either $w_c \in N(v)$ or $w_c$ is $\psi_u$-linked to $v$ via a vertex $x_c \in B_\rho$.
If $w_c \in N(v)$, we set $x_c := w_c$.
Let $X = \{w_c, x_c \sep c \in [k]\}$ and observe that $v$ is an $(\psi_u, X)$-pivot and $\psi_u(X) = [k]$.
If $X \seq S_\rho$, then $v$ would be a $\rho$-pivot, which is impossible since $\rho$ is pivot-free, see Definition~\ref{def:feasible}.
Hence, $X \nsubseteq S_\rho$.

Let $y \in X \setminus S_\rho$.
Since $\psi_u(y)$ is defined, there is a vertex $u' \in B_\rho \cap N(y)$ such that $u' \prec u$, which implies that $u$ is not the exceptional vertex.
Since $\psi_u(v)$ is undefined, the exceptional vertex (if it exists) is not adjacent to $v$ (using the $\prec$-minimality of $u$).
Hence, by Lemma~\ref{lem:exceptional-vertex}, $|L_v| < k - 3$ (see line~\ref{line:3}), which means that there are four distinct colors $c_1, c_2, c_3, c_4 \in [k]$ such that for each $i \in [4]$, $w_{c_i} \notin L_v$.
Since $W_v \cap S_\rho \seq L_v$, we have $|\{w \in W_v \setminus S_\rho \sep \psi_u(w)$ is defined$\}| \ge 4$, which is a contradiction with Lemma~\ref{lem:at-most-three}.
Note that the assumption of Lemma~\ref{lem:at-most-three} holds by the $\prec$-minimality of $u$.
\end{proof}

We will need the following lemma, which characterizes why a vertex in $U$ cannot be assigned a color. Intuitively, $E_1$ corresponds to constraints imposed by $B_\rho$, and $E_2$ to the remaining constraints.

\begin{lemma}\label{lem:number-of-edges-in-H'}
Let $u \in B_\rho$ and $U, C, H'$, and $C_v$ for $v \in U$ be the objects computed in $\Resolve(u)$.
If $H^*$ is the bipartite complement of $H'$, then $E(H^*)$ can be partitioned into two sets, $E_1$ and $E_2$, such that for each $c \in C$, there is at most one edge incident to $c$ in $E_1$, and $|E_2| \le 1$ if $u$ is the exceptional vertex, and $|E_2| \le 4$ otherwise.  
\end{lemma}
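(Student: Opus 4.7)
I propose to partition $E(H^*)$ according to whether a given edge admits a witness inside $S$. Recall that for $cv \in E(H^*)$ the condition $c \in C_v$ guarantees at least one witness $w \in W_v$ with $\psi_u(w) = c$. Accordingly, I set $E_1 := \{cv \in E(H^*) \sep W_v \cap \chi^{-1}(c) \cap S \ne \emptyset\}$ and $E_2 := E(H^*) \setminus E_1$, so that every witness of an edge in $E_2$ lies in $\range(\rho) \cup \ca W \seq \overline{S}$.

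For the colour-degree bound on $E_1$, I would argue by contradiction: if $cv_1, cv_2 \in E_1$ with $v_1 \ne v_2$, pick $S$-witnesses $w_1, w_2 \in \chi^{-1}(c)$ and (optional) $B_\rho$-links $x_1, x_2$ justifying the two edges, and then close the walk $u\, v_1 (x_1)\, w_1$ together with $u\, v_2 (x_2)\, w_2$ into a short cycle or an $S$-outer path. When $w_1 = w_2$, the result is a cycle of length at most six containing $v_1 \notin S$, violating Observation~\ref{obs:short-cycles-are-in-S}; when $w_1 \ne w_2$ and $u \in \overline{S}$, the middle portion $(x_1)\,v_1\,u\,v_2\,(x_2)$ is an $S$-outer path of length at most four with endpoints in $N(S)$, contradicting Definition~\ref{def:fen-core}. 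The residual case $u \in B \seq S$ is handled by exploiting the at most two $(v_i, S)$-paths guaranteed by Definition~\ref{def:fen-core}, which forces $v_1, v_2$ to have a very constrained neighbourhood and thus an immediate structural contradiction.

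For the bound $|E_2| \le 4$ in the non-exceptional case, every edge $cv \in E_2$ is witnessed by some $w \in (\range(\rho) \cup \ca W) \cap W_v$, so the corresponding $u$-$w$ walk leaves $S$ at its very first step and, after its $\overline{S}$-portion, proceeds along one of the at most two $u$-$S^+$ paths guaranteed by Observation~\ref{obs:S+}. I would combine this with Lemma~\ref{lem:at-most-three} ($|\ca W| \le 3$) and the standard 4-cycle argument (which shows $|N(w) \cap N(u) \cap \overline{S}| \le 1$ for any $w$) to conclude that each of the two $u$-$S^+$ paths supports at most two distinct $E_2$-edges, yielding $|E_2| \le 4$.

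For the exceptional case, Resolve$(u)$ is executed first, hence $\psi_u = \chi_\rho$ and $\ca W = \emptyset$, so every $E_2$-witness is of the form $\rho(c)$ for some $c \in [b+1, k]$; the structural assumption triggering exceptionality (from Lemma~\ref{lem:exceptional-vertex}) together with the fen-core property restricts such witnesses to at most one, giving $|E_2| \le 1$. The hardest part will be obtaining the sharp count $|E_2| \le 4$: a naive bound like $|\ca W| \cdot 3 \le 9$ overshoots, and the key refinement is to organise the enumeration by the $u$-$S^+$ paths rather than by individual witnesses, using Observation~\ref{obs:S+} to avoid double-counting configurations that live on the same path.
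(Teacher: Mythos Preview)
Your partition is inverted relative to what is needed, and this breaks the $|E_2|\le 4$ bound. You place an edge $cv$ in $E_1$ when it has a witness in $S$, and in $E_2$ otherwise. But $\chi(S)\subseteq [p]$, so any colour $c\in C\cap[p+1,k]$ has \emph{no} vertex of colour $c$ in $S$; every edge $cv$ with $c>p$ therefore lands in your $E_2$, and its (unique) witness in $S_\rho$ is $\rho(c)\in\range(\rho)$. There is nothing preventing many such edges: for distinct $v_1,\ldots,v_m\in U$ one can have $\rho(c_i)\in N(v_i)$ with all $c_i\in C\cap[p+1,k]$ distinct, giving $m$ edges in your $E_2$, and $m$ is not bounded by any function of the two $u$--$S^+$ paths. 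Your sketch ``each of the two $u$--$S^+$ paths supports at most two $E_2$-edges'' does not apply, because the witnesses $\rho(c_i)$ need not lie on any $u$--$S^+$ path at all --- they may sit in subtrees hanging off $u$ on the far side from $S^+$. The same problem arises in the exceptional case: when $\psi_u=\chi_\rho$, every $E_2$-witness is some $\rho(c)$, and nothing in Lemma~\ref{lem:exceptional-vertex} limits how many colours $c\in C$ can have $\rho(c)$ close to $u$.

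The paper's partition goes the other way: it fixes one witness $w_e$ per edge and sets $E_1=\{e: w_e\in B_\rho\}$. This works because $\chi_\rho$ is injective on $B_\rho$, so two $E_1$-edges sharing a colour share the witness $w_e=w_{e'}$, and the walk $u,v,x_e,w_e,x_{e'},v'$ closes into a short cycle through $v\notin S$, contradicting Observation~\ref{obs:short-cycles-are-in-S}; that gives the per-colour bound on $E_1$. The remaining witnesses then lie in $(S\setminus B)\cup\ca W$, and now the outer-path argument bounds the $S$-witnessed edges by $1$ and Lemma~\ref{lem:at-most-three} bounds the $\ca W$-witnessed edges by $3$. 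In other words, the large set $\range(\rho)$ must be absorbed by the \emph{per-colour} side of the partition, not the \emph{globally bounded} side. Your $E_1$ argument is salvageable (indeed it would give $|E_1|\le 1$ total), but to rescue the proof you must move the $B_\rho$-witnessed edges into $E_1$ as the paper does.
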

\begin{proof}
Observe that for each edge $vc \in E(H^*)$, we have $c \in C_v$, i.e., there is a vertex $w$ such that $\psi_u(w) = c$ and either $w \in N(v)$ or $w$ is $\psi_u$-linked to $v$ via a vertex $x \in B_\rho$.
For each edge $e \in E(H^*)$, let us denote the corresponding vertices $w$ and $x$ by $w_e$ and $x_e$; if $x_e$ does not exist, we define $x_e := w_e$.
We define $E_1$ to be the set of edges $e \in E(H^*)$ such that $w_e \in B_\rho$.
If there were two distinct edges $e = vc \in E_1$ and $e' = v'c \in E_1$, then $w_e = w_{e'}$ and there would be a cycle in $G[\{u, v, x_e, w_e, x_{e'}, v'\}]$, which would contradict Observation~\ref{obs:short-cycles-are-in-S}.

Now observe that there is at most one edge $e = vc \in E(H^*)$ such that $w_e \in S$;
if there were two such edges $e = vc$ and $e' = v'c'$, then there would be a short outer path in $G[\{x_{e'}, v', u, v, x_e\}]$.
If $u$ is the exceptional vertex, then $\psi_u = \chi_\rho$, which means that $w_e \in S_\rho = S \cup B_\rho$ for each $e \in E(H^*)$.
Hence, $|E_2| \le 1$ as desired.
Now suppose that $u$ is not the exceptional vertex.
In this case, there are at most three edges $e \in E(H^*)$ such that $w_e \notin S_\rho$, by Lemma~\ref{lem:at-most-three}.
This implies $|E_2| \le 4$, which concludes the proof.
\end{proof}

Now we are ready to show that each vertex in $B_\rho$ is a $\psi^u$-candidate.
Note that this implies that $u$ is a $b$-vertex in $\psi^u$ because all neighbors of $u$ are colored in $\psi^u$ by Lemma~\ref{lem:positive-degree-in-H}.

\begin{lemma}\label{lem:remains-to-be-candidate}
If $u \in B_\rho$, then each $u' \in B_\rho$ is a $\psi^u$-candidate.
\end{lemma}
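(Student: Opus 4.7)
The plan is to proceed by induction on the resolve order $\prec$. In the base case, $\psi = \chi_\rho$; by Lemma~\ref{lem:valid-realization}(\ref{vraux:candidate}), every $u' \in B_\rho$ is a $\chi_\rho$-candidate, and $\chi_\rho$ is proper. For the inductive step, fix $u \in B_\rho$ and suppose every $u'' \in B_\rho$ is a $\psi_u$-candidate. I would first verify that $\psi^u$ remains proper: each $v \in U$ is assigned a color in $[k] \setminus C_v \supseteq [k] \setminus \psi_u(N(v))$, and $U$ is an independent set because any edge $v_1v_2$ inside $U \seq \overline{S}$ would give a triangle $\{u, v_1, v_2\}$ with $v_1 \notin S$, contradicting Observation~\ref{obs:short-cycles-are-in-S}.

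For the subcase $u' \ne u$, I would set $\alpha := |N(u') \cap U|$ and $\beta := |\psi^u(N(u') \cap U) \setminus \psi_u(N[u'])|$, so that $\red_{\psi^u}(u') = \red_{\psi_u}(u') - (\alpha - \beta)$, and it suffices to show $\alpha - \beta \le \red_{\psi_u}(u')$. Two distinct vertices $v_1, v_2 \in N(u) \cap N(u') \cap \overline{S}$ would yield a $4$-cycle with $v_1 \notin S$, contradicting Observation~\ref{obs:short-cycles-are-in-S}, so $\alpha \le 1$; and if $u, u' \in S$, then $\alpha = 0$ since no vertex of $\overline{S}$ may have two neighbors in $S$ (see the remark following Definition~\ref{def:fen-core}). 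When $\alpha = 0$ the inequality is immediate, and when $\alpha = 1$ with $u'$ not $\psi_u$-tight we have $\red_{\psi_u}(u') \ge 1 \ge \alpha - \beta$. The delicate subcase is $\alpha = 1$ with $u'$ $\psi_u$-tight: letting $v$ be the unique vertex of $N(u') \cap U$, tightness of $u' \in B_\rho \seq \kb$ makes $u'$ a $(\psi_u, v, w)$-link in $B_\rho$ for every $w \in N(u')$, so $w \in W_v$ and hence $N[u'] \seq W_v$; this gives $\psi_u(N[u']) \seq C_v$, and since $\psi^u(v) \in [k] \setminus C_v$, the assigned color is fresh with respect to $u'$, yielding $\beta = 1$ and redundancy unchanged.

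For the subcase $u' = u$, Lemma~\ref{lem:positive-degree-in-H} guarantees that all of $N(u)$ is colored in $\psi^u$, so $u$ being a $\psi^u$-candidate is equivalent to $\psi^u(N[u]) = [k]$, which holds iff the matching $\mu$ saturates $C = [k] \setminus \psi_u(N[u])$. I would verify Hall's condition on $H'$: for each $C' \seq C$, $|T(C')| \le |U| - |C'|$, where $T(C') := \{v \in U : C' \seq C_v\}$. Counting edges in the bipartite complement $H^*$ of $H'$ via Lemma~\ref{lem:number-of-edges-in-H'}---which partitions $E(H^*)$ into $E_1$ with at most one edge per $c \in C$ and $E_2$ with $|E_2| \le 4$ (or $|E_2| \le 1$ if $u$ is exceptional)---gives $|T(C')| \cdot |C'| \le |C'| + |E_2|$, hence $|T(C')| \le 1 + |E_2|/|C'|$. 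Combined with the inductive bound $|U| = |C| + \red_{\psi_u}(u) \ge |C|$ and a case analysis on $|C'|$ exploiting the constant bound on $|E_2|$, Hall's condition follows and $\mu$ saturates $C$.

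The main obstacle I anticipate is the Hall verification when $u' = u$, particularly for small $|C'|$ where the bound $|T(C')| \le 1 + |E_2|/|C'|$ is weakest; closing the remaining slack appears to require combining the $|E_2|$ bound with sharper structural information (e.g., from Lemma~\ref{lem:at-most-three} and the exceptional-vertex distinction) to rule out configurations where several $v \in U$ simultaneously have $C_v$ containing a particular $C' \seq C$.
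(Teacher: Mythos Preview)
Your treatment of the case $u' \ne u$ is essentially the paper's argument. The gap you flag in the case $u' = u$ is genuine, and the counting bound $|T(C')|\cdot|C'| \le |C'| + |E_2|$ from Lemma~\ref{lem:number-of-edges-in-H'} cannot close it by itself. What you are missing is that the two remaining feasibility properties of $\rho$---pivot\nobreakdash-freeness and block\nobreakdash-freeness---are spent precisely here.

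Concretely, the paper splits on whether the minimal Hall-violating set $C'$ satisfies $|C'| = |U|$. If $|C'| = |U|$ then $|C| = |U|$, so $u$ is $\psi_u$-tight, and by minimality there is a single $v \in U \setminus N_{H'}(C)$. One shows that \emph{every} color $c \in [k]$ has a witness $w_c \in W_v$: for $c \in C$ this is the Hall violation, and for $c \in \psi_u(N[u])$ the tight vertex $u \in B_\rho \cap N(v)$ itself serves as the link. When $u$ is exceptional, $\psi_u = \chi_\rho$ and $v$ is a $\rho$-pivot, contradicting feasibility; when $u$ is not exceptional, Lemma~\ref{lem:at-most-three} forces $|L_v| \ge k-3$, so by Lemma~\ref{lem:exceptional-vertex} the exceptional vertex lies in $N(v)$, contradicting $v \in U$ via Lemma~\ref{lem:positive-degree-in-H}.

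If $|C'| < |U|$ and $u$ is exceptional with $C' = \{c\}$, the Hall violation says that for every $v \in N(u) \setminus S_\rho$ there is $w \in \chi_\rho^{-1}(c)$ with $w \in N(v)$ or $w$ $\chi_\rho$-linked to $v$ through $B_\rho$; together with $c \notin \chi_\rho(N[u])$ this is exactly the statement that $u$ is $\rho$-blocked by $c$, again contradicting feasibility. Finally, if $|C'| < |U|$ and $u$ is not exceptional, the paper first uses the failure of the condition on line~\ref{line:res1} (i.e.\ $|N(u)\cap B_\rho| \le k-p-9$) together with Lemma~\ref{lem:at-most-three} to obtain $|U| \ge 6$; only with this slack does the edge count from Lemma~\ref{lem:number-of-edges-in-H'} finish the case analysis over $|C'| \in [4]$ and $|C'| \ge 5$. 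Your proposal invokes neither pivot\nobreakdash-freeness nor block\nobreakdash-freeness, and without them the $|C'| = |U|$ case and the exceptional $|C'| = 1$ case are simply not ruled out by counting.
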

\begin{proof}
Suppose for contradiction that there is a vertex $u \in B_\rho$ such that some vertex $u' \in B_\rho$ is not a $\psi^u$-candidate; let us choose the $\prec$-minimal such vertex $u$.
Let $U$, $H$, $C$, $\mu$, and $W_v, C_v$ for $v \in U$ be the objects computed during the execution of $\Resolve(u)$.
Since the initial value of $\psi$ is $\chi_\rho$ and each vertex in $B_\rho$ is a $\chi_\rho$-candidate by Lemma~\ref{lem:valid-realization}(\ref{vraux:candidate}), we know that each vertex in $B_\rho$ is a $\psi_u$-candidate.

\subparagraph*{Case 1: $u \ne u'$}
Since $u'$ is a $\psi_u$-candidate but not a $\psi^u$-candidate, there is a vertex $v \in N(u') \cap U$ newly colored in $\Resolve(u)$, i.e., $\psi_u(v)$ is undefined but $\psi^u(v)$ is defined.
By Observation~\ref{obs:short-cycles-are-in-S}, $N(u) \cap N(u') = \{v\}$ because otherwise there would be a short cycle not contained in $S$ (since $v \notin S$).
In other words, $v$ is the only newly colored vertex in $N(u')$; let $c = \psi^u(v)$.
Let $r$ be the $\psi_u$-redundancy of $u'$, see Definition~\ref{def:S-profile}.
If $r \ge 1$, then the $\psi^u$-redundancy of $u'$ is at least $r-1 \ge 0$, which means that $u'$ is a $\psi^u$-candidate, a contradiction.
Hence, $r = 0$, i.e., $u'$ is $\psi_u$-tight.
Since $u'$ is not a $\psi^u$-candidate, there is a vertex $w \in N[u']$ such that $\psi_u(w) = c$.
Observe that $w \in W_v$, which implies that $c \in C_v$ and that $cv \notin E(H)$.
This is a contradiction with $\psi^u(v) = c$, see lines~\ref{line:color-matched} and~\ref{line:color-unmatched}.

\subparagraph*{Case 2: $u = u'$}

By Lemma~\ref{lem:positive-degree-in-H}, each vertex $v \in U$ is colored by $\psi^u$, which implies that there is a color $c$ such that $c \notin \psi^u(N[u])$.
In other words, $c \in C$ and $c$ is not matched with any vertex in $\mu$.
By Hall's marriage theorem, there is a set $C' \seq C$ such that $|N_H(C')| < |C'|$; let us consider the smallest such set $C'$.

\subparagraph*{Case 2a: $|C'| = |U|$}

Since $u$ is a $\psi_u$-candidate, we have $|C| \le |U|$, which implies that $C = C'$ and that $u$ is $\psi_u$-tight.
By minimality of $C'$, $|N_H(C)| = |C| - 1$; let $\{v\} = U \setminus N_H(C)$.
Let us show that for each color $c \in [k]$, there is a vertex $w_c \in \psi_u^{-1}(c) \cap W_v$. 
For $c \in C$, such a vertex $w_c$ exists by definition of $H$ since $v \notin N_H(C)$.
Recall that $[k] \setminus C = \psi_u(N[u])$.
Hence, for $c = \psi_u(u)$, we choose $w_{c} := u$, and for $c \in \psi_u(N(u))$, there is a vertex $w_c$ that is $\chi_\rho$-linked to $v$ via $u$ (here we use the fact that $u$ is a $\psi_u$-tight vertex in $B_\rho$).

First, suppose that $u$ is the exceptional vertex.
In this case, $\psi_u = \chi_\rho$.
Now the existence of $w_c$ for each $c \in [k]$ implies that $v$ is a $\rho$-pivot, which is a contradiction with the feasibility of $\rho$.
Second, suppose that $u$ is not the exceptional vertex.
By Lemma~\ref{lem:at-most-three}, there are at most three colors $c \in [k]$ such $w_c \notin S_\rho$; let $C^*$ be the subset of $[k]$ without these at most three colors.
Clearly, $|C^*| \ge k-3$.
Since $\{w_c \sep c \in C^*\} \seq W_v \cap S_\rho \seq L_v$, we obtain $|L_v| \ge k-3$.
Hence, by Lemma~\ref{lem:exceptional-vertex}, the exceptional vertex is in $N(v)$.
This is a contradiction with $v \in U$ and Lemma~\ref{lem:positive-degree-in-H}.

\subparagraph*{Case 2b: $|C'| < |U|$}
Let $H^*, E_1, E_2$ be the objects obtained when Lemma~\ref{lem:number-of-edges-in-H'} is applied to $u$.
First, suppose that $u$ is the exceptional vertex.
If $C' = \{c\}$, then $u$ would be $\rho$-blocked by $c$, which is a contradiction since $\rho$ is feasible.
Hence, let $c_1, c_2 \in C'$ be two distinct colors.
Since $|N_H(C')| < |C'| < |U|$, there are at least two vertices $v_1, v_2 \in U$ such that $v_1, v_2 \notin N_H(C')$.
Hence, $c_1v_1, c_1v_2, c_2v_1, c_2v_2 \notin E(H)$.
This is a contradiction with Lemma~\ref{lem:number-of-edges-in-H'} since there can be at most three edges in $H^*[\{c_1, c_2, v_1, v_2\}]$: two in $E_1$ and one in $E_2$.

Second, suppose that $u$ is not the exceptional vertex.
Observe that $|N(u) \cap B_\rho| \le k-p-9$, see line~\ref{line:res1}.
There are at most $p$ colors in $\chi_\rho(N[u] \cap S)$, which implies that $|\chi_\rho(N[u])| \le k - 9$.
By Lemma~\ref{lem:at-most-three}, $|\psi_u(N[u])| - |\chi_\rho(N[u])| \le 3$.
Hence, $|\psi_u(N[u])| \le k-6$, which implies $|U| \ge 6$.
Since $|N_H(C')| < |C'| < |U|$, there are distinct vertices $v_1, v_2 \in U \setminus N_H(C')$.
If there are distinct colors $c_1, \ldots, c_5 \in C'$, then $|E'| = 10$ and $E' \seq E(H^*)$, where $E' = \{c_iv_1, c_iv_2 \sep i \in [5]\}$.
However, by Lemma~\ref{lem:number-of-edges-in-H'}, $|E' \cap E_1| \le 5$ and $|E'\cap E_2| \le 4$, which is a contradiction.

Now suppose that $C' = \{c_1, \ldots, c_m\}$, where $m \in [4]$.
Since $|N_H(C')| < |C'|$ and $|U| \ge 6$, there are distinct vertices $v_1, \ldots, v_{7-m} \in U \setminus N_H(C')$.
Let $E' = \{c_iv_j \sep i \in [m], j \in [7-m]\}$ and observe that again $E' \seq E(H^*)$.
Moreover, $|E' \cap E_1| \le m$ and $|E' \cap E_2| \le 4$.
Since $m(7-m) > m+4$, we again have a contradiction with Lemma~\ref{lem:number-of-edges-in-H'}.
\end{proof}

Finally, we are ready to prove the correctness of Algorithm~\ref{alg:partial-b-coloring}.

\begin{lemma}\label{lem:algo-1-works}
The coloring $\psi'\colon S_\rho \cup N(B_\rho) \rightarrow [k]$ computed by Algorithm~\ref{alg:partial-b-coloring} is a partial $b$-coloring of $G$.
\end{lemma}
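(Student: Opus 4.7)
The plan is to verify that $\psi'$ satisfies the two requirements of a (partial) $k$-$b$-coloring: it is proper on its domain $S_\rho \cup N(B_\rho)$, and every color in $[k]$ admits a $b$-vertex in $\psi'$. Both facts will follow by assembling the preceding technical lemmas with only a little additional bookkeeping.

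For the $b$-vertex property, I would fix an arbitrary $u \in B_\rho$ and apply Lemma~\ref{lem:positive-degree-in-H} to conclude that every vertex of $N(u)$ is colored in $\psi^u$; in particular $N[u] \seq \dom(\psi')$. Lemma~\ref{lem:remains-to-be-candidate}, applied to the pair $(u, u)$, gives $\red_{\psi^u}(u) \ge 0$, and together with $|N[u] \setminus \dom(\psi^u)| = 0$ this forces $|\psi^u(N[u])| \ge k$, hence $\psi^u(N[u]) = [k]$. Since subsequent calls to $\textsc{Resolve}$ only extend the current coloring, $\psi'(N[u]) = [k]$ and $u$ is a $b$-vertex in $\psi'$. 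Because $\chi_\rho \seq \psi'$ and $\chi_\rho$ assigns the colors $[b]$ to $B$ and maps $\range(\rho)$ bijectively onto $[b+1, k]$, the set $B_\rho$ contains a $b$-vertex for every color in $[k]$.

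For properness, I would start from the fact that $\chi_\rho$ is proper by Lemma~\ref{lem:valid-realization}(\ref{vraux:proper}). Each time the algorithm assigns a color $c$ to a vertex $v \in U$ during a call $\textsc{Resolve}(u)$, it does so along an edge of $H$, so $c \notin C_v$; in particular $c \notin \psi(N(v))$, which rules out any conflict with already-colored neighbors. If two vertices $v, v' \in U$ were adjacent, then $G[\{u, v, v'\}]$ would be a triangle, which by Observation~\ref{obs:short-cycles-are-in-S} would force $\{u, v, v'\} \seq S$; but $v \in U$ means $\psi_u(v)$ is undefined, whereas $\chi \seq \chi_\rho \seq \psi_u$ would give $\psi_u(v) = \chi(v)$ for $v \in S$, a contradiction. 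Thus the vertices colored in a single $\textsc{Resolve}$ call are pairwise non-adjacent, so intra-call assignments cannot clash. Once $v$ is assigned color $c$, any later-colored neighbor $v'$ of $v$ automatically has $c \in \psi(N(v')) \seq C_{v'}$ at the time of its coloring and is therefore never assigned $c$.

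The main obstacle in the overall analysis has already been overcome in Sections~\ref{sub:getting-rid} and~\ref{sub:coloring-neighbors}: the delicate arguments are all embedded in Lemmas~\ref{lem:positive-degree-in-H} and~\ref{lem:remains-to-be-candidate}. Beyond citing those lemmas, the only point needing an explicit check is the non-adjacency of vertices in a common $U$, which is immediate from Observation~\ref{obs:short-cycles-are-in-S}. Consequently, the present lemma is a short synthesis with no remaining technical difficulty.
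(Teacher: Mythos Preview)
Your proof is correct and follows essentially the same approach as the paper: you invoke Lemmas~\ref{lem:positive-degree-in-H} and~\ref{lem:remains-to-be-candidate} for the $b$-vertex property, and for properness you use the triangle argument via Observation~\ref{obs:short-cycles-are-in-S} to rule out intra-call conflicts and the definition of $C_v$ to rule out inter-call conflicts. The paper organizes the properness argument through a case split on the $\prec$-minimal vertices $u_1,u_2$ at which $v_1,v_2$ first appear colored, but this amounts to the same three sub-cases you handle (both in $S_\rho$, same call, different calls).
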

\begin{proof}
First, let us show that $\psi'$ is proper.
Suppose for contradiction that there is an edge $v_1v_2 \in E(G)$ such that $c := \psi'(v_1) = \psi'(v_2)$.
For $i \in [2]$, let $u_i \in B_\rho$ be the $\prec$-minimal vertex such that $\psi_{u_i}(v_i)$ is defined.
Suppose that $u_1 = u_2$ and let $u := u_1$.
If $u$ is the $\prec$-minimal vertex, then $v_1, v_2 \in S_\rho$, which is a contradiction since $\chi_\rho$ is proper by Lemma~\ref{lem:valid-realization}(\ref{vraux:proper}).
Otherwise, $\psi'(v_1)$ and $\psi'(v_2)$ were computed in $\Resolve(u')$, where $u'$ is the immediate $\prec$-predecessor of $u$.
This means that $G[\{v_1, u', v_2\}]$ is a cycle, which is a contradiction with Observation~\ref{obs:short-cycles-are-in-S} since $v_1 \notin S$.
Hence, $u_1 \ne u_2$.
Without loss of generality, assume that $u_1 \prec u_2$, which implies that $v_2 \in N(B_\rho) \setminus S_\rho$.
Let $w \in B_\rho \cap N(v_2)$ be the vertex such that $\psi'(v)$ was computed during the execution of $\Resolve(w)$; note that $w$ is the immediate $\prec$-predecessor of $u_2$.
Let $U, H$ and $C_x$ for $x \in U$ be the sets computed during the execution of $\Resolve(w)$.
Clearly, $v_2 \in U$.
Since $v_1 \in N(v_2)$, we have $c \in C_{v_2}$, which implies $v_2c \notin E(H)$.
This is a contradiction with $\psi'(v_2) = c$, see lines~\ref{line:color-matched} and~\ref{line:color-unmatched}.

Second, we need to show that each vertex $u \in B_\rho$ is a $b$-vertex in $\psi'$.
Let us fix $u \in B_\rho$.
By Lemma~\ref{lem:remains-to-be-candidate}, $u$ is a $\psi'$-candidate, and by Lemma~\ref{lem:positive-degree-in-H}, $\psi'(v)$ is defined for each $v \in N(u)$.
Hence, by Definition~\ref{def:S-profile}, $u$ is indeed a $b$-vertex in $\psi'$.
Moreover, for each color $c \in [k]$, there is a vertex $u_c \in B_\rho$ such that $\psi'(u_c) = c$.
Therefore, $\psi'$ is a partial $b$-coloring of $G$.
\end{proof}

\subsection{Finishing the proof of Theorem~\ref{thm:fen}}\label{sub:fen-finish}

As before, let $S$ be a fen-core, $S^+$ be an extension of $S$, $(\chi, B)$ be an $S$-profile, and $\kout = \{u \in \overline{S} \sep \red_\chi(u) \ge 0\}$.
Recall that $\chi(S) \seq [p]$ and $\chi(B) = [b]$.

The last remaining step after Algorithm~\ref{alg:partial-b-coloring} has computed a partial $b$-coloring is to color the rest of the graph.
Now we will use that a feasible color realization must be almost $13$-safe.

\begin{lemma}\label{lem:finish-coloring}
If $\rho$ is a feasible color realization and $\psi_0\colon S_\rho \cup N(B_\rho) \rightarrow [k]$ is a partial $b$-coloring described by $(\chi, B)$, then there is a (total) $k$-$b$-coloring $\psi$ described by $(\chi, B)$.
Moreover, $\psi$ can be computed in polynomial time given $\psi_0$.
\end{lemma}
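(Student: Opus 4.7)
The plan is to extend $\psi_0$ to a total $k$-$b$-coloring $\psi$ by properly coloring the remaining set $F := V(G) \setminus (S_\rho \cup N(B_\rho))$. Since every cycle of $G$ has an edge in $G[S]$ and $F \subseteq \overline{S}$, the induced subgraph $G[F]$ is a forest, so for each connected component $C$ of $G[F]$ I would root a BFS at the boundary $\partial C := \{v \in V(C) : N(v) \cap \dom(\psi_0) \neq \emptyset\}$ and process vertices in layer order, assigning each an arbitrary color in $[k] \setminus \psi(N(v))$.

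The proof then reduces to showing that $[k] \setminus \psi(N(v))$ is nonempty at the moment $v$ is processed. At that moment $v$'s colored neighbors consist of at most one neighbor in $S$ (Definition~\ref{def:fen-core}), at most one BFS-parent in $C$, and the set $N(v) \cap (N(B_\rho) \setminus B_\rho)$ (already colored by $\psi_0$, since $v \in F$ has no neighbor in $B_\rho$). Setting $\alpha_v := |N(v) \cap (N(B_\rho) \setminus B_\rho)|$, we get $|\psi(N(v))| \le \alpha_v + 2$, so it suffices to prove $\alpha_v \le k-3$.

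If $\deg(v) < k$ this is immediate. Otherwise $\red_\chi(v) = \deg(v) + 1 - k \ge 1$ forces $v \in \kout \setminus B_\rho$. By the almost $13$-safety of $\rho$ (Definition~\ref{def:feasible}), either $v$ is the exceptional vertex, in which case $v \notin N(B_\rho)$ gives $\alpha_v \le p+2 \le k-3$ (using $k \ge 3p+18$); or $v$ is $13$-safe. In the $13$-safe case Observation~\ref{obs:short-cycles-are-in-S} and $v \notin S$ imply $|N(u) \cap N(v)| \le 1$ for every $u \in B_\rho$, giving an injection from $N(v) \cap (N(B_\rho) \setminus B_\rho)$ into $X_v := \{u \in B_\rho : \dist_G(u,v)=2\}$, so $\alpha_v \le |X_v|$. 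Writing $d := \dist(v,S^+)$ and splitting $X_v = X_v^> \cup X_v^{\le}$ by whether $\dist(u,S^+) > d$, the $13$-safety bounds $|X_v^>|$ by $p+13$ (at most $13$ from colors in $[p+1,k]$, at most $p$ from $[b+1,p]$, and $B \subseteq S$ has depth~$0$).

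The main obstacle is bounding $|X_v^{\le}|$, the $b$-vertices at $G$-distance $2$ from $v$ that are no deeper than $v$. Here I would exploit the tree structure of the component $T$ of $G - S$ containing $v$: any such $u \notin S$ lies in $T$ and satisfies $\deg_T(u) \ge k-2$, so a handshake argument on the unique $v$--$u$ path in $T$ tightly limits their number. Combined with the fact that Algorithm~\ref{alg:find-damage-free} initially assigns $\rho(c)$ for $c \in [p+1,k]$ to the $k - p$ candidates closest to $S^+$ (and subsequent modifications swap only a bounded number of colors), this yields a bound of the form $|X_v^{\le}| \le k - p - 16$, so $\alpha_v = |X_v| \le k-3$, completing the argument. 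The resulting $\psi$ extends $\chi_\rho$ (hence $\psi \upharpoonright S = \chi$), all vertices of $B$ remain $b$-vertices (their neighborhoods were fully colored already in $\psi_0$ by Lemma~\ref{lem:algo-1-works}), and no color in $[b+1,p]$ acquires a $b$-vertex in $S$ because $\chi(S) = [p]$ is unchanged and no new vertex of $S$ is colored; thus $\psi$ is a $k$-$b$-coloring described by $(\chi, B)$, and the entire construction runs in polynomial time.
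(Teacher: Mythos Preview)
There is a genuine gap. The bound $|X_v^{\le}|\le k-p-16$ is merely asserted; neither a ``handshake argument on the unique $v$--$u$ path'' (high degree of the $u$'s gives no bound when the ambient tree may be arbitrarily large) nor the preference of Algorithm~\ref{alg:find-damage-free} for shallow candidates yields it --- the latter is exactly the $13$-safeness you have already spent on $X_v^{>}$, and it says nothing about shallow $b$-vertices. The missing ingredient is Observation~\ref{obs:S+}: if $u\in B_\rho$ has $\dist(u,v)=2$ through a common neighbour $x$ and $\dist(u,S^+)\le d$, then extending (or re-routing) a shortest $u$--$S^+$ path through $x$ produces a $v$--$S^+$ path containing $x$; since there are at most two $v$--$S^+$ paths, there are at most two such $x$. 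Thus the contribution of $X_v^{\le}$ to $\alpha_v$ is at most $2$, giving $\alpha_v\le p+15$, which suffices. This is precisely the role of the paper's set $U_0$.

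A second, related issue is the ordering. In a multi-source BFS on a forest a vertex can have arbitrarily many neighbours at strictly smaller level (take a star with centre $v$ and all leaves in $\partial C$), so ``at most one BFS-parent in $C$'' is false, and your count of already-coloured neighbours in $F$ is uncontrolled. The paper sidesteps this by colouring the remaining vertices in order of $\dist(\cdot,S^+)$: then every already-coloured neighbour $w\notin N(B_\rho)$ of $v$ satisfies $\dist(w,S^+)\le d$ and hence lies on one of the at most two $v$--$S^+$ paths --- the same application of Observation~\ref{obs:S+}. Switching to this ordering repairs both problems at once and gives the paper's bound $|\psi_v(N(v))|\le p+16<k$.
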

\begin{proof}
Let $\rho$ and $\psi_0$ be as in the statement.
Since $\rho$ is feasible, all vertices in $\kout \setminus B_\rho$ are $13$-safe in $\rho$, possibly except for one vertex $u$ such that either $u \in N(B_\rho)$ or $u$ has at most $p+2$ neighbors in $S \cup N(B_\rho)$, see Definition~\ref{def:color-realization}.
Suppose that such a special vertex $u$ exists.
If $u \in N(B_\rho)$, then $\psi_0(u)$ is defined.
Otherwise, at most $p+2$ neighbors of $u$ are colored by $\psi_0$, which means that we may assign any color not in $\psi_0(N(u))$ to $u$ (using $k \ge p+3$).

We shall color the rest of the graph greedily so that if $\dist(v,S^+) < \dist(w, S^+)$, then $v$ is colored before $w$.
Let us consider the coloring $\psi_v$ immediately before a vertex $v \in V(G)$ is colored; note that $v \notin S_\rho \cup N(B_\rho)$.
If $v$ has degree less than $k$, then there is a color that can be assigned to $v$.
Suppose that $v$ has degree at least $k$, which means $v \in \kout \setminus B_\rho$.
Let $U_0 = \{w \in N(v) \sep$ there is a $v$-$S^+$ path containing $w\}$.
If the special vertex $u$ handled in the previous paragraph exists, then we set $U := U_0 \cup \{u\}$, and otherwise we set $U := U_0$.
By Observation~\ref{obs:S+}, there are at most 2 $v$-$S^+$ paths, which implies $|U_0| \le 2$ and $|U| \le 3$.
Observe that if $\dist(w,S^+) \le \dist(v, S^+)$ for some $w \in N(v)$, then $w \in U$.
In particular, $S^+ \cap N(v) \seq U$.
Moreover, if $w \in N(v) \setminus N(B_\rho)$ is colored by $\psi_v$, then $\dist(w,S^+) \le \dist(v, S^+)$ or $w = u$, which means that $w \in U$.
For $w \in B_\rho$, let $U_w = (N(w) \cap N(v)) \setminus U$, and notice that by Observation~\ref{obs:short-cycles-are-in-S}, $|U_w| \le 1$ (since $v \notin S$).
Hence, $|\psi_v(N(v))| \le 3 + |W|$, where $W = \{w \in B_\rho \sep U_w \ne \emptyset\}$.
Observe that $B\cap W = \emptyset$ because if $w \in B$ and $x \in N(w)\cap N(v)$, then $G[\{w, x, v\}]$ is a $v$-$S^+$ path, which means $x \notin U$ and $w \notin W$. 
In other words, $W \seq \range(\rho)$.

Let $c \in [p+1, k]$ be a color, let $w = \rho(c) \in W$, let $x \in N(w)\cap N(v)$, and suppose that $\dist(w, S^+)\le\dist(v,S^+)$.
Observe that there is a $w$-$S^+$ path $P$ not containing $v$.
If $x \in V(P)$, let $P' = G[V(P) \setminus \{w\} \cup \{v\}]$, and otherwise, let $P' = G[V(P) \cup \{v, x\}]$.
In both cases, $P'$ is a $v$-$S^+$ path containing $x$, which means $x \in U$ and $w \notin W$, a contradiction.
Hence, $\dist(v, S^+)< \dist(w,S^+)$.
Since $v$ is $13$-safe in $\rho$, we know that $|W \cap \rho([p+1, k])| \le 13$.
Hence, $|W| \le p+13$ and $|\psi_v(N(v))| \le p+16 < k$, which means that there is a color that can be assigned to $v$ while keeping the coloring proper.

After coloring each vertex, we obtain a proper coloring $\psi$. Since $\psi_0 \seq \psi$ and $\psi_0$ is a $k$-$b$-coloring described by $(\chi, B)$, $\psi$ is also a $k$-$b$-coloring described by $(\chi, B)$.
Moreover, $\psi$ can be computed in polynomial time given $\psi_0$, which concludes the proof.
\end{proof}

\begin{algorithm}[h]
\DontPrintSemicolon
\KwIn{A graph $G$ with feedback edge number $p_G$, an integer $k$}
\KwOut{A $k$-$b$-coloring of $G$ or NO if it does not exist.}

\lIf{$k < 96p_G + 18$}{use the tree-width based algorithm, see Proposition~\ref{prop:tree-width}}

compute a fen-core $S$ of $G$ using Lemma~\ref{lem:S-exists}\;\label{ln:core}
\For{\Each $S$-profile $\zeta := (\chi, B)$}
{
\If{$\zeta$ is candidate-failing or pivot-failing, see Definition~\ref{def:failing}}{\Continue with next $S$-profile}

\For{\Each color plan $\pi$ of $\zeta$}{
\If{$\pi$ is not valid, see Definition~\ref{def:color-plan}}{\Continue with next color plan}
compute a feasible color realization $\rho$ using $\pi$ and Lemma~\ref{lem:get-feasible-realization}\;\label{ln:feasible}
run Algorithm~\ref{alg:partial-b-coloring} to compute a partial coloring $\psi_0$ of $G$\;
extend $\psi_0$ to a $k$-$b$-coloring $\psi$ of $G$ using Lemma~\ref{lem:finish-coloring}\;
\Return $\psi$\;
}
}
\Return NO\;
\caption{Finding a $b$-coloring}\label{alg:high-level}
\end{algorithm}

Now we put everything together and prove that Algorithm~\ref{alg:high-level} solves \bcoloring in \FPT time parameterized by the feedback edge number of the input graph.

\begin{proof}[Proof of Theorem~\ref{thm:fen}]
Let $G$ be a graph with feedback edge number $p_G$, and let $k$ be an integer.
First, we compute a smallest feedback edge set $F$ using the DFS algorithm.
If $k < 96p_G + 18$, then using $F$, we compute a tree-decomposition for $G$ of width $p_G+1$, and we use the algorithm for \bcoloring parameterized by tree-width plus $k$, see Proposition~\ref{prop:tree-width}.
From now on, assume that $k \ge 96p_G + 18$.

We shall prove that if there is a $k$-$b$-coloring of $G$, then Algorithm~\ref{alg:high-level} outputs one such coloring, and otherwise it outputs NO.
Let $S$ be the fen-core of $G$ computed on line~\ref{ln:core}; it exists by Lemma~\ref{lem:S-exists}.
Let $|S| = p$.

First, suppose that there is a $k$-$b$-coloring $\psi_0$ of $G$.
For each color $c \in [k]$ that has a $b$-vertex in $S$ in $\psi_0$, let us fix one such vertex $u_c$, let $B = \{u_c \sep c$ has a $b$-vertex in $S\}$, and let $|B| = b$.
Let $\psi_1$ be a coloring of $G$ obtained from $\psi_0$ by permuting the colors so that $\psi_1(S) \seq [p]$ and $\psi_1(B) = [b]$ and let $\chi = \psi_1 \upharpoonright S$.
Clearly, $\psi_1$ is a $k$-$b$-coloring of $G$ described by $\zeta := (\chi, B)$, see Definition~\ref{def:S-profile}.
By Lemma~\ref{lem:failing-imlies-no-b--coloring}, $\zeta$ is not failing.
In particular, it is not plan-failing, which means that there is a valid color plan of $\zeta$ and line~\ref{ln:feasible} is reached in the execution of Algorithm~\ref{alg:high-level}.
Since $\zeta$ is not failing, a feasible color realization $\rho$ is computed on line~\ref{ln:feasible} by Lemma~\ref{lem:get-feasible-realization}.
By Lemmas~\ref{lem:algo-1-works} and~\ref{lem:finish-coloring}, the returned coloring $\psi$ is indeed a $k$-$b$-coloring of $G$.

Second, suppose that a coloring $\psi$ is computed by Algorithm~\ref{alg:high-level}.
Let $\zeta$ and $\pi$ be the values of the corresponding variables when Algorithm~\ref{alg:high-level} terminates.
Clearly, $\zeta$ is not candidate-failing nor pivot-failing and $\pi$ is a valid color plan of $\zeta$, i.e., $\zeta$ is not failing.
Hence, by Lemmas~\ref{lem:get-feasible-realization}, \ref{lem:algo-1-works}, and~\ref{lem:finish-coloring}, $\psi$ is indeed a $k$-$b$-coloring of $G$.
We have proven the correctness of Algorithm~\ref{alg:high-level}.

\smallskip
Let us now analyze the running time of Algorithm~\ref{alg:high-level}.
If $k < 96p_G + 18$, then the running time is $2^{\ca O(p_G^2)}$ by Proposition~\ref{prop:tree-width} (which will actually dominate the running time of the algorithm).
Observe that the number of $S$-profiles is $p^p \cdot 2^p \in p^{\ca O(p)}$ and the number of color plans is $(p+1)^{p-b} \in p^{\ca O(p)}$.
Moreover, it can be routinely verified that Algorithm~\ref{alg:partial-b-coloring} runs in polynomial time.
One can also in polynomial time determine whether an $S$-profile is candidate-failing or pivot-failing.
By Lemmas~\ref{lem:S-exists}, \ref{lem:get-feasible-realization}, and~\ref{lem:finish-coloring}, all other parts of Algorithm~\ref{alg:high-level} can be executed in polynomial time.
By Definition~\ref{def:fen-core}, $p \in \ca O(p_G)$, which means that Algorithm~\ref{alg:high-level} achieves the desired running time.
\end{proof}

\section{Conclusion}\label{sec:conclusion}
Even though the feedback edge number is a restrictive parameter, generalizing the polynomial-time algorithm for the \textsc{$b$-Chromatic Number} problem on trees into an \FPT algorithm for \bcoloring under this parameterization turned out highly non-trivial.
The natural open question is whether our algorithm can be further generalized into an \FPT algorithm parameterized by the feedback \emph{vertex} number or whether this generalization is \textsf{W[1]}-hard, see Figure~\ref{fig:hierarchy}.
Recall that our algorithm is based on certain locality: the complexity of the graph is contained in the fen-core $S$, and the rest of the graph is tree-like.
Such a locality does not hold when parameterized by the feedback vertex number because the feedback vertices (whose removal yields a forest) may have neighbors everywhere in the graph. 
For this reason, we would not be surprised if this problem was \textsf{W[1]}-hard.

Another natural question arising from our results is whether \bcoloring is \FPT when parameterized by the distance to cluster (instead of the distance to co-cluster we have considered), see Figure~\ref{fig:hierarchy}. Recall that bounded distance to cluster means that there is a small set whose removal yields a disjoint union of cliques.
Let us remark that a closely related and more restrictive parameter, namely the twin-cover number, was shown to be sufficient for an \FPT algorithm~\cite{JaffkeLS23}.
This parameter provides an additional condition, which guarantees that the vertices contained in the same clique have the same neighborhood in the cluster-modulator.
Is this additional condition necessary or is \bcoloring parameterized by the distance to cluster in \FPT?

Finally, another open problem asked by~\cite{JaffkeLS23} is the complexity of \bcoloring when parameterized by the modular width of the graph, which is another parameter generalizing the twin-cover number.

\bibliography{biblio.bib}
\end{document}